\pgfplotsset{compat=1.17}
\newtheorem{theorem}{Theorem}[section]
\newtheorem{lemma}[theorem]{Lemma}
\newtheorem{corollary}[theorem]{Corollary}
\newtheorem{proposition}[theorem]{Proposition}
\newtheorem{fact}[theorem]{Fact}
\crefname{section}{Section}{Sections}
\crefname{theorem}{Theorem}{Theorems}
\crefname{lemma}{Lemma}{Lemmas}
\crefname{definition}{Definition}{Definitions}
\crefname{assumption}{Assumption}{Assumptions}
\crefname{corollary}{Corollary}{Corollaries}
\crefname{proposition}{Proposition}{Propositions}
\crefname{remark}{Remark}{Remarks}
\crefname{example}{Example}{Examples}
\crefname{algorithm}{Algorithm}{Algorithms}
\crefname{figure}{Figure}{Figures}
\crefname{appendix}{Appendix}{Appendices}
\newcommand{\E}{\mathbb{E}}
\newcommand{\ind}{\mathbbm{1}}
\newcommand{\hu}{\widehat{u}}
\newcommand{\eps}{\varepsilon}
\newcommand{\eat}[1]{}
\title{\bf Matchings Under Biased and Correlated Evaluations}
\author{
Amit Kumar\\
IIT Delhi
\and
Nisheeth K.~Vishnoi\\
Yale University
}
\date{}
\begin{document}
\maketitle

\begin{abstract}
We study a two-institution stable matching model in which candidates from two distinct groups are evaluated using partially correlated signals that are group-biased. This extends prior work (which assumes institutions evaluate candidates in an identical manner) to a more realistic setting in which institutions rely on overlapping, but independently processed, criteria. 
These evaluations could consist of a variety of informative tools such as standardized tests, shared recommendation systems, or AI-based assessments with local noise. 
Two key parameters govern evaluations: the bias parameter $\beta \in (0,1]$, which models systematic disadvantage faced by one group, and the correlation parameter $\gamma \in [0,1]$, which captures the alignment between institutional rankings.
We study the representation ratio $\mathcal{R}(\beta, \gamma)$, i.e., the ratio of disadvantaged to advantaged candidates selected by the matching process in this setting.
Focusing on a regime in which all candidates prefer the same institution, we characterize the large-market equilibrium and derive a closed-form expression for the resulting representation ratio.
Prior work shows that when $\gamma = 1$, this ratio scales linearly with $\beta$. 
In contrast, we show that $\mathcal{R}(\beta, \gamma)$ increases nonlinearly with $\gamma$ and even modest losses in correlation can cause sharp drops in the representation ratio.
Our analysis identifies critical $\gamma$-thresholds where institutional selection behavior undergoes discrete transitions, and reveals structural conditions under which evaluator alignment or bias mitigation are most effective. Finally, we show how this framework and results enable interventions for fairness-aware design in decentralized selection systems.
\end{abstract}

\newpage

\tableofcontents

\newpage

\section{Introduction}

Stable matching mechanisms are a cornerstone of allocation theory, extensively studied in game theory, economics, and, more recently, machine learning \cite{roth1985common,roth1989incompleteinfo,roth1999redesign,aziz2016stable,das2005two,muthukrishnan2009ad,liu2020matchinbandit,liu2021matching,yifei2022matching}. These mechanisms assign agents (e.g., students, workers, users) to institutions (e.g., schools, employers, content slots) under preference and capacity constraints, ensuring that no unmatched agent-institution pair would mutually prefer each other over their assigned match. 
Such matching systems are widely deployed in admissions practices, labor markets, and digital platforms.

Often, preferences across candidates are determined using evaluations such as standardized tests, interviews, or aggregate reviews. 
However, such evaluations can exhibit \emph{group-dependent bias}—systematically disadvantaging candidates from certain demographic groups \cite{rooth2010automatic,kite2016psychology,regner2019committees,elsesser2019lawsuitSATACT,corinne2012science,wenneras2001nepotism,lyness2006fit}. 
Further, institutions often rely on overlapping signals (e.g., standardized test scores or resumes), yielding \emph{inter-institutional correlation} in how candidates are assessed. 
These structural properties of bias and correlation can jointly skew allocation outcomes, compounding access gaps across education, employment, and economic opportunity \cite{lecher2021nycschoolbias,laverde2022distance,grover2017brookingsSchoolBias,hannak2017freelance,upturn_help_wanted}.
The rise of AI-driven evaluation tools has further amplified these concerns. 
Predictive scoring models, automated interviews, and recommender systems are now widely used to rank candidates \cite{penfold2025ai}. 
While scalable and efficient, these systems often inherit biases from historical data and reinforce alignment through shared features or pretraining \cite{raghavan2020mitigating,term_of_stay_correlated_with_race}. 
As a result, modern evaluations may not only be biased but also correlated across institutions in opaque or unintended ways.

A growing body of work has studied the impact of biased evaluations on allocation. 
For example, \cite{Celis0VX24} analyzes a centralized matching model where a common evaluation multiplicatively downweights scores for one group. 
They show that representation degrades linearly with bias and propose fairness-preserving mechanisms. 
In contrast, many real-world systems are decentralized: institutions make decisions independently, often using correlated but locally processed signals. 
For instance, school choice programs use structured tie-breaking rules that induce inter-institutional correlation \cite{AbdulkadirogluCheYasuda2015,Arnosti2022}. 
Even in the absence of bias, such correlation can generate disparities across groups \cite{castera2022statistical}.
These insights motivate our central question:
\emph{How do group-dependent bias and inter-institutional correlation jointly shape group-level representation in decentralized matching markets?}

\smallskip
\noindent
{\bf Our contributions.}
We build on and generalize the centralized model of \cite{Celis0VX24} by introducing a decentralized stable matching framework that jointly models group-dependent bias and inter-institutional correlation.
We consider a two-institution setting where candidates belong to one of two groups: \(G_1\) ({\em advantaged}) and \(G_2\) ({\em disadvantaged}). 
A total fraction \(c\) of candidates must be matched across both institutions.
Each candidate possesses two latent attributes. Institution~1 ranks candidates solely by the first attribute, while Institution~2 uses a weighted combination of both attributes. The \emph{correlation parameter} \(\gamma \in [0,1]\) controls the degree of alignment between the evaluations of the two institutions. As in \cite{Celis0VX24}, candidates from \(G_2\) face a multiplicative bias \(\beta \in (0,1]\) applied to both attributes.
For our main results, we assume that all candidates prefer Institution~1, reflecting prestige-driven preferences in many real-world settings (e.g., centralized college admissions \cite{AbdulkadirogluPathakRoth2009, Pathak1, mind2022IITallotment, verma2022JOSAA}). 
We extend our analysis to general preference distributions in Section~\ref{sec:general_p}.
Our main contributions are:
\begin{enumerate}
 \item \textbf{Equilibrium characterization.} 
We first show that, in the infinite population limit, the stable matching is governed by two thresholds, \(s_1^\star\) and \(s_2^\star\), which depend on \(\beta\), \(\gamma\), and \(c\). 
The threshold \(s_1^\star\) admits a straightforward expression; see Equation~\eqref{eq:s1}. 
In contrast, deriving \(s_2^\star\) requires resolving interdependence between institutions' rankings and cross-group orderings. 
We introduce a regime-reduction technique that collapses sixteen potential case distinctions into three interpretable \(\gamma\)-based regions, determined by analytically derived thresholds \(\gamma_1, \gamma_2, \gamma_3\) (\Cref{thm:gammathreshold}). 
Using this structure, we obtain a closed-form expression for \(s_2^\star\) (\Cref{prop:s2equations}) and prove that it varies \textit{unimodally} with respect to \(\gamma\) (\Cref{thm:s2highbeta}), capturing a subtle trade-off: increasing correlation can both enhance evaluator alignment and reduce access to high-scoring candidates due to intensified competition with Institution~1. 

  \item \textbf{Fairness metric and monotonicity.} 
  Building on the equilibrium thresholds, we derive a piecewise closed-form expression for the representation ratio \(\mathcal{R}(\beta, \gamma)\), which measures the relative selection rates between the two groups (\Cref{thm:Rgamma}). 
  We introduce a normalized variant, \(\mathcal{N}(\beta, \gamma)\), which quantifies fairness relative to the ideal setting with full evaluator alignment. 
  Unlike \(\mathcal{R}\), previously studied in the centralized model of~\cite{Celis0VX24}, the normalized ratio isolates the effect of evaluator misalignment and enables scale-free comparisons across regimes. 
  Because both \(\mathcal{R}\) and \(\mathcal{N}\) depend on regime-specific thresholds, direct algebraic analysis is intractable; instead, we establish their monotonicity in \(\beta\) and \(\gamma\) by examining the equilibrium-defining equations themselves.

  \item \textbf{Studying fairness interventions.} 
  Leveraging the structure of \(\mathcal{N}(\beta, \gamma)\), we identify the Pareto frontier of minimal interventions that achieve a desired fairness level \(\tau\). 
  Specifically, we characterize the combinations of evaluator bias \(\beta\) and alignment \(\gamma\) that satisfy \(\mathcal{N}(\beta, \gamma) \geq \tau\), and visualize these trade-offs via contour plots. 
  This enables system designers—given current evaluation parameters \((\beta_0, \gamma_0)\) and a target \(\tau\)—to determine whether fairness can be achieved by adjusting either parameter and to select cost-effective strategies accordingly; see Section~\ref{sec:intervention}.
\end{enumerate}
Together, these results provide a structural and quantitative foundation for fairness-aware design in decentralized selection systems with biased and partially aligned evaluations.

\section{Related work}

The game-theoretic study of assignment and selection problems, initiated by \cite{galeshapley1962}, has expanded into a broad literature exploring stability, truthfulness, incomplete information, and machine learning approaches to preference inference \cite{roth1985common, roth1989incompleteinfo, roth1999redesign, das2005two, Rastegari2013MatchingPartialInformation, liu2014stable, aziz2016stable, liu2020matchinbandit, liu2021matching, yifei2022matching}. Comprehensive overviews are provided in \cite{roth1992Handbook, manlove2013algorithmics}.

Several empirical studies examine biases in algorithmic assignment and selection systems across real-world contexts, including centralized admissions and online labor markets \cite{grover2017brookingsSchoolBias,hannak2017freelance,lecher2021nycschoolbias,laverde2022distance}.

Our work builds on and extends the framework of \cite{Celis0VX24}, which studies a centralized multi-institution matching model where a single entity evaluates candidates. Their model assumes a uniform evaluation process across institutions and systematic bias against disadvantaged candidates, leading to unfair and inefficient outcomes. They propose fairness-aware selection algorithms to mitigate these issues. We generalize this setting by introducing \textit{correlated institutional evaluations} and analyzing how \textit{bias and correlation jointly shape fairness and representation}.

A related line of research examines the role of correlation in candidate evaluations (e.g., \cite{Ashlagi2019, Ashlagi2020, Arnosti2022b, castera2022statistical}). \cite{castera2022statistical} studies a model where institutions assign correlated scores while maintaining unchanged marginal distributions. Unlike their work, which focuses on emergent statistical discrimination due to varying correlation across demographic groups, we analyze \textit{explicit biases} in evaluations and their compounded effect with correlation on stable matching

A parallel and complementary line of work studies correlated evaluations in continuum matching markets using stylized models of noisy signal generation. In particular, \cite{peng2024monoculture}  and \cite{PengGarg} 
analyze how shared or idiosyncratic noise across institutions shapes diversity, efficiency, and learning dynamics. While their models abstract away from institutional selection rules and stability constraints, they highlight how evaluation noise and correlation can lead to monoculture outcomes or suppress informative distinctions between candidates. Our work differs in modeling two-sided stable matchings with explicit score-based thresholds, but shares a broader motivation: understanding how evaluation design influences representational and welfare outcomes.

Other studies explore decentralized selection systems where institutions evaluate candidates using multiple criteria. \cite{che2016decentralized} examines decentralized matching but focus on institution-specific fit measures and strategic behavior, whereas our model incorporates both \textit{correlated evaluations and biases}. Despite these differences, both approaches highlight inefficiencies in decentralized allocation.

Our work also connects to research using a continuum model of students and institutions, which simplifies analysis by treating admissions as a supply-and-demand problem. \cite{ChadeLewisSmith2014} study noisy priorities, while \cite{AzevedoLeshno2016} develop a framework for characterizing stable matchings. \cite{Arnosti2022} refine these models to improve approximations in finite markets. We apply similar techniques to analyze equilibrium thresholds under bias and correlation.

The role of correlation in student priorities has been examined in school choice, particularly regarding tie-breaking lotteries and welfare outcomes. \cite{AbdulkadirogluCheYasuda2015, Arnosti2022} compare settings where institutions share a common lottery versus independent lotteries, corresponding to $\gamma = 1$ (fully correlated) and $\gamma = 0$ (independent evaluations) in our model. 

Beyond school choice, other studies analyze the correlation between student preferences and priorities. \cite{BrilliantovaHosseini2022} examines one-to-one matching mechanisms that avoid systematic disadvantages, while \cite{CheTercieux2019} study how correlation affects the stability-efficiency trade-off when comparing Deferred Acceptance and Top Trading Cycles.

Our work also relates to empirical studies on biases in real-world allocation systems, including school admissions and online labor markets \cite{grover2017brookingsSchoolBias,hannak2017freelance,lecher2021nycschoolbias,laverde2022distance}. \cite{castera2022statistical} theoretically analyzes selection under differential noise, showing that increased noise for disadvantaged groups can harm both groups by reducing the probability of securing a preferred match. In contrast, we model bias as a systematic downscaling of scores and analyze its effect on \textit{representation and fairness metrics}.

Finally, we discuss related work on bias in algorithmic decision-making and corresponding interventions. Several studies investigate core algorithmic problems when inputs are subject to bias, including subset selection \cite{KleinbergR18,EmelianovGGL20,salem2020closing,celis2021longterm,garg2021standardized,mehrotra2022intersectional}, ranking \cite{celis2020interventions}, and classification \cite{blum2020recovering}. Another line of research focuses on fairness constraints in selection problems, such as ensuring proportional representation \cite{KleinbergR18, Chierichetti0LV19, Karni2022fairmatching}, with the goal of mitigating bias through explicit constraints.

\section{Model}\label{sec:model}
We consider a two-institution matching setting where Institution~\(i\) has capacity \(c_i n\) for \(i \in \{1,2\}\). 
Candidates belong to an \emph{advantaged group} \(G_1\) or a \emph{disadvantaged group} \(G_2\), with group sizes \(|G_i| = \nu_i n\). 
We focus on the demand-exceeding-supply regime, where \(c_1 + c_2 < \nu_1 + \nu_2\).

\smallskip
\noindent
\textbf{Correlated evaluations.}
Each candidate \(i\) has two independent attributes, \(v_{i1}\) and \(v_{i2}\), drawn i.i.d.\ from the uniform distribution on \([0,1]\). 
Institution~1 evaluates candidates using \(v_{i1}\), while Institution~2 uses a convex combination:
\[  
u_{i1} = v_{i1} \qquad \mbox{ and } \qquad u_{i2} = \gamma v_{i1} + (1 - \gamma) v_{i2},
\]
where \(\gamma \in [0,1]\) controls the degree of correlation between the two institutions' evaluations.
This model captures, for instance, settings where Institution~2 relies partly on a common criterion and partly on institution-specific judgments.

\smallskip
\noindent
\textbf{Bias in evaluations.}
We model bias using a parameter \(\beta \in (0,1]\), following the framework of \cite{KleinbergR18,Celis0VX24}. The estimated utility \(\hat{u}_{i\ell}\) of a candidate from group \(G_2\) for Institution $\ell$ is downscaled relative to their true utility:
\[   
\hat{u}_{i\ell} =
u_{i\ell}   \ \ \text{if } i \in G_1 \quad \mbox { and } \quad
u_{i\ell}=\beta u_{i\ell}  \ \ \ \text{if } i \in G_2.\]
This captures settings where disadvantaged candidates are systematically undervalued in evaluations, for example, due to differences in interview performance or standardized tests. Our analysis focuses on \(\beta \leq 1\), though it extends naturally to \(\beta > 1\) to model overestimation,  to asymmetric bias across institutions; see~\Cref{sec:bias-asym}), or to additive bias (\Cref{sec:additive_bias}). 
Both institutions evaluate and select candidates from \emph{both} groups; the parameter~\(\beta\) only affects how candidates from \(G_2\) are evaluated.

\smallskip
\noindent
\textbf{Stable matching.}
A matching \(M\) assigns each candidate to an institution, subject to institutional capacity constraints. The matching \(M\) is \emph{stable} if no candidate \(i\) and institution \(\ell\) form a blocking pair: that is, if \(i\) prefers \(\ell\) to their assigned institution \(M(i)\), and there exists a candidate \(j\) currently matched to \(\ell\) such that \(\hat{u}_{i\ell} > \hat{u}_{j\ell}\), then \(i\) must already be assigned to \(\ell\).
Stable matchings can be computed via the Deferred Acceptance algorithm \cite{galeshapley1962, roth1999redesign}. We consider a model in which each candidate prefers Institution~1 over Institution~2 with probability \(p\). The main body focuses on the case \(p = 1\), where all candidates prefer Institution~1.  Extensions to general \(p\) are discussed in Appendix~\ref{sec:general_p}.

\smallskip
\noindent
\textbf{Scaling limit.}
We analyze the model in the large-market regime, where the number of candidates \(n \to \infty\). This is a standard approach in stable matching theory for capturing micro-to-macro behavior.
In the finite setting, it is known that stable matchings are characterized by threshold-based rules; see \Cref{prop:threshold}. 
Specifically, for any realization of the evaluation scores \(\hat{u}\), the unique stable assignment is determined by two cutoffs \(S_1\) and \(S_2\) such that
$
M^{-1}(1) = \{i : \hat{u}_{i1} \geq S_1 \}, \quad
M^{-1}(2) = \{i : \hat{u}_{i1} < S_1 \text{ and } \hat{u}_{i2} \geq S_2 \}$.
Taking expectations, normalizing by \(n\), and letting \(n \to \infty\), we obtain the \emph{scaling limit}, where the stable matching is described by deterministic thresholds \(s_1\) and \(s_2\) satisfying:
\begin{align}  
\nu_1 \Pr[\hat{u}_{i1} \geq s_1] + \nu_2 \Pr[\hat{u}_{i'1} \geq s_1] &= c_1, \label{eq:twoinst1} \\  
\nu_1 \Pr[\hat{u}_{i1} < s_1, \hat{u}_{i2} \geq s_2] + \nu_2 \Pr[\hat{u}_{i'1} < s_1, \hat{u}_{i'2} \geq s_2] &= c_2. \label{eq:twoinst2}
\end{align}
Here, \(i\) denotes a candidate from group \(G_1\) and \(i'\) from group \(G_2\). These equations admit a \emph{unique} solution \((s_1^\star, s_2^\star)\); see  ~\Cref{sec:continuum} for formal details. 
 We show in~\Cref{sec:finite-convergence} that the finite thresholds \((S_1, S_2)\) converge to their deterministic limits \((s_1^\star, s_2^\star)\) as \(n \to \infty\).

\medskip
\noindent
\textbf{Metrics.}
To quantify fairness in selection outcomes, we use the \emph{representation ratio} introduced by \cite{Celis0VX24}. Let \(\rho_j\) denote the fraction of selected candidates from group \(G_j\), and define:
$
\mathcal{R}(M) := \frac{\min(\rho_1, \rho_2)}{\max(\rho_1, \rho_2)}$.
This metric captures disparity in group-level representation, with \(\mathcal{R}(M) = 1\) indicating perfect proportionality across groups. In the scaling limit, the representation ratio can be expressed in closed form in terms of the stable matching thresholds \((s_1^\star, s_2^\star)\) derived from  equations~\eqref{eq:twoinst1}–\eqref{eq:twoinst2}:
\begin{align}  
\label{eq:Rformula}
\mathcal{R}(\beta,\gamma)
=\frac{\Pr[\hat u_{i'1}\ge s_{1}^{\star}]
             +\Pr[\hat u_{i'1}<s_{1}^{\star},\,\hat u_{i'2}\ge s_{2}^{\star}]}
           {\Pr[\hat u_{i1}\ge s_{1}^{\star}]
             +\Pr[\hat u_{i1}<s_{1}^{\star},\,\hat u_{i2}\ge s_{2}^{\star}]}.
\end{align}
While \(\mathcal{R}(\beta, \gamma)\) is useful for quantifying raw disparity, it is not always scale-free: for fixed \(\beta\), it may attain a maximum below 1 depending on the alignment parameter \(\gamma\). This complicates comparisons across systems with different baseline levels of bias.
To address this, we introduce the {\em normalized representation ratio} 
$
\mathcal{N}(\beta, \gamma) := \frac{\mathcal{R}(\beta, \gamma)}{\max_{\gamma' \in [0,1]} \mathcal{R}(\beta, \gamma')}$,
which measures the fraction of ideal representation achieved at the current alignment level. Unlike \(\mathcal{R}\), this normalized metric allows us to isolate the impact of misalignment independently of the underlying bias and supports principled intervention analysis. 

\section{Main results}
\label{sec:mainresults}

We characterize the \emph{representation ratio} \(\mathcal{R}(\beta, \gamma)\) (and thus \(\mathcal{N}(\beta, \gamma)\)), which quantifies how equitably candidates from advantaged and disadvantaged groups are selected under a stable matching.
Under the scaling limit introduced in Section~\ref{sec:model}, the matching is determined by two thresholds \(s_1^\star\) and \(s_2^\star\), which are solutions to Equations~\eqref{eq:twoinst1} and~\eqref{eq:twoinst2}.
For clarity, we focus on the symmetric case \(c_1 = c_2 = c\) and \(\nu_1 = \nu_2 = 1\), providing a complete piecewise characterization of \(s_1^\star\), \(s_2^\star\), and the resulting \(\mathcal{R}(\beta, \gamma)\) as functions of the bias parameter~\(\beta\) and the correlation parameter~\(\gamma\). 
We focus on the demand-exceeding-supply regime \(c < 1\), where the thresholds determine partial selection. 
Our analysis centers on the regime \(\beta \geq 1 - c\), where both groups are eligible for selection by Institution~1—capturing cases where bias is moderate and fairness is nontrivial. Extensions to \(\beta < 1 - c\) and to more general preference models are deferred to~\Cref{sec:beta_low} and~\Cref{sec:general_p} respectively.

\smallskip
\noindent
{\bf Step 1: Computing the threshold \boldmath{\(s_1^\star\)}.}
We begin by solving equation~\eqref{eq:twoinst1}, which determines the threshold \(s_1^\star\) used by Institution~1. Since Institution~1 evaluates candidates using their first attribute \(v_{i1}\), and a bias factor of \(\beta\) is applied to candidates from group \(G_2\), assuming \(\nu_1 = \nu_2 = 1\) and \(c_1 = c_2= c\), the left-hand side of~\eqref{eq:twoinst1} becomes the average admission probability across both groups:
$
\frac{1}{2} \Pr[\hat{u}_{i1} \geq s_1] + \frac{1}{2} \Pr[\hat{u}_{i'1} \geq s_1]
= \frac{1}{2} (1 - s_1) + \frac{1}{2} \left(1 - \frac{s_1}{\beta}\right)$.
Setting this  to \(c\) yields:
\begin{equation}\label{eq:s1}  
s_1^\star = \frac{2 - c}{1 + 1/\beta}.
\end{equation}
This expression shows that \(s_1^\star\) increases with \(\beta\): as bias decreases and evaluations of group \(G_2\) improve, the institution can raise its threshold while maintaining its target capacity.
For this threshold to admit candidates from both groups, it must satisfy \(s_1^\star \leq \beta\), ensuring that the effective threshold for group \(G_2\) does not exceed their maximum possible score. Moreover, when $\beta \geq 1-c$, one can show that $s_2^\star \leq s_1^\star$ (\Cref{fact:s2leqs1}).  
This yields the condition \(\beta \geq 1 - c\), which defines our main regime of interest. When \(\beta < 1 - c\), Institution~1 selects only from group \(G_1\), resulting in a degenerate matching; we defer this case to Appendix~\ref{sec:beta_low}.
Finally, note that \(s_1^\star\) depends only on \(\beta\) and \(c\), and is independent of  \(\gamma\), which influences only the threshold \(s_2^\star\) used by Institution~2.

\smallskip
\noindent
{\bf Step 2: Computing the threshold \boldmath{\(s_2^\star\)}.}
The threshold \(s_2^\star\) satisfies the second matching equation~\eqref{eq:twoinst2}, which captures the fraction of candidates not admitted by Institution~1 but accepted by Institution~2. That is, \(s_2^\star\) defines the acceptance region for Institution~2 over candidates rejected by Institution~1, based on a potentially different evaluation criterion.
As a warm-up, consider the case \(\gamma = 1\), where both institutions evaluate candidates using the same biased attribute \(v_{i1}\). This corresponds to a centralized setting previously studied in~\cite{Celis0VX24}. In this case, Institution~2 selects candidates whose (biased) utility falls in the interval \([s_2^\star, s_1^\star)\), so the selection probabilities for each group are:
$\Pr[\hat{u}_{i1} \in [s_2, s_1]] = s_1 - s_2$, 
$\Pr[\hat{u}_{i'1} \in [s_2, s_1]] = \frac{s_1 - s_2}{\beta}$.
Using equation~\eqref{eq:twoinst2}, the capacity constraint becomes:
$
\frac{1}{2}(s_1 - s_2) + \frac{1}{2}\left( \frac{s_1 - s_2}{\beta} \right) = c$.
Solving yields the closed-form expression:
$s_2^\star = s_1^\star - \frac{c}{1 + 1/\beta}$.
This expression serves as a benchmark and reveals a useful structure: when both institutions are aligned in evaluation, \(s_2^\star\) is simply a shifted version of \(s_1^\star\).

In the general case \(\gamma \in (0,1)\),
Institution~2 evaluates a convex combination of the two attributes: \(u_{i2} = \gamma v_{i1} + (1 - \gamma)v_{i2}\). A candidate is selected if they are rejected by Institution~1 (\(\hat{u}_{i1} < s_1\)) and accepted by Institution~2 (\(\hat{u}_{i2} \geq s_2\)). The key term in~\eqref{eq:twoinst2} is the joint probability \(\Pr[\hat{u}_{i2} \geq s_2 \wedge \hat{u}_{i1} < s_1]\), which corresponds to the area above the line \( L(\gamma, s_2):= \gamma x + (1 - \gamma)y = s_2\) inside the rectangle \([0, s_1] \times [0, 1]\).
Depending on how this line intersects the rectangle, we distinguish four cases:

\smallskip
\noindent
{\em Case I: $s_2 \leq \min(s_1 \gamma, 1-\gamma)$.}  
The line intersects the $y$-axis at $\frac{s_2}{1-\gamma} \leq 1$ and the $x$-axis at $\frac{s_2}{\gamma} \leq s_1$, yielding  
$\Pr[\hu_{i1} < s_1, \hu_{i2} \geq s_2] = s_1 - \frac{s_2^2}{2 \gamma (1-\gamma)}.$

\smallskip
\noindent
{\em Case II: $s_2 \geq \max(s_1 \gamma, 1-\gamma)$.}  
The line intersects the $y$-axis at $\frac{s_2}{1-\gamma} > 1$ and the $x$-axis at $\frac{s_2}{\gamma} > s_1$, so  
$\Pr[\hu_{i1} < s_1, \hu_{i2} \geq s_2] = \frac{(1-\gamma-s_2+\gamma s_1)^2}{2\gamma(1-\gamma)}.$

\smallskip
\noindent
{\em Case III: $s_1 \gamma < s_2 < 1-\gamma$.}  
The line intersects the $y$-axis at $\frac{s_2}{1-\gamma} \leq 1$ and the line $x=s_1$ at $\frac{s_2-\gamma s_1}{1-\gamma}$, leading to  
$\Pr[\hu_{i1} < s_1, \hu_{i2} \geq s_2] =  \frac{s_1}{2} \left(2- \frac{2s_2-\gamma s_1}{1-\gamma} \right).$

\smallskip
\noindent
{\em Case IV: $1-\gamma < s_2 < s_1 \gamma$.}  
The line intersects the $y$-axis at $\frac{s_2}{1-\gamma} > 1$ and the $x$-axis at $\frac{s_2}{\gamma} \leq s_1$, giving  
$\Pr[\hu_{i1} < s_1, \hu_{i2} \geq s_2] =   s_1 - \frac{s_2}{\gamma} + \frac{1-\gamma}{2\gamma}.$

We now repeat the analysis for group \(G_2\), whose evaluations are rescaled by \(\beta\). That is, the decision boundary becomes
$
L_\beta(\gamma, s_2):= \gamma x + (1 - \gamma)y = s_2/\beta,$
and the relevant domain is \([0, s_1/\beta] \times [0,1]\). 
The same four cases apply, denoted I'–IV', depending on how the line intersects this rectangle.
A brute-force approach would guess the correct pair of cases for groups \(G_1\) and \(G_2\) (16 combinations total), and solve~\eqref{eq:twoinst2} under that assumption. However, this yields little structural insight into the behavior of \(s_2^\star\) or the resulting representation ratio \(\mathcal{R}(\beta, \gamma)\). In the next step, we characterize this structure piecewise by regime.

\begin{figure}
\begin{tikzpicture}[scale=1.1]
\def\scale{1.5}

\begin{scope}
\coordinate (A) at (0,0);
\coordinate (B) at (0,1*\scale);
\coordinate (C) at (0.9*\scale,1*\scale);
\coordinate (D) at (0.9*\scale,0);

\coordinate (E) at (0.6 * \scale,0);
\coordinate (F) at (0,0.6 * \scale);

\draw[thin] (A) -- (B) -- (C) -- (D) -- cycle;

\fill[yellow!30] (E) -- (F) -- (B) -- (C) -- (D) -- (E);

\draw[thick] (E) -- (F) -- (B) -- (C) -- (D) -- (E);

\node[below left] at (A) {(0,0)};
\node[above left] at (B) {(0,1)};
\node[above] at (D) {\small \(s_1 = 0.9\)};

\draw[->, dotted, thick] (-0.5*\scale,0) -- (1.5*\scale,0) node[right] {$x$};
\draw[->, dotted, thick] (0,-0.5 * \scale) -- 
(0,1.5 * \scale) node[above] {$y$};

\draw[red, thick] (0,0.6 * \scale) -- (0.6 * \scale,0) node[midway, right, xshift =5pt, yshift=5pt]{$L(\gamma,s_2)$};

\fill[black] (0,0) circle[radius=0.05cm]; 

\fill[black] (0,1 * \scale) circle[radius=0.05cm]; 

\fill[black] (0.9 * \scale, 1 * \scale) circle[radius=0.05cm]; 

\fill[black] (0.9 * \scale,0) circle[radius=0.05cm]; 

\fill[black] (0,0.6 * \scale) circle[radius=0.05cm]; 

\node[below] at (0.6 * \scale,0) {$\frac{s_2}{\gamma}$};

\fill[black] (0.6 * \scale,0) circle[radius=0.05cm]; 

\node[left] at (0,0.6 * \scale) {$\frac{s_2}{1-\gamma}$};

\draw[dotted] (D) -- (C) ;
\end{scope}


\begin{scope}[shift={(3.6,0)}]
\coordinate (A) at (0,0);
\coordinate (B) at (0,1*\scale);
\coordinate (C) at (0.9*\scale,1*\scale);
\coordinate (D) at (0.9*\scale,0);

\coordinate (E) at (0.2*\scale, 1*\scale);
\coordinate (F) at (0.9*\scale,0.3 * \scale);

\draw[thin] (A) -- (B) -- (C) -- (D) -- cycle;

\fill[yellow!30] (E) -- (C) -- (F) -- (E);

\draw[thick] (E) -- (C) -- (F) -- (E);

\node[below left] at (A) {(0,0)};
\node[above left] at (B) {(0,1)};
\node[below] at (D) {\small \(s_1 = 0.9\)};

\draw[->, dotted, thick] (-0.5*\scale,0) -- (1.5*\scale,0) node[right] {$x$};
\draw[->, dotted, thick] (0,-0.5 * \scale) -- 
(0,1.5 * \scale) node[above] {$y$};

\fill[black] (0,0) circle[radius=0.05cm]; 

\fill[black] (0,1 * \scale) circle[radius=0.05cm]; 

\fill[black] (0.9 * \scale, 1 * \scale) circle[radius=0.05cm]; 

\fill[black] (0.9 * \scale,0) circle[radius=0.05cm]; 

\draw[red, thick] (.2*\scale,1 * \scale) -- (0.9 * \scale,0.3*\scale) node[midway, right, xshift =5pt, yshift=5pt]{$L(\gamma,s_2)$};

\fill[black] (0.9*\scale,0.3 * \scale) circle[radius=0.05cm]; 

\fill[black] (0.2*\scale,1 * \scale) circle[radius=0.05cm]; 

\node[right] at (0.9 * \scale,0.3 * \scale) {$\frac{s_2-\gamma s_1}{1-\gamma}$};

\fill[black] (0.2 * \scale,1 *\scale) circle[radius=0.05cm]; 

\node[above right] at (0.2 *\scale,1 * \scale) {$\frac{s_2-(1-\gamma)}{\gamma}$};

\draw[dotted] (D) -- (C) ;
\end{scope}


\begin{scope}[shift={(7.2,0)}]
\coordinate (A) at (0,0);
\coordinate (B) at (0,1*\scale);
\coordinate (C) at (0.9*\scale,1*\scale);
\coordinate (D) at (0.9*\scale,0);

\coordinate (E) at (0.9 * \scale,0.28*\scale);

\coordinate (F) at (0,0.833 * \scale);

\draw[thin] (A) -- (B) -- (C) -- (D) -- cycle;

\fill[yellow!30] (F) -- (B) -- (C) -- (E) -- (F);

\draw[thick] (F) -- (B) -- (C) -- (E) -- (F);

\node[below left] at (A) {(0,0)};
\node[above left] at (B) {(0,1)};
\node[below] at (D) {\small \(s_1 = 0.9\)};

\draw[->, dotted, thick] (-0.5*\scale,0) -- (1.5*\scale,0) node[right] {$x$};
\draw[->, dotted, thick] (0,-0.5 * \scale) -- 
(0,1.5 * \scale) node[above] {$y$};

\fill[black] (0,0) circle[radius=0.05cm]; 

\fill[black] (0,1 * \scale) circle[radius=0.05cm]; 

\fill[black] (0.9 * \scale, 1 * \scale) circle[radius=0.05cm]; 

\fill[black] (0.9 * \scale,0) circle[radius=0.05cm]; 

\draw[red, thick] (E) -- (F) node[midway, right, xshift =5pt, yshift=5pt]{$L(\gamma,s_2)$};

\fill[black] (F) circle[radius=0.05cm]; 

\node[right, yshift = 5pt] at (E) {$\frac{s_2-\gamma s_1}{1-\gamma}$};

\fill[black] (E) circle[radius=0.05cm]; 

\node[left] at (F) {$\frac{s_2}{1-\gamma}$};

\draw[dotted] (D) -- (C) ;
 \end{scope}


\begin{scope}[shift={(10.8,0)}]
\coordinate (A) at (0,0);
\coordinate (B) at (0,1*\scale);
\coordinate (C) at (0.9*\scale,1*\scale);
\coordinate (D) at (0.9*\scale,0);

\coordinate (E) at (0.166 * \scale,1*\scale);

\coordinate (F) at (0.833 * \scale,0);

\draw[thin] (A) -- (B) -- (C) -- (D) -- cycle;

\fill[yellow!30] (E) -- (C) -- (D) -- (F) -- (E);

\draw[thick] (E) -- (C) -- (D) -- (F) -- (E);

\node[below left] at (A) {(0,0)};
\node[above left] at (B) {(0,1)};
\node[above] at (D) {\small \(s_1 = 0.9\)};

\draw[->, dotted, thick] (-0.5*\scale,0) -- (1.5*\scale,0) node[right] {$x$};
\draw[->, dotted, thick] (0,-0.5 * \scale) -- 
(0,1.5 * \scale) node[above] {$y$};

\fill[black] (0,0) circle[radius=0.05cm]; 

\fill[black] (0,1 * \scale) circle[radius=0.05cm]; 

\fill[black] (0.9 * \scale, 1 * \scale) circle[radius=0.05cm]; 

\fill[black] (0.9 * \scale,0) circle[radius=0.05cm]; 

\draw[red, thick] (E) -- (F) node[midway, right, xshift =5pt, yshift=5pt]{$L(\gamma,s_2)$};

\fill[black] (E) circle[radius=0.05cm]; 

\node[below] at (F) {$\frac{s_2}{\gamma}$};

\fill[black] (F) circle[radius=0.05cm]; 

\node[above, xshift = 8pt] at (E) {$\frac{s_2-(1-\gamma)}{\gamma}$};

\draw[dotted] (D) -- (C) ;
\end{scope}
\end{tikzpicture}
\caption{(Left to Right) Cases I, II, III, IV that arise in the probability computation.} 
\label{fig:4cases}
\end{figure}
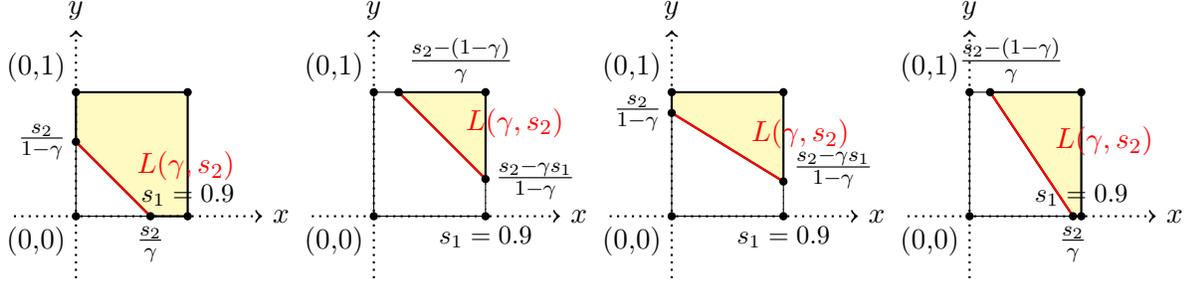

\smallskip
\noindent
{\bf Step 3: Identifying structural regimes via \boldmath{\(\gamma\)}-thresholds.}
Remarkably, our next result shows that only four of the sixteen case pairs (I–IV × I'–IV') arise in equilibrium, and that each occurs over a contiguous sub-interval of \([0,1]\) characterized by three values $0\leq \gamma_1 \leq \gamma_2 \leq \gamma_3 \leq 1$.
\begin{restatable}[\bf \boldmath{\(\gamma\)}-thresholds]{theorem}{gammathreshold}
\label{thm:gammathreshold}
For any fixed $\beta\geq 1-c$, there exist unique values \(\gamma_1 \leq \gamma_2 \leq \gamma_3 \in [0,1]\) such that: 
(i) \(s_2^\star(\gamma)/\beta \leq 1-\gamma\) if and only if \(\gamma \leq \gamma_1\); 
(ii) \(s_2^\star(\gamma) \leq 1-\gamma\) if and only if \(\gamma \leq \gamma_2\); (iii) \(s_2^\star(\gamma) \geq \gamma s_1^\star\) if and only if \(\gamma \leq \gamma_3\).
Moreover, these thresholds are given by: 
(i) \(\gamma_1 = \frac{2s_1^\star(\beta + 1/\beta) - 4(1 - c)}{2s_1^\star(\beta + 1/\beta) - 4(1 - c) + (s_1^\star)^2(1 + 1/\beta^2)}\);  (ii) \(\gamma_2 = \frac{x}{x+1}\), where \(x\) is the unique root in \((\gamma_1, \gamma_3)\) of  
\(\left(1 + \tfrac{1}{\beta^2}\right)\tfrac{x^2(s_1^\star)^2}{2} + x\left(\tfrac{s_1^\star}{\beta} - \tfrac{s_1^\star}{\beta^2} - c\right) + \tfrac{1}{2}\left(1 - \tfrac{1}{\beta}\right)^2 = 0\);  (iii) \(\gamma_3 = \tfrac{1}{1 + c}\).
\end{restatable}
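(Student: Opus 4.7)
My plan is to treat the three conditions in parallel and exploit the fact that at each candidate threshold, one of the defining inequalities becomes an equality. At such a boundary, two adjacent geometric cases in the decomposition of $\Pr[\hu_{i2}\ge s_2,\hu_{i1}<s_1]$ share their formulas, so the apparent sixteen-way case ambiguity collapses and equation~\eqref{eq:twoinst2} reduces to a single scalar equation in $\gamma$ alone.

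First I would record that, for each fixed $\gamma\in[0,1]$, the left-hand side of~\eqref{eq:twoinst2} is continuous and strictly decreasing in $s_2$ (piecewise across the four geometric cases, with the formulas agreeing on case boundaries). Hence $s_2^\star(\gamma)$ is well-defined, depends continuously on $\gamma$, and each of the three quantities $s_2^\star(\gamma)/\beta-(1-\gamma)$, $s_2^\star(\gamma)-(1-\gamma)$, and $s_2^\star(\gamma)-\gamma s_1^\star$ is continuous in $\gamma$. I would then compute each threshold by direct substitution. For $\gamma_3$, setting $s_2=\gamma s_1^\star$ makes both the Case~II formula for $G_1$ and its $\beta$-rescaled analogue for $G_2$ collapse to $(1-\gamma)/(2\gamma)$, so \eqref{eq:twoinst2} becomes $(1-\gamma)/\gamma=c$, giving $\gamma_3=1/(1+c)$. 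For $\gamma_1$, setting $s_2=\beta(1-\gamma)$ and substituting the Case~III formula for $G_1$ together with the shared III'/II' boundary value $\gamma(s_1^\star)^2/(2\beta^2(1-\gamma))$ for $G_2$ produces a linear equation in $1-\gamma$, which I would rearrange using the identity $s_1^\star(1+1/\beta)=2-c$ into the stated form. For $\gamma_2$, setting $s_2=1-\gamma$ reduces the $G_1$ contribution to $\gamma(s_1^\star)^2/(2(1-\gamma))$ and the $G_2$ contribution (from Case~II') to $((1-\gamma)(1-1/\beta)+\gamma s_1^\star/\beta)^2/(2\gamma(1-\gamma))$; multiplying through by $2\gamma(1-\gamma)$, dividing by $(1-\gamma)^2$, and setting $x=\gamma/(1-\gamma)$ yields the stated quadratic in $x$.

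Finally, I would handle the ordering and uniqueness. The inclusion $\gamma_1\le\gamma_2$ is immediate from $\beta\le 1$, since $s_2^\star/\beta\ge s_2^\star$ makes condition (i) strictly stronger than (ii) at every $\gamma$. For $\gamma_2\le\gamma_3$, I would argue by regime analysis: on the interval where condition (iii) still holds but (ii) has failed, the equilibrium lies in the Case~II/II' regime, in which $s_2^\star$ is the unique solution of a smooth quadratic in $s_2$; direct differentiation of the implicit equation shows $\frac{d}{d\gamma}(s_2^\star-\gamma s_1^\star)<0$ there, so once this quantity becomes nonpositive at $\gamma_3$ the inequality cannot recur, forcing $\gamma_2\le\gamma_3$. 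Uniqueness of each individual threshold follows similarly: within each of the four regimes $s_2^\star$ is the implicit solution of a smooth equation in $(s_2,\gamma)$, and the derivative of each crossing function can be shown to be sign-definite at its zero, so each crossing occurs at most once. I expect the main obstacle to be showing that the quadratic defining $\gamma_2$ has exactly one admissible root; for this I would use the positivity of the constant term $(1-1/\beta)^2/2$ together with the signs of the polynomial at the endpoints---obtained from the already-derived explicit expressions for $\gamma_1$ and $\gamma_3$ (via the substitution $x=\gamma/(1-\gamma)$)---to force a unique sign change on the relevant interval.
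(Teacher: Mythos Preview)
Your derivations of the explicit threshold values are correct and essentially match the paper: substitute the boundary equality ($s_2=\beta(1-\gamma)$, $s_2=1-\gamma$, or $s_2=\gamma s_1^\star$) into the relevant piece of~\eqref{eq:twoinst2} and simplify.

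The substantive divergence is in how you establish existence, uniqueness, and the ordering. The paper does \emph{not} argue regime by regime. It proves a single global fact, valid for all $\gamma$ and independent of which geometric case applies: $s_2^\star(\gamma)/(1-\gamma)$ is strictly increasing and $s_2^\star(\gamma)/\gamma$ is strictly decreasing in $\gamma$. The argument is a short coupling: rewrite~\eqref{eq:twoinst2} as
\[
\Pr\!\bigl[X+(1/\gamma-1)Y\ge s_2^\star(\gamma)/\gamma,\;X<s_1^\star\bigr]+\bigl(\text{analogous }\beta\text{-term}\bigr)=c,
\]
then compare two values $\gamma<\gamma'$ directly using $x+(1/\gamma-1)y>x+(1/\gamma'-1)y$ pointwise. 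With these monotonicities in hand, uniqueness of each $\gamma_j$ and the inequality $\gamma_1\le\gamma_2$ are immediate (the increasing function $s_2^\star/(1-\gamma)$ hits $\beta\le 1$ before it hits $1$). The remaining inequalities $\gamma_1\le\gamma_3$ and $\gamma_2\le\gamma_3$ are then obtained by short contradictions: assuming the wrong order, substitute $s_2^\star=\gamma_3 s_1^\star$ into the equation valid on the resulting interval and deduce $c\ge s_1^\star$, contradicting $s_1^\star=(2-c)/(1+1/\beta)>c$ for $c<1/2$.

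Your alternative---checking that each crossing function has sign-definite derivative at its zeros---could in principle be pushed through, but it faces a bookkeeping problem you have not addressed: at a zero of one crossing function you do not yet know the signs of the other two, so you must still split into sub-cases to decide which piece of the implicit equation to differentiate. More seriously, your argument for $\gamma_2\le\gamma_3$ is circular as written. The set ``where (iii) holds but (ii) fails'' is $(\gamma_2,\gamma_3]$ only \emph{after} you know both that $\gamma_2\le\gamma_3$ and that the thresholds are unique; and proving $\tfrac{d}{d\gamma}(s_2^\star-\gamma s_1^\star)<0$ on that set does nothing to exclude the possibility that the set is empty, which is exactly the scenario $\gamma_3<\gamma_2$ you are trying to rule out. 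The paper's case-free monotonicity argument sidesteps both difficulties at once.
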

\noindent
At a high level, the reason why only four out of the possible sixteen cases arise is as follows. 
When $\gamma=0$, $L(\gamma, s_2)$ is horizontal and hence, we are in Case~III (see~\Cref{fig:4cases}). 
We show that ${s_2^\star(\gamma)}/{1-\gamma}$ and ${s_2^\star(\gamma)}/{\gamma}$ are monotonically increasing and decreasing functions of $\gamma$ respectively. 
Thus, as we raise $\gamma$, the $y$-intercept of $L(\gamma,s_2)$ increases and its $x$-intercepts decreases. Therefore, we transition from case~III to case~II and eventually to case~IV (when $\gamma=1$, $L(\gamma,s_2)$ is vertical) -- note that case~I never happens because in this case, the relevant area above $L(\gamma,s_2)$ is more than $1/2$, but the capacity at Institution~2 is at most $ 1/2$.
The line  $L_\beta(\gamma,s_2)$  also goes through these transitions as we raise $\gamma$. 
Since these two lines remain parallel (and $L_\beta$ lies above $L$) as we change $\gamma$, their transitions happen in a coordinated manner. \Cref{thm:gammathreshold} shows that the pair of cases corresponding to these two lines goes through the following transitions: (III, III'), (III, II'), (II, II'), (IV, IV'), and these 
 regimes are separated by three critical thresholds—\(\gamma_1\), \(\gamma_2\), and \(\gamma_3\). 

These thresholds also mark qualitative changes in how the institutions evaluate candidates and balance the two attributes. 
For instance, when \(\gamma < \gamma_1\), disadvantaged candidates can qualify based solely on their second attribute \(v_{i2}\), but for \(\gamma > \gamma_1\), they must also meet performance on \(v_{i1}\), making admission more stringent. Likewise, above \(\gamma_3\), both groups may qualify based on \(v_{i1}\) alone. Between these values, the selection regime reflects blended dependence on both attributes, with differential effects across groups due to bias.

The proof of the first part of Theorem~\ref{thm:gammathreshold} relies 
on monotonicity of the scaled thresholds \( \tfrac{s_2^\star(\gamma)}{\gamma}\) and \(\tfrac{s_2^\star(\gamma)}{1-\gamma}\) that arises due to the structure of the equilibrium condition and the convex combination of attributes.
The values of the $\gamma$-thresholds are derived as follows: for each $i \in \{1,2,3\}$, the definition of $\gamma_i$ also yields an expression for $s_2^\star(\gamma_i)$. For example, $s_2^\star(\gamma_1) = \beta(1-\gamma_1)$. Since the threshold $s_2^\star(\gamma)$ also satisfies the capacity constraint~\eqref{eq:twoinst2}, we get equations for solving $\gamma_i$.

Note that, while $\gamma_3 = \frac{1}{1 + c}$ remains fixed for all $\beta$, we show in~\Cref{lem:monotonegamma} that the threshold functions $\gamma_1(\beta)$ and $\gamma_2(\beta)$ are strictly increasing and decreasing in $\beta$, respectively, and both converge to a common limit as $\beta \to 1$. In particular,
$\lim_{\beta \to 1} \gamma_1(\beta) = \lim_{\beta \to 1} \gamma_2(\beta) = \frac{c}{1+c/2+c^2/4}< \gamma_3$.
This implies that even in the absence of bias ($\beta = 1$), structural phase transitions in the equilibrium threshold $s_2^\star$ (and hence the representation ratio $\mathcal{R}$) still occur as evaluator alignment $\gamma$ increases. Our framework thus provides insights not only into the compounding effects of bias and misalignment, but also into the residual impact of evaluator disagreement in the fully unbiased regime.

Finally, when \(\beta < 1 - c\), the equilibrium admits a richer structure: 
there is an additional threshold $\gamma_4$ where $\gamma_4 = s_2^\star(\gamma_4)/\beta$. Further, there is a {\em critical} value $\beta_c \in [0, 1-c]$ such that if $\beta < \beta_c$, the relative ordering of these thresholds is $\gamma_1, \gamma_3, \gamma_2, \gamma_4$; otherwise it is $\gamma_1, \gamma_2, \gamma_3, \gamma_4$. At this critical value of $\beta$, $\gamma_2$ equals $\gamma_3$, i.e., both the groups never satisfy the conditions in case~II for a given value of $\gamma$. 
We defer a full treatment of that regime to Appendix~\ref{sec:beta_low}.
We now use the threshold structure above to derive explicit expressions for \(s_2^\star(\gamma)\) in each regime. 

\smallskip
\noindent
{\bf Step 4: Solving for \boldmath{\(s_2^\star(\gamma)\)} in each regime.}
The thresholds \(\gamma_1, \gamma_2, \gamma_3\) partition \([0,1]\) into four intervals, each corresponding to a fixed pair of geometric cases for groups \(G_1\) and \(G_2\), namely $(III,III'), (III,II'), (II,II'), (IV,IV')$. In each such regime, the expression for \(\Pr[\hat{u}_{i2} \geq s_2 \wedge \hat{u}_{i1} < s_1]\) is known from Step~2, and we can solve equation~\eqref{eq:twoinst2} to obtain \(s_2^\star(\gamma)\).

\begin{restatable}[\bf Equations for \boldmath{\(s_2^\star(\gamma)\)}]{theorem}{stwoequationhighbeta}
\label{prop:s2equations}
Assume \(\beta \geq 1 - c\) and \(c < \tfrac{1}{2}\). Then \(s_2^\star(\gamma)\) satisfies:
\begin{itemize}
    \item \([0, \gamma_1]\):  
    \(s_1^\star - \tfrac{2s_1^\star s_2^\star - \gamma (s_1^\star)^2}{2(1 - \gamma)} + \tfrac{s_1^\star}{\beta} - \tfrac{2s_1^\star s_2^\star - \gamma (s_1^\star)^2}{2\beta^2(1 - \gamma)} = c\)
    \item \([\gamma_1, \gamma_2]\):  
    \(s_1^\star - \tfrac{2s_1^\star s_2^\star - \gamma (s_1^\star)^2}{2(1 - \gamma)} + \tfrac{(1 - \gamma - s_2^\star/\beta + \gamma s_1^\star/\beta)^2}{2\gamma(1 - \gamma)} = c\)
    \item \([\gamma_2, \gamma_3]\):  
    \(\tfrac{(1 - \gamma - s_2^\star + \gamma s_1^\star)^2}{2\gamma(1 - \gamma)} + \tfrac{(1 - \gamma - s_2^\star/\beta + \gamma s_1^\star/\beta)^2}{2\gamma(1 - \gamma)} = c\)
    \item \([\gamma_3, 1]\):  
    \(s_1^\star(1 + 1/\beta) - \tfrac{s_2^\star}{\gamma} - \tfrac{s_2^\star}{\beta\gamma} + \tfrac{1 - \gamma}{\gamma} = c\)
\end{itemize}
\end{restatable}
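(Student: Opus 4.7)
The plan is to obtain each of the four equations by specializing the capacity constraint~\eqref{eq:twoinst2} to the geometric configuration identified by Theorem~\ref{thm:gammathreshold}, and then substituting the area formulas derived in Step~2. Given that Theorem~\ref{thm:gammathreshold} tells us exactly which case pair is active on each of the four subintervals, the proposition is essentially a direct transcription.

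First, for group $G_1$ we have $\hat u_{i1}=v_{i1}$ and $\hat u_{i2}=\gamma v_{i1}+(1-\gamma)v_{i2}$, so the term $\Pr[\hat u_{i1}<s_1^\star,\,\hat u_{i2}\ge s_2^\star]$ equals the area above the line $L(\gamma,s_2^\star)$ inside the rectangle $[0,s_1^\star]\times[0,1]$, which Step~2 evaluates in closed form in each of Cases~I--IV. For group $G_2$ the bias factor $\beta$ appears multiplicatively on both scores, so the events translate to $v_{i'1}<s_1^\star/\beta$ and $\gamma v_{i'1}+(1-\gamma)v_{i'2}\ge s_2^\star/\beta$; the corresponding areas (Cases~I$'$--IV$'$) are therefore obtained from the $G_1$ formulas by the substitution $s_1^\star\mapsto s_1^\star/\beta$, $s_2^\star\mapsto s_2^\star/\beta$.

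Second, I appeal to Theorem~\ref{thm:gammathreshold} together with the monotonicity argument recorded immediately after it (the intercepts of $L(\gamma,s_2^\star)$ move monotonically in $\gamma$, and the parallel line $L_\beta$ shifts in lockstep) to conclude that the active case pair is constant on each subinterval and equals $(\text{III},\text{III}')$ on $[0,\gamma_1]$, $(\text{III},\text{II}')$ on $[\gamma_1,\gamma_2]$, $(\text{II},\text{II}')$ on $[\gamma_2,\gamma_3]$, and $(\text{IV},\text{IV}')$ on $[\gamma_3,1]$. Substituting the corresponding pair of area formulas into~\eqref{eq:twoinst2} produces, regime by regime, the four equations stated in the proposition. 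Case~I is excluded throughout because it would force the area above $L$ to exceed $1/2$, contradicting $c<1/2$.

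Because every step is a direct algebraic substitution, the main obstacle is purely bookkeeping: selecting the correct case pair for each interval and performing the $1/\beta$ rescaling for $G_2$ without error. Continuity of $s_2^\star(\gamma)$ across the thresholds $\gamma_1,\gamma_2,\gamma_3$ is automatic, since each threshold is defined by exactly the boundary condition that makes the adjacent case formulas coincide at the transition point, so the piecewise system is mutually consistent and the resulting $s_2^\star(\gamma)$ is well-defined on all of $[0,1]$.
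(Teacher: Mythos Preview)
Your proposal is correct and follows exactly the same approach as the paper: the paper's proof simply states that the equations follow from the definition of the $\gamma$-thresholds and the case analysis in Section~\ref{sec:mainresults}, citing (III,III$'$) on $[0,\gamma_1]$ as an example and leaving the other intervals analogous. Your write-up is in fact more detailed than the paper's, spelling out the $1/\beta$ rescaling for $G_2$, the exclusion of Case~I via $c<1/2$, and the continuity across the boundaries, all of which are consistent with the paper.
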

\noindent
The piecewise equations above yield a unique value of \(s_2^\star\) for each \(\gamma\).
In particular, even when the defining equation is quadratic, the interval-specific constraints on \(s_2^\star\) (e.g., bounds derived from geometry or feasibility) ensure a single consistent solution. 
As \(\gamma\) varies, \(s_2^\star\) transitions through qualitatively distinct behaviors—decreasing, convex, and increasing—reflecting a subtle interplay between evaluator alignment and group-based access. We show in~\Cref{thm:s2highbeta} that \(s_2^\star(\gamma)\) is unimodal (and, hence, non-monotone w.r.t. $\gamma$) and continuous, with matching slopes at regime boundaries. 
Note that when $\beta < 1-c$, due to the emergence of an additional threshold $\gamma_4$, \Cref{prop:s2equationslow} shows that $s_2^\star(\gamma)$ can now have several local minima or maxima (see~\Cref{sec:apps2variation}).  
These characterizations of \(s_2^\star\) now allow us to derive the representation ratio \(\mathcal{R}(\beta,\gamma)\), which we analyze in the next step.

\smallskip
\noindent
{\bf Step 5: Closed-form expression for representation ratio \boldmath{\(\mathcal{R}(\beta, \gamma)\)}.}
Using the expressions for \(s_2^\star\) derived in Step~4, we now give piecewise closed-form expressions for  \(\mathcal{R}(\beta, \gamma)\). 

\begin{restatable}[\bf Representation ratio]{theorem}{reprratio}
\label{thm:Rgamma}
Fix \(c < \tfrac{1}{2}\) and \(\beta \geq 1 - c\). 
Define
$
\Delta(\beta,\gamma) := -\beta s_1^\star + \sqrt{(s_1^\star)^2(\beta^2 + 1) - \tfrac{2(1 - \gamma)((1 - \beta)s_1^\star - c)}{\gamma}}$, and $
\theta(\beta,\gamma) := \frac{-\beta(1 - \beta) + \sqrt{-(1 - \beta)^2 + \tfrac{2c\gamma(1 + \beta^2)}{1 - \gamma}}}{1 + \beta^2}$.
Then:
\[   
\mathcal{R}(\beta, \gamma) = 
\begin{cases}
\frac{(\beta - (1 - c))(\beta + 1 - c) + c^2}{1 - \beta^2(1 - 2c)} 
    & \text{if } \gamma \in [0,\gamma_1],  \quad \quad \quad
\frac{1 - \frac{s_1^\star}{\beta} + \frac{\gamma \Delta^2(\beta,\gamma)}{2(1 - \gamma)}}%
     {1 - s_1^\star + c - \frac{\gamma \Delta^2(\beta,\gamma) }{2(1 - \gamma)}} 
     \quad \quad  \text{if } \gamma \in [\gamma_1, \gamma_2], \\ 
\frac{1 - \frac{s_1^\star}{\beta} + \frac{(1-\gamma)\theta^2(\beta,\gamma)}{2\gamma}}%
     {1 - s_1^\star + c - \frac{(1-\gamma)\theta^2(\beta,\gamma)}{2\gamma}} 
    &  \text{if } \gamma \in [\gamma_2, \gamma_3], \quad  \quad \quad 
\frac{-(1 - \beta)(1 + \gamma) + 4\gamma c}%
     {(1 + \gamma)(1 - \beta) + 4\gamma \beta c} 
    \quad  \text{if } \gamma \in [\gamma_3,1].
\end{cases}
\]
\end{restatable}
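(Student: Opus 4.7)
My plan is to exploit the capacity identity to reduce the representation ratio to a single per-regime quantity. Let $p_1 := \Pr[\hu_{i1} < s_1^\star, \hu_{i2} \geq s_2^\star]$ and $p_2 := \Pr[\hu_{i'1} < s_1^\star, \hu_{i'2} \geq s_2^\star]$. The capacity equation~\eqref{eq:twoinst2} asserts $p_1 + p_2 = c$, so~\eqref{eq:Rformula} simplifies to
\[
\mathcal{R}(\beta,\gamma) = \frac{1 - s_1^\star/\beta + p_2}{1 - s_1^\star + c - p_2}.
\]
Computing $\mathcal{R}$ on any regime thus reduces to computing $p_2$. From~\Cref{thm:gammathreshold}, the four intervals correspond to the case pairs $(\mathrm{III},\mathrm{III}')$, $(\mathrm{III},\mathrm{II}')$, $(\mathrm{II},\mathrm{II}')$, $(\mathrm{IV},\mathrm{IV}')$, which pin down the functional form of $p_1, p_2$ from Step~2.

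On the outer intervals $[0,\gamma_1]$ and $[\gamma_3,1]$, the Case~III and Case~IV formulas make $p_1, p_2$ affine in $s_2^\star$, so $p_1+p_2=c$ is a single linear equation. On $[0,\gamma_1]$ both $p_j$ depend on $s_2^\star$ only through the composite quantity $A := (2 s_1^\star s_2^\star - \gamma (s_1^\star)^2)/(2(1-\gamma))$, and solving for $A$ using the identity $s_1^\star(1+1/\beta) = 2-c$ from~\eqref{eq:s1} yields $A = 2(1-c)\beta^2/(\beta^2+1)$, which is $\gamma$-independent; direct substitution then recovers the first formula after factoring the numerator as $\beta^2-(1-c)^2+c^2$. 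On $[\gamma_3,1]$, I would solve $p_1+p_2 = c$ for $s_2^\star = \beta(1+\gamma-2c\gamma)/(1+\beta)$ and substitute to recover the fourth formula, using that the $s_1^\star$-dependent terms cancel cleanly.

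On the middle regimes I would introduce auxiliary variables that absorb the squared terms from Cases~II, II$'$. On $[\gamma_1,\gamma_2]$, set $\Delta := (1-\gamma - s_2^\star/\beta + \gamma s_1^\star/\beta)/\gamma$, so the Case~II$'$ formula gives $p_2 = \gamma\Delta^2/(2(1-\gamma))$; substituting $s_2^\star = \beta(1-\gamma) + \gamma s_1^\star - \beta\gamma\Delta$ into the Case~III expression for $p_1$ and enforcing $p_1+p_2=c$ produces a quadratic in $\Delta$:
\[
\Delta^2 + 2\beta s_1^\star \Delta = (s_1^\star)^2 - \frac{2(1-\gamma)\bigl[(1-\beta)s_1^\star - c\bigr]}{\gamma},
\]
and completing the square recovers exactly the stated $\Delta(\beta,\gamma)$. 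On $[\gamma_2,\gamma_3]$ set $\theta := (1-\gamma - s_2^\star/\beta + \gamma s_1^\star/\beta)/(1-\gamma)$; a short calculation using $s_2^\star = \beta(1-\gamma)(1-\theta) + \gamma s_1^\star$ gives $1-\gamma - s_2^\star + \gamma s_1^\star = (1-\gamma)(1-\beta+\beta\theta)$, so the Case~II expression for $p_1$ collapses to $(1-\gamma)(1-\beta+\beta\theta)^2/(2\gamma)$. The constraint $p_1+p_2=c$ then becomes the quadratic $(1+\beta^2)\theta^2 + 2\beta(1-\beta)\theta + (1-\beta)^2 = 2c\gamma/(1-\gamma)$, whose positive root is the $\theta(\beta,\gamma)$ of the theorem.

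The principal obstacle I anticipate is algebraic and concentrated in the middle regimes: I must choose the correct branch of each quadratic and verify nonnegativity of $\Delta$ and $\theta$. Root selection is forced by the feasibility constraints $s_2^\star \in (0, s_1^\star)$ together with the case-specific bounds from~\Cref{thm:gammathreshold} (e.g.\ $s_2^\star/\beta \geq 1-\gamma$ for Case~II$'$). Once the roots are pinned down, the four closed forms follow by substituting $p_2$ into the reduced expression for $\mathcal{R}$. As a sanity check I would verify that the four expressions agree at the boundary points $\gamma_1, \gamma_2, \gamma_3$, which follows from continuity of $s_2^\star$ established in~\Cref{thm:s2highbeta}.
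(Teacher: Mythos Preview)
Your proposal is correct and follows essentially the same approach as the paper's proof: both reduce each regime to computing the $G_2$ selection mass via the capacity identity $p_1+p_2=c$, then solve the resulting linear or quadratic equation in a suitable auxiliary variable. Your choice to define $\Delta$ and $\theta$ so that they coincide directly with the theorem's quantities is a slight streamlining over the paper, which first introduces an intermediate variable $u=1-\tfrac{1}{\beta}\cdot\tfrac{s_2^\star-\gamma s_1^\star}{1-\gamma}$ and only afterward identifies $\Delta(\beta,\gamma)=\tfrac{1-\gamma}{\gamma}u$ and $\theta(\beta,\gamma)=u$; otherwise the arguments are identical.
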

\noindent
The proof of this theorem uses the expression~\eqref{eq:Rformula} for ${\mathcal R}(\beta, \gamma)$ in terms of $s_1^\star$ and $s_2^\star$. 
For each of the regimes defined by $\gamma_1, \gamma_2, \gamma_3$, we use~\eqref{prop:s2equations} to eliminate the explicit dependence of this expression on $s_2^\star$. The monotonicity of ${\mathcal R}(\beta,\gamma)$ with respect to $\gamma$ follows from analyzing these closed-form expressions. Details are given in~\Cref{app:representation}. 
As an example, consider the regime \(\gamma \in [\gamma_2, \gamma_3]\). Using the equation for \(s_2^\star\) in this interval (Step~4), we define \(\Delta := \frac{s_2^\star - \gamma s_1^\star}{1 - \gamma}\), and observe that the mass of group \(G_1\) candidates selected by Institution~2 is \(\frac{(1 - \gamma)\Delta^2}{2\gamma}\). This expression, together with its analog for group \(G_2\), yields an explicit formula for \(\mathcal{R}(\beta, \gamma)\) in each regime.

{ 
\begin{table}[h]
\centering
\renewcommand{\arraystretch}{1.3}
\scriptsize 
\begin{tabular}{|p{0.40\textwidth}|p{0.50\textwidth}|}
\hline
\textbf{Structural property} & \textbf{Result and location} \\
\hline
Closed-form expression for $s_1^\star$ & $s_1^\star = \frac{2 - c}{1 + 1/\beta}$ \hfill (Eq.~\eqref{eq:s1}) \\
\hline
Monotonicity of $s_1^\star$  w.r.t. $\beta$ & increasing in $\beta$ \hfill (follows directly from Eq.~\eqref{eq:s1}) \\
\hline
Four regimes of $\gamma$ &  $\gamma_1, \gamma_2, \gamma_3$  \hfill (Theorem~\ref{thm:gammathreshold}) \\
\hline
piecewise expression for $s_2^\star$ & via  characterization via $\gamma_1, \gamma_2, \gamma_3$ \hfill (Theorem~\ref{prop:s2equations}) \\
\hline
Unimodality of $s_2^\star$ w.r.t. $\gamma$ & $s_2^\star$ is continuous with at most 1 minimum \hfill (Theorem~\ref{thm:s2highbeta}) \\
\hline
Monotonicity of $s_2^\star$ w.r.t. $\beta$ & increasing in $\beta$ \hfill (\Cref{prop:monotones1beta}) \\
\hline
piecewise expression for $\mathcal{R}$ & 4-case formula \hfill (Theorem~\ref{thm:Rgamma}) \\
\hline
Monotonicity of $\mathcal{R}$ in $\beta$ & increasing in $\beta$ for fixed $\gamma$ \hfill (\Cref{prop:monotonesrepbeta}) 
\\
\hline

\end{tabular}
\caption{Summary of structural properties of thresholds and representation metrics for $\beta \geq 1-c$.}
\label{tab:summary-structural}
\end{table}

\smallskip
\noindent
 We present a summary of the results in this section in Table \ref{tab:summary-structural}.
 While our main focus is on representation, the framework also supports closed-form expressions of institutional utilities (see ~\Cref{sec:utilitycalculations}).
In the next section, we leverage these results to examine how \(\mathcal{R}(\beta, \gamma)\) behaves jointly with \(\beta\) and \(\gamma\), and explore design strategies that mitigate bias while maintaining institutional efficiency.

\smallskip
\noindent
{\bf Extension to general preferences.} We also extend our analysis to the \emph{general preference setting}, where candidates prefer Institution~1 with probability \(p \in [0,1]\), as described in~\Cref{sec:general_p}. 
In this setting, the equilibrium thresholds \(s_1^\star\) and \(s_2^\star\) satisfy coupled nonlinear equations that lack closed-form solutions. 
In~\Cref{sec:equations_p}, we derive these equilibrium conditions and prove the existence and uniqueness of the solution, ensuring that the model remains well-posed for all~\(p\). 
Using the probability expressions developed earlier, we show in~\Cref{sec:computing_p} that these thresholds can be computed efficiently through numerical evaluation. 
Building on these results,~\Cref{sec:metrics_p} provides closed-form formulas for key outcome metrics—the representation ratio and institutional utilities—expressed directly in terms of \(s_1^\star\) and \(s_2^\star\).
Analytically, we establish in~\Cref{sec:monotonicity_p} that both thresholds vary monotonically with~\(p\), although not necessarily strictly: \(s_1^\star(p)\) remains constant up to a critical point~\(p^\star\) and increases thereafter, corresponding to a regime where the selection probability 
\(\Pr[u_{i1} < s_1^\star(p) \wedge u_{i2} \ge s_2^\star(p)]\) is zero. 
We formalize this structural break in~\Cref{sec:genpstrict} and confirm it numerically. 
Consequently, both ${\mathcal R}$ and ${\mathcal N}$ decrease with~\(p\) and exhibit a clear \emph{phase transition}: they remain constant for \(p \le p^\star\) and decline sharply thereafter. 
Finally, numerical experiments in~\Cref{sec:empirical-pgamma} and Figures~\ref{fig:p_1}--\ref{fig:beta_sweep} reveal how thresholds, representation ratios, and utilities co-evolve as~\(p\) and the correlation parameter~\(\gamma\) vary. 
The qualitative behavior of ${\mathcal R}(\beta,\gamma)$ and ${\mathcal N}(\beta,\gamma)$ remains monotone in~\(\beta\) and~\(\gamma\) even under general preferences, while varying~\(p\) introduces the new structural effects described above—monotone yet piecewise regimes and distinct phase transitions that extend the continuum-limit analysis.

{
\section{Analysis and intervention}
\label{sec:intervention}

A central goal of this work is to guide interventions that improve representation for disadvantaged groups. Toward this, we study how fairness metrics vary with two systemic levers: evaluator bias (\(\beta\)) and institutional correlation (\(\gamma\)). 
While our main structural result yields closed-form expressions for the representation ratio \(\mathcal{R}(\beta, \gamma)\) (\Cref{thm:Rgamma}), its policy relevance depends on understanding how \(\mathcal{R}\) and its normalized counterpart \(\mathcal{N}\) evolve with these parameters.
This section is organized into four parts.
We begin by analyzing how \(\mathcal{R}(\beta, \gamma)\) varies with $\beta$ and $\gamma$, highlighting regimes of monotonicity, sharp transitions, and diminishing returns. 
We then study the variation of normalized ratio \(\mathcal{N}(\beta, \gamma)\) with respect to $\beta$ and $\gamma$. 
{ Building on these insights, we formulate a framework for intervention planning under fairness targets and show how to use it.} 

\smallskip
\noindent
{\bf Monotonicity, transitions, and concavity of the representation ratio.}
We begin by asking: How does  \(\mathcal{R}(\beta, \gamma)\) vary with the bias level \(\beta\) and evaluator correlation \(\gamma\)? 
For any fixed \(\gamma\), the dependence on \(\beta\) is predictable: as bias decreases (i.e., \(\beta\) increases), selection thresholds \(s_1^\star\) and \(s_2^\star\) become stricter for the advantaged group, yielding a higher representation ratio (see~\Cref{app:monotone-beta}).
The variation of $\mathcal{R}(\beta,\gamma)$ with \(\gamma\) (for a fixed $\beta \geq 1-c)$, however, is  more subtle. 
Since \(\gamma\) controls how correlated the two institutions are in their evaluation criteria, one might expect that decreasing \(\gamma\) allows for more diverse evaluation and improves access for disadvantaged candidates.
While $s_1^\star$ does not change with $\gamma$, \(s_2^\star(\gamma)\) can increase or decrease with $\gamma$ : it follows a U-shaped trajectory as \(\gamma\) increases (see~\Cref{thm:s2highbeta}). 
This raises a natural question: can \(\mathcal{R}(\beta, \gamma)\) be non-monotonic with $\gamma$ as well?
The answer is no in the regime $c <1/2$ and $\beta \geq 1-c$. 
\begin{restatable}{corollary}{monotonegamma}
\label{cor:monotone-gamma}
Fix \(c < 1/2\) and \(\beta \geq 1 - c\). Then \(\mathcal{R}(\beta, \gamma)\) increases monotonically in \(\gamma\).
\end{restatable}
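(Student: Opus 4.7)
The plan is to establish the monotonicity piecewise over the four $\gamma$-regimes identified in Theorem \ref{thm:gammathreshold}, and then stitch them together using continuity of $s_2^\star(\gamma)$. The key initial reduction is as follows: writing $q_j(\gamma) := \Pr[\hat u_{i1} < s_1^\star, \hat u_{i2} \ge s_2^\star(\gamma)]$ for group $G_j$, the second capacity constraint \eqref{eq:twoinst2} reads $q_1(\gamma) + q_2(\gamma) = c$, while the closed form \eqref{eq:s1} for $s_1^\star$ gives $(1 - s_1^\star) + (1 - s_1^\star/\beta) = c$. Hence the numerator and denominator of
\[
\mathcal{R}(\beta,\gamma) = \frac{(1 - s_1^\star/\beta) + q_2(\gamma)}{(1 - s_1^\star) + q_1(\gamma)}
\]
sum to the $\gamma$-independent constant $2c$, so $\mathcal{R}$ is nondecreasing in $\gamma$ if and only if $q_1(\gamma)$ is nonincreasing in $\gamma$. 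This reduces the task to a scalar monotonicity statement in each regime.

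The two endpoint regimes are short. On $[0, \gamma_1]$ both groups sit in Case~III/III', and adding the two symmetric summands in \eqref{eq:twoinst2} eliminates the $\gamma$-dependent terms, leaving $q_1$ constant in $\gamma$. On $[\gamma_3, 1]$, Theorem \ref{prop:s2equations} gives $s_2^\star(\gamma)$ as an explicit linear function of $\gamma$; substituting into the Case~IV formula for $q_1$ and differentiating yields $q_1'(\gamma)$ with sign $\beta - 1 \le 0$.

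The two interior regimes carry the real work. My proposed approach is to reparametrize by
\[
\Delta(\gamma) := \frac{s_2^\star(\gamma) - \gamma s_1^\star}{1 - \gamma},
\]
the $y$-coordinate at which the level line $\gamma x + (1-\gamma)y = s_2^\star$ meets $x = s_1^\star$. The case-specific probability expressions from Step~2 of the main text then rewrite as clean polynomials in $\Delta$ and $\gamma$: in $[\gamma_2, \gamma_3]$ one gets $q_1 = (1-\gamma)(1-\Delta)^2/(2\gamma)$ and $q_2 = (1-\gamma)(\beta-\Delta)^2/(2\gamma\beta^2)$, while in $[\gamma_1, \gamma_2]$ the Case~III expression for $q_1$ carries an additional $\gamma(s_1^\star)^2/(2(1-\gamma))$ term. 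Implicit differentiation of $q_1 + q_2 = c$ yields $\Delta'(\gamma)$, and after substitution the sign of $q_1'(\gamma)$ simplifies drastically: in $[\gamma_2, \gamma_3]$ it reduces to that of $(\beta - 1)(1 - \Delta)(\beta - \Delta)$, which is nonpositive by $\beta \le 1$ together with the Case~II/II' bounds $\Delta \le 1$ and $\Delta \le \beta$; in $[\gamma_1, \gamma_2]$, after using the identity $\gamma s_1^\star = s_2^\star - (1-\gamma)\Delta$, it reduces to that of $(1-\gamma)\beta - s_2^\star(\gamma)$.

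Finally, $s_2^\star(\gamma)$ is continuous by Theorem \ref{thm:s2highbeta}, so $\mathcal{R}(\beta,\gamma)$ is continuous in $\gamma$ via \eqref{eq:Rformula}, and the four piecewise bounds combine into a globally nondecreasing function on $[0,1]$. The main obstacle I anticipate is the $[\gamma_1, \gamma_2]$ regime, where the sign of $q_1'$ is a priori indeterminate and the two groups sit in different geometric cases; the proof hinges on the structural coincidence that the simplified expression for $q_1'(\gamma)$ matches exactly the defining inequality of $\gamma_1$ from Theorem \ref{thm:gammathreshold}(i), namely $s_2^\star(\gamma)/\beta \ge 1 - \gamma$. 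Once that connection is spotted, the cross-regime argument reduces to routine algebraic verification.
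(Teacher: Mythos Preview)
Your proposal is correct and follows essentially the same route as the paper's proof in Section~\ref{sec:representation_monotonicity}: both arguments work piecewise over the four $\gamma$-regimes, reduce to showing that the fraction of $G_2$ (equivalently $G_1$) candidates selected is monotone via the $\Delta$-reparametrization, and in the delicate $[\gamma_1,\gamma_2]$ regime both arrive at the identity that the sign of the derivative equals that of $(1-\gamma)\beta - s_2^\star(\gamma)$, which is then settled by Theorem~\ref{thm:gammathreshold}(i). Your upfront reduction ``numerator $+$ denominator $= 2c$, so $\mathcal{R}$ nondecreasing $\iff q_1$ nonincreasing'' makes the logic slightly more uniform across regimes than the paper's regime-by-regime treatment, but the mathematical content is the same.
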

\noindent
In particular, $\mathcal{R}(\beta,\gamma)$ is maximized when $\gamma=1$:
$\mathcal{R}(\beta, \gamma) \leq \mathcal{R}(\beta, 1)$ {for all } $\gamma \in [0,1]$.
The proof heavily relies on the piecewise formulas derived in~\Cref{thm:Rgamma} and is deferred to \Cref{sec:representation_monotonicity}.
Figure~\ref{fig:variationgamma} illustrates the contrast: while  \(s_2^\star\) dips and then rises with \(\gamma\), the representation ratio \(\mathcal{R}(\beta, \gamma)\) increases steadily—more sharply so for larger values of \(\beta\). 
Note, however, that this monotonicity may not hold outside the (less relevant) regime \(\beta \geq 1 - c\). 
In particular, when \(\beta < 1 - c\), no one from $G_2$ can be admitted to Institution 1, and this equilibrium structure can reverse the fairness trend with increasing $\gamma$; see~\Cref{fact:midbetacaseone} for details.

Importantly, using the piecewise characterization from~\Cref{thm:Rgamma}, we can generalize prior results by~\cite{celis2024bias} on representation ratio. 
\cite{celis2024bias} considered  the case \(\gamma = 1\) and showed that \(\mathcal{R}(\beta, 1)\) varies approximately linearly with \(\beta\). 
\Cref{thm:Rgamma} shows that the behavior of $\mathcal{R}$ w.r.t. $\beta$ can be highly non-linear in the presence of $\gamma$.
We highlight this with phenomena with two examples.

\smallskip
\noindent
{\bf Example 1 (Quadratic decay).}
Let \(c =1/2-\eps\). On the one hand, when $\gamma=1$, the last case of  \Cref{thm:Rgamma} implies $
\mathcal{R}(\beta, 1) = \frac{-(1 - \beta)(1 + 1) + 4(1)\left({1}/{2} - \varepsilon\right)}{(1 + 1)(1 - \beta) + 4(1)\beta\left({1}/{2} - \varepsilon\right)} = \frac{\beta-2\eps}{1-2\beta \eps}$,
which scales like $\beta$ for small enough $\eps$.
On the other hand, when \(\gamma \leq \gamma_1\), from the first case in ~\Cref{thm:Rgamma}, we obtain that $
\mathcal{R}(\beta, \gamma) = \frac{\beta^2 + 2\varepsilon}{1 + 2\varepsilon \beta^2} \approx \beta^2$ for small enough $\eps$.
Thus, $\mathcal{R}$ degrades from being linear in $\beta$ when $\gamma \approx 1$ to quadratic in $\beta$ when $\gamma \approx 0$.

\smallskip
\noindent
{\bf Example 2 (Amplification in low-selectivity regime).}
Consider a highly selective regime, where $c$ is small and  \(\beta \approx 1 - c\). 
For \(\gamma \leq \gamma_1\), \Cref{thm:Rgamma} yields: $
\mathcal{R}(1-c, \gamma) = \frac{c^2}{1 - (1-c)^2(1 - 2c)} \approx c/4$.
In contrast, when \(\gamma =1 \), using \Cref{thm:Rgamma}, we obtain:
$
\mathcal{R}(1-c,1) = \frac{-2c+4c}{2c+4c(1-c)}  = \frac{c}{3c - 2c^2} \approx \frac{1}{3}$ for small enough $c$.
Thus, while $\mathcal{R}(1,\gamma)=1$ for all $\gamma$, a slight decrease in $\beta$ (from $1$ to $1-c$) results in a huge gap in representation ratio near the two extremes of $\gamma$ in a highly-selective setup.

To summarize, \Cref{thm:Rgamma}  and \Cref{cor:monotone-gamma} imply that \(\mathcal{R}(\beta, \gamma)\) is piecewise-defined across four structural regimes, separated by thresholds \(\gamma_1, \gamma_2, \gamma_3\). It remains flat in the low-$\gamma$ regime (\(\gamma \leq \gamma_1\)), non-linearly increases through $[\gamma_1,1]$. 
Taken together, these insights underscore that the representation ratio may not be uniformly responsive to interventions changing $\beta$ and $\gamma$ due to the compounding effect of these parameters on $\mathcal{R}$.

\begin{figure}[h]
    \centering
    \includegraphics[width=0.47\textwidth]{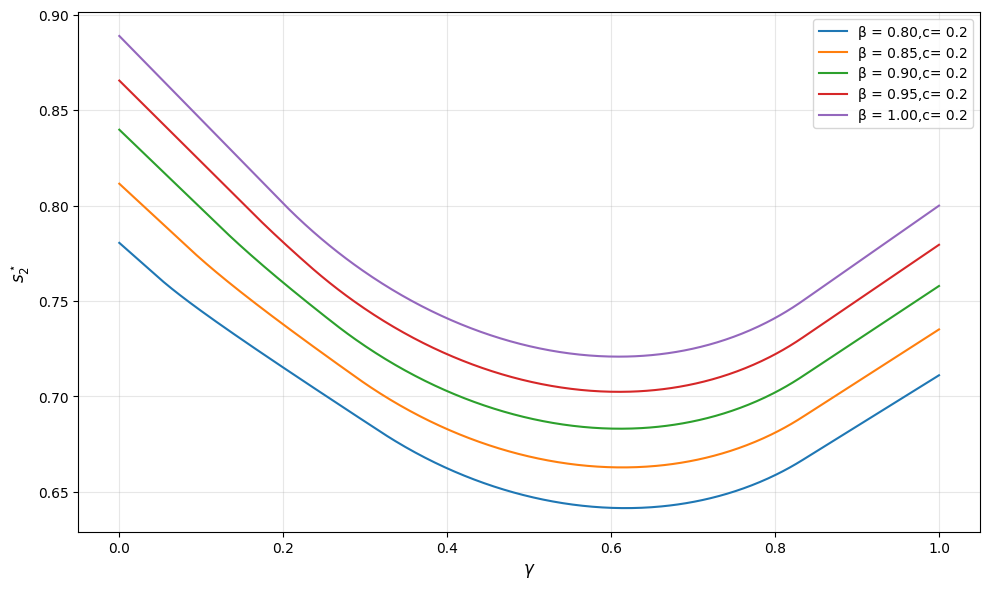}
    \hfill
    \includegraphics[width=0.47\textwidth]{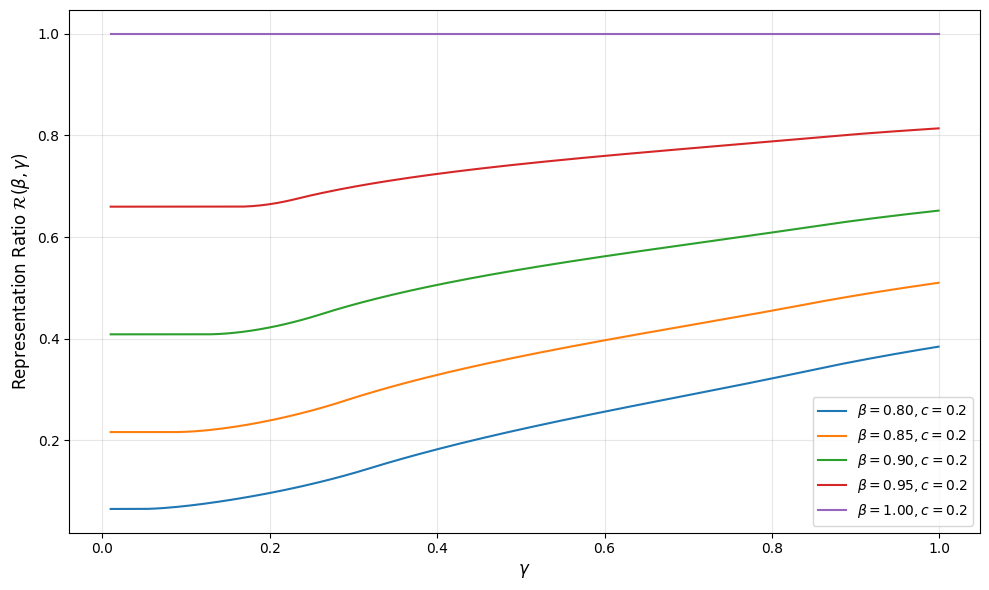}
    \caption{Variation of (left) selection threshold $s_2^\star$ and (right) representation ratio $\mathcal{R}(\beta, \gamma)$ as a function of  $\gamma$, for fixed $c = 0.2$.}
    \label{fig:variationgamma}
\end{figure}

\smallskip
\noindent
{\bf Measuring fairness loss: Normalized representation and its response to interventions.}
Recall that the normalized representation ratio is defined as
$  \mathcal{N}(\beta, \gamma) := \frac{\mathcal{R}(\beta, \gamma)}{\max_{\gamma' \in [0,1]}\mathcal{R}(\beta, \gamma')}$.
Since \(\mathcal{R}(\beta,\gamma) \leq \mathcal{R}(\beta, 1)\) for $\gamma \in [0,1]$, this simplifies to  $  {\mathcal{R}(\beta, \gamma)}/{\mathcal{R}(\beta, 1)}$. 
Note that   $0 \leq \mathcal{N}(\beta,\gamma) \leq 1$ quantifies the fraction of the maximum achievable fairness retained for a fixed $\beta$.
There are two key intervention levers: increasing \(\beta\) (reducing bias) and increasing correlation (\(\gamma\)) to increase $\mathcal{N}$. 
However, these interventions, in the least, require that $\mathcal{N}$ is monotonically non-decreasing in $\beta$ and $\gamma$.
From~\Cref{cor:monotone-gamma}, we already know that \(\mathcal{R}(\beta, \gamma)\), and hence \(\mathcal{N}(\beta, \gamma)\), increases monotonically with \(\gamma\).
What is less obvious—but follows from the closed-form expressions in~\Cref{thm:Rgamma} is that \(\mathcal{N}(\beta, \gamma)\) is also monotonically increasing in \(\beta\) when $c\leq 1/2$; see~\Cref{prop:monotone-N-beta-restricted}.
Figure~\ref{fig:normalized_and_marginal} (left) illustrates this trend. The growth of \(\mathcal{N}(\beta, \gamma)\) is phase-dependent: flat in the low-alignment regime \((\gamma \leq \gamma_1)\), sharp in intermediate phases, and nearly linear for \(\gamma \in [\gamma_3, 1]\). The most severe degradation occurs when both \(\beta\) and \(\gamma\) are small—highlighting the importance of addressing both alignment and bias in fairness interventions.
Extending Example 1 above (the case \(c \ \approx 1/2\). When \(\gamma = 1\), we have \(\mathcal{N}(\beta, \gamma) = 1\) by definition. But as \(\gamma \to 0\), \(\mathcal{R}(\beta, \gamma) \approx \beta^2\), while \(\mathcal{R}(\beta, 1) \approx \beta\), implying 
$\mathcal{N}(\beta, \gamma) \approx \beta$.
In summary, even after normalization, fairness never deteriorates when bias is reduced or alignment improves, and $\beta$ and $\gamma$ jointly exert a compounded effect on $\mathcal{N}$. 
These properties imply that the normalized representation ratio is both stable and directionally reliable—ensuring that interventions targeting either parameter will monotonically improve fairness.

\smallskip
\noindent
{\bf Target-based intervention planning.}
We can now leverage the structure of  $\mathcal{N}(\beta, \gamma)$ to design and evaluate intervention strategies that improve representation. 
One may consider two classes of interventions: (1) continuous interventions that incrementally reduce bias or increase correlation, and (2) structural interventions such as capacity reservations. 
In the fully aligned setting ($\gamma = 1$), ~\cite{Celis0VX24} shows that reserving seats in each institution proportional to group sizes, and applying stable matching within each group, guarantees perfect representation. 
A similar strategy can be shown (by extending their approach) to achieve a representation ratio of exactly 1 for arbitrary values of $\beta$ and $\gamma$. 
This yields a non-parametric, structure-based intervention that ensures fairness independently of bias mitigation or evaluator alignment.

In contrast, interventions that modify the evaluation process, by increasing \((\beta)\) or increasing \((\gamma)\), may provide additional flexibility in real-world selection systems. 
For instance, in AI-mediated settings, \(\beta\) can be increased through debiasing or anonymized scoring; \(\gamma\) can be increased using standardized models, calibration protocols, or shared training data. 
In human-mediated settings, increasing \(\beta\) might involve structured rubrics, blind evaluation, or bias training; increasing \(\gamma\) could be achieved by coordinated evaluation guidelines, common score sheets, or centralized vetting frameworks.

We now show how a system designer, given current parameters \((\beta_0, \gamma_0)\) and a fairness target \(\tau \in [0,1]\), can use our framework to plan interventions. Since \(\mathcal{N}(\beta, \gamma)\) is increasing in both \(\beta\) and \(\gamma\), any target \(\tau \leq 1\) is attainable by increasing either parameter.
We compute the \emph{Pareto frontier}—the set of minimal \((\beta, \gamma)\) pairs that just achieve \(\mathcal{N}(\beta, \gamma) \geq \tau\), using \Cref{thm:Rgamma}. This frontier reveals all efficient tradeoffs between bias mitigation and evaluator alignment that meet the target.
\Cref{fig:normalized_and_marginal} (right) illustrates this frontier for \(\tau = 0.8\), with a heatmap of \(\mathcal{N}(\beta, \gamma)\). The red contour shows the threshold \(\mathcal{N} = \tau\); points above meet the goal, while those below do not. The shape of the frontier reveals which direction—\(\beta\) or \(\gamma\)—yields more efficient gains.
Notably, the frontier becomes vertical when \(\gamma \leq \gamma_1(\beta)\), reflecting the fact that \(\mathcal{N}\) is flat in this region. Since \(\gamma_1\) increases with \(\beta\), this insensitivity band widens as bias decreases.
Overall, the frontier offers practical guidance: starting from \((\beta_0, \gamma_0)\), one can locate the nearest feasible point above the contour to meet the fairness target.

We illustrate this with a concrete example (\Cref{app:worked-example}). Fix \(\beta_0 = 0.85\), \(\gamma_0 = 0.40\), and \(c = 0.20\). We first compute \(s_1^\star \approx 0.8276\) using Equation~\eqref{eq:s1}, and then use Theorem~\ref{thm:gammathreshold} to identify the applicable regime for \(\gamma\). Applying \Cref{thm:Rgamma}, we obtain the values \(\mathcal{R}(\beta_0, \gamma_0) \approx 0.329\), \(\mathcal{R}(\beta_0, 1) \approx 0.510\), and \(\mathcal{N}(\beta_0, \gamma_0) \approx 0.644\).
From \Cref{fig:normalized_and_marginal} (right), we observe that achieving a fairness target of \(\tau = 0.80\) requires either \(\gamma \approx 0.640\) or \(\beta \approx 0.911\). This demonstrates that modest improvements in either evaluator alignment or bias reduction are sufficient to meet the desired fairness level.

Finally, note that the baseline parameters \((\beta_0, \gamma_0)\) may be estimated directly from observed data. 
The bias parameter~$\beta$ can be recovered from the observed threshold used by Institution~1 via
$
\hat{\beta}_0 = \frac{\hat{s}_1^\star}{2 - c - \hat{s}_1^\star}$
as follows from Equation~\eqref{eq:s1}. 
Holding $\beta = \hat{\beta}_0$ fixed, the correlation parameter~$\gamma$ can be estimated by numerically inverting the monotone relation $\gamma \mapsto \mathcal{R}(\hat{\beta}_0, \gamma)$ (or equivalently $\mathcal{N}$):
$\hat{\gamma}_0 = \arg\min_{\gamma \in [0,1]} 
|\mathcal{R}(\hat{\beta}_0, \gamma) - \hat{\mathcal{R}}|$
which admits a unique solution by~\Cref{thm:Rgamma}.

\begin{figure}[h]
    \centering
    \includegraphics[width=0.47\textwidth]{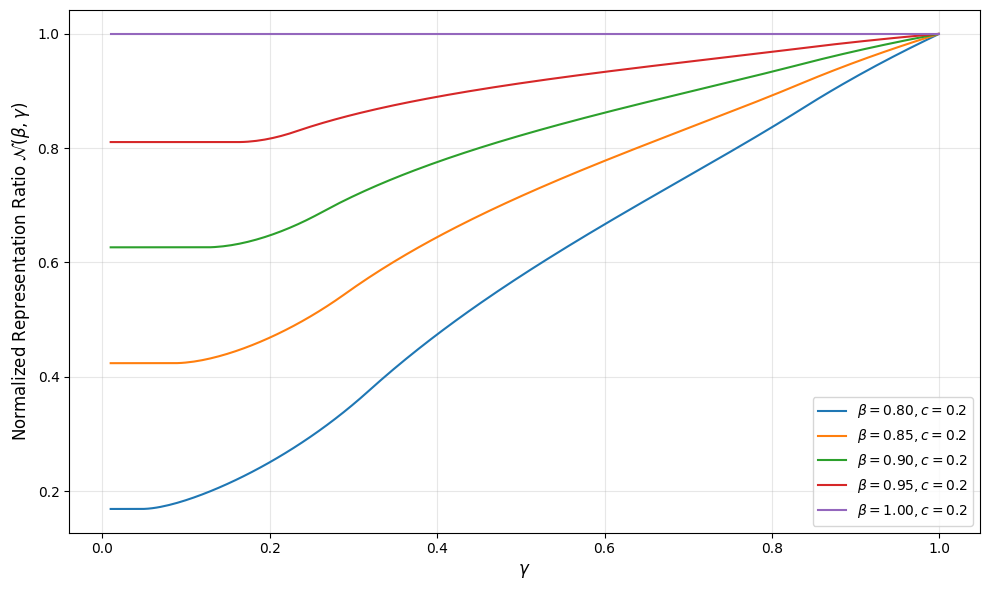}
    \hfill
    \includegraphics[width=0.45\textwidth]{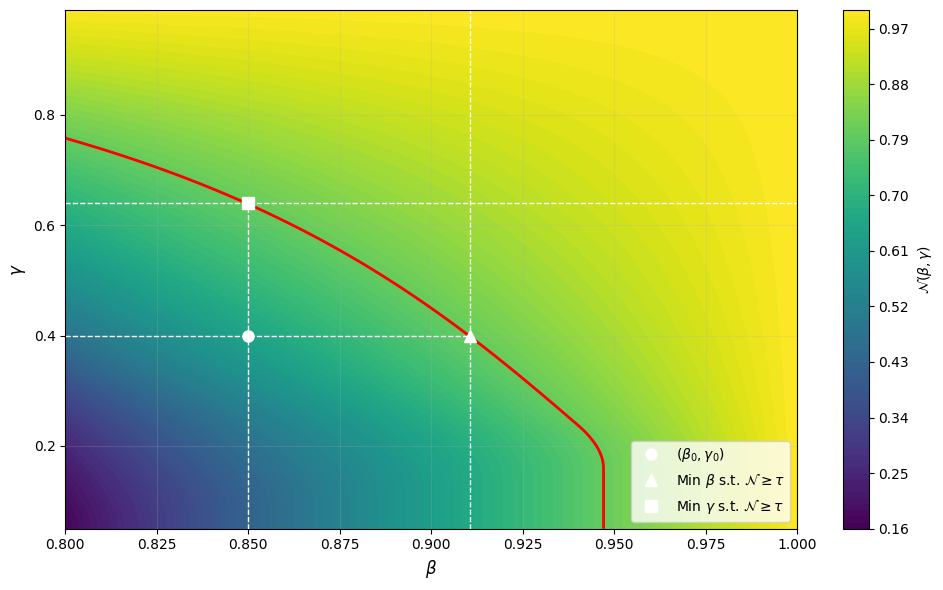}
    \caption{(Left)  $\mathcal{N}(\beta, \gamma)$ and (right) Pareto frontier of \((\beta, \gamma)\) pairs achieving \(\mathcal{N}(\beta, \gamma) \geq \tau = 0.8\), for \(c = 0.2\). Starting from \((\beta_0 = 0.85, \gamma_0 = 0.4)\), the minimum interventions required to exceed the threshold are: \(\beta \geq 0.911\) (with \(\gamma_0 = 0.4\)) or \(\gamma \geq 0.640\) (with \(\beta_0 = 0.85\)).}

\label{fig:normalized_and_marginal}
\end{figure}

}}

\section{Monotonicity of thresholds with respect to \boldmath{$\beta$}}
\label{app:monotonebeta}

In this section, we prove the monotonicity of \(s_1^\star\) and \(s_2^\star\) with respect to the bias parameter \(\beta\). This also implies that the representation ratio is monotone with respect to \(\beta\) (\Cref{prop:monotonesrepbeta}). We restrict the discussion to the symmetric setting \(\nu_1 = \nu_2 = 1\) and \(c_1 = c_2 = c\). We further prove a useful result showing that the threshold \(s_2^\star\) always remains at most \(s_1^\star\).

\begin{restatable}[\bf Monotonicity of \boldmath{$s_1^\star$} and \boldmath{$s_2^\star$} w.r.t. \boldmath{$\beta$}]{proposition}{monotonesonebeta}
\label{prop:monotones1beta}
For any fixed value of \(\gamma\), the thresholds \(s_1^\star\) and \(s_2^\star\) are non-decreasing functions of the bias parameter \(\beta\).
\end{restatable}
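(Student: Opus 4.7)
The plan has two parts, with the first essentially by inspection and the second the main work.

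\smallskip
\noindent
\emph{Part~1 (monotonicity of $s_1^\star$).} Equation~\eqref{eq:s1} rewrites as $s_1^\star(\beta) = (2-c)\beta/(\beta+1)$, a linear-fractional function that is strictly increasing on $\beta \in (0,1]$.

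\smallskip
\noindent
\emph{Part~2 (monotonicity of $s_2^\star$).} For $s_2^\star$, I would combine implicit differentiation of the capacity equation~\eqref{eq:twoinst2} with the piecewise closed-form system in Proposition~\ref{prop:s2equations}. Define $A(s,t) := \Pr[v_1 < s,\, \gamma v_1 + (1-\gamma) v_2 \geq t]$ for $v_1, v_2$ i.i.d.\ uniform on $[0,1]$, so that~\eqref{eq:twoinst2} reads
\[
F(s_2, \beta) \;:=\; A\!\left(s_1^\star(\beta),\, s_2\right) + A\!\left(s_1^\star(\beta)/\beta,\; s_2/\beta\right) - 2c \;=\; 0.
\]
Since $A$ is strictly decreasing in its second argument, $\partial F/\partial s_2 < 0$, so by the implicit function theorem it suffices to show $\partial F/\partial \beta \geq 0$. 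Using the identity $s_1^\star(\beta)(1+1/\beta) = 2-c$, one has ${s_1^\star}{}'(\beta) = (2-c)/(\beta+1)^2 = -\,d(s_1^\star/\beta)/d\beta$, so the three terms in $\partial F/\partial \beta$ (from $s_1^\star$ growing, $s_1^\star/\beta$ shrinking, and the internal argument $s_2/\beta$ shrinking against $\partial_2 A \leq 0$) collapse to a single comparison of $\partial_1 A$ at the two evaluation points plus a non-negative slack. In the generic geometric regimes a direct expansion gives $\partial_1 A(s,t) = (1-\gamma-t+\gamma s)/(1-\gamma)$, whence $\partial_1 A(s_1^\star, s_2^\star) \geq \partial_1 A(s_1^\star/\beta, s_2^\star/\beta) \iff s_2^\star \geq \gamma s_1^\star$, which is precisely the defining condition of Cases~II and III from Theorem~\ref{thm:gammathreshold}.

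\smallskip
\noindent
\emph{Main obstacle.} The main obstacle is that this last inequality reverses in the Case~IV regime ($\gamma \geq \gamma_3$), so a single uniform sign argument does not cover all of $[0,1]$. My plan is to handle the four regimes of Proposition~\ref{prop:s2equations} separately. The two extreme regimes admit linear equations and fully explicit solutions: for $\gamma \in [\gamma_3, 1]$ one obtains $s_2^\star(\beta) = (1+\gamma(1-2c))\beta/(\beta+1)$, visibly non-decreasing in $\beta$; for $\gamma \in [0,\gamma_1]$ the equation is linear in $K := 2 s_1^\star s_2^\star - \gamma (s_1^\star)^2$ and yields $s_2^\star = \tfrac{2(1-c)(1-\gamma)\beta(\beta+1)}{(2-c)(\beta^2+1)} + \tfrac{\gamma(2-c)\beta}{2(\beta+1)}$, each summand monotone in $\beta$ by elementary calculus (the first summand's numerator derivative equals $2\beta+1-\beta^2 > 0$ on $[0,1]$). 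The two middle regimes are quadratic in $s_2^\star$ with the feasible root pinned down by Theorem~\ref{thm:gammathreshold}; there I would verify monotonicity by implicit differentiation of the regime-specific equation, where the sign argument above does apply because $s_2^\star \geq \gamma s_1^\star$ in Cases~II and III. Finally, continuity of $s_2^\star$ in $\beta$ (from $F$ continuous with $\partial F/\partial s_2 < 0$) glues the within-regime monotonicity across the moving boundaries $\gamma = \gamma_i(\beta)$, yielding non-decreasingness on all of $\beta \in [1-c, 1]$.
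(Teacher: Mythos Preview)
Your argument is correct (modulo the typo $-2c$ in the definition of $F$, which should be $-c$), but it takes a genuinely different route from the paper and is restricted to $\beta\in[1-c,1]$, $c<\tfrac12$. The paper gives a single regime-free contradiction argument: writing the Institution-2 constraint as $A(\beta)p(\beta)+B(\beta)q(\beta)=c$ with $A(\beta)=\Pr[\gamma Z_\beta+(1-\gamma)Y\ge s_2^\star]$, $B(\beta)=\Pr[\gamma Z_\beta+\beta(1-\gamma)Y\ge s_2^\star]$ (where $Z_\beta$ is uniform on $[0,s_1^\star(\beta)]$), $p(\beta)=\Pr[X<s_1^\star]$, $q(\beta)=\Pr[\beta X<s_1^\star]$, one notes that the first capacity equation forces $p(\beta)+q(\beta)$ to be constant in $\beta$, and that $A(\beta)\ge B(\beta)$ always; a short telescoping of $A(\beta_1)p(\beta_1)+B(\beta_1)q(\beta_1)-A(\beta_2)p(\beta_2)-B(\beta_2)q(\beta_2)$ then contradicts $s_2^\star(\beta_1)>s_2^\star(\beta_2)$. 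This argument uses no geometry of the line $L(\gamma,s_2)$ and therefore works for all $\beta\in(0,1]$, including the low-$\beta$ regime where your Proposition~\ref{prop:s2equations} is unavailable. Your approach, by contrast, leans on the full regime classification of Theorems~\ref{thm:gammathreshold} and~\ref{prop:s2equations}, which is considerably heavier input; it also requires the gluing step across the $\beta$-dependent regime boundaries.

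One minor observation: the ``main obstacle'' you flag for $\gamma\ge\gamma_3$ is not actually an obstacle. In Case~IV one has $\partial_1 A(s,t)=1$ identically, so the bracket $\partial_1 A(s_1^\star,s_2^\star)-\partial_1 A(s_1^\star/\beta,s_2^\star/\beta)$ vanishes there, and your non-negative slack term alone already gives $\partial F/\partial\beta\ge 0$. Thus your implicit-differentiation argument would have gone through uniformly over all four regimes without the explicit-formula detours---though it still would not extend to $\beta<1-c$.
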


\noindent
An easy proof of the monotonicity of \(s_1^\star\) with respect to \(\beta\) considers the expression for \(s_1^\star\). Recall from \Cref{sec:mainresults} that \(s_1^\star = \frac{2 - c}{1 + 1/\beta}\) if \(\beta \geq 1 - c\), and \(1 - c\) otherwise. This expression is clearly monotone in \(\beta\). However, we provide a more intuitive proof that also extends to the monotonicity of \(s_2^\star\).

Suppose we increase \(\beta\) while keeping \(s_1^\star\) fixed. The measure of selected candidates from \(G_2\) increases (while that for \(G_1\) remains unchanged). Since the capacity of Institution~1 is fixed, \(s_1^\star\) must also increase. The argument for the monotonicity of \(s_2^\star\) is a more involved version of this idea. As a corollary, we find that the representation ratio is monotone in \(\beta\): increasing \(\beta\) raises both thresholds, thereby reducing the admission probability for candidates in \(G_1\).

\begin{proof}
Given a value \(\beta\) and threshold \(s_1\), let \(f(\beta, s_1)\) denote \(\Pr[\hat{u}_{i1} \geq s_1] + \Pr[\hat{u}_{i'1} \geq s_1]\), where \(i \in G_1\), \(i' \in G_2\). Then \(f(\beta, s_1)\) is a non-decreasing function of \(\beta\) and a decreasing function of \(s_1\). Further, Equation~\eqref{eq:twoinst1_1} implies that \(f(\beta, s_1^\star(\beta)) = c\).

Now consider two values \(\beta_1 < \beta_2\), and assume for contradiction that \(s_1^\star(\beta_1) > s_1^\star(\beta_2)\). Then:
\[
c = f(\beta_1, s_1^\star(\beta_1)) \leq f(\beta_2, s_1^\star(\beta_1)) < f(\beta_2, s_1^\star(\beta_2)) = c,
\]
which is a contradiction. Therefore, \(s_1^\star(\beta_2) \geq s_1^\star(\beta_1)\).

We now prove the monotonicity of \(s_2^\star(\beta)\). Rewriting Equation~\eqref{eq:twoinst2_1}, we have:
\[
\Pr[\gamma X + (1-\gamma) Y \geq s_2^\star \mid X < s_1^\star] \Pr[X < s_1^\star] + \Pr[\gamma \beta X + \beta (1-\gamma) Y \geq s_2^\star \mid \beta X < s_1^\star] \Pr[\beta X < s_1^\star] = c,
\]
where \(X, Y\) are independent and uniformly distributed on \([0,1]\).
Let \(Z_\beta\) be a uniform random variable on \([0, s_1^\star]\). Then we can rewrite the equation as:
\[
\Pr[\gamma Z_\beta + (1-\gamma) Y \geq s_2^\star] \Pr[X < s_1^\star] + \Pr[\gamma Z_\beta + \beta (1-\gamma) Y \geq s_2^\star] \Pr[\beta X < s_1^\star] = c.
\]
Let
\(A(\beta) := \Pr[\gamma Z_\beta + (1-\gamma) Y \geq s_2^\star]\),
 \(B(\beta) := \Pr[\gamma Z_\beta + \beta (1-\gamma) Y \geq s_2^\star]\),
 \(p(\beta) := \Pr[X < s_1^\star]\),
 \(q(\beta) := \Pr[\beta X < s_1^\star]\).
Suppose \(\beta_1 < \beta_2\) and assume \(s_2^\star(\beta_1) > s_2^\star(\beta_2)\). Then using monotonicity of \(s_1^\star(\beta)\), we verify:
(i) \(A(\beta_1) < A(\beta_2)\),
(ii) \(B(\beta_1) < B(\beta_2)\),
(iii) \(p(\beta_1) < p(\beta_2)\),
(iv) \(A(\beta) \geq B(\beta)\) for all \(\beta \in [0,1]\).
Since \(A(\beta)p(\beta) + B(\beta)q(\beta) = c\), the difference becomes:
\[
A(\beta_1)p(\beta_1) + B(\beta_1) q(\beta_1) - A(\beta_2)p(\beta_2) - B(\beta_2) q(\beta_2) = 0.
\]
This difference can be rewritten as:
\begin{align*}  
 A(\beta_1) (p(\beta_1)-p(\beta_2)) + p(\beta_2)(A(\beta_1) - A(\beta_2)) 
+  B(\beta_1) (q(\beta_1)-q(\beta_2)) + q(\beta_2)(B(\beta_1) - B(\beta_2)).
\end{align*}
From Equation~\eqref{eq:twoinst1_1}, we have \(p(\beta) + q(\beta) = c\), so \(p(\beta_1) - p(\beta_2) = q(\beta_2) - q(\beta_1)\). Substituting and simplifying, we get:
\[
(A(\beta_1) - B(\beta_1))(p(\beta_1) - p(\beta_2)) + p(\beta_2)(A(\beta_1) - A(\beta_2)) + q(\beta_2)(B(\beta_1) - B(\beta_2)) < 0,
\]
which contradicts the assumption. Therefore, \(s_2^\star(\beta_1) \leq s_2^\star(\beta_2)\).
\end{proof}
\noindent
The following result shows that when the two groups have equal size and the capacities at both institutions are equal, the threshold \(s_2^\star\) is always at most \(s_1^\star\). The proof follows from the observation that if \(s_2^\star > s_1^\star\), then a candidate in \(G_1\) must satisfy \(v_{i2} \geq s_2^\star\) to be assigned to Institution~2, effectively reducing the problem to a single-institution setting, which leads to a contradiction.

\begin{restatable}[\bf \boldmath{$s_1^\star$} dominates \boldmath{$s_2^\star$} in the symmetric case]{proposition}{thmstwoleqsone}
\label{fact:s2leqs1}
Suppose \(\nu_1 = \nu_2 = \nu\) and \(c_1 = c_2 = c\nu\). Then, for any fixed \(\beta\), \(c\), and \(\gamma\), we have \(s_2^\star \leq s_1^\star\).
\end{restatable}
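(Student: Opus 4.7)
The plan is to argue by contradiction. I assume $s_2^\star > s_1^\star$ and show that the left-hand side of the Institution~2 capacity equation~\eqref{eq:twoinst2} is strictly less than $c$, contradicting~\eqref{eq:twoinst2}. The intuition, matching the hint, is that under $s_2^\star > s_1^\star$ the shared attribute $v_{i1}$ cannot push a candidate rejected by Institution~1 above the Institution~2 cutoff on its own, so the independent attribute $v_{i2}$ is forced to already beat the (lower) Institution~1 cutoff. This decouples the Institution~2 selection event into an almost-independent product event whose probability is bounded by the Institution~1 admission probability.

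Writing $\beta_1 = 1$ and $\beta_2 = \beta$, for $j\in\{1,2\}$ consider the event
\[
E_j \;:=\; \{\beta_j v_{i1} < s_1^\star\}\;\cap\;\{\gamma\beta_j v_{i1} + (1-\gamma)\beta_j v_{i2} \ge s_2^\star\}.
\]
Subtracting $\gamma \beta_j v_{i1} < \gamma s_1^\star < \gamma s_2^\star$ from the second inequality yields $(1-\gamma)\beta_j v_{i2} > (1-\gamma)s_1^\star$, so $E_j \subseteq \{\beta_j v_{i1}<s_1^\star,\ \beta_j v_{i2} > s_1^\star\}$ when $\gamma<1$; when $\gamma=1$, $E_j$ is outright empty. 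Since $v_{i1}$ and $v_{i2}$ are i.i.d.\ uniform on $[0,1]$, this gives $\Pr[E_j]\le A_j B_j$, where
\[
A_j \;:=\; \Pr[\beta_j v_{i1} < s_1^\star], \qquad B_j \;:=\; \Pr[\beta_j v_{i1} \ge s_1^\star].
\]

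Equation~\eqref{eq:twoinst1} (divided through by $\nu$) reads $B_1 + B_2 = c$, and the closed form~\eqref{eq:s1} (together with its $\beta<1-c$ analog, where $s_1^\star = 1-c$) gives $s_1^\star < 1$ whenever $c>0$, so $B_1 = 1-s_1^\star > 0$ and $A_1 = s_1^\star < 1$. For $G_2$ either $s_1^\star \le \beta$, in which case $A_2 + B_2 = 1$ and $A_2 < 1$ whenever $B_2 > 0$, or $s_1^\star > \beta$, in which case $B_2 = 0$ and $A_2 B_2 = B_2 = 0$. In all cases $A_j B_j \le B_j$, with strict inequality for $j = 1$. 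Summing,
\[
\text{LHS of }\eqref{eq:twoinst2} \;\le\; \sum_{j=1}^2 \Pr[E_j] \;\le\; A_1 B_1 + A_2 B_2 \;<\; B_1 + B_2 \;=\; c,
\]
contradicting~\eqref{eq:twoinst2}.

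The one step I would double-check is the strictness in the corner case $s_1^\star > \beta$ (which occurs when $\beta < 1-c$): there $G_2$ drops out of both sides of \eqref{eq:twoinst1}--\eqref{eq:twoinst2}, so the strict bound must come from $G_1$ alone; this is fine because $s_1^\star = 1-c < 1$ in that regime, making $A_1 < 1$ and $B_1 = c > 0$. The rest is routine probability manipulation, and no quantitative input beyond $s_1^\star < 1$ (and the independence of $v_{i1},v_{i2}$) is needed.
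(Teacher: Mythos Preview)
Your proof is correct and follows essentially the same contradiction strategy as the paper: both deduce that on the event $E_j$ the second attribute must already exceed the Institution~1 threshold, and then compare against the Institution~1 capacity equation. The only difference is organizational---the paper drops the $v_{i1}<s_1^\star$ constraint and keeps the tighter bound $\beta_j v_{i2}>s_2^\star$, then invokes $s_2^\star>s_1^\star$ to compare with~\eqref{eq:twoinst1}; you instead retain the $v_{i1}$ constraint, weaken to $\beta_j v_{i2}>s_1^\star$, and factor via independence to get $\Pr[E_j]\le A_jB_j<B_j$---but the core observation and the contradiction are the same.
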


\begin{proof}
Let \(X\) and \(Y\) be i.i.d. uniform random variables on \([0,1]\). The threshold \(s_1^\star\) satisfies:
\[
\Pr[X \geq s_1^\star] + \Pr[X \geq s_{1,\beta}^\star] = c,
\]
where \(s_{1,\beta}^\star := \min(1, s_1^\star/\beta)\).
The threshold \(s_2^\star\) satisfies:
\[
\Pr[\gamma X + (1-\gamma) Y \geq s_2^\star, X < s_1^\star] + \Pr[\gamma X + (1-\gamma) Y \geq s_{2,\beta}^\star, X < s_{1,\beta}^\star] = c,
\]
where \(s_{2,\beta}^\star := \min(1, s_2^\star/\beta)\).
Assume for contradiction that \(s_2^\star > s_1^\star\). Then:
\[
\Pr[Y > s_2^\star] \geq \Pr[\gamma X + (1-\gamma) Y \geq s_2^\star, X < s_1^\star],
\]
and similarly for \(s_{2,\beta}^\star\). Adding both terms yields:
\[
\Pr[Y > s_2^\star] + \Pr[Y > s_{2,\beta}^\star] \geq c.
\]
But since \(s_1^\star < s_2^\star\), this implies:
\[
\Pr[Y > s_1^\star] + \Pr[Y > s_{1,\beta}^\star] > c,
\]
contradicting the definition of \(s_1^\star\). Therefore, \(s_2^\star \leq s_1^\star\).
\end{proof}

\section{Proofs of results for \boldmath{$\gamma$}-thresholds}

 In this section, we present the proof of \Cref{thm:gammathreshold} that establishes the existence of the $\gamma$-thresholds $\gamma_1$, $\gamma_2$, and $\gamma_3$ when $\beta \geq 1-c$. By definition, these thresholds characterize the transitions in the evaluation probabilities $\Pr[\gamma X + (1 - \gamma) Y \geq s_2]$ and $\Pr[\gamma X + (1 - \gamma) Y \geq s_2/\beta]$ as $\gamma$ increases from $0$ to $1$. Specifically, the system passes through the regime transitions: (III,III'), (III,II'), (II,II'), and (IV,IV').

We also describe how to compute the values of $\gamma_1$, $\gamma_2$, and $\gamma_3$, and establish monotonicity properties of these thresholds with respect to the bias parameter $\beta$.
We restate the theorem here for convenience.

\label{app:gamma}
\gammathreshold*

\noindent
We break the proof into two parts. In the first part, we show the existence of thresholds. Subsequently, we derive expressions for them.
\begin{proof}[Proof of \cref{thm:gammathreshold}]
    We show that $\frac{s_2^\star(\gamma)}{\gamma}$ is a decreasing function of $\gamma$.  Consider values $\gamma_1 < \gamma_2$ where $ \gamma_1, \gamma_1 \in (0,1).$ Our goal is to show that $\frac{s_2^\star(\gamma_2)}{\gamma_2} < \frac{s_2^\star(\gamma_1)}{\gamma_1}.$ By definition of $s_2^\star(\gamma_1)$, 
    $$\Pr[ \gamma_1 X + (1-\gamma_1) Y \geq s_2^\star(\gamma_1), X < s_1^\star] +
    \Pr[ \gamma_1 X + (1-\gamma_1) Y \geq s_{2,\beta}^\star(\gamma), X < s_{1,\beta}^\star] = c, 
    $$ 
    where $X$ and $Y$ are independent random variables with values in $[0,1].$ 
    The above condition can be equivalently written as 
    $$\Pr \left[ X + (1/\gamma_1-1) Y \geq \frac{s_2^\star(\gamma_1)}{\gamma_1}, X < s_1^\star\right] +
    \Pr \left[  X + (1/\gamma_1-1) Y \geq \frac{s_{2,\beta}^\star(\gamma_1)}{\gamma_1}, X < s_{1,\beta}^\star \right] = c. 
    $$
\noindent
     Now observe that $x + (1/\gamma_1 - 1) y > x + (1/\gamma_2 - 1)y$ for any  $x,y \in (0,1)$. Thus, a simple coupling argument shows that 
      $$\Pr \left[ X + (1/\gamma_2-1) Y \geq \frac{s_2^\star(\gamma_1)}{\gamma_1}, X < s_1^\star\right] +
    \Pr \left[  X + (1/\gamma_2-1) Y \geq \frac{s_{2,\beta}^\star(\gamma_1)}{\gamma_1}, X < s_{1,\beta}^\star \right] < c. 
    $$
     Assume for the sake of contradiction that $\frac{s_2^\star(\gamma_1)}{\gamma_1} \leq \frac{s_2^\star(\gamma_2)}{\gamma_2}$. Then,  $\frac{s_{2,\beta}^\star(\gamma_1)}{\gamma_1} \leq \frac{s_{2,\beta}^\star(\gamma_2)}{\gamma_2}$ as well. Therefore, the above inequality implies that 
      $$\Pr \left[ X + (1/\gamma_2-1) Y \geq \frac{s_2^\star(\gamma_2)}{\gamma_2}, X < s_1^\star\right] +
    \Pr \left[  X + (1/\gamma_2-1) Y \geq \frac{s_{2,\beta}^\star(\gamma_2)}{\gamma_2}, X < s_{1,\beta}^\star \right] < c. 
    $$
    Rearranging terms, we get
    $$\Pr \left[\gamma_2 X + (1-\gamma_2) Y \geq s_2^\star(\gamma_2), X < s_1^\star\right] +
    \Pr \left[ \gamma_2 X + (1-\gamma_2) Y \geq s_{2,\beta}^\star(\gamma_2), X < s_{1,\beta}^\star \right] < c. 
    $$
    But this contradicts the definition of $s_2^\star(\gamma_2)$. Thus, we see that $\frac{s_2^\star(\gamma)}{\gamma}$ is a decreasing function of $\gamma$. In a similar manner, we can show that $\frac{s_{2,\beta}^\star(\gamma)}{\gamma}$ is a decreasing function of $\gamma$, and $\frac{s_2^\star(\gamma)}{1-\gamma}$ and $\frac{s_{2,\beta}^\star(\gamma)}{1-\gamma}$ are increasing functions of $\gamma$. Since the distribution from which the attributes of a candidate are drawn (in this case, the uniform distribution on $[0,1]$) is continuous, it follows that $s_2^\star(\gamma)$ is a continuous function of $\gamma$. 

    When $\gamma$ approaches $0$, $\frac{s_{2,\beta}^\star(\gamma)}{1-\gamma}$ approaches $0$ and when $\gamma$ approaches $1$, $\frac{s_{2,\beta}^\star(\gamma)}{1-\gamma}$ approaches $\infty.$ Since   $\frac{s_{2,\beta}^\star(\gamma)}{1-\gamma}$ is a monotonically increasing continuous function of $\gamma$, there is a unique value $\gamma$, call it $\gamma_1$, such that $\frac{s_{2,\beta}^\star(\gamma_1)}{1-\gamma_1} = 1.$ This first statement in the lemma now follows from this observation and the monotonicity of  $\frac{s_{2,\beta}^\star(\gamma)}{1-\gamma}$. Other statements in the lemma can be shown similarly. 
\end{proof}

\begin{proof}[Proof of expressions for $\gamma$-thresholds]
    In the regime $\beta \leq 1-c$, we know that $s_1^\star \leq \beta$. We also know that $s_2^\star \leq s_1^\star$ (\Cref{fact:s2leqs1}). Thus, $s_1^\star, s_2^\star \leq \beta$ for all $\gamma \in [0,1]$. Hence, $s^\star_{j,\beta} = \frac{s_j^\star}{\beta}$ for $j \in \{1,2\}$. It now follows from the definition of $\gamma_3$ and $\gamma_4$ that $\gamma_3 = \gamma_4$. The monotonicity of $\frac{s_2^\star}{1-\gamma}$ shows that $\gamma_1 < \gamma_2$. We now show that $\gamma_2 \leq \gamma_3$. 

    We first argue that $\gamma_1 \leq \gamma_3$. Suppose not, i.e., $\gamma_3 < \gamma_1$. For $\gamma \in [0,\gamma_3]$, we are in case $III$ and $III'$. Therefore,  $s_2^\star$ satisfies: 
    \begin{align} 
    \label{eq:gammaorderhighbeta}
    s_1^\star -  \frac{2s_1^\star s_2^\star -\gamma (s_1^\star)^2}{2(1-\gamma)} +  \frac{s_1^\star}{\beta} -  \frac{2  s_1^\star s_2^\star -\gamma (s_1^\star)^2}{2\beta^2(1-\gamma)} = c.
    \end{align}
    By definition of $\gamma_3$,  $s_2^\star = \gamma_3 s_1^\star $ when $\gamma = \gamma_3$. Substituting this above, we see that 
    $$ s_1^\star (1+1/\beta) - \frac{\gamma_3(s_1^\star)^2}{2(1-\gamma_3)} \left(1+1/\beta^2 \right)=c.$$
    Since $\gamma_3 \leq \gamma_1$, we know by monotonicity of $\frac{s_2^\star(\gamma)}{1-\gamma}$ that $s_1^\star \gamma_3 = s_2^\star \leq \beta(1-\gamma_3).$ Thus, we get $$ c \geq s_1^\star(1+1/\beta) - \frac{\beta s_1^\star }{2} \left(1 + 1/\beta^2 \right) = s_1^\star \left( 1- \frac{\beta}{2} + \frac{1}{2\beta} \right) \geq s_1^\star,$$
    where the last inequality follows from the fact that $\beta \leq 1.$ But we know that  $s_1^\star = \frac{2-c}{1+1/\beta} \geq \frac{2-c}{2} >  c$, if $c < 1/2$. Thus, we cannot have $\gamma_3 \leq \gamma_1$. Thus $\gamma_1 = \min(\gamma_1, \gamma_2, \gamma_3)$ and hence,~\eqref{eq:gammaorderhighbeta} holds for all $\gamma \in [0,\gamma_1]$. Substitution $s_2^\star = \beta(1-\gamma_1)$ for $\gamma = \gamma_1$ in~\eqref{eq:gammaorderhighbeta} yields the desired expression for $\gamma_1$. 

    Now we show that $\gamma_2 \leq \gamma_3$. Suppose not, i.e., $\gamma_3 \in (\gamma_1, \gamma_2)$. We are in cases $III$ and $II'$ in $[\gamma_1, \gamma_3]$. Therefore, $s_2^\star$ satisfies: 
\begin{align}
\label{eq:gammaorderhighbeta1}
s_1^\star - \frac{2s_1^\star s_2^\star - \gamma (s_1^\star)^2}{2(1-\gamma)} + \frac{(1-\gamma - s_2^\star/\beta + \gamma s_1^\star/\beta)^2}{2\gamma(1-\gamma)} = c.
\end{align}

\noindent   
    For $\gamma = \gamma_3$, we know that $s_2^\star = \gamma_3 s_1^\star.$ Substituting this above, we get: 
    $$s_1^\star - \frac{\gamma_3 (s_1^\star)^2}{2(1-\gamma_3)} + \frac{1-\gamma_3}{2 \gamma_3} = c.$$ Since $\gamma_3 < \gamma_2$, $s_1^\star \gamma_3 = s_2^\star < (1-\gamma_3). $ Using this, we see that 
    $ c > s_1^\star$, which is a contradiction (as argued in the previous case above). 
    Thus, $\gamma_2 \leq \gamma_3$. The expression for evaluating $\gamma_2$ in the statement of the lemma follows from substituting $s_2^\star = 1-\gamma_2$ for $\gamma=\gamma_2$ in~\eqref{eq:gammaorderhighbeta1}.

    Finally, we show the desired expression for $\gamma_3$. In the interval $\gamma \in [\gamma_2, \gamma_3]$, we are in case~$II$ and~$II'$. Thus, $s_2^\star$ satisfies: 
    $$(1-\gamma - s_2^\star + \gamma s_1^\star)^2 + (1-\gamma - s_2^\star/\beta + \gamma s_1^\star/\beta)^2=2c \gamma(1-\gamma).$$
    At $\gamma=\gamma_3$, $s_2^\star = \gamma_3 s_1^\star.$ Substituting this above, we get the desired expression for $\gamma_3$. 
\end{proof}

\begin{figure}[h!]
    \centering
        \centering
        \includegraphics[width=0.45\textwidth]{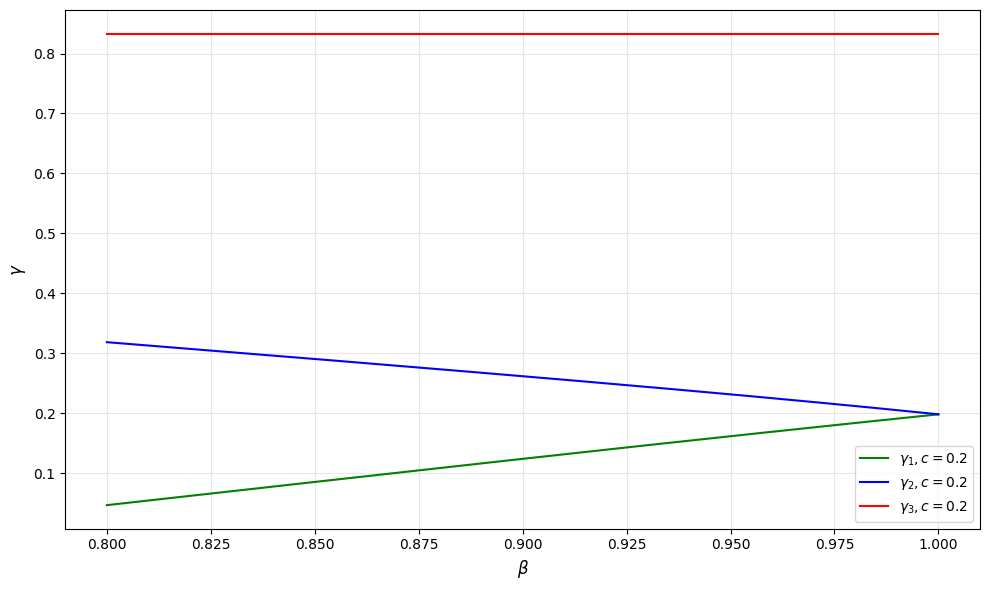}
        \label{fig:thresh_01}
    \hfill
        \centering
        \includegraphics[width=0.45\textwidth]{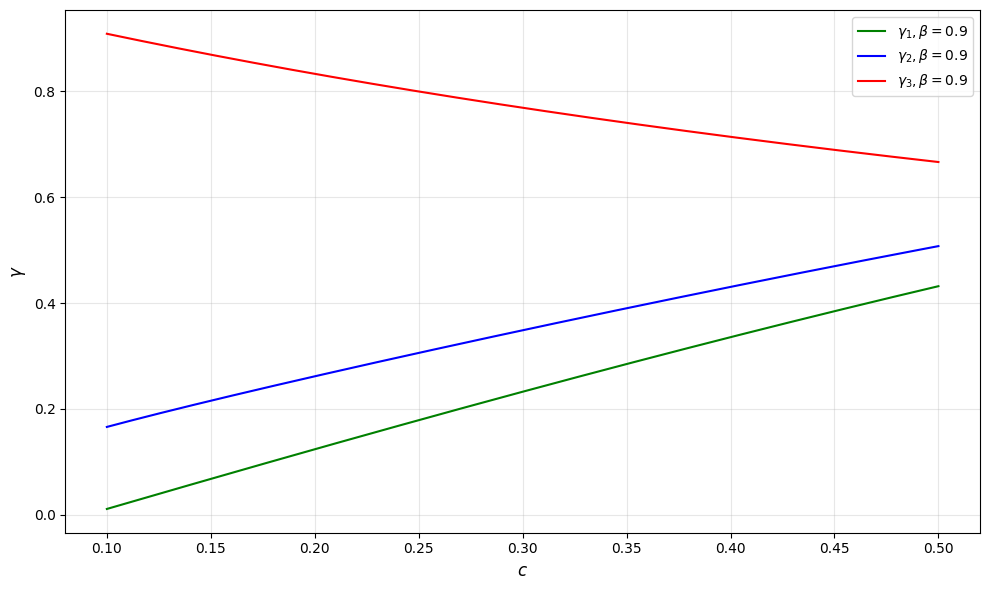}
        \label{fig:thresh_02}
    \caption{Variation of thresholds $\gamma_1, \ldots, \gamma_3$ with $\beta$ for a fixed value of $c$ (Left)  and with $c$ for a fixed value of $\beta$ (Right). Note that $\beta \geq 1-c$.}
    \label{fig:gammathresholds}
\end{figure}

\noindent
We now show monotonicity of $\gamma_1$ and $\gamma_2$ with respect to $\beta$ -- observe that $\gamma_3$ is independent of $\beta$. 

\begin{proposition}[\bf Monotonicity of \boldmath{$\gamma$}-thresholds w.r.t. \boldmath{$\beta$}]
    \label{lem:monotonegamma}
    Assume $c < 1/2$. Then $\gamma_1(\beta)$ is  monotonically increasing and $\gamma_2(\beta)$ is a monotonically decreasing function of $\beta$ as $\beta$ varies from $1-c$ to $1$. 
\end{proposition}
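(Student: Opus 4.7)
The strategy is to apply the implicit function theorem (IFT) to the two defining identities supplied by Theorem~\ref{thm:gammathreshold}, namely
\[
\frac{s_2^\star(\beta,\gamma_1)}{\beta} \;=\; 1-\gamma_1
\qquad\text{and}\qquad
s_2^\star(\beta,\gamma_2) \;=\; 1-\gamma_2,
\]
and to read off the sign of $d\gamma_i/d\beta$ from sign information about the partial derivatives of $s_2^\star$ in $\beta$ and $\gamma$. Two facts are already available and will do most of the work: (a) $s_2^\star(\beta,\gamma)$ is strictly increasing in $\beta$ at fixed $\gamma$ (Proposition~\ref{prop:monotones1beta}); and (b) $s_2^\star(\beta,\gamma)/(1-\gamma)$ is strictly increasing in $\gamma$ at fixed $\beta$, as established en route to Theorem~\ref{thm:gammathreshold}.

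\textbf{Monotonicity of $\gamma_2(\beta)$.}
Set $G(\beta,\gamma) := s_2^\star(\beta,\gamma) - (1-\gamma)$. Differentiating $G(\beta,\gamma_2(\beta))=0$,
\[
\frac{d\gamma_2}{d\beta} \;=\; -\,\frac{\partial s_2^\star/\partial\beta}{\partial s_2^\star/\partial\gamma \;+\; 1}.
\]
Fact~(a) makes the numerator strictly positive. For the denominator, differentiating fact~(b) via the quotient rule gives $\partial s_2^\star/\partial\gamma > -\,s_2^\star/(1-\gamma)$; evaluating at $\gamma=\gamma_2$, where $s_2^\star=1-\gamma_2$, yields $\partial s_2^\star/\partial\gamma > -1$, so the denominator is strictly positive. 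Hence $d\gamma_2/d\beta < 0$, which gives strict decrease of $\gamma_2$.

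\textbf{Monotonicity of $\gamma_1(\beta)$.}
Apply the same template to $H(\beta,\gamma) := s_2^\star(\beta,\gamma)/\beta - (1-\gamma)$:
\[
\frac{d\gamma_1}{d\beta} \;=\; -\,\frac{\partial(s_2^\star/\beta)/\partial\beta}{(1/\beta)\,\partial s_2^\star/\partial\gamma \;+\; 1}.
\]
The denominator is positive by the same argument applied at $\gamma_1$: since $s_2^\star/(1-\gamma_1)=\beta$, fact~(b) implies $\partial s_2^\star/\partial\gamma > -\beta$, so $(1/\beta)\,\partial s_2^\star/\partial\gamma + 1 > 0$. It remains to show the numerator is \emph{negative}, i.e.\ that $\beta \mapsto s_2^\star(\beta,\gamma)/\beta$ is strictly decreasing at fixed $\gamma\in[0,\gamma_1]$, equivalently that $s_2^\star - \beta\,\partial s_2^\star/\partial\beta > 0$. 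I plan to obtain this from the closed form for $s_2^\star$ in the $[0,\gamma_1]$ regime supplied by Proposition~\ref{prop:s2equations}, which after using $s_1^\star(1+1/\beta)=2-c$ simplifies to
\[
s_2^\star(\beta,\gamma) \;=\; \frac{\gamma(2-c)\beta}{2(\beta+1)} \;+\; \frac{2(1-\gamma)(1-c)\,\beta(\beta+1)}{(2-c)(\beta^2+1)}.
\]
A direct differentiation then yields
\[
s_2^\star - \beta\,\frac{\partial s_2^\star}{\partial\beta}
\;=\; \frac{\gamma(2-c)\beta^2}{2(\beta+1)^2}
\;+\; \frac{2(1-\gamma)(1-c)\,\beta^2(\beta^2+2\beta-1)}{(2-c)(\beta^2+1)^2},
\]
which is strictly positive whenever $\beta^2+2\beta-1>0$, i.e.\ $\beta>\sqrt{2}-1$. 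The assumption $c<1/2$ forces $\beta\ge 1-c>1/2>\sqrt{2}-1$, so the inequality holds throughout the stated range, giving $\partial(s_2^\star/\beta)/\partial\beta<0$ and hence $d\gamma_1/d\beta>0$.

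\textbf{Main obstacle.}
The $\gamma_2$ half is essentially mechanical once fact~(b) is invoked. The technical work sits in the $\gamma_1$ half, which needs the elasticity-type bound $\beta\,\partial_\beta s_2^\star < s_2^\star$; this is obtained by the explicit computation above, and the constraint $\beta>\sqrt{2}-1$ that appears is precisely where the hypothesis $c<1/2$ (forcing $\beta\ge 1-c>1/2$) is used. A minor subtlety is that the closed form for $s_2^\star$ used here is the one valid on the left of $\gamma_1$; one should interpret the partials at $\gamma_1$ as one-sided ($\gamma\uparrow\gamma_1$), and continuity of $s_2^\star$ at the regime boundary guarantees that the IFT conclusion is unaffected.
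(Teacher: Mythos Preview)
Your argument is correct. For $\gamma_2$, your IFT computation and the paper's contradiction argument are two phrasings of the same idea: both rest entirely on facts~(a) and~(b), and the paper simply chains the inequalities $1=\tfrac{s_2^\star(\gamma_2(\beta),\beta)}{1-\gamma_2(\beta)}<\tfrac{s_2^\star(\gamma_2(\beta'),\beta')}{1-\gamma_2(\beta')}=1$ directly rather than passing through derivatives.

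For $\gamma_1$ the routes genuinely diverge. The paper takes the closed-form expression for $\gamma_1$ itself from Theorem~\ref{thm:gammathreshold}, writes it as $a(\beta)/(a(\beta)+b(\beta))$ with $a(\beta)=2s_1^\star(\beta+1/\beta)-4(1-c)$ and $b(\beta)=(s_1^\star)^2(1+1/\beta^2)$, and verifies $a'(\beta)>0$, $b'(\beta)<0$ by elementary calculus. You instead keep $\gamma_1$ implicit and show that $s_2^\star/\beta$ is decreasing in $\beta$ on the $[0,\gamma_1]$ regime via the closed form for $s_2^\star$. Both computations are of comparable length and, pleasingly, both bottom out at the same inequality $(1+\beta)^2>2$, i.e.\ $\beta>\sqrt{2}-1$, which is where the hypothesis $c<1/2$ (hence $\beta\ge 1-c>1/2$) enters. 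Your route has the mild advantage that the intermediate statement ``$s_2^\star/\beta$ is decreasing in $\beta$'' is itself interpretable (the effective threshold for $G_2$ falls as bias lessens), whereas the paper's route avoids any discussion of one-sided derivatives at the regime boundary since it never touches $s_2^\star$ at all.
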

\begin{proof}
    We first consider $\gamma_1$. Using~\Cref{thm:gammathreshold}, we know that $\gamma_1$ is equal to 
    $$\frac{2s_1^\star(\beta + 1/\beta) - 4(1 - c)}{\underbrace{2s_1^\star(\beta + 1/\beta) - 4(1 - c)}_{a(\beta)} + \underbrace{(s_1^\star)^2(1 + 1/\beta^2)}_{b(\beta)}}.$$
   The above can be written as $\frac{a(\beta)}{a(\beta) + b(\beta)}.$
The sign of the derivative of $\gamma_1$ is given by $a'(\beta) b(\beta) - a(\beta) b'(\beta),$ where $a'(\beta)$ and $b'(\beta)$ denote the derivative of $a(\beta)$ and $b(\beta)$ with respect to $\beta$ respectively. It is easy to verify that $a(\beta) \geq 0$: indeed, this is same as verifying $2s_1^\star(\beta+1/\beta) \geq 2(1-c).$
Using $s_1^\star = \frac{2-c}{1+1/\beta},$ this is same as verifying 
$$(2-c) (\beta + 1/\beta) - 2(1-c)(1+1/\beta) \geq 0.$$
Simplifying, the above is the same as verifying
$$2(c+\beta-1)+c/\beta - c\beta \geq 0,$$
which is true because $\beta + c \geq 1$ and $\beta \leq 1$. 

Now recall that the sign of the derivative of $\gamma_1$ with respect to $\beta$ is same as that of $a'(\beta) b(\beta) - a(\beta) b'(\beta).$ Since $a(\beta), b(\beta) \geq 0,$ it suffices to show that $a'(\beta) > 0$ and $b'(\beta) < 0$. Using the definition of $s_1^\star$, $a'(\beta)$ is $2(2-c)$ times $$\frac{d}{d \beta} \left( \frac{\beta + 1/\beta}{1+1/\beta} \right) = \frac{\beta^2 + 2\beta -1}{(1+\beta)^2} 
 =  \frac{(1+\beta)^2 -2}{(1+\beta)^2} > 0,$$
 because $1+\beta \geq \sqrt{2}$ (recall that $\beta \geq 1-c \geq 1/2$). Thus, $a'(\beta) > 0$. We consider $b'(\beta)$. Now, $b'(\beta) = 2s_1^\star \frac{d s_1^\star}{d\beta} -2 (s_1^\star)^2/\beta^3$. Thus, $b'(\beta) < 0$ if $\frac{d s_1^\star(\beta)}{d \beta} < \frac{s_1^\star}{\beta^3}.$ Using $s_1^\star = \frac{2-c}{1+1/\beta}$, this inequality is equivalent to $1/\beta < 1 + 1/\beta$, which is clearly true. Thus, $\gamma_1(\beta)$ is an increasing function of $\beta$. 
   
    We now show the monotonicity of $\gamma_2(\beta)$. Suppose for the sake of contradiction that there are values $\beta < \beta'$ such that $\gamma_2(\beta) \geq \gamma_2(\beta')$. Now, by definition of $\gamma_2$, we know that 
    $\frac{s_2^\star(\gamma_2)}{1-\gamma_2}=1.$  Now, we have the following sequence of inequalities: 
    $$1=\frac{s_2^\star(\gamma_2(\beta),\beta))}{1-\gamma_2(\beta)} \leq \frac{s_2^\star(\gamma_2(\beta'),\beta)}{1-\gamma_2(\beta')} < \frac{s_2^\star(\gamma_2(\beta'),\beta')}{1-\gamma_2(\beta')} = 1.$$
    Here the first inequality follows from the fact that for a fixed $\beta$, $\frac{s_2^\star(\gamma)}{1-\gamma}$ is an increasing function of $\gamma$, and the second inequality follows from the fact that for a fixed $\gamma$, $s_2^\star(\beta)$ is a monotonically increasing function of $\beta$. This leads to a contradiction, and hence $\gamma_2(\beta)$ is a monotonically decreasing function of $\beta$. 
\end{proof}

\section{Proofs of results for  \boldmath{$s_2^\star$}}
\label{app:s2}
In this section, we present the proof of \Cref{prop:s2equations} and analyze the behavior of the selection threshold \(s_2^\star\) as \(\gamma\) varies. Our approach is to examine the governing equation for \(s_2^\star\) in each sub-interval of $[0,1]$ defined by $\gamma_1, \gamma_2, \gamma_3$,  where the selection dynamics transition. 
Using these equations, we then study the non-monotonic variation of $s_2^\star$ with $\gamma$:  as \(\gamma\) increases, \(s_2^\star\) initially decreases but eventually starts to rise again. The proof of this result relies on analyzing the equations governing \(s_2^\star\) from~\Cref{prop:s2equations}. 
Unless this turns out to be a linear equation, analyzing a closed-form solution for $s_2^\star$ is highly non-trivial. 
Our approach involves studying the equations satisfied by the first and second derivatives of \(s_2^\star\) with respect to \(\gamma\). This enables us to infer its behavior across different regions of \(\gamma\). We now  prove~\Cref{prop:s2equations}. 
We restate it here for convenience.
 \stwoequationhighbeta*

\begin{proof}
    These equations follow by definition of $\gamma_1, \ldots, \gamma_4$ and the corresponding cases described in~\Cref{sec:mainresults}. For instance, when $\gamma \in [0, \gamma_1]$, we are in case $III$ and $III'$. Other cases can be argued similarly. 
\end{proof}
\noindent
We now show that $s_2^\star(\gamma)$ varies in a non-monotone manner, but has only one local minimum.

\begin{restatable}[\bf Variation of \boldmath{\(s_2^\star\)} with \boldmath{\(\gamma\)}]{theorem}{stwohighbeta}
    \label{thm:s2highbeta}
      Assume that the bias parameter \(\beta\) satisfies \(\beta \geq 1-c\) and \(c < 1/2\). Then, \(s_2^\star\) exhibits the following behavior as \(\gamma\) varies from \(0\) to \(1\): 
     \begin{itemize}
         \item[(i)] For \(\gamma \in [0,\gamma_1]\), \(s_2^\star\) decreases linearly with \(\gamma\). 
         \item[(ii)] For \(\gamma \in [\gamma_1, \gamma_2]\), \(s_2^\star\) is unimodal: it initially decreases and may subsequently increase. 
         \item[(iii)] For \(\gamma \in [\gamma_2,\gamma_3]\), \(s_2^\star\) is convex. 
         \item[(iv)] For \(\gamma \in [\gamma_3,1]\), \(s_2^\star\) is increasing. 
     \end{itemize}
     Furthermore, the slope of \(s_2^\star\) at \(\gamma_2^-\) matches that at \(\gamma_2^+\). Thus,   \(s_2^\star\) is unimodal.
\end{restatable}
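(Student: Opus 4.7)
The plan is to prove the four pieces (i)--(iv) separately by working directly with the defining equations from~\Cref{prop:s2equations}, and then assemble them into global unimodality using continuity and slope matching at $\gamma_2$. On the two outer intervals $[0,\gamma_1]$ and $[\gamma_3,1]$ the equation is linear in $s_2^\star$, so (i) and (iv) reduce to the sign of a single coefficient. On the two middle intervals the equation is implicit and nonlinear, so I would use implicit differentiation in $\gamma$ to extract the first- and second-order behavior of $s_2^\star(\gamma)$.

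\textbf{Linear regimes.} In $[0,\gamma_1]$, isolating $s_2^\star$ in the first equation of~\Cref{prop:s2equations} yields
\[
s_2^\star(\gamma) \;=\; \frac{K}{s_1^\star} \;+\; \gamma\!\left(\frac{s_1^\star}{2} - \frac{K}{s_1^\star}\right),
\qquad K := \frac{s_1^\star(1+1/\beta) - c}{1 + 1/\beta^2}.
\]
A direct algebraic check using $s_1^\star = (2-c)/(1+1/\beta)$, $\beta \geq 1-c$, and $c < 1/2$ shows $(s_1^\star)^2 < 2K$, so the slope is negative and we obtain (i). In $[\gamma_3,1]$, the fourth equation is also linear, and solving it gives
\[
s_2^\star(\gamma) \;=\; \gamma\, s_1^\star \;+\; \frac{1 - \gamma(1+c)}{1 + 1/\beta},
\]
whose slope equals $(1-2c)/(1+1/\beta) > 0$ under $c < 1/2$, giving (iv).

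\textbf{Middle regimes via implicit differentiation.} For (iii), set $A := (1-\gamma) + \gamma s_1^\star - s_2$ and $B := (1-\gamma) + \gamma s_1^\star/\beta - s_2/\beta$, so the third equation reads $F(\gamma, s_2) := A^2 + B^2 - 2c\gamma(1-\gamma) = 0$. Implicit differentiation yields $s_2^{\star\prime} = -F_\gamma/F_{s_2}$, and a further differentiation expresses $s_2^{\star\prime\prime}$ in terms of the second partials of $F$. After the substitution $u := \gamma s_1^\star - s_2$, the equation collapses to the symmetric form $((1-\gamma)+u)^2 + ((1-\gamma)+u/\beta)^2 = 2c\gamma(1-\gamma)$, and I would exploit this symmetry to force $s_2^{\star\prime\prime} \geq 0$ on the relevant range. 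For (ii), multiplying the second equation by $2\gamma(1-\gamma)$ gives a polynomial $P(\gamma, s_2) = 0$; implicit differentiation expresses $s_2^{\star\prime}(\gamma)$ as a rational function of $\gamma$ and $s_2^\star$, and I would show this function has at most one sign change on $[\gamma_1,\gamma_2]$, giving within-interval unimodality.

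\textbf{Matching and global unimodality.} Continuity at $\gamma_1, \gamma_2, \gamma_3$ follows by plugging the boundary values $s_2^\star(\gamma_1) = \beta(1-\gamma_1)$, $s_2^\star(\gamma_2) = 1-\gamma_2$, $s_2^\star(\gamma_3) = \gamma_3 s_1^\star$ into both adjacent equations of~\Cref{prop:s2equations}; a short algebraic check (essentially already done when deriving the $\gamma$-thresholds in~\Cref{thm:gammathreshold}) shows that the two equations coincide at each boundary. For slope matching at $\gamma_2$, I would totally differentiate both neighbouring equations at $\gamma = \gamma_2$ and substitute $s_2^\star(\gamma_2) = 1-\gamma_2$; because the corresponding terms of the two equations agree at $\gamma_2$, so do their derivatives, yielding the same value of $s_2^{\star\prime}(\gamma_2)$. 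Combining linear decrease on $[0,\gamma_1]$, unimodality on $[\gamma_1,\gamma_2]$, convexity on $[\gamma_2,\gamma_3]$ with matching slopes, and linear increase on $[\gamma_3,1]$ rules out any additional local extrema. The main obstacle is the sign analysis in (iii) and (ii): with no closed form for $s_2^\star$, establishing $s_2^{\star\prime\prime} \geq 0$ in (iii) and at most one sign change of $s_2^{\star\prime}$ in (ii) requires careful bookkeeping that mixes the bounds $\beta \geq 1-c$ and $c < 1/2$ with the regime-defining inequalities on $s_2^\star$; the substitution $u := \gamma s_1^\star - s_2$ appears to be the cleanest route in (iii), while in (ii) the key is to bound the discriminant of the quadratic part of $P$ along the zero locus.
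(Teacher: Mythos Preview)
Your plan for the two linear regimes (i) and (iv), for convexity in (iii), and for the continuity/slope-matching argument at $\gamma_2$ tracks the paper's proof closely and is sound. The substantive gap is in (ii).

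For $\gamma \in [\gamma_1,\gamma_2]$ you propose to express $(s_2^\star)'$ via implicit differentiation as a rational function of $(\gamma, s_2^\star)$ and then ``show this function has at most one sign change,'' with the fallback of ``bound[ing] the discriminant of the quadratic part of $P$ along the zero locus.'' Neither of these is a workable mechanism as stated. The first derivative $-P_\gamma/P_{s_2}$ depends on both $\gamma$ and the unknown $s_2^\star(\gamma)$, so counting its sign changes is not reducible to analyzing a univariate rational function; and the discriminant route, even if it gives a closed form for $s_2^\star$, produces an expression involving a square root whose derivative is not obviously unimodal. The paper's actual argument is different and cleaner: it differentiates the defining equation \emph{twice} in $\gamma$, obtaining a second-order relation of the form
\[
A(\gamma)\,(s_2^\star)'' \;-\; \tfrac{2}{\beta^2}\bigl((s_2^\star)'\bigr)^2 \;+\; C\,(s_2^\star)' \;+\; D \;=\; 0
\]
with $A(\gamma)>0$ bounded away from zero and $C\neq 0$, and then invokes a general ODE-type lemma (any twice-differentiable $f$ satisfying $A(\gamma)f'' + B(f')^2 + Cf' + D = 0$ with these sign conditions is unimodal on the interval). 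This, together with a contradiction ruling out $s_2^\star$ being constant on a sub-interval, gives at most one critical point. You are also missing the verification that $(s_2^\star)'(\gamma_1)<0$, which the paper checks by substituting $s_2^\star(\gamma_1)=\beta(1-\gamma_1)$ into the once-differentiated equation; without it, ``initially decreases'' is not established.
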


\begin{proof}
    {\bf Case \boldmath{$\gamma \in [0, \gamma_1]$}:} Here,~\Cref{prop:s2equations}  shows that $s_2^\star$ varies linearly with $\gamma$. In fact, $s_2^\star$ is equal to 
    $$\frac{1}{2(s_1^\star + s_1^\star/\beta^2)} \left( 2s_1^\star(1+1/\beta)(1-\gamma) -2c(1-\gamma) + \gamma (s_1^\star)^2(1+1/\beta^2)\right).$$
    The coefficient of $\gamma$ is equal to $\frac{1}{2s_1^\star(1+1/\beta^2)}$ times
    $$2c + (s_1^\star)^2(1+1/\beta^2)-2s_1^\star(1+1/\beta) = 4c-4 + \left(\frac{2-c}{1+1/\beta} \right)^2 \left(1 + \frac{1}{\beta^2} \right) \leq 4c-4 + (2-c)^2/2. $$
    The above is negative if $c < 1/2$. Thus, $s_2^\star$ is a decreasing function of $\gamma$ in $[0,\gamma_1]$. 

\noindent
    {\bf Case \boldmath{$\gamma \in [\gamma_1, \gamma_2]$}:} We rewrite the constraint in ~\Cref{prop:s2equations} for this case as follows:
    \begin{align}
    \label{eq:differentiateonce}
    2 \gamma (1-\gamma) (s_1^\star -c ) - 2\gamma s_1^\star s_2^\star + \gamma^2 (s_1^\star)^2 + (1-\gamma  - s_2^\star/\beta + \gamma s_1^\star/\beta)^2=0.
    \end{align}
    \noindent
    Differentiating this, we get (here $(s_2^\star)'$  denotes the derivative of $s_2^\star$ w.r.t. $\gamma$):
    \begin{align*}
& 2(1 - 2\gamma)(s_1^\star - c) - 2s_1^\star s_2^\star + 2\gamma (s_1^\star)^2 \\
&\quad - 2\left(1 - \gamma - \frac{s_2^\star}{\beta} + \frac{\gamma s_1^\star}{\beta}\right)\left(1 - \frac{s_1^\star}{\beta}\right) \\
&\quad - (s_2^\star)' \left( 2\gamma s_1^\star + \frac{2}{\beta} \left( 1 - \gamma - \frac{s_2^\star}{\beta} + \frac{\gamma s_1^\star}{\beta} \right) \right) = 0.
\end{align*}
     We first show that $(s_2^\star)' < 0$ at $\gamma = \gamma_1$. For this, we substitute $s_2^\star = \beta(1-\gamma_1)$ in the equation above. Since the coefficient of $(s_2^\star)'$ is negative, it suffices to show that the remaining terms not involving $(s_2^\star)'$ is negative, i.e., we need to show:
      $$(1-2\gamma_1)(s_1^\star-c) -2s_1^\star \beta(1-\gamma_1) + 2\gamma (s_1^\star)^2 - \frac{2 \gamma s_1^\star}{\beta} < 0.$$
      The l.h.s. is a linear function of $\gamma_1$. Since $\gamma_1 \geq 0$ and $\beta(1-\gamma_1) \geq s_1^\star \gamma_1$ (because $\gamma_1 \leq \gamma_3$), it suffices to check the above condition for $\gamma_1 = 0$ and $\gamma_1 = \frac{\beta}{\beta+s_1^\star}$. For $\gamma_1 = 0$, the above condition holds because $s_1^\star \leq \beta$. When $\gamma_1 = \frac{\beta}{\beta+s_1^\star}$, $\gamma_1 \geq 1/2$ because $s_1^\star \leq \beta$. Therefore, $(1-2\gamma_1)(s_1^\star-c) < 0$. Using this value of $\gamma_1$, $\beta(1-\gamma_1) = \gamma s_1^\star$. Thus, the above condition holds here as well. Thus, we have shown that $s_2^\star < 0$ at $\gamma=\gamma_1$. It remains to show that $s_2^\star$ is unimodal. This follows from the following general fact: 
      \begin{fact}
          \label{fact:diff}
          Let $f(\gamma)$ be a twice differentiable function of $\gamma$ on a closed interval $[\gamma_1,\gamma_2]$ such that the following condition holds:
          $$A(\gamma) f''(\gamma) + B (f'(\gamma))^2 + C  f'(\gamma) + D = 0,$$ where $B,C,D$ are independent of $\gamma$, $C \neq 0,$ and $A(\gamma)$ is a  continuous function of $\gamma$ and there is a positive constant $E$ such that $A(\gamma) \geq E$. Then $f(\gamma)$ is unimodal on $[\gamma_1, \gamma_2]$, i.e., if $f'(\gamma_a) = f'(\gamma_b) = 0$, then $f'(\gamma) = 0$ for all $\gamma \in [\gamma_a,\gamma_b]$. 
      \end{fact}
      \begin{proof}
          First, consider the case when $D > 0$ (the case $D < 0$ is analogous). 
Let $u(\gamma)$ denote the $f'(\gamma)$. Assume that there are two distinct values $\gamma_a, \gamma_b \in [\gamma_1, \gamma_2]$, $\gamma_a < \gamma_b$ such that $u(\gamma_a) = u(\gamma_b) = 0$, but $u(\gamma_0) \neq 0$ for some $\gamma_0 \in [\gamma_a,\gamma_b]$. Since $u(\gamma)$ is continuous, we can assume, by suitably selecting $\gamma_a$ and $\gamma_b$, that $u(\gamma)$ is non-negative in $[\gamma_a,\gamma_b]$. By the equation stated in the condition above, we see that there is a constant $\varepsilon > 0$ such that if $|u(\gamma)| < \varepsilon$, then $u'(\gamma) < 0$.
By continuity of $u(\gamma)$, there exists a $\delta > 0$ such that for all points $\gamma \in I:= [\gamma_a, \min(\gamma_a+\delta, \gamma_0)]$, 
$0 < u(\gamma) < \varepsilon$. Now,  consider a value $\gamma \in I$, where $\gamma \neq \gamma_a$. By mean value theorem, there exists a value $\gamma_c \in [\gamma_a, \gamma]$ such that $u'(\gamma_c) = \frac{u(\gamma)-u(\gamma_a)}{\gamma-\gamma_a} = \frac{u(\gamma)}{\gamma-\gamma_a} > 0$. But this is a contradiction, because $u(\gamma_c) < \varepsilon$ and hence, $u'(\gamma_c)$ must be negative.

Finally, consider the case where $D=0$. Assume that $C > 0$; the other case can be handled similarly. It follows that there is a constant $\varepsilon$ such that if $0 < u(\gamma) \leq \varepsilon$, then $u'(\gamma)< 0$. The above argument now carries over analogously. 
      \end{proof}
      \noindent
    Let us verify that the conditions stated in the above fact hold in our setting. Differentiating~\eqref{eq:differentiateonce}
    twice and grouping terms, we get (here $(s_2^\star)'$ and $(s_2^\star)''$ denote the derivative and the second derivative of $s_2^\star$ w.r.t. $\gamma$ respectively):
    $$ \left(\frac{2 \Delta}{\beta} + 2s_1^\star \gamma \right) (s_2^\star)'' - \frac{2}{\beta^2} ((s_2^\star)')^2  + \left( 4s_1^\star - \frac{4}{\beta} (1-s_1^\star/\beta) \right) s_1^\star  -2(s_1^\star-c)+2(s_1^\star)^2 + 2 (1-s_1^\star/\beta)^2=0.$$
    Here $\Delta$ denotes $1-\gamma + \frac{\gamma s_1^\star}{\beta} - \frac{s_2^\star}{\beta}$. Note that $\Delta \geq 0$ because $s_2^\star \leq s_1^\star$, and hence, 
    $$\Delta \geq 1-\gamma + \frac{\gamma s_1^\star}{\beta} - \frac{s_1^\star}{\beta} = (1-\gamma)(1-s_1^\star/\beta) \geq 0.$$
    We also note that 
    $$4s_1^\star - \frac{4}{\beta}(1-s_1^\star/\beta) \neq  0.$$
    Indeed, if the l.h.s. is equal to 0, then $\beta s_1^\star = 1 - s_1^\star/\beta$, i.e., $s_1^\star(\beta + 1/\beta) = 1$. Now, $\beta + 1/\beta \geq 2$ and $s_1^\star > 1/2$. Therefore, this cannot happen. Thus, the conditions stated in~\Cref{fact:diff} hold with $f(\gamma)$ denoting $s_2^\star(\gamma)$, $a(\gamma) = \frac{2\Delta}{\beta} + 2s_1^\star \gamma$, $b = \frac{-2}{\beta^2}$, $c = \left( 4s_1^\star - \frac{4}{\beta} (1-s_1^\star/\beta) \right) $ and $d=-2(s_1^\star-c)+2(s_1^\star)^2 + 2 (1-s_1^\star/\beta)^2.$

    We now claim that $(s_2^\star(\gamma))'$ can be $0$ for at most one $\gamma \in [\gamma_1, \gamma_2]$. Indeed, suppose it is $0$ at two distinct points $\gamma_a$ and $\gamma_b$, where $\gamma_a < \gamma_b$. Then,~\Cref{fact:diff} shows that $s_2^\star(\gamma)$ is a constant during the entire interval $[\gamma_a,\gamma_b]$. But we argue that this cannot happen. Indeed, suppose $s_2^\star(\gamma)=K$ during this interval. Substituting this value in~\eqref{eq:s2equationinterval2mid} and multiplying both sides by $2\gamma(1-\gamma)$, we see that $\gamma$ satisfies a non-zero quadratic polynomial with constant coefficients. But this can have at most two roots.  This is a contradiction because all $\gamma \in [\gamma_a, \gamma_b]$ satisfy this equation. Thus, we see that $s_2^\star(\gamma)$ has at most one local maximum or local minimum in $[\gamma_1, \gamma_2]$. Since $(s_2^\star)' < 0$ at $\gamma_1$, we conclude that $s_2^\star$ can have at most one local minimum and no local maximum in $[\gamma_1, \gamma_2]$.

\smallskip
\noindent
      {\bf Case \boldmath{$\gamma \in [\gamma_2, \gamma_3]$}:} We rewrite the equation for this case in ~\Cref{prop:s2equations} as:
      $$(1-\gamma - s_2^\star + \gamma s_1^\star)^2 + (1-\gamma - s_2^\star/\beta + \gamma s_1^\star/\beta)^2=2c \gamma(1-\gamma).$$
\noindent
Differentiating the above equation twice with respect to $\gamma$, we get: 
     $$2 (f'(\gamma))^2 + 2f(\gamma) f''(\gamma) + 2 (f'_\beta(\gamma))^2 + 2f_\beta(\gamma) f''_\beta(\gamma) = -4c, $$ 
     where $f(\gamma)$ denotes $(1-\gamma -s_2^\star + \gamma s_1^\star)$ and $f_\beta(\gamma)$ denotes $(1-\gamma -s_2^\star/\beta + \gamma s_1^\star/\beta)$. Since $s_2^\star \leq s_1^\star$,  
     $$f(\gamma) = 1-\gamma -s_2^\star + \gamma s_1^\star \geq 1- \gamma - s_1^\star + \gamma s_1^\star= (1-s_1^\star)(1-\gamma) > 0.$$
     Similarly, $f_\beta(\gamma) > 0$. 
     Finally, 
     $$f''(\gamma) = -(s_2^\star)'', f_\beta''(\gamma) = -(s_2^\star)''/\beta.$$
     It follows that $(s_2^\star)'' > 0$ when $\gamma \in (\gamma_2, \gamma_3).$ This shows that $s_2^\star$ is convex. 

\smallskip
    \noindent
     {\bf Case \boldmath{$\gamma \in [\gamma_3, 1]$}:} Solving for $s_2^\star$ in the last case mentioned in~\Cref{prop:s2equations}, we get: 
     $$s_2^\star = \frac{s_1^\star \gamma(1+\beta) + \beta(1-\gamma) - c\beta \gamma}{1+\beta}.$$
The coefficient of $\gamma$ is equal to $\frac{1}{1+\beta}$ times
$$s_1^\star(1+\beta) - \beta - c \beta = \beta(2-c) - \beta - c\beta = \beta - 2c\beta.$$
If $c < 1/2$, we see that this is an increasing function of $\gamma$. Thus, we have shown the desired behavior of $s_2^\star$ in each of the given intervals. It remains to argue that $s_2^\star$ is unimodal. The only case to check is if $s_2^\star$ has a local minimum in both $[\gamma_1,\gamma_2]$ (where it has been shown to be unimodal) and in $[\gamma_2, \gamma_3]$ (where it has been shown to be convex). We can directly check by differentiating 
 the equations for the second and the third cases in~\Cref{prop:s2equations} that $(s_2^\star)'$ at $\gamma=\gamma_2$ in both the intervals is the same. Thus, if $s_2^\star$ has a local minimum in $[\gamma_1, \gamma_2]$, then it is an increasing function at $\gamma=\gamma_2$. Thus, it will remain an increasing function in $[\gamma_2,\gamma_3]$ (because it is convex in this range).

It remains to check that $(s_2^\star)'$ at $\gamma=\gamma_2$ is the same according to the second and the third equations in~\Cref{prop:s2equations}. We consider the second equation first.
Let $g(\gamma)$ denote $\frac{(1-\gamma-s_2^\star/\beta+\gamma s_1^\star/\beta)^2}{2\gamma(1-\gamma)}.$
Differentiating both sides and substitution $s_2^\star = 1-\gamma_2$, we get: 
\begin{align*}
    -\frac{2s_1^\star (s_2^\star)'-(s_1^\star)^2}{2(1-\gamma_2)} - \frac{2s_1^\star (1-\gamma_2) -\gamma(s_1^\star)^2}{2(1-\gamma_2)^2} + g'(\gamma_2)=0. 
\end{align*}
After simplifying, the above becomes:
\begin{align}
    \label{eq:simplification1}
    g'(\gamma_2) + \frac{-2s_1^\star(1-\gamma_2)(s_2^\star)' + (s_1^\star)^2-2s_1^\star(1-\gamma_2)}{2(1-\gamma_2)^2}=0.
\end{align}
Similarly, differentiating both sides of the third equation in~\Cref{prop:s2equations} and substituting $s_2^\star =1-\gamma_2$, we get the same equation as above. This shows that $(s_2^\star)'$ at $\gamma=\gamma_2$ according to either of the two equations in~\Cref{prop:s2equations} is the same.   This completes the proof of the theorem.

\end{proof}

\section{Proofs of results for the representation ratio}
\label{app:representation}

In this section, we prove \Cref{thm:Rgamma} that presents explicit formulas for the representation ratio \(\mathcal{R}(\beta, \gamma)\) across the four regimes determined by the threshold values \(\gamma_1, \gamma_2, \gamma_3\), under the assumption that \(\beta \geq 1 - c\). These expressions are based on the governing equations for the thresholds \(s_1^\star\) (Equation \eqref{eq:s1}) and \(s_2^\star\) (\Cref{prop:s2equations}), and are used to establish both correctness and monotonicity results. Specifically, we analyze how the formula for \(\mathcal{R}(\beta, \gamma)\) evolves as \(\gamma\) increases, showing that in all four regimes the representation ratio is either constant or strictly increasing in \(\gamma\). These results culminate in the proofs of \Cref{thm:Rgamma} and \Cref{cor:monotone-gamma}.

In \Cref{app:monotone-beta}, we also show that for fixed \(\gamma\), \(\mathcal{R}(\beta, \gamma)\) is monotone in \(\beta\), and prove that the normalized representation ratio \(\mathcal{N}(\beta, \gamma) := \mathcal{R}(\beta, \gamma)/\mathcal{R}(\beta, 1)\) is non-decreasing in \(\beta\) over the interval \([1 - c, 1]\).

Throughout, we evaluate the representation ratio via
\[
\mathcal{R}(\beta, \gamma) =
\frac{\Pr[\hat u_{i'1} \ge s_1^\star] + \Pr[\hat u_{i'1} < s_1^\star,\, \hat u_{i'2} \ge s_2^\star]}
     {\Pr[\hat u_{i1} \ge s_1^\star] + \Pr[\hat u_{i1} < s_1^\star,\, \hat u_{i2} \ge s_2^\star]}.
\]
We recall~\Cref{thm:Rgamma} and~\Cref{cor:monotone-gamma} here.
\reprratio*

\monotonegamma*
\subsection{Proof of \Cref{thm:Rgamma} and \Cref{cor:monotone-gamma}}
\label{sec:representation_monotonicity}

We derive closed-form expressions for the representation ratio \(\mathcal{R}(\beta, \gamma)\) using the equations governing \(s_2^\star\) (\Cref{prop:s2equations}), and use them to prove \Cref{thm:Rgamma} and \Cref{cor:monotone-gamma}. The expressions are defined piecewise over the intervals determined by the thresholds \(\gamma_1, \gamma_2, \gamma_3\), and also yield the desired monotonicity in \(\gamma\).

\paragraph{Representation ratio for \boldmath{$\gamma \leq \gamma_1.$}}

We know that 
$$\Pr[u_{i2} \geq s_{2}^\star \wedge u_{i1} < s_{1}^\star] +\Pr[u_{i'2} \geq s_{2,\beta}^\star \wedge u_{i'1} < s_{1,\beta}^\star] =c, $$
and for $\gamma \leq \gamma_1$, we are in Cases I and I'. Hence, 
$$\Pr[u_{i2} \geq s_{2}^\star \wedge u_{i1} < s_{1}^\star] = s_1^\star -  \frac{2s_1^\star s_2^\star -\gamma (s_1^\star)^2}{2(1-\gamma)},$$
and 
$$\Pr[u_{i'2} \geq s_{2,\beta}^\star \wedge u_{i'1} < s_{1,\beta}^\star] = \frac{s_1^\star}{\beta} -  \frac{2  s_1^\star s_2^\star -\gamma (s_1^\star)^2}{2\beta^2(1-\gamma)}.$$
For sake of brevity, let $\psi$ denote $\frac{2s_1^\star s_2^\star-\gamma (s_1^\star)^2}{2(1-\gamma)}$. Then we get:
$$s_1^\star(1+1/\beta) - \psi(1+1/\beta^2) = c.$$
Using $s_1^\star = \frac{2-c}{1+1/\beta},$ we see that 
$$\psi = \frac{2(1-c)}{1+1/\beta^2}. $$
Therefore, 
$${\mathcal{R}}(\beta, \gamma) = \frac{1-\psi/\beta^2}{1-\psi} = \frac{\beta^2-1+2c}{1 -\beta^2+2c\beta^2}.$$
{\bf Monotonicity.} It is clear from the expression above that ${\mathcal{R}}(\beta, \gamma)$ does not vary with $\gamma$. 
\paragraph{Representation ratio for \boldmath{$\gamma_1 \leq \gamma \leq \gamma_2.$}}
When $\gamma \in (\gamma_1, \gamma_2)$,~\Cref{prop:s2equations} shows that:
\begin{equation}\label{eq:rr_1} s_1^\star - \frac{2s_1^\star s_2^\star - \gamma (s_1^\star)^2}{2(1-\gamma)} + \frac{(1-\gamma - s_2^\star/\beta + \gamma s_1^\star/\beta)^2}{2\gamma(1-\gamma)} = c.
\end{equation}
Further, the representation ratio is given by 

\begin{align}
\label{eq:repratio-case2}
{\mathcal R}(\beta ,\gamma) = \frac{1-s_1^\star/\beta + \frac{(1-\gamma - s_2^\star/\beta + \gamma s_1^\star/\beta)^2}{2\gamma(1-\gamma)}}{1-\frac{2s_1^\star s_2^\star - \gamma (s_1^\star)^2}{2(1-\gamma)}}.
\end{align}
Let $\Delta$ denote $\frac{s_2^\star-\gamma s_1^\star}{1-\gamma}$. Then, Equation \eqref{eq:rr_1}  can be written as:
\begin{align}
\label{eq:delta}
s_1^\star - \frac{\gamma (s_1^\star)^2}{2(1-\gamma)} - s_1^\star \Delta + \frac{1-\gamma}{2\gamma} \left(1-\Delta/\beta \right)^2 = c.
\end{align}
Let $u$ denote $1-\Delta/\beta$. Then the above can be written: 
\begin{align}
\label{eq:condition-u}
 \frac{1-\gamma}{2\gamma} u^2 +s_1^\star \beta u + (1-\beta) s_1^\star -c  -  \frac{\gamma (s_1^\star)^2}{2(1-\gamma)}=0. 
 \end{align}
The positive root of this equation is:
$$u = \frac{-\beta s_1^\star + \sqrt{ (\beta s_1^\star)^2 - \frac{2(1-\gamma)((-\beta+1)s_1^\star -c)}{\gamma}  + (s_1^\star)^2}}{(1-\gamma)/\gamma}.$$
Observe that the fraction of selected candidates from $G_2$ is 
$$ 1- s_1^\star/\beta + \frac{1-\gamma}{2\gamma} u^2,$$
which can be written as: 
$$ 1- s_1^\star/\beta + \frac{\gamma}{2(1-\gamma)} \Delta^2,$$
where
$$\Delta(\beta,\gamma):= -\beta s_1^\star + \sqrt{(s_1^\star)^2 (\beta^2+1) - \frac{2(1-\gamma)((-\beta+1)s_1^\star-c)}{\gamma}}.$$
\noindent
{\bf Monotonicity.} In~\eqref{eq:condition-u}, we let $v$ denote $\sqrt{\frac{1-\gamma}{\gamma}} u$. Then this equation can be written as: 
$$\frac{v^2}{2} + s_1^\star \beta \sqrt{\frac{\gamma}{1-\gamma}} v + (1-\beta) s_1^\star -c  -  \frac{\gamma (s_1^\star)^2}{2(1-\gamma)}=0.$$
Differentiating both sides with respect to $\gamma$, we see that the sign of $v'(\gamma)$ is same as that of 
$$ -v \beta s_1^\star \sqrt{\frac{1-\gamma}{\gamma}} + (s_1^\star)^2 = -u s_1^\star \beta \frac{1-\gamma}{\gamma} + (s_1^\star)^2.$$
Using the fact that $u  = 1 - \Delta/\beta$, the above expression is positive iff $\beta (1-\gamma)(1-\Delta/\beta) \leq (s_1^\star) \gamma$, which using the definition of $\Delta$, is same as $s_2^\star \geq \beta(1-\gamma)$. This fact holds true because $\gamma \geq \gamma_1$. Thus we see that $v'(\gamma) > 0$. Since the fraction of selected candidates from $G_2$ is given by $v^2/2 + 1-s_1^\star/\beta$, we see that as we raise $\gamma$, this fraction increases. Hence ${\mathcal R}(\beta, \gamma)$ increases with increasing $\gamma$. 

\paragraph{Representation ratio for \boldmath{$\gamma_2 \leq \gamma \leq \gamma_3.$}}
In this range of $\gamma$, we have the equation: 
$$\frac{(1-\gamma - s_2^\star + \gamma s_1^\star)^2}{2\gamma (1-\gamma)} +  \frac{(1-\gamma - s_2^\star/\beta + \gamma s_1^\star/\beta)^2}{2\gamma (1-\gamma)}=c.$$
The representation ratio is given by
$${\mathcal R}(\gamma) = \frac{1-s_1^\star/\beta + \frac{(1-\gamma - s_2^\star/\beta + \gamma s_1^\star/\beta)^2}{2\gamma (1-\gamma)}}{1-s_1^\star + \frac{(1-\gamma - s_2^\star + \gamma s_1^\star)^2}{2\gamma (1-\gamma)}  } .$$
Let $\Delta$ denote $\frac{s_2^\star - \gamma s_1^\star}{1-\gamma}$. 
Then the above equation can be written as: 
$$(1-\Delta)^2 + (1-\Delta/\beta)^2 = \frac{2c\gamma}{1-\gamma}.$$
Let $u$ denote $1-\Delta/\beta$. Then, the above equation becomes: 
$$u^2 + (1-\beta + \beta u)^2 - \frac{2c\gamma}{1-\gamma} = 0.$$
The non-negative root of $u$ is given by: 
$$\theta(\beta,\gamma) := \frac{-\beta(1-\beta) + \sqrt{-(1-\beta)^2 + \frac{2c\gamma(1+\beta^2)}{1-\gamma}}}{1+\beta^2}.$$
Note that the fraction of selected candidates from $G_2$ is given by 
 $$1-s_1^\star/\beta + \frac{(1-\gamma)\theta^2(\beta,\gamma)}{2\gamma}.$$
 {\bf Monotonicity.} We shall show that $\theta(\beta,\gamma) \cdot \sqrt{\frac{1-\gamma}{\gamma}}$ is an increasing function of $\gamma$. This, along with the expression for the fraction of selected candidates from $G_2$, will show that the latter fraction is an increasing function of $\gamma$. Notice that $\theta(\beta,\gamma) \cdot \sqrt{\frac{1-\gamma}{\gamma}}$ is given by 
 $$(1+\beta^2) \theta(\beta,\gamma) \sqrt{\frac{1-\gamma}{\gamma}} = -\beta(1-\beta) \sqrt{\frac{1-\gamma}{\gamma}}+ \sqrt{-(1-\beta)^2\sqrt{\frac{1-\gamma}{\gamma}} + 2c(1+\beta^2)}.$$
 It is easy to check by inspection that the above increases as we increase $\gamma$.

\paragraph{Representation ratio for \boldmath{$\gamma \geq \gamma_3.$}}
When $\gamma \geq \gamma_3$,
$$\Pr[u_{i2} \geq s_{2}^\star \wedge u_{i1} < s_{1}^\star] = s_1^\star - \frac{s_2^\star}{\gamma} + \frac{1-\gamma}{2\gamma},$$
and 
$$\Pr[u_{i'2} \geq s_{2,\beta}^\star \wedge u_{i'1} < s_{1,\beta}^\star] = s_{1,\beta}^\star - \frac{s_{2,\beta}^\star}{\gamma} + \frac{1-\gamma}{2\gamma}.$$
Thus, we have the equation:
$$s_1^\star(1+1/\beta) - \frac{s_2^\star(1+1/\beta)}{\gamma} + \frac{1-\gamma}{\gamma} = c.$$
Using $s_1^\star = \frac{2-c}{1+1/\beta}$ in the above equation, we get:
$$\frac{s_2^\star}{\gamma} = \frac{\gamma(1-2c)+1}{\gamma(1+1/\beta)}.$$
Therefore, 
$${\mathcal{R}}(\beta,\gamma) = \frac{\frac{1+\gamma}{2\gamma} - \frac{\gamma(1-2c)+1}{\gamma\beta(1+1/\beta)}}{\frac{1+\gamma}{2\gamma} - \frac{\gamma(1-2c)+1}{\gamma (1+1/\beta)}} = \frac{\beta-1 + \gamma(\beta-1+4c)}{1-\beta + \gamma(1-\beta+4\beta c)}.$$
{\bf Monotonicity.} Differentiating the expression for ${\mathcal R}(\beta, \gamma)$ with respect to $\gamma$, we see that its sign is the same as the sign of 
$$ (1-\beta + \gamma(1-\beta+4\beta c))(\beta-1+4c)- (\beta-1 + \gamma(\beta-1+4c))(1-\beta+4\beta c),$$
which is the same as $4c(1-\beta)^2 > 0.$ This shows the monotonicity of the representation ratio in this case. 

This completes the proofs of~\Cref{thm:Rgamma} and~\Cref{cor:monotone-gamma}.

\subsection{Monotonicity of representation ratios with respect to \boldmath{$\beta$}}
\label{app:monotone-beta}

We now analyze how the representation ratio \(\mathcal{R}(\beta, \gamma)\) varies with the bias parameter \(\beta\), for fixed \(\gamma\). We show that \(\mathcal{R}(\beta, \gamma)\) is non-decreasing in \(\beta\), and further establish that the normalized representation ratio \(\mathcal{N}(\beta, \gamma)\) is also non-decreasing over the interval \(\beta \in [1 - c, 1]\), under mild assumptions on \(c\). The first result follows from the structural monotonicity of the selection thresholds, as shown in \Cref{app:monotonebeta}. The second result relies on a case analysis based on \Cref{thm:Rgamma}.

\begin{restatable}[\bf Monotonicity of representation ratio w.r.t. \boldmath{\(\beta\)}]{proposition}{propmonotonesrepbeta}
\label{prop:monotonesrepbeta}
For any fixed \(\gamma\), the representation ratio \(\mathcal{R}(\beta, \gamma)\) is a non-decreasing function of the bias parameter \(\beta\).
\end{restatable}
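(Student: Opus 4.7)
Let $B(\beta) := \Pr[\widehat{u}_{i1} \ge s_1^\star] + \Pr[\widehat{u}_{i1} < s_1^\star,\ \widehat{u}_{i2} \ge s_2^\star]$ denote the selection rate for a candidate in group $G_1$, and define $A(\beta)$ analogously for $G_2$, so that $\mathcal{R}(\beta,\gamma) = A(\beta)/B(\beta)$ whenever $A(\beta) \le B(\beta)$. The plan is to deduce the proposition from two ingredients that are already available: a conservation identity obtained by summing the two capacity equations, and the threshold monotonicity established in \Cref{prop:monotones1beta}. Concretely, in the symmetric setting $\nu_1=\nu_2=1$, $c_1=c_2=c$, adding equations~\eqref{eq:twoinst1} and~\eqref{eq:twoinst2} yields $B(\beta) + A(\beta) = 2c$, which is independent of $\beta$. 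Hence, if I can show $B(\beta)$ is non-increasing in $\beta$, then $A(\beta) = 2c - B(\beta)$ is automatically non-decreasing, and $\mathcal{R} = A/B$ is a ratio of a non-negative non-decreasing function over a positive non-increasing function, which is non-decreasing.

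To establish that $B(\beta)$ is non-increasing, I would rewrite it in complementary form directly in the uniform attribute space. The $G_1$-selected region is $\{v_{i1} \ge s_1^\star\} \cup \{v_{i1} < s_1^\star,\ \gamma v_{i1} + (1-\gamma) v_{i2} \ge s_2^\star\}$, whose complement in $[0,1]^2$ is the region $R(\beta) := \{v_{i1} < s_1^\star(\beta),\ \gamma v_{i1} + (1-\gamma) v_{i2} < s_2^\star(\beta)\}$. Thus $B(\beta) = 1 - |R(\beta)|$, where $|\cdot|$ denotes Lebesgue measure. By \Cref{prop:monotones1beta}, both $s_1^\star(\beta)$ and $s_2^\star(\beta)$ are non-decreasing functions of $\beta$, so $R(\beta)$ grows set-theoretically as $\beta$ increases, its measure $|R(\beta)|$ is non-decreasing, and $B(\beta)$ is non-increasing, as needed.

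It remains to justify that the ratio is indeed $A/B$ rather than $B/A$ throughout $\beta \in (0,1]$. A pointwise coupling argument in $(v_{i1}, v_{i2})$ suffices: for any $\beta \le 1$, the $G_2$-selection region is defined by the inflated thresholds $s_1^\star/\beta$ and $s_2^\star/\beta$, and is therefore contained in the $G_1$-selection region defined by $s_1^\star, s_2^\star$, so $A(\beta) \le B(\beta)$. Assembling these pieces gives $\mathcal{R}(\beta,\gamma) = A(\beta)/B(\beta)$ non-decreasing. The main obstacle is the joint monotonicity of $s_1^\star$ and $s_2^\star$ in $\beta$, since $s_2^\star$ is determined by a coupled equation that mixes both group densities and the correlation parameter $\gamma$; but this difficulty is already absorbed into the cited \Cref{prop:monotones1beta}, after which the proof reduces to a purely geometric set-containment observation together with the constant-sum identity above.
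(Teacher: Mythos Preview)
Your proof is correct and follows essentially the same approach as the paper: both add the two capacity constraints to obtain the constant-sum identity $A+B=2c$, invoke \Cref{prop:monotones1beta} to conclude that the $G_1$-selection probability is non-increasing in $\beta$ (you phrase this via the growing complement region $R(\beta)$, the paper just cites threshold monotonicity directly), and deduce that the ratio is non-decreasing. Your explicit verification that $A(\beta)\le B(\beta)$ via set containment of the selection regions is a nice touch that the paper leaves implicit.
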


\begin{proof}
Adding~\eqref{eq:twoinst1_1} and~\eqref{eq:twoinst2_1}, we obtain
\[
\Pr[(X \geq s_1^\star) \vee (\gamma X + (1 - \gamma)Y \geq s_2^\star)] 
+ \Pr[(\beta X \geq s_1^\star) \vee (\gamma \beta X + (1 - \gamma)\beta Y \geq s_2^\star)] 
= 2c.
\]
As shown in~\Cref{prop:monotones1beta}, both \(s_1^\star\) and \(s_2^\star\) are non-decreasing in \(\beta\). Therefore, the first term on the left-hand side is non-increasing in \(\beta\), implying that the second term must be non-decreasing. Since the representation ratio is the ratio of the second term to the first, it follows that \(\mathcal{R}(\beta, \gamma)\) is a non-decreasing function of \(\beta\).
Formally, since \(A(\beta,\gamma)+B(\beta,\gamma)=2c\) and \(s_1^\star,s_2^\star\) are non-decreasing in \(\beta\), 
\(A(\beta,\gamma)\) is non-increasing and \(B(\beta,\gamma)\) non-decreasing in \(\beta\). 
Hence for \(\beta_1 < \beta_2\),
\[
\frac{B(\beta_2)}{A(\beta_2)} - \frac{B(\beta_1)}{A(\beta_1)} 
= \frac{A(\beta_1)\!\bigl(B(\beta_2)-B(\beta_1)\bigr) + B(\beta_1)\!\bigl(A(\beta_1)-A(\beta_2)\bigr)}{A(\beta_1)A(\beta_2)} \ge 0,
\]
so \(\mathcal{R}(\beta,\gamma)\) is non-decreasing.
\end{proof}

\begin{theorem}[\bf Monotonicity of normalized representation ratio for \boldmath{\(\beta \geq 1 - c\)}]
\label{prop:monotone-N-beta-restricted}
For \(c \leq 1/2\), if \(\beta \geq 1 - c\), then for every fixed \(\gamma\), the map
\(\beta \mapsto \mathcal{N}(\beta, \gamma)\) is non-decreasing.
\end{theorem}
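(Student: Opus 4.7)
The plan is to proceed by case analysis on the four-regime structure of $\mathcal{R}(\beta,\gamma)$ given by Theorem~\ref{thm:Rgamma}. The first step is a reduction: by Corollary~\ref{cor:monotone-gamma}, $\gamma \mapsto \mathcal{R}(\beta,\gamma)$ is non-decreasing on $[0,1]$, so $\max_{\gamma'} \mathcal{R}(\beta,\gamma') = \mathcal{R}(\beta,1)$ and hence $\mathcal{N}(\beta,\gamma) = \mathcal{R}(\beta,\gamma)/\mathcal{R}(\beta,1)$. Setting $\gamma=1$ in the fourth case of Theorem~\ref{thm:Rgamma} gives the rational denominator $\mathcal{R}(\beta,1) = (\beta-1+2c)/(1-\beta+2c\beta)$, which is smooth and strictly positive on $[1-c,1]$ and has a clean closed-form log-derivative. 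The goal in each regime is then to show $\partial_\beta \log \mathcal{R}(\beta,\gamma) \geq \partial_\beta \log \mathcal{R}(\beta,1)$.

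The two extremal regimes admit direct algebraic treatment. For $\gamma \in [0,\gamma_1(\beta)]$, the formula $\mathcal{R}(\beta,\gamma) = (\beta^2+2c-1)/(1-\beta^2(1-2c))$ is rational in $\beta$ and independent of $\gamma$, so $\mathcal{N}(\beta,\gamma)$ simplifies to a rational function whose monotonicity reduces, after clearing positive denominators, to verifying non-negativity of a low-degree polynomial in $(\beta,c)$ on the compact set $\{1-c \leq \beta \leq 1,\ c \leq 1/2\}$. For $\gamma \in [\gamma_3,1]$, the fourth-case formula is rational in both $\beta$ and $\gamma$, and the same strategy reduces the claim to polynomial non-negativity in $(1-\beta,\gamma,c)$; the key algebraic observation is that, after substitution, the numerator of $\partial_\beta \mathcal{N}$ factors with $c(1-\gamma)$ as a non-negative factor.

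For the middle regimes $\gamma \in [\gamma_1,\gamma_2]$ and $\gamma \in [\gamma_2,\gamma_3]$, the representation ratio depends on $\Delta(\beta,\gamma)$ and $\theta(\beta,\gamma)$, each defined as the positive root of a quadratic with $\beta$-dependent coefficients. The plan is to implicitly differentiate the defining equations from Proposition~\ref{prop:s2equations}, express $\partial_\beta \Delta$ and $\partial_\beta \theta$ in closed form, substitute into $\partial_\beta \log \mathcal{R}(\beta,\gamma)$, and compare with the explicit $\partial_\beta \log \mathcal{R}(\beta,1)$. Sign control on the implicit derivatives comes from Proposition~\ref{prop:monotones1beta} (monotonicity of $s_2^\star$ in $\beta$) together with the bounds $s_2^\star \leq s_1^\star$ (Proposition~\ref{fact:s2leqs1}) and $s_1^\star \leq \beta$. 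Since the regime endpoints $\gamma_1(\beta)$ and $\gamma_2(\beta)$ themselves depend on $\beta$ (Lemma~\ref{lem:monotonegamma}), at a fixed $\gamma$ the active regime may change as $\beta$ varies; continuity of $\mathcal{R}$ across these transitions (and matching of slopes established in Theorem~\ref{thm:s2highbeta}) gives continuity of $\mathcal{N}$, so piecewise monotonicity within each regime lifts to the global statement.

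The main obstacle is the regime $\gamma \in [\gamma_1,\gamma_2]$, where $G_1$ is in geometric Case~III while $G_2$ is in Case~II': the $\beta$-dependence then enters through $s_1^\star(\beta) = (2-c)/(1+1/\beta)$, through the $\beta$-rescaling of the $G_2$ half-plane, and through the quadratic defining $\Delta(\beta,\gamma)$. Establishing the target inequality requires a careful expansion of the discriminant $(s_1^\star)^2(\beta^2+1) - 2(1-\gamma)((1-\beta)s_1^\star - c)/\gamma$ and using $\beta \geq 1-c$ to control the sign of the cross-term $(1-\beta)s_1^\star - c$. The restriction $c \leq 1/2$ is expected to enter precisely here, by ensuring that the leading term of the reduced polynomial inequality has the correct sign; this matches the role of $c \leq 1/2$ in the unimodality analysis of $s_2^\star$ in Theorem~\ref{thm:s2highbeta}.
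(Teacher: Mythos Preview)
Your plan is sound in outline—case analysis on the four $\gamma$-regimes, direct rational manipulation in the extremal regimes, implicit differentiation of $\Delta$ and $\theta$ in the middle regimes—and would eventually succeed. However, you miss the paper's central simplification, which substantially changes the difficulty of the middle-regime analysis.

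The paper does not compare $\partial_\beta \log \mathcal{R}(\beta,\gamma)$ with $\partial_\beta \log \mathcal{R}(\beta,1)$ directly. Instead, it writes $\mathcal{R} = B/A$ where $B(\beta,\gamma)$ and $A(\beta,\gamma)$ are the selected masses from $G_2$ and $G_1$, exploits the capacity identity $A+B = 2c$, and proves a short lemma (\Cref{fact:mon-B}): if $\partial_\beta B(\beta,\gamma) \ge \partial_\beta B(\beta,1)$, then $\partial_\beta \log \mathcal{N}(\beta,\gamma) \ge 0$. The lemma uses only that $B(\beta,\gamma) \le B(\beta,1) \le c$ (from \Cref{cor:monotone-gamma}) and that $x \mapsto \tfrac{1}{x}+\tfrac{1}{2c-x}$ is decreasing on $(0,c]$. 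This reduces the entire problem to a \emph{one-sided derivative comparison on the numerator alone}, with the clean benchmark $\partial_\beta B(\beta,1) = \tfrac{2(1-c)}{(1+\beta)^2}$.

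The payoff is dramatic in the regimes $[\gamma_1,\gamma_2]$ and $[\gamma_2,\gamma_3]$. There $B(\beta,\gamma) = 1 - s_1^\star/\beta + (\text{term in }\Delta^2\text{ or }\theta^2)$, and since $\partial_\beta(1 - s_1^\star/\beta) = \tfrac{2-c}{(1+\beta)^2}$, it suffices to show the remaining term has non-negative $\beta$-derivative, i.e.\ $\partial_\beta \Delta \ge 0$ and $\partial_\beta \theta \ge 0$ (\Cref{lem:monotone-delta-theta}). The inequality $\tfrac{2-c}{(1+\beta)^2} \ge \tfrac{2(1-c)}{(1+\beta)^2}$ then closes the argument with no further algebra. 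Your plan, by contrast, would carry both $\partial_\beta \Delta$ and the full ratio $B/A$ through the comparison, and the ``careful expansion of the discriminant'' you anticipate as the main obstacle is entirely avoided by the paper's route. The role of $c \le 1/2$ in the paper's proof is also lighter than you expect: it enters only to ensure $B \le c$ (so that the convexity argument in \Cref{fact:mon-B} applies) and in the elementary inequality $2\beta(1+\beta)^2 \ge (1+\beta^2)^2$ for $\beta \ge 1/2$ in the first regime.
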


\begin{proof}
Our goal is to show that the normalized representation ratio 
\(\mathcal{N}(\beta, \gamma) := \mathcal{R}(\beta, \gamma)/\mathcal{R}(\beta, 1)\) 
is non-decreasing in \(\beta\) over the interval \([1 - c, 1]\). 
Since \(c \le \tfrac{1}{2}\) and \(\beta \in [1-c,1]\), 
 \(B(\beta,\gamma) \le A(\beta,\gamma)\) and hence \(B(\beta,\gamma) \le c.\)

The strategy is as follows: We consider the logarithmic derivative \(\partial_\beta \ln \mathcal{N}(\beta, \gamma)\), 
    and aim to show that it is non-negative throughout the interval \(\beta \in [1 - c, 1]\).

Let \(A(\beta, \gamma)\) and \(B(\beta, \gamma)\) denote the fractions of candidates selected from groups \(G_1\) and \(G_2\), respectively. By definition, the representation ratio is given by
\[
\mathcal{R}(\beta, \gamma) = \frac{B(\beta, \gamma)}{A(\beta, \gamma)}.
\]
To establish the desired monotonicity, it suffices to show that \(\partial_\beta B(\beta, \gamma)\) decreases with increasing \(\gamma\):
\begin{fact}
    \label{fact:mon-B}
    If $\partial_\beta B(\beta, \gamma) \geq \partial_\beta B(\beta,1)$, then $\partial_\beta \ln \mathcal{N}(\beta, \gamma) \geq 0$. 
\end{fact}
 \begin{proof}
     By definition, 
     $$\mathcal{N}(\beta, \gamma) = \frac{B(\beta, \gamma)}{A(\beta, \gamma)} \cdot \frac{A(\beta, 1)}{B(\beta, 1)}. $$
     Observe that for any $\gamma$, $A(\beta, \gamma) + B(\beta, \gamma) = 2c$, the total capacity in the two institutions. Therefore, the above can be expressed as: 
      $$\mathcal{N}(\beta, \gamma) = \frac{B(\beta, \gamma)}{2c-B(\beta, \gamma)} \cdot \frac{2c-B(\beta, 1)}{B(\beta, 1)}. $$
      A direct calculation yields: 
      $$\partial_\beta \ln \mathcal{N}(\beta, \gamma) = \frac{\partial_\beta B(\beta, \gamma)}{B(\beta,\gamma)}+  \frac{\partial_\beta B(\beta, \gamma)}{2c-B(\beta,\gamma)} -\frac{\partial_\beta B(\beta, 1)}{B(\beta,1)}-  \frac{\partial_\beta B(\beta, 1)}{2c-B(\beta,1)}. 
$$
\Cref{prop:monotonesrepbeta} implies that $\partial_\beta B(\beta, \gamma) \geq 0$ for all $\gamma$. 
      Since $B(\beta, \gamma) + A(\beta, \gamma) = 2c$ and $B(\beta, \gamma) \leq A(\beta, \gamma)$, we see that $B(\beta, \gamma) \leq c$. Therefore,  $\frac{1}{2c-B(\beta, \gamma)} > 0$. Assuming $\partial_\beta B(\beta, \gamma) \geq \partial_\beta B(\beta,1)$, the r.h.s. above is at least $\partial_\beta B(\beta,1)$ times 
      $$ \frac{1}{B(\beta, \gamma)} + \frac{1}{2c-B(\beta, \gamma)} - \left(  \frac{1}{B(\beta, 1)} + \frac{1}{2c-B(\beta, 1)}\right).  $$
      One can easily check that $\frac{1}{x} + \frac{1}{2c-x}$ is a decreasing function of $x$ when $0 \leq x \leq c$. Therefore, the above quantity is non-negative. Thus, we have shown that $\partial_\beta \ln \mathcal{N}(\beta, \gamma) > 0$. 
 \end{proof}
 \noindent
 Thus, it is enough to check that $\partial_\beta B(\beta, \gamma) \geq \partial_\beta B(\beta,1)$ for all $0 \leq \gamma \leq 1$. We now consider the various regimes in which $\gamma$ lies: 
 
 \paragraph{Case 4 (\boldmath{\(\gamma\ge \gamma_{3}=1/(1+c)\)}):} The analysis for this case in~\Cref{sec:representation_monotonicity} shows that 
 $$B(\beta, \gamma) = 1 - \frac{s_{2,\beta}^\star}{\gamma} + \frac{1-\gamma}{2\gamma} = 1 - \frac{\gamma(1-2c)+1}{\gamma(1+\beta)} + \frac{1-\gamma}{2\gamma}.$$
 Differentiating with respect to $\beta$, we get: 
 $$\partial_\beta B(\beta, \gamma) = \frac{(1-2c)+1/\gamma}{(1+\beta)^2}.$$
 Clearly $\partial_\beta B(\beta, \gamma)$ is a decreasing function of $\gamma$ and hence, $\partial_\beta B(\beta, \gamma) \geq \partial_\beta B(\beta,1)$ when $\gamma \geq \gamma_3$. We also note the following for rest of the proof: 
 \begin{align}
     \label{eq:Bbetagammaone}
     \partial_\beta B(\beta,1) = \frac{2(1-c)}{(1+\beta)^2}
 \end{align}

 \paragraph{Case 1 (\boldmath{\(\gamma\le\gamma_1(\beta)\))}:}
 The argument for case~1 in \Cref{sec:representation_monotonicity} shows that 
 $$B(\beta, \gamma) = 1 -\frac{2(1-c)}{1+ \beta^2}.$$
Therefore, 
$$\partial_\beta B(\beta, \gamma) = \frac{4\beta(1-c)}{(1+\beta^2)^2}.$$
Since $\beta \in [1-c,1]$ and hence $\beta \ge 1/2$, we verify
\[
\frac{4\beta(1-c)}{(1+\beta^2)^2} \ge \frac{2(1-c)}{(1+\beta)^2},
\]
because multiplying by positive denominators gives
\(2\beta(1+\beta)^2 \ge (1+\beta^2)^2,\) which holds for \(\beta \ge 1/2.\)
Using~\eqref{eq:Bbetagammaone}, we see that $\partial_\beta B(\beta, \gamma) \geq \partial_\beta B(\beta,1)$ when $\gamma \leq \gamma_1$. 

\paragraph{Case 2 \boldmath{\(\bigl(\gamma\in(\gamma_1(\beta),\gamma_2(\beta)]\bigr)\)}:} 
Recall the abbreviations
\[
  s_1^\star(\beta)=\frac{2-c}{1+1/\beta},\qquad 
  \Delta(\beta,\gamma)=
      -\beta s_1^\star+\sqrt{(s_1^\star)^2(\beta^2+1)
      -\frac{2(1-\gamma)\bigl((1-\beta)s_1^\star-c\bigr)}{\gamma}}.
\]
The analysis in~\Cref{sec:representation_monotonicity} for case~2 shows 
that $$ B(\beta, \gamma) = 1-\frac{s_1^\star}{\beta} + \frac{\gamma}{2(1-\gamma)} \Delta^2.$$
Recall that $s_1^\star = \frac{2-c}{1+1/\beta}$. Using this fact in the equation above and differentiating with respect to $\beta$, we get

\begin{align}
    \label{eq:case3B}
    \partial_\beta B(\beta, \gamma) = \frac{2-c}{(1+\beta)^2} + \frac{2\gamma \Delta(\beta, \gamma)}{2(1-\gamma)} \partial_\beta \Delta(\beta, \gamma). 
\end{align}
The argument in~\Cref{sec:representation_monotonicity} shows that 
that $\Delta(\beta, \gamma)$ is a positive multiple of $1-\frac{s_2^\star - \gamma s_1^\star}{\beta(1-\gamma)}$. 
Since \(s_2^\star \le s_1^\star \le \beta \), we have 
\begin{align}
\label{eq:delta-positive}
\Delta(\beta,\gamma) = 1 - \frac{s_2^\star - \gamma s_1^\star}{\beta(1-\gamma)} 
\ge 1 - \frac{s_1^\star - \gamma s_1^\star}{\beta(1-\gamma)} 
= 1 - \frac{s_1^\star}{\beta} \ge 0,
\end{align}
and the same argument shows \(\theta(\beta,\gamma) \ge 0, \) where $\theta(\beta, \gamma)$ is defined in the case $\gamma_2 \leq \gamma \leq \gamma_3.$
We show in
\Cref{lem:monotone-delta-theta} that $\partial_\beta \Delta(\beta, \gamma) \geq 0$.  Thus, it follows from ~\eqref{eq:case3B} that 
$$\partial_\beta B(\beta, \gamma) \geq \frac{2-c}{(1+\beta)^2}.$$
Comparing with~\eqref{eq:Bbetagammaone}, we see that  $\partial_\beta B(\beta, \gamma) \geq \partial_\beta B(\beta,1)$ when $\gamma \in (\gamma_1(\beta), \gamma_2(\beta)).$

\paragraph{Case 3 (\boldmath{${\gamma\in\bigl(\gamma_2(\beta),\gamma_3\bigr)}$}):}
In this regime \Cref{thm:Rgamma} shows that 
\begin{align}
    \label{eq:expressionBcase3}
    B(\beta, \gamma) = 1 -\frac{s_1^\star}{\beta} + \frac{1-\gamma}{2\gamma} \theta^2(\beta, \gamma) = 1- \frac{2-c}{1+\beta} + \frac{1-\gamma}{2\gamma} \theta^2(\beta, \gamma)
\end{align}
where 
\[
  \theta(\beta,\gamma):=
  \frac{-\beta(1-\beta)+
        \sqrt{\displaystyle -(1-\beta)^2
               +\frac{2c\gamma(1+\beta^{2})}{1-\gamma}}}
       {1+\beta^{2}}. 
\]
Since $\theta(\beta, \gamma) = 1 - \frac{s_2^\star - \gamma s_1^\star}{1-\gamma}$, the argument as in case~2 above shows that $\theta(\beta, \gamma) \geq 0$. We show in~\Cref{lem:monotone-delta-theta} that $\partial_\beta \theta(\beta, \gamma) \geq 0$. It follows from the expression in equation~\ref{eq:expressionBcase3} that 
$\partial_\beta B(\beta, \gamma) \geq \frac{2-c}{(1+\beta)^2}.$ Comparing with~\eqref{eq:Bbetagammaone} that 
 $\partial_\beta B(\beta, \gamma) \geq \partial_\beta B(\beta,1)$. 
 Thus, for all the values of $\gamma$ lying in the range $[0,1]$,  $\partial_\beta B(\beta, \gamma) \geq \partial_\beta B(\beta,1)$. 
Since \(B(\beta,\gamma)\) is continuous and piecewise differentiable in \(\gamma\), 
the inequality \(\partial_\beta B(\beta,\gamma) \ge \partial_\beta B(\beta,1)\) 
extends to all boundary values of \(\gamma\), completing the proof.
\Cref{prop:monotone-N-beta-restricted} now follows from~\Cref{fact:mon-B}. 
\end{proof}

\begin{lemma}[\bf Monotonicity of \boldmath{$\Delta$} and \boldmath{$\theta$} in \boldmath{$\beta$}]
\label{lem:monotone-delta-theta}
Fix $c\in\bigl(0,\tfrac12\bigr)$ and 
$\gamma\in(0,1)$.
For $\beta\in[1-c,\,1]$ define
\[
  s_1^\star(\beta):=\frac{2-c}{1+1/\beta},
\quad
  \Delta(\beta,\gamma):=
  -\beta s_1^\star(\beta)
  +
  \sqrt{
        s_1^\star(\beta)^{\,2}\bigl(\beta^{2}+1\bigr)
      - \frac{2(1-\gamma)\bigl((1-\beta)s_1^\star(\beta)-c\bigr)}{\gamma}
       },
\]
\[
  \theta(\beta,\gamma):=
  \frac{-\beta(1-\beta)+
        \sqrt{-(1-\beta)^{2}+\,\dfrac{2c\gamma(1+\beta^{2})}{1-\gamma}}}
       {1+\beta^{2}}.
\]
Then both $\Delta(\beta,\gamma)$ and $\theta(\beta,\gamma)$ are strictly
increasing in $\beta$ on the interval $[\,1-c,\,1]$.
\end{lemma}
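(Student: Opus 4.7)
The plan is to prove both monotonicity claims by implicit differentiation of the quadratic equations that underlie the explicit radical formulas, rather than differentiating the radicals directly. Both \(\theta\) and \(u := 1 - \Delta/\beta\) arise as non-negative roots of simple quadratics derived in the proof of \Cref{thm:Rgamma}, so the implicit function theorem applies cleanly. Throughout I would use \(s_1^\star(\beta) = (2-c)\beta/(1+\beta)\) with \((s_1^\star)'(\beta) = (2-c)/(1+\beta)^{2} > 0\).

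For \(\theta\), recall that it is the non-negative root of
\[
K(\beta,\theta) \;:=\; \theta^{2} + (1 - \beta + \beta\theta)^{2} - \tfrac{2c\gamma}{1-\gamma} \;=\; 0.
\]
Implicit differentiation gives
\[
\theta'(\beta) \;=\; \frac{(1 - \beta + \beta\theta)(1 - \theta)}{(1+\beta^{2})\theta + \beta(1-\beta)}.
\]
The denominator is strictly positive on \(\beta\in[1-c,1]\) (the two summands are non-negative and cannot vanish simultaneously), and \(1-\beta+\beta\theta \geq 1-\beta \geq 0\). Thus strict positivity of \(\theta'\) reduces to \(\theta \leq 1\). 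Since \(\theta \mapsto K(\beta,\theta)\) is strictly increasing on \([0,\infty)\), and \(K(\beta,1) = 2 - 2c\gamma/(1-\gamma)\), the non-negative root satisfies \(\theta \leq 1\) iff \(\gamma \leq 1/(1+c) = \gamma_{3}\), which is exactly the range in which \(\theta\) is used in \Cref{thm:Rgamma}.

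For \(\Delta\), the substitution \(u = 1 - \Delta/\beta\) reduces the radical to the quadratic \(Q(\beta,u) = 0\) in~\eqref{eq:condition-u}, viewing \(s_1^\star\) as a known function of \(\beta\). Differentiating \(\Delta = \beta(1-u)\) and applying \(u' = -\partial_\beta Q/\partial_u Q\) yields
\[
\partial_\beta \Delta \;=\; \frac{(1-u)\,\partial_u Q \;+\; \beta\,\partial_\beta Q}{\partial_u Q}.
\]
The denominator \(\partial_u Q = (1-\gamma)u/\gamma + s_1^\star \beta\) is positive. After expanding \(\partial_\beta Q\), the contribution \(\beta \cdot s_1^\star(u-1) = -s_1^\star \Delta\) is precisely cancelled by the term \((1-u)\cdot s_1^\star \beta = s_1^\star \Delta\) coming from \((1-u)\partial_u Q\); this cancellation is what makes the sign analysis tractable. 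What remains is
\[
\frac{(1-\gamma)\,u(1-u)}{\gamma} \;+\; \beta\,(s_1^\star)'\cdot\frac{1-\gamma-s_2^\star}{1-\gamma},
\]
where I use the identity \(s_2^\star = \gamma s_1^\star + (1-\gamma)\Delta\) to rewrite the bracketed residue \(u\beta + (1-\beta) - \gamma s_1^\star/(1-\gamma)\) as \((1-\gamma-s_2^\star)/(1-\gamma)\). Both summands are non-negative: the first since \(u\in[0,1]\) (equivalently \(0 \leq \Delta \leq \beta\), which follows from \(\gamma s_1^\star \leq s_2^\star \leq s_1^\star \leq \beta\)); the second since \((s_1^\star)' > 0\) and \(s_2^\star \leq 1-\gamma\) on the regime \(\gamma \leq \gamma_{2}\) where \(\Delta\) is the relevant root. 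Strictness of the inequality follows because the second term is strictly positive whenever \(s_2^\star < 1-\gamma\).

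The main obstacle will be the algebraic bookkeeping for \(\Delta\), where \(s_1^\star(\beta)\) enters both as an explicit coefficient and as an input to the quadratic. The non-obvious step is the cancellation of \(s_1^\star \Delta\) between the two pieces of the numerator, followed by the identification of the leftover bracket with \((1-\gamma-s_2^\star)/(1-\gamma)\) via the standard identity for \(s_2^\star\). Once these two algebraic observations are made, the signs of both terms are read off from the regime constraints and the monotonicity follows without ever touching the explicit radical.
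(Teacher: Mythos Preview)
Your treatment of $\theta$ via implicit differentiation is correct and in fact cleaner than the paper's direct approach. The paper differentiates the explicit radical, writes $\theta = g(\beta)/(1+\beta^2)$, and concludes $\theta'>0$ ``since $g(\beta)$ and $g'(\beta)$ are positive''; that last step is not justified as written, because the sign of $(1+\beta^2)g'-2\beta g$ does not follow from positivity of $g,g'$ alone. Your formula $\theta' = (1-\beta+\beta\theta)(1-\theta)\big/[(1+\beta^2)\theta+\beta(1-\beta)]$ isolates the sign cleanly and correctly identifies $\theta<1$ (equivalently $\gamma<\gamma_3$) as the operative hypothesis.

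For $\Delta$, however, you have been caught by a notation collision in the paper. In the passage leading to \eqref{eq:condition-u} the symbol $\Delta$ is first introduced as $\Delta_{\mathrm{old}}:=(s_2^\star-\gamma s_1^\star)/(1-\gamma)$, and it is with respect to \emph{that} quantity that the substitution $u=1-\Delta/\beta$ yields the quadratic. A few lines later the paper silently redefines $\Delta$ to be the numerator $-\beta s_1^\star+\sqrt{\Psi}$ of the root formula for $u$; this second meaning is the one in the lemma statement, and its relation to $u$ is $u=\tfrac{\gamma}{1-\gamma}\Delta$, \emph{not} $u=1-\Delta/\beta$. Consequently your computation establishes $\partial_\beta[\beta(1-u)]>0$, i.e.\ monotonicity of $\Delta_{\mathrm{old}}$, whereas the lemma requires $\partial_\beta\Delta=\tfrac{1-\gamma}{\gamma}u'>0$. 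These are not equivalent: your conclusion is $(1-u)-\beta u'>0$, which says nothing about the sign of $u'$. To salvage the implicit route you would need $\partial_\beta Q<0$ directly; after your own simplifications that reduces to $(2+\beta)s_2^\star>(1+\beta)\gamma s_1^\star+(1-\gamma)$, a different inequality not implied by the non-negativity of your two summands. The paper instead bounds $\Psi'(\beta)$ from below and compares $\Psi'/(2\sqrt{\Psi})$ against $s_1^\star+\beta(s_1^\star)'$.
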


\begin{proof}

\textbf{1.  \boldmath{$s_1^\star(\beta)$} is increasing.}
Write
\(
s(\beta):=s_1^\star(\beta)=\dfrac{(2-c)\,\beta}{1+\beta}.
\)
A direct derivative gives
\[
  s'(\beta)
  =\frac{2-c}{(1+\beta)^{2}} 
  >0. 
\]
Using the above expressions for $s_1^\star(\beta)$ and its derivative, we  note that
\begin{align}
    \label{eq:beta-derivative}
    \frac {d s_1^\star(\beta)}{d \beta} = \frac{s_1^\star(\beta)}{\beta + \beta^2}. 
\end{align}
\smallskip
\noindent
\textbf{2.  Monotonicity of \boldmath{$\Delta(\beta,\gamma)$}.}
Put
\[
  \Psi(\beta):=
  s(\beta)^{2}\bigl(\beta^{2}+1\bigr)-
  \frac{2(1-\gamma)\bigl((1-\beta)s(\beta)-c\bigr)}{\gamma}.
\]
We show $\Psi'(\beta)>0$.

\emph{(i) The first term.}
Because both $s(\beta)$ and $\beta^{2}+1$ increase,
their product $s(\beta)^{2}(\beta^{2}+1)$ has positive derivative.

\emph{(ii) The second term.}
Define
\(h(\beta):=(1-\beta)s(\beta)-c\).
Since $s(\beta)$ is increasing but multiplied by $(1-\beta)$,
we have $h'(\beta) =
      -(1-\beta)s'(\beta)-s(\beta) <0$ on $[1-c,1]$.
Hence $-(2(1-\gamma)/\gamma)\,h(\beta)$ has \emph{positive} derivative.
Combining (i) and (ii) yields $\Psi'(\beta) > 0$.

Therefore $\sqrt{\Psi(\beta)}$ is increasing.  The linear term
$-\beta s(\beta)$ \emph{decreases} (its derivative
$-s-\beta s' < 0$), but the growth of $\sqrt{\Psi(\beta)}$
dominates because $\Psi'(\beta) > s^{2}+2\beta s s'$
(\Cref{sec:lower_bound_psi}).  Hence
\[
  \frac{d}{d\beta}\Delta(\beta,\gamma)
  = -s(\beta)-\beta s'(\beta)+\frac{\Psi'(\beta)}{2\sqrt{\Psi(\beta)}}>0.
\]

\smallskip
\noindent
\textbf{3.  Monotonicity of \boldmath{$\theta(\beta,\gamma)$}.}
Let
\[
  \Phi(\beta):=
  -(1-\beta)^2+\frac{2c\gamma(1+\beta^{2})}{1-\gamma},
  \qquad
  T(\beta):=\sqrt{\Phi(\beta)}.
\]
Since $(1-\beta)^2$ decreases and $(1+\beta^{2})$ increases,
$\Phi'(\beta)=2(\beta+
  \tfrac{c\gamma}{1-\gamma}\beta)>0$, so $T'(\beta)>0$.
Write
\(
  \theta(\beta,\gamma)
  =\dfrac{g(\beta)}{1+\beta^{2}},
  g(\beta):=-\beta(1-\beta)+T(\beta).
\)
Then
\[
  g'(\beta)=-(1-2\beta)+T'(\beta)>0
  \quad(\text{because } T'(\beta)>\!1-2\beta\text{ on }[1-c,1]).
\]
Finally
\[
  \theta'(\beta)=
  \frac{(1+\beta^{2})\,g'(\beta)-2\beta\,g(\beta)}{(1+\beta^{2})^{2}}>0,
\]
since \(g(\beta)\) and \(g'(\beta)\) are positive.
Thus, both $\Delta(\beta,\gamma)$ and $\theta(\beta,\gamma)$ are strictly
increasing for $\beta\in[1-c,1]$.
\end{proof}

\subsubsection{A lower bound on \boldmath{$\Psi'(\beta)$}}\label{sec:lower_bound_psi}
For completeness, we record the derivative computation used in
Lemma~\ref{lem:monotone-delta-theta}.
Recall
\[
  s(\beta)=\frac{(2-c)\beta}{1+\beta},
  \qquad
  \Psi(\beta)=s(\beta)^{2}(\beta^{2}+1)
              -\frac{2(1-\gamma)\bigl((1-\beta)s(\beta)-c\bigr)}{\gamma},
\]
with $c<\tfrac12$, $\gamma\in(0,1)$ and $\beta\in[\,1-c,\,1]$.

\paragraph{Step 1:  derivative of \boldmath{$s(\beta)$}.}
\[
  s'(\beta)=\frac{2-c}{(1+\beta)^{2}}>0.
\]

\paragraph{Step 2:  derivative of \boldmath{$\Psi$}.}
Split $\Psi(\beta)=\Psi_1(\beta)-\Psi_2(\beta)$ with  
\(
  \Psi_1=s^{2}(\beta^{2}+1),
  \Psi_2=\tfrac{2(1-\gamma)}{\gamma}\bigl((1-\beta)s-c\bigr).
\)

\[
\Psi_1'(\beta)=
  2s s'(\beta)(\beta^{2}+1)+2\beta s^{2},
\qquad
\Psi_2'(\beta)=
  \frac{2(1-\gamma)}{\gamma}\,\bigl(s-\beta s'(\beta)\bigr)<0,
\]
because the bracketed factor is positive.
Hence
\[
  \Psi'(\beta)
  =\Psi_1'(\beta)-\Psi_2'(\beta)
  >   2s s'(\beta)(\beta^{2}+1)+2\beta s^{2}.
\]

\paragraph{Step 3:  comparison with \boldmath{$s^{2}+2\beta s s'$}.}
Since $(\beta^{2}+1)\ge1$ and $s'(\beta)>0$,
\[
  2s s'(\beta)(\beta^{2}+1)\ge 2s s'(\beta),
\]
so
\[
  \Psi'(\beta)\ge 2s s'(\beta)+2\beta s^{2}
  =s^{2}+2\beta s s'+\bigl[s^{2}+2s s'(\beta)-s^{2}\bigr].
\]
Because the bracket is non‑negative, the inequality follows.

\section{Conclusion, extensions, limitations, and future work}

We introduced a mathematical framework to study the structure of stable matchings in a two-institution setting where candidate evaluations are partially correlated and one group experiences group-level bias. Our focus was on how fairness—measured via the representation ratio—evolves with evaluator bias (\(\beta\)) and correlation (\(\gamma\)).
Our key technical contributions include:  
(i) a closed-form characterization of the equilibrium thresholds \(s_1^\star\) and \(s_2^\star(\gamma)\) that govern the stable matching (Equation~\eqref{eq:s1}, \Cref{prop:s2equations});  
(ii) a piecewise expression for the representation ratio \(\mathcal{R}(\beta, \gamma)\) (\Cref{thm:Rgamma}); and  
(iii) an analytic description of the structural \(\gamma\)-thresholds \(\gamma_1, \gamma_2, \gamma_3\) that govern transitions in selection (\Cref{thm:gammathreshold}).

Despite the non-monotonicity of \(s_2^\star(\gamma)\), we show that \(\mathcal{R}(\beta, \gamma)\) and its normalized variant \(\mathcal{N}(\beta, \gamma)\) increase monotonically in \(\gamma\) (\Cref{cor:monotone-gamma}). These findings support the design of fairness interventions: we use \(\mathcal{N}(\beta, \gamma)\) to compute the Pareto frontier of bias–correlation combinations that achieve a target representation level \(\tau\).
Our main analysis focuses on the regime \(\beta \geq 1 - c\) and full candidate preference for Institution~1. In \Cref{sec:beta_low,sec:general_p}, we extend the model to handle stronger bias (\(\beta < 1 - c\)) and partial preferences, and show that several key structural properties persist.

The model makes several simplifying assumptions to ensure analytical tractability. Candidate attributes are drawn independently and uniformly from \([0,1]\), following standard assumptions in prior work on fairness in matching and screening \cite{che2016decentralized, KleinbergR18, Celis0VX24}. While this enables clean derivations and actionable intervention design, it does not capture the full heterogeneity or noise present in real-world evaluations.
Some generalizations are straightforward. For example, our techniques extend naturally to settings with asymmetric capacities (\(c_1 \neq c_2\)) or candidate masses (\(\nu_1 \neq \nu_2\)). Additionally, results from \cite{Arnosti2022} can be used to transfer our continuum-based analysis to finite \(n\), with approximation error \(O(1/\sqrt{n})\); see \Cref{sec:continuum}.
The framework and techniques also extend to several richer settings beyond the symmetric, independent case. 
First, allowing asymmetric bias across institutions (\(\beta_1 \neq \beta_2\)) shows that greater evaluator alignment (\(\gamma\)) continues to improve representation, as less-biased institutions can ``rescue'' disadvantaged candidates who narrowly miss selection elsewhere (\Cref{sec:bias-asym}). 
Second, in the additive-bias formulation (\Cref{sec:additive_bias}), where disadvantaged candidates face a fixed evaluation penalty rather than a multiplicative discount, the equilibrium structure and existence of \(\gamma\)-thresholds persist with only quantitative changes to the cutoff equations. 
Third, when candidate attributes are drawn from a smooth but non-uniform distribution with cumulative distribution function~\(\Phi\), the equilibrium thresholds satisfy
$\Phi(s_1^\star) + \Phi \left(\tfrac{s_1^\star}{\beta}\right) = 2(1 - c_1)$,
generalizing the uniform case~\eqref{eq:s1}. Although closed-form expressions are no longer available, the expressions used to compute probabilities such as \(\Pr[u_{i1} < s_1, u_{i2} \ge s_2]\) can be written in terms of ~\(\Phi\), and the resulting thresholds and fairness metrics remain numerically tractable. We omit the details. 
That said, if candidate attributes exhibit strong dependence—for example, if all candidates have identical or highly similar profiles—the concentration results needed for the finite-to-infinite reduction may not apply. In such cases, convergence to a deterministic limit could fail, and our monotonicity guarantees may no longer hold. 
Nonetheless, we expect the qualitative trend—that increasing \(\gamma\) improves \(\mathcal{R}\)—to persist under weaker assumptions, such as mild correlations or clustered attribute structures. Extending the convergence analysis to such settings remains an open and valuable direction for future work.

Several directions remain open for future work.  
First, incorporating strategic behavior—such as candidates adjusting effort or signaling based on perceived bias—would enable a game-theoretic analysis of incentives.  
Second, extending the evaluation model to include multi-attribute correlation, group-dependent score distributions, or asymmetric institutional preferences could better capture practical complexity.  
Beyond the static analysis presented here, a natural next step is to examine how the representation ratio $\mathcal{R}$ might evolve over time in systems that repeat or adapt across cycles—such as admissions, hiring, or promotions. In such dynamic environments, past matching outcomes could shape future candidate preferences, group participation, or institutional strategies. This feedback could lead to rich temporal effects: higher representation might enhance perceived fairness or institutional reputation, creating a virtuous cycle, whereas persistent underrepresentation might discourage participation and reinforce disparities. Capturing these dynamics would require an explicitly game-theoretic or behavioral model with endogenous feedback, a direction we see as particularly promising for future work.
Finally, integrating this framework with empirical data to estimate or learn \((\beta, \gamma)\), and adapting interventions based on observed outcomes, could support adaptive fairness mechanisms in real-world deployments.

\section*{Acknowledgments} 
We thank L. Elisa Celis for useful comments on the paper. This work was funded by NSF Award CCF-2112665, and in part by grants from Tata Sons Private Limited, Tata Consultancy Services Limited, Titan.

\bibliographystyle{plain}

\newpage

\appendix

    \section{From finite to infinite: A continuum stable matching game}
\label{sec:continuum}

\noindent
{\bf Equations for thresholds.}
We consider a two-institution matching setting, where the institutions have capacities \(c_1 n\) and \(c_2 n\), respectively. 
Candidates belong to either an {advantaged group} \(G_1\) or a {disadvantaged group} \(G_2\), with group sizes \(|G_1| = \nu_1 n\) and \(|G_2| = \nu_2 n\).
Here, \(c_1, c_2, \nu_1, \nu_2\) are constants.
Recall that every candidate prefers Institution~1 to Institution~2, and that the observed utilities \(\hat{u}_{i1}\) and \(\hat{u}_{i2}\) are generated as described in Section \ref{sec:model}.
In a deterministic setting, the following result holds. The proof appears in \Cref{sec:continuum-proofs1}. 
\begin{restatable}[\bf Thresholds in the deterministic case]{proposition}{threshold}\label{prop:threshold}
Consider a deterministic setting with two institutions and two groups such that the observed utilities \(\hat{u}_{i\ell}\) are all distinct. Then, there is a unique stable matching $M$ that can be described by two thresholds \(s_1\) and \(s_2\) as follows: $M^{-1}(1)=\{i : \hat{u}_{i1} \ge s_1\}$, and $M^{-1}(2)=\{i : \hat{u}_{i1} < s_1 \text{ and } \hat{u}_{i2} \ge s_2 \}$.
\end{restatable}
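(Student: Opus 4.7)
The plan is to construct a threshold-based matching explicitly, verify that it is stable, and then argue that any stable matching must coincide with it. For the construction, I would let $s_1$ be the $(c_1 n)$-th largest value among $\{\hat{u}_{i1}\}_i$ and assign Institution~1 to the candidates above this cutoff. Then, restricting to the residual pool $R := \{i : \hat{u}_{i1} < s_1\}$, I would let $s_2$ be the $(c_2 n)$-th largest value among $\{\hat{u}_{i2}\}_{i \in R}$, and assign Institution~2 to the $c_2 n$ candidates in $R$ above that cutoff. Distinctness of the $\hat{u}_{i\ell}$ makes both thresholds unambiguous, and the resulting matching $M$ has precisely the claimed form.

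For stability, the key observation is that because every candidate prefers Institution~1 to Institution~2, the only potential blocking pairs to rule out are (a) $(i, 1)$ with $i \notin M^{-1}(1)$, and (b) $(i, 2)$ with $i$ unmatched. For (a), $i \notin M^{-1}(1)$ means $\hat{u}_{i1} < s_1 \le \hat{u}_{j1}$ for every $j \in M^{-1}(1)$, so no incumbent can be displaced. For (b), $i$ being unmatched forces both $\hat{u}_{i1} < s_1$ and $\hat{u}_{i2} < s_2$, and every $j \in M^{-1}(2)$ satisfies $\hat{u}_{j2} \ge s_2 > \hat{u}_{i2}$, so again no $j$ can be displaced.

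For uniqueness, let $M'$ be any stable matching. The combination of demand exceeding supply and every candidate preferring both institutions to being unmatched forces $|M'^{-1}(1)| = c_1 n$, since otherwise any unmatched candidate together with Institution~1 would form a blocking pair. If $M'^{-1}(1) \neq M^{-1}(1)$, a simple cardinality argument produces $i \in M^{-1}(1) \setminus M'^{-1}(1)$ and $j \in M'^{-1}(1) \setminus M^{-1}(1)$ with $\hat{u}_{i1} > \hat{u}_{j1}$ (by definition of $M$), and then $(i, 1)$ blocks $M'$ because $i$ prefers Institution~1 to whatever $M'$ assigns. Hence $M'^{-1}(1) = M^{-1}(1)$. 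Applying the identical argument to Institution~2 restricted to the residual pool $R$ yields $M'^{-1}(2) = M^{-1}(2)$, and therefore $M = M'$.

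The main delicate point is confirming that both institutions are fully loaded in every stable matching, since the entire threshold description hinges on each institution selecting exactly its top $c_\ell n$ candidates from the relevant pool. This reduces to the demand-exceeds-supply assumption (so the unmatched set is nonempty whenever an institution has a vacancy) together with the implicit assumption that every candidate strictly prefers either institution to remaining unmatched. With those in place, the argument is essentially a direct verification, driven by the fact that uniform preference for Institution~1 turns the selection process into a two-stage serial dictatorship.
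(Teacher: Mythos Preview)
Your proposal is correct and follows essentially the same approach as the paper's proof: define the top $c_1 n$ candidates by $\hat u_{i1}$, argue any stable matching must assign them to Institution~1 via a swap/blocking argument, then repeat for Institution~2 on the residual pool. Your version is in fact more complete than the paper's, since you explicitly verify stability of the constructed matching (the paper only argues uniqueness) and you spell out the capacity-filling step more carefully.
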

\noindent
The key intuition behind the proof is that since all candidates strictly prefer Institution~1, any stable assignment must first allocate the top \(c_1n\) candidates—ranked by \(\hat{u}_{i1}\)—to Institution~1. Among the remaining candidates, Institution~2 then selects the top \(c_2n\) candidates based on their \(\hat{u}_{i2}\) values. The thresholds \(s_1\) and \(s_2\) correspond to the lowest \(\hat{u}_{i1}\) and \(\hat{u}_{i2}\) values among candidates assigned to Institution~1 and Institution~2, respectively.

In the stochastic setting, the observed utilities are random, and so are the thresholds \(S_1\) and \(S_2\) that determine the matching. The stability conditions in the finite setting require that exactly \(c_2n\) candidates are assigned to Institution~1 and \(c_2n\) candidates to Institution~2. That is, conditioning on the thresholds \(S_1\) and \(S_2\), we have
\begin{align}
\sum_{i \in G_1} \ind(\hat{u}_{i1} \ge S_1) + \sum_{i \in G_2} \ind (\hat{u}_{i1} \ge S_1) &= c_1n, \label{eq:finite1} \\
\sum_{i \in G_1} \ind(\hat{u}_{i2} \ge S_2,\, \hat{u}_{i1} < S_1)+ \sum_{i \in G_2}\ind(\hat{u}_{i2} \ge S_2,\, \hat{u}_{i1} < S_1) &= c_2n. \label{eq:finite2}
\end{align}
\noindent
Taking expectations conditioned on \(S_1\),\(S_2\) (and then over the joint distribution of \((S_1, S_2)\)), we obtain
\begin{align}
\mathbb{E}\left[\sum_{i \in G_1} \Pr[\hat{u}_{i1} \ge S_1 \mid S_1,S_2] + \sum_{i \in G_2} \Pr[\hat{u}_{i1} \ge S_1 \mid S_1,S_2]\right] &= c_1n, \label{eq:expect1} \\
\mathbb{E}\left[\sum_{i \in G_1} \Pr[\hat{u}_{i2} \ge S_2,\, \hat{u}_{i1} < S_1 \mid S_1,S_2] + \sum_{i \in G_2} \Pr[\hat{u}_{i2} \ge S_2,\, \hat{u}_{i1} < S_1 \mid S_1,S_2]\right] &= c_2n. \label{eq:expect2}
\end{align}
\noindent
Dividing by \(n\), we have
\begin{align}
\nu_1 \mathbb{E}\left[\Pr[\hat{u}_{i1} \ge S_1 \mid S_1,S_2]\right] + \nu_2 \mathbb{E}\left[\Pr[\hat{u}_{i'1} \ge S_1 \mid S_1,S_2]\right]&= c_1, \label{eq:cond1} \\
\nu_1 \mathbb{E}\left[\Pr[\hat{u}_{i2} \ge S_2,\, \hat{u}_{i1} < S_1 \mid S_1,S_2]\right] + \nu_2 \mathbb{E}\left[\Pr[\hat{u}_{i'2} \ge S_2,\, \hat{u}_{i'1} < S_1 \mid S_1,S_2]\right] &= c_2, \label{eq:cond2}
\end{align}
where $i \in G_1, i' \in G_2$. 
 \cite{AzevedoLeshno2016} show that as \(n\to\infty\) the random thresholds \(S_1\) and \(S_2\) concentrate and converge almost surely to deterministic limits \(s_1\) and \(s_2\) respectively. Therefore, in the large-market (mean-field) limit, the conditional probabilities can be replaced by the probabilities evaluated at the limits, so that the above equations reduce to
\begin{align}
\nu_1 \Pr[\hat{u}_{i1} \ge s_1] + \nu_2 \Pr[\hat{u}_{i'1} \ge s_1] &= c_1, \label{eq:twoinst1_1} \\
\nu_1 \Pr[\hat{u}_{i2} \ge s_2,\, \hat{u}_{i1} < s_1] + \nu_2 \Pr[\hat{u}_{i'2} \ge s_2,\, \hat{u}_{i'1} < s_1] &= c_2. \label{eq:twoinst2_1}
\end{align}
\noindent
Equations \eqref{eq:twoinst1} and \eqref{eq:twoinst2_1} are the continuum or mean-field equations for the thresholds \(s_1\) and \(s_2\).
 \cite{Arnosti2022} extends the mean-field analysis by establishing finite-sample concentration bounds for the thresholds. From his result, one can deduce that for finite \(n\) the random thresholds \(S_1\) and \(S_2\) approximate the deterministic limits \(s_1\) and \(s_2\) with high probability, with all errors being of order \(1/\sqrt{n}\).
We now give the conditions under which these equations have a unique solution.
\begin{restatable}[\bf Existence and uniqueness of thresholds]{proposition}{uniqueness}\label{prop:uniqueness}
Assume that  \(c_1 + c_2 \leq \nu_1 + \nu_2\).
Then, a solution \((s_1^\star, s_2^\star)\) satisfying \eqref{eq:twoinst1_1}-\eqref{eq:twoinst2_1} exists.
Moreover,  the solution is unique, provided that the probability distributions of \(\hat{u}_{i1}\) and \(\hat{u}_{i2}\) are strictly decreasing and continuous. 
\end{restatable}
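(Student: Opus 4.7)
The plan is to exploit the sequential structure of the system: equation~\eqref{eq:twoinst1_1} involves only $s_1$, so I will first solve it for a unique $s_1^\star$ and then substitute into~\eqref{eq:twoinst2_1} to obtain a one-dimensional equation in $s_2$ alone. Each of the two stages then reduces to an intermediate-value-theorem (IVT) argument, with uniqueness coming from strict monotonicity of the survival function being equated to a capacity.

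For the first stage, define $F_1(s_1) := \nu_1 \Pr[\hat u_{i1}\ge s_1] + \nu_2 \Pr[\hat u_{i'1}\ge s_1]$. Since $\hat u_{i1},\hat u_{i'1}\ge 0$, one has $F_1(0)=\nu_1+\nu_2$ and $F_1(s_1)\to 0$ as $s_1\to\infty$. Continuity of the survival functions makes $F_1$ continuous, and the hypothesis $c_1\le c_1+c_2\le\nu_1+\nu_2$ together with the IVT yields at least one solution $s_1^\star$; strict monotonicity of the survival functions then promotes this to uniqueness. For the second stage, fix $s_1^\star$ and let
\[
F_2(s_2) := \nu_1 \Pr[\hat u_{i2}\ge s_2,\ \hat u_{i1}<s_1^\star] + \nu_2 \Pr[\hat u_{i'2}\ge s_2,\ \hat u_{i'1}<s_1^\star].
\]
At $s_2=0$ this simplifies, via $\hat u_{i2}\ge 0$, to $\nu_1\Pr[\hat u_{i1}<s_1^\star]+\nu_2\Pr[\hat u_{i'1}<s_1^\star]=(\nu_1+\nu_2)-F_1(s_1^\star)=\nu_1+\nu_2-c_1$, and $F_2(s_2)\to 0$ as $s_2\to\infty$. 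By the feasibility assumption $c_2\le\nu_1+\nu_2-c_1$, continuity of the joint distribution of $(\hat u_{i1},\hat u_{i2})$ in $s_2$ and the IVT give existence of $s_2^\star$ with $F_2(s_2^\star)=c_2$.

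The main obstacle is establishing strict monotonicity of $F_2$ in $s_2$, which is what secures uniqueness in the second stage. Unlike $F_1$, $F_2$ is a joint probability rather than a marginal, so strict decrease does not follow formally from a one-dimensional hypothesis on $\hat u_{i2}$ alone; I need to argue that the conditional distribution of $\hat u_{i2}$ given $\hat u_{i1}<s_1^\star$ assigns positive mass to every sub-interval in the relevant range, equivalently that the joint density of $(\hat u_{i1},\hat u_{i2})$ is positive on the strip $[0,s_1^\star)\times[s_2-\epsilon,s_2)$ for every $\epsilon>0$. For the paper's specific model, where $\hat u_{i2}=\gamma v_{i1}+(1-\gamma)v_{i2}$ with independent uniform $v_{i1},v_{i2}$, this is transparent from the geometric area formulas in the four cases of Section~\ref{sec:mainresults}, since shifting the line $L(\gamma,s_2)$ always sweeps out a region of positive area inside the rectangle $[0,s_1^\star]\times[0,1]$; more generally it follows from the stated continuity-and-strict-decreasingness hypotheses applied to the joint distribution. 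Once strict monotonicity of $F_2$ is in hand, uniqueness of $s_2^\star$ follows and the proof is complete.
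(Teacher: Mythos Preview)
Your proposal is correct and follows essentially the same sequential IVT-plus-monotonicity argument as the paper's proof. Your boundary computation $F_2(0)=\nu_1+\nu_2-c_1$ is in fact cleaner than the paper's (which loosely states $A_2(s_1,0)=\nu_1+\nu_2$), and your explicit discussion of why strict monotonicity of $F_2$ in $s_2$ requires positive joint density on the relevant strip fills in a step the paper simply asserts.
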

\noindent
The conditions of this proposition hold when the underlying distribution is uniform on \([0,1]\), ensuring the existence of a unique solution \(s_1^\star, s_2^\star\) as functions of the model parameters.
The proof appears in \Cref{sec:continuum-proofs2}.
For a fixed $\beta$, we also define $s_{1,\beta}^\star=\min(1,s_1^\star/\beta)$ and $s_{2,\beta}^\star=\min(1,s_2^\star/\beta)$.

\smallskip
\noindent
\textbf{Expressions for metrics in terms of thresholds.}
Given thresholds $s_1^\star$ and $s_2^\star$ for the two institutions respectively,  
the ratio of the measure of candidates from $G_1$ that are assigned to one of these institutions  to the measure of $G_1$ is given by (here $i \in G_1$):
$   \Pr[\hu_{i1} \geq s_1^\star] + \Pr[\hu_{i2} \geq s_2^\star, \hu_{i1} < s_1^\star].$
The corresponding quantity for $G_2$ can be expressed similarly. Thus, if $i \in G_1, i' \in G_2$, then representation ratio is equal to 
\begin{equation}\label{eq:representation_ratio}  \mathcal{R}=\frac{\Pr[\hu_{i'1} \geq s_1^\star] + \Pr[\hu_{i'2} \geq s_2^\star, \hu_{i'1} < s_1^\star]}{\Pr[\hu_{i1} \geq s_1^\star] + \Pr[\hu_{i2} \geq s_2^\star, \hu_{i1} < s_1^\star]}.\end{equation}
Given the threshold $s_1^\star,$ the (observed) utility derived by Institution 1 from a candidate $i$ is $0$ if $\hu_{i1} < s_1^\star$; $\hu_{i1}$ otherwise. Therefore, the utility of Institution~1 is given by (here $i \in G_1, i' \in G_2$):
\begin{equation}\label{eq:utility_11}   \mathcal{U}_1=\nu_1 \int_{0}^1  s \ind[s \geq s_1^\star] d \mu_s + \nu_2 \int_{0}^1 s \ind[s \geq s_1^\star] d \mu'_s,\end{equation}
where $\mu_s$ is the measure induced by the distribution of $\hu_{i1}$ for $i 
\in G_1$ (and similarly for $\mu'_s$). The utility of Institution~2 can be expressed similarly. 

\smallskip
\noindent
\textbf{Continuum game interpretation.} We end this section by interpreting equations~\eqref{eq:twoinst1_1} and~\eqref{eq:twoinst2_1} as stability conditions in an infinite matching game, where \(G_1\) and \(G_2\) represent continuous populations rather than discrete candidates.
Consider a setting where candidates belong to two groups, \(G_1\) and \(G_2\), with population measures \(\nu_1\) and \(\nu_2\), respectively, so that the total population is \(\nu_1 + \nu_2 = 1\). Institutions 1 and 2 have fractional capacities \(c_1\) and \(c_2\), satisfying the feasibility constraint:
$    c_1 + c_2 \leq \nu_1 + \nu_2 = 1$.
Admissions follow a threshold-based rule:
\begin{itemize}
    \item Candidates with \(\hat{u}_{i1} \geq s_1\) are assigned to Institution 1.
    \item Among the remaining candidates, those with \(\hat{u}_{i2} \geq s_2\) are assigned to Institution 2.
    \item All others remain unmatched.
\end{itemize}
The thresholds \((s_1, s_2)\) must satisfy Equations \eqref{eq:twoinst1_1} and \eqref{eq:twoinst2_1}.
Equation~\eqref{eq:twoinst1_1} ensures Institution 1 selects exactly \(c_1\) candidates, while Equation~\eqref{eq:twoinst2_1} ensures Institution 2 fills exactly \(c_2\) positions from candidates not admitted to Institution 1.

In finite matching models, stability requires that no candidate–institution pair forms a \textit{blocking coalition}, where both would prefer to be matched to each other over their current assignment. In the continuum setting, stability is implicitly enforced by the threshold structure and capacity constraints. Candidates assigned to Institution 2 or left unmatched cannot move to Institution 1 unless \(s_1\) decreases. However, decreasing \(s_1\) would admit more than \(c_1\) candidates, violating capacity constraints.
  Similarly, Institution 1 cannot replace a lower-ranked candidate with a higher-ranked one without exceeding \(c_1\) or displacing an admitted candidate, contradicting its ranking rule.
    Candidates assigned to Institution 1 or left unmatched cannot move to Institution 2 unless \(s_2\) decreases, which would admit more than \(c_2\) candidates, again violating capacity constraints.
Since each institution fills to capacity with the highest-ranked available candidates, and no candidate can move to a more preferred institution without violating constraints, the resulting assignment is \textit{stable by construction}. The equilibrium equations~\eqref{eq:twoinst1_1}--\eqref{eq:twoinst2_1} thus encode the continuum equivalent of a stable matching.

\subsection{Proof of \Cref{prop:threshold}}\label{sec:continuum-proofs1}

\begin{proof}
    Since the number of candidates is more than the total available capacity, it is easy to check that any stable assignment must fill all the available capacity, i.e., it must assign $c_1n$ candidates to Institution~1 and $c_2n$ candidates to Institution~2. 
    Let $S_1$ be the top $c_1n$ candidates according to the utility value $\hu_{i1}$  (observe that we are using the fact that observed utilities are distinct and hence this set is well defined). We first argue that any stable assignment $M$ must assign $S_1$ to Institution~1. Indeed, suppose $i \in S_1$ is not assigned to Institution~1 by $M$. Then there must be a candidate $i' \notin S_1$ that gets assigned to Institution~1. This creates an instability: Institution~1 prefers $i$ to $i'$, and $i$ prefers its assignment in $M$ (which could be Institution~2 or unassigned) to Institution~1. Thus, $M$ must assign exactly $S_1$ to Institution~2. 

    Similarly, if $G$ denotes the set of all candidates and $S_2$ is the subset of $G\setminus S_1$ containing the top $c_2n$ candidates according to $\hu_{i2}$ values, then any stable assignment must assign $S_2$ to Institution~2. This establishes the uniqueness of a stable assignment. The thresholds $s_1$ is the minimum $\hu_{i1}$ value of a candidate in $S_1$, and similarly for $s_2$. 
\end{proof}

\subsection{Proof of \Cref{prop:uniqueness}}\label{sec:continuum-proofs2}
\begin{proof}
\noindent
\textbf{Step 1: Continuity and monotonicity of allocation functions.}  
Define the allocation functions:
\begin{align*}
    A_1(s_1) &= \nu_1 \Pr[\hat{u}_{i1} \geq s_1] + \nu_2 \Pr[\hat{u}_{i'1} \geq s_1], \\
    A_2(s_1,s_2) &= \nu_1 \Pr[\hat{u}_{i2} \geq s_2, \hat{u}_{i1} < s_1] + \nu_2 \Pr[\hat{u}_{i'2} \geq s_2, \hat{u}_{i'1} < s_1].
\end{align*}
\noindent
Since \(\Pr[\hat{u}_{ij} \geq s]\) is the survival function \(1 - F_{ij}(s)\), where \(F_{ij}(s)\) is the cumulative distribution function (CDF), it follows that:
\begin{itemize}
    \item \(A_1(s_1)\) is a continuous, strictly decreasing function of \(s_1\).
    \item \(A_2(s_1, s_2)\) is continuous in both \(s_1\) and \(s_2\).
\end{itemize}

\noindent
\textbf{Step 2: Boundary conditions.}  
Consider the extreme cases:
\begin{itemize}
    \item If \(s_1 = 1\), then \(\Pr[\hat{u}_{i1} \geq 1] = 0\), so \(A_1(1) = 0\).
    \item If \(s_1 = 0\), then \(\Pr[\hat{u}_{i1} \geq 0] = 1\), so \(A_1(0) = \nu_1 + \nu_2\).
    \item If \(s_2 = 1\), then \(A_2(s_1, 1) = 0\).
    \item If \(s_2 = 0\), then \(A_2(s_1, 0)\) reaches its maximum possible value \(\nu_1 + \nu_2\).
\end{itemize}
Thus, for any feasible capacities \(c_1, c_2\) satisfying \(c_1 + c_2 \leq \nu_1 + \nu_2\), the thresholds \(s_1, s_2\) can be adjusted to ensure feasibility.

\noindent
\textbf{Step 3: Existence of a solution.}  
Define the function:
\[
F(s_1, s_2) = (A_1(s_1) - c_1, A_2(s_1, s_2) - c_2).
\]
Since \(A_1(s_1)\) is strictly decreasing and continuous, and \(A_2(s_1, s_2)\) is continuous in both variables, the Intermediate Value Theorem (applied in \(\mathbb{R}^2\)) guarantees the existence of a solution \((s_1^\star, s_2^\star)\) satisfying \(F(s_1^\star, s_2^\star) = (0,0)\).

\smallskip
\noindent
\textbf{Step 4: Uniqueness of the solution.}  
To prove uniqueness, assume two solutions \((s_1', s_2')\) and \((s_1'', s_2'')\). Since \(A_1(s_1)\) is strictly decreasing, it follows that:
\[
A_1(s_1') = A_1(s_1'') \Rightarrow s_1' = s_1''.
\]
Similarly, for a fixed \(s_1^\star\), \(A_2(s_1^\star, s_2)\) is strictly decreasing in \(s_2\), implying that:
\[
A_2(s_1^\star, s_2') = A_2(s_1^\star, s_2'') \Rightarrow s_2' = s_2''.
\]
Thus, \((s_1^\star, s_2^\star)\) is unique.
\end{proof}

\subsection{Thresholds in the finite-population setting}
\label{sec:finite-convergence}
In the finite model, each candidate \(i\) in a group \(G_j\), where \(j \in \{1,2\}\), has latent attributes \((v_{i1}, v_{i2})\) drawn independently and uniformly from \([0,1]^2\). These attributes are transformed into observed utilities \(\hat{u}_{ij}\), which determine candidate preferences and institutional choices. A realization \(\omega\) corresponds to a specific draw of all \((v_{i1}, v_{i2})\) pairs in the population. Once \(\omega\) is fixed, the utilities \(\hat{u}_{ij}\) are fixed as well, and the resulting stable matching is deterministic.

In this setting, \Cref{prop:threshold} establishes that for every realization \(\omega\), the stable matching is characterized by thresholds \(S_1^\star(\omega)\) and \(S_2^\star(\omega)\), such that Institution~1 selects candidates with \(\hat{u}_{i1} \ge S_1^\star(\omega)\) and Institution~2 selects those with \(\hat{u}_{i2} \ge S_2^\star(\omega)\).

Assuming candidate attributes are i.i.d., the fraction of candidates in each group satisfying these threshold conditions concentrates around their expectations by standard concentration inequalities (e.g., Chernoff bounds). Consequently, the empirical thresholds \(S_1^\star(\omega)\) and \(S_2^\star(\omega)\) converge to deterministic limits \(s_1^\star\) and \(s_2^\star\) as \(n \to \infty\), satisfying the equations derived in the infinite-population model.

While our main analysis characterizes the equilibrium thresholds \((s_1^\star, s_2^\star)\) in the continuum limit, it is natural to ask how close the thresholds in a finite population are to their limiting values. The following result shows that these thresholds concentrate sharply around their deterministic counterparts.

\begin{proposition}[\bf Concentration of finite-population thresholds]
Given parameters \(c\), \(\beta\), and \(\gamma\), let \((s_1^\star, s_2^\star)\) denote the unique solution to~\eqref{eq:twoinst1_1}–\eqref{eq:twoinst2_1}. 
Consider a finite stochastic instance \(\mathcal{I}\) with \(|G_1| = |G_2| = n\), and two institutions of capacities \(cn\) each, bias parameter~\(\beta\), and correlation parameter~\(\gamma\). Then, with high probability, there exist thresholds \(s_1, s_2\) governing the stable assignment in~\(\mathcal{I}\) such that 
\[
|s_\ell - s_\ell^\star| = O\!\left(\sqrt{\tfrac{\log n}{n}}\right)
\quad\text{for each } \ell \in \{1,2\}.
\]
\end{proposition}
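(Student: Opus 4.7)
The overall strategy is to reduce the concentration of the finite thresholds $(S_1, S_2)$ to uniform concentration of the empirical versions of the two allocation maps $A_1(s_1)$ and $A_2(s_1, s_2)$ introduced in the proof of \Cref{prop:uniqueness}, and then to invert these maps using the strict monotonicity that already underlies uniqueness. Concretely, define the empirical allocation functions $\hat A_1(s_1) := \tfrac{1}{n}\sum_{i\in G_1}\ind[\hat u_{i1}\ge s_1] + \tfrac{1}{n}\sum_{i\in G_2}\ind[\hat u_{i'1}\ge s_1]$ and $\hat A_2(s_1,s_2)$ analogously from~\eqref{eq:finite1}--\eqref{eq:finite2}. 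By \Cref{prop:threshold}, the finite stable matching is generated by thresholds $(S_1,S_2)$ satisfying $\hat A_1(S_1) = c + O(1/n)$ and $\hat A_2(S_1, S_2) = c + O(1/n)$ (the $O(1/n)$ slack accounts for integrality of the capacity constraint), while $(s_1^\star, s_2^\star)$ satisfy the exact continuum analogs.

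\textbf{Step 1: Uniform empirical concentration.} First I would establish the two uniform bounds
\[
\sup_{s_1\in[0,1]} |\hat A_1(s_1) - A_1(s_1)| \;=\; O\!\left(\sqrt{\tfrac{\log n}{n}}\right), \qquad
\sup_{(s_1,s_2)\in[0,1]^2} |\hat A_2(s_1,s_2) - A_2(s_1,s_2)| \;=\; O\!\left(\sqrt{\tfrac{\log n}{n}}\right),
\]
each holding with probability $1 - n^{-\Omega(1)}$. The first follows from the DKW inequality applied separately to the empirical CDFs of $\hat u_{i1}$ in each group. The second follows from a standard Vapnik--Chervonenkis bound: the class of events $\{(v_1,v_2)\,:\,\hat u_1 < s_1,\,\hat u_2\ge s_2\}$ has VC dimension $O(1)$ (since $\hat u_2$ is an affine function of $(v_1,v_2)$ under the bias and correlation model of \Cref{sec:model}), so uniform deviations are of order $\sqrt{(\log n)/n}$.

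\textbf{Step 2: Inverting $A_1$ to control $|S_1 - s_1^\star|$.} On the regime $\beta \ge 1-c$ covered by Equation~\eqref{eq:s1}, the map $s_1 \mapsto A_1(s_1)$ is affine with slope $-\tfrac{1}{2}(1+1/\beta)$, which is bounded away from zero independently of $n$. Hence $|S_1 - s_1^\star| \le C_\beta \cdot |A_1(S_1) - A_1(s_1^\star)| \le C_\beta\bigl(|A_1(S_1)-\hat A_1(S_1)| + |\hat A_1(S_1)-c| + O(1/n)\bigr)$, and applying the Step~1 bound yields $|S_1 - s_1^\star| = O(\sqrt{(\log n)/n})$ with high probability. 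In the more general $\beta < 1-c$ regime discussed in \Cref{sec:beta_low}, one uses the same argument with a possibly different (but still strictly negative and bounded) slope.

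\textbf{Step 3: Inverting $A_2$ to control $|S_2 - s_2^\star|$.} Fixing $s_1 = s_1^\star$, the partial derivative $\partial_{s_2} A_2(s_1^\star, s_2)$ at $s_2 = s_2^\star$ is strictly negative and bounded away from zero in each of the four regimes of \Cref{prop:s2equations} (this follows from direct differentiation of the piecewise-closed forms, and is essentially the non-degeneracy used in \Cref{thm:gammathreshold,thm:s2highbeta}). Combined with a Lipschitz estimate $|A_2(S_1, s_2^\star) - A_2(s_1^\star, s_2^\star)| \le L\,|S_1 - s_1^\star|$ and the uniform concentration of $\hat A_2$ from Step~1, a triangle inequality then gives $|S_2 - s_2^\star| = O(\sqrt{(\log n)/n})$.

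\textbf{Main obstacle.} The routine part is the concentration; the subtle point is verifying that the Jacobian of $(A_1,A_2)$ at $(s_1^\star, s_2^\star)$ is bounded away from singularity uniformly in the relevant parameter ranges, and in particular across the four $\gamma$-regimes of \Cref{thm:gammathreshold}, including at the boundary values $\gamma_1,\gamma_2,\gamma_3$ where the formula for $A_2$ changes. I would handle this by showing that $\partial_{s_2} A_2$ is continuous across regime boundaries (which follows from the matching-slope property already established at the end of the proof of \Cref{thm:s2highbeta}) and strictly negative on each open regime, so the minimum over the compact parameter set is a strictly positive constant depending only on $(c,\beta,\gamma)$. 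The $\log n$ factor in the stated bound is then exactly the cost of taking a union over a polynomial $n$-net in the parameters so as to apply the DKW/VC bounds uniformly, which is the standard source of such rates in continuum-matching arguments (cf.~\cite{Arnosti2022}).
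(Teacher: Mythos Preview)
Your proposal is correct and would yield the stated bound, but it takes a heavier route than the paper. The paper does not invoke uniform DKW/VC concentration and then invert the allocation maps via a Jacobian bound. Instead, it works pointwise: for each threshold it applies a Chernoff bound at the two specific shifted points $s_\ell^\star \pm \varepsilon$ with $\varepsilon = O(\sqrt{(\log n)/n})$, and then argues directly from the capacity constraint. For instance, setting $X_i = \ind\{\hat u_{i1}\ge s_1^\star - \varepsilon\}$ and $X=\sum_i X_i$, the uniform distribution gives $\E[X]\ge n(c+\Theta(\varepsilon))$, so Chernoff yields $X\ge cn$ w.h.p.; this forces the finite cutoff $S_1\ge s_1^\star-\varepsilon$, since otherwise more than $cn$ candidates would exceed Institution~1's threshold. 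A symmetric bound gives the upper side, and the same template handles $s_2$ using the piecewise probability formulas from Section~\ref{sec:mainresults}.

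What each approach buys: the paper's argument is more elementary (no uniform empirical-process machinery, no explicit Jacobian analysis across the four $\gamma$-regimes); the Lipschitz/slope requirement you identify is still present, but only implicitly and only at the two points $s_\ell^\star\pm\varepsilon$, which is why the paper can dispatch it with the phrase ``for some constant depending on $c,\beta,\gamma$.'' Your route is more systematic and would generalize more readily (e.g., to non-uniform attribute distributions or to the general-$p$ setting of \Cref{sec:general_p}), and your care about continuity of $\partial_{s_2}A_2$ across the boundaries $\gamma_1,\gamma_2,\gamma_3$ is correct and perhaps more honest---but for the specific statement here it is more than what is needed.
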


\begin{proof}
The proof follows by a standard Chernoff-bound argument. 
Given \(c, \beta, \gamma\), \Cref{prop:uniqueness} ensures a unique solution \((s_1^\star, s_2^\star)\) to~\eqref{eq:twoinst1_1}–\eqref{eq:twoinst2_1}. In the finite instance~\(\mathcal{I}\), random thresholds \(s_1, s_2\) define the unique stable assignment as in~\Cref{prop:threshold}. 

For each candidate \(i \in G_1 \cup G_2\), define the indicator \(X_i = \mathbbm{1}\{\hat{u}_{i1} \ge s_1^\star - \varepsilon\}\), where \(\varepsilon > 0\) is small enough that \(s_1^\star - \varepsilon \ge 0\). 
Let \(X = \sum_i X_i\). 
If \(i \in G_1\), then \(\mathbb{E}[X_i] = \Pr[u_{i1} \ge s_1^\star] - \varepsilon\), and if \(i \in G_2\)\!, then \(\mathbb{E}[X_i] \ge \Pr[u_{i1} \ge s_1^\star]\). Hence,
\[
\mathbb{E}[X] \ge n(\Pr[\hat{u}_{i1} \ge s_1^\star] + \Pr[\hat{u}_{i'1} \ge s_1^\star]) - n\varepsilon 
= nc_1 - n\varepsilon,
\]
where \(i \in G_1, i' \in G_2\), and the last equality follows from~\eqref{eq:twoinst1_1}. 
Let \(\mathcal{E}\) denote the event that \(X < nc_1\). 
By the Chernoff bound, \(\Pr[\mathcal{E}] \le e^{-\varepsilon^2 n / 4}\). 
Setting \(\varepsilon = O(\sqrt{\tfrac{\log n}{n}})\) gives \(\Pr[\mathcal{E}] \le n^{-a}\) for some constant \(a > 0\). 
If \(\mathcal{E}\) does not occur, then \(s_1 \ge s_1^\star - \varepsilon\); otherwise, \(s_1 < s_1^\star - \varepsilon\) would violate the capacity constraint at Institution~1. 
A symmetric argument yields \(s_1 \le s_1^\star + \varepsilon\).

An analogous argument for \(s_2\) (using expressions for \(\Pr[\hat{u}_{i2} \ge s_2 \wedge \hat{u}_{i1} < s_1]\) from~\Cref{sec:mainresults}) shows that \(|s_2 - s_2^\star| \le c\varepsilon\) with high probability, for some constant \(c\) depending on \(c, \beta, \gamma\). 
This completes the proof.
\end{proof}

\paragraph{Finite-sample validation.}
To corroborate the theoretical results and illustrate their practical relevance, 
we implemented our model for finite sample sizes (\(n = 100\) and \(n = 250\) per group) and conducted 30 independent trials for each value of~\(\gamma\). 
The results, summarized in Table~\ref{tab:finite-sample}, show that even for moderately sized systems (100–250 candidates per group), the continuum thresholds provide an excellent approximation to the empirical representation ratios. 
As expected, the standard deviation of the observed representation ratios decreases with increasing~\(n\), indicating convergence toward the continuum-limit predictions.

\begin{table*}[t]
\centering
\setlength{\tabcolsep}{10pt}
\renewcommand{\arraystretch}{1.1}

\begin{minipage}[t]{0.45\textwidth}
\centering
\begin{tabular}{cc}
\toprule
$\gamma$ & Representation Ratio (mean $\pm$ std) \\
\midrule
0.0 & $0.48 \pm 0.16$ \\
0.1 & $0.44 \pm 0.14$ \\
0.2 & $0.44 \pm 0.15$ \\
0.3 & $0.49 \pm 0.16$ \\
0.4 & $0.53 \pm 0.25$ \\
0.5 & $0.55 \pm 0.19$ \\
0.6 & $0.61 \pm 0.19$ \\
0.7 & $0.64 \pm 0.28$ \\
0.8 & $0.64 \pm 0.16$ \\
0.9 & $0.62 \pm 0.17$ \\
1.0 & $0.73 \pm 0.42$ \\
\bottomrule
\end{tabular}
\caption*{$n = 100$, $c = 0.2$, $\beta = 0.9$, $|G_1| = |G_2| = 100$.}
\end{minipage}
\hspace{3mm}
\begin{minipage}[t]{0.45\textwidth}
\centering
\begin{tabular}{cc}
\toprule
$\gamma$ & Representation Ratio (mean $\pm$ std) \\
\midrule
0.0 & $0.44 \pm 0.10$ \\
0.1 & $0.45 \pm 0.10$ \\
0.2 & $0.41 \pm 0.10$ \\
0.3 & $0.44 \pm 0.06$ \\
0.4 & $0.53 \pm 0.09$ \\
0.5 & $0.56 \pm 0.10$ \\
0.6 & $0.58 \pm 0.10$ \\
0.7 & $0.58 \pm 0.12$ \\
0.8 & $0.63 \pm 0.12$ \\
0.9 & $0.67 \pm 0.16$ \\
1.0 & $0.63 \pm 0.13$ \\
\bottomrule
\end{tabular}
\caption*{$n = 250$, $c = 0.2$, $\beta = 0.9$, $|G_1| = |G_2| = 250$.}
\end{minipage}

\caption{ Representation ratios for  different values of $\gamma$ and finite $n$. 
The standard deviation decreases as $n$ increases, 
indicating closer agreement with the continuum predictions.}
\label{tab:finite-sample}
\end{table*}

\section{Details of worked example in \Cref{sec:intervention}}
\label{app:worked-example}

This appendix elaborates on the example presented in \Cref{sec:intervention}, providing a step-by-step derivation of the representation ratio and minimal interventions using our analytical framework.

\paragraph{Step 1: Set parameters.} Fix:
\[
\beta_0 = 0.85, \quad \gamma_0 = 0.40, \quad c = 0.20.
\]
We begin by computing the threshold for Institution~1:
\[
s_1^\star = \frac{2 - c}{1 + 1/\beta_0} = \frac{1.8}{1 + 1/0.85} \approx \frac{1.8}{2.176} \approx 0.8276.
\]

\paragraph{Step 2: Identify regime.} Using Theorem~\ref{thm:gammathreshold}, we compute the relevant thresholds:
\[
\gamma_1 \approx 0.084, \quad \gamma_2 \approx 0.2904, \quad \gamma_3 = \frac{1}{1 + c} = \frac{1}{1.2} \approx 0.833.
\]
Since \(\gamma_2 < \gamma_0 = 0.40 < \gamma_3\), this falls into Case~3 of Theorem~\ref{thm:Rgamma}.

\paragraph{Step 3: Compute \boldmath{\(\mathcal{R}(\beta_0, \gamma_0)\)}.}
In Case~3, we define:
\[
\theta(\gamma) := \frac{-\beta(1 - \beta) + \sqrt{-(1 - \beta)^2 + \tfrac{2c\gamma(1 + \beta^2)}{1 - \gamma}}}{1 + \beta^2}.
\]
Substituting \(\beta_0 = 0.85\), \(c = 0.2\), and \(\gamma_0 = 0.40\), we get:
\[
\theta(\gamma_0) \approx 0.3096.
\]
Then:
\[
\mathcal{R}(\beta_0, \gamma_0) = \frac{1 - s_1^\star/\beta_0 + \frac{\theta^2 (1 - \gamma_0)}{2\gamma_0}}{1 - s_1^\star + c - \frac{\theta^2 (1 - \gamma_0)}{2\gamma_0}} \approx \frac{0.1044}{0.3170} \approx 0.329.
\]

\paragraph{Step 4: Compute \boldmath{\(\mathcal{R}(\beta_0, 1)\)} and normalize.}
When \(\gamma = 1\), we use Case~4 of \Cref{thm:Rgamma}:
\[
\mathcal{R}(\beta_0, 1) = \frac{-2(1 - \beta_0) + 4c}{2(1 - \beta_0) + 4\beta_0 c}
= \frac{-2(0.15) + 0.8}{0.3 + 0.68} = \frac{0.5}{0.98} \approx 0.510.
\]
Hence, the normalized representation ratio is:
\[
\mathcal{N}(\beta_0, \gamma_0) = \frac{\mathcal{R}(\beta_0, \gamma_0)}{\mathcal{R}(\beta_0, 1)} \approx \frac{0.329}{0.510} \approx 0.644.
\]

\paragraph{Step 5: Determine minimal interventions.}
To reach a fairness target of \(\tau = 0.80\), we solve for the minimal interventions \((\beta', \gamma')\) that yield \(\mathcal{N}(\beta', \gamma_0) \geq 0.80\) or \(\mathcal{N}(\beta_0, \gamma') \geq 0.80\).

We find:
\[
\text{If } \gamma_0 = 0.40, \quad \beta' \approx 0.911, \qquad \text{If } \beta_0 = 0.85, \quad \gamma' \approx 0.640.
\]
Thus, either reducing bias to \(\beta' \approx 0.911\) or improving evaluator alignment to \(\gamma' \approx 0.640\) suffices to meet the target \(\mathcal{N} \geq 0.80\).

\section{Observed utilities}
\label{sec:utilitycalculations}

In this section, we consider the observed utility \(\mathcal{U}_\ell\) received by each institution \(\ell \in \{1,2\}\), defined as the total utility from the candidates assigned to it. Let \(M\) denote the deterministic assignment function that maps each candidate \(i\) to an institution \(\ell\), based on whether their observed scores cross the respective selection thresholds. Then the observed utility is given by:
\[
\mathcal{U}_\ell := \sum_{i \in M^{-1}(\ell)} \hat{u}_{i\ell}.
\]
 We derive explicit expressions for \(\mathcal{U}_1\) and \(\mathcal{U}_2\), analyzing how they depend on the parameters \(\beta\) and \(\gamma\). For Institution~1, the utility can be expressed in closed form using the threshold \(s_1^\star\). For Institution~2, the utility depends on the geometry of the selection region defined by the stochastic evaluation rule, and we provide piecewise expressions across different \(\gamma\)-regimes. Finally, we show that both \(\mathcal{U}_1\) and \(\mathcal{U}_2\) are non-decreasing functions of \(\beta\), complementing our fairness analysis.

\subsection{Observed utility of Institution~1}
Given the threshold $s_1^\star,$ the (observed) utility derived by Institution 1 from a candidate $i$ is $0$ if $\hu_{i1} < s_1^\star$; $\hu_{i1}$ otherwise. Therefore, the utility of Institution~1 is given by (here $i \in G_1, i' \in G_2$):
\begin{equation}\label{eq:utility_1}   \mathcal{U}_1=\nu_1 \int_{0}^1  s \ind[s \geq s_1^\star] d \mu_s + \nu_2 \int_{0}^1 s \ind[s \geq s_1^\star] d \mu'_s,\end{equation}
where $\mu_s$ is the measure induced by the distribution of $\hu_{i1}$ for $i 
\in G_1$ (and similarly for $\mu'_s$).

\begin{theorem}[\bf Observed utility of Institution~1]
\label{lem:utilitymetric1}
Assume \(\nu_1 = \nu_2 = 1\) and \(c_1 = c_2 = 1\). Then the observed utility of Institution~1 is given by
\[
\mathcal{U}_1 = \frac{1 - (s_1^\star)^2}{2} + \beta \cdot \frac{1 - (s_{1,\beta}^\star)^2}{2}.
\]
\end{theorem}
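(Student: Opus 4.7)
The plan is to evaluate the expression~\eqref{eq:utility_1} directly by splitting it into the two group-specific integrals and computing each one using the uniform distribution on \([0,1]\). Since Institution~1 evaluates candidates solely on the first attribute \(v_{i1}\), the distribution of \(\hat{u}_{i1}\) is uniform on \([0,1]\) for \(i \in G_1\) and uniform on \([0,\beta]\) for \(i \in G_2\) (via \(\hat{u}_{i'1} = \beta v_{i'1}\)).

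First, I would handle the \(G_1\) contribution. Because \(d\mu_s\) is just Lebesgue measure on \([0,1]\), this piece simplifies to \(\int_{s_1^\star}^1 s\, ds = \tfrac{1 - (s_1^\star)^2}{2}\), which matches the first term in the claimed formula. Next, for \(G_2\), I would perform the substitution \(s = \beta v\), so the measure \(d\mu'_s\) becomes Lebesgue measure on \([0,\beta]\), and the constraint \(\hat{u}_{i'1} \geq s_1^\star\) translates to \(v \geq s_1^\star/\beta\). The integral becomes \(\int_{s_1^\star/\beta}^1 \beta v\, dv = \beta \cdot \tfrac{1 - (s_1^\star/\beta)^2}{2}\), which yields the second term using the definition \(s_{1,\beta}^\star = \min(1, s_1^\star/\beta)\).

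The one subtlety — and the only place where any care is needed — is ensuring that \(s_1^\star/\beta \leq 1\) so that the integration limits are well-defined. This is precisely guaranteed by the regime \(\beta \geq 1 - c\) established in Step~1 of Section~\ref{sec:mainresults}, where \(s_1^\star \leq \beta\) is shown to follow from the capacity constraint~\eqref{eq:s1}. If instead \(s_1^\star > \beta\) (the degenerate case \(\beta < 1-c\)), then \(s_{1,\beta}^\star = 1\) by definition and the \(G_2\) term vanishes, which is consistent with the formula since no \(G_2\) candidate is admitted to Institution~1. In either case, summing the two contributions gives the stated expression, so no further manipulation is required. I expect no real obstacle here; the result is essentially a direct computation once one is careful about the scaling of the \(G_2\) utility distribution by \(\beta\).
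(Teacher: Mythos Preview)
Your proposal is correct and follows essentially the same approach as the paper: both compute the two group-specific integrals from~\eqref{eq:utility_1} directly using the uniform distribution, obtaining \(\tfrac{1-(s_1^\star)^2}{2}\) for \(G_1\) and \(\beta\cdot\tfrac{1-(s_{1,\beta}^\star)^2}{2}\) for \(G_2\). Your extra care with the case \(s_1^\star > \beta\) (where \(s_{1,\beta}^\star = 1\) and the \(G_2\) term vanishes) is a nice touch that the paper's proof leaves implicit.
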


\begin{proof}
From Equation~\eqref{eq:utility_1}, the utility is computed as the sum of two integrals over the observed scores for candidates in \(G_1\) and \(G_2\), respectively. The first term is:
\[
\int_0^1 x \cdot \ind[x \geq s_1^\star]\,dx = \int_{s_1^\star}^1 x\,dx = \left[\frac{x^2}{2}\right]_{s_1^\star}^1 = \frac{1 - (s_1^\star)^2}{2}.
\]
For candidates in \(G_2\), the observed utility is down-scaled by \(\beta\), and the integral becomes:
\[
\beta \cdot \int_0^1 x \cdot \ind[x \geq s_{1,\beta}^\star]\,dx = \beta \cdot \frac{1 - (s_{1,\beta}^\star)^2}{2}.
\]
Summing these two terms yields the result.
\end{proof}

\subsection{Observed utility of Institution~2}
\label{app:observedutility2}

We now turn to the observed utility \(\mathcal{U}_2(\gamma)\) derived by Institution~2 from the candidates it selects. Recall that the observed utility from a candidate is zero unless they are selected, in which case it equals their score \(\hat{u}_{i2}\). Thus, the total utility can be written as
\[
\mathcal{U}_2 := \nu_1 \cdot \mathbb{E}[\hat{u}_{i2} \cdot \ind\{\hat{u}_{i1} < s_1^\star,\, \hat{u}_{i2} \geq s_2^\star\}] 
+ \nu_2 \cdot \mathbb{E}[\hat{u}_{i'2} \cdot \ind\{\hat{u}_{i'1} < s_{1,\beta}^\star,\, \hat{u}_{i'2} \geq s_{2,\beta}^\star\}],
\]
where \(i \in G_1\), \(i' \in G_2\), and the first term represents candidates from \(G_1\) assigned to Institution~2, and the second term represents candidates from \(G_2\) assigned to Institution~2.

To evaluate these expectations, we consider the geometry of the selection region under the stochastic evaluation rule. Given \(\gamma\), define \(A_\gamma\) as the region within the rectangle with corners \((0,0), (s_1^\star, 0), (s_1^\star,1), (1,0)\) that lies above the line \(\gamma x + (1-\gamma) y = s_2^\star(\gamma)\) (as illustrated in~\Cref{fig:4cases}). Similarly, let \(B_\gamma\) denote the corresponding region for group \(G_2\), bounded by \((0,0), (s_1^\star/\beta, 0), (s_1^\star/\beta,1), (1,0)\) and the line \(\gamma x + (1-\gamma) y = s_2^\star(\gamma)/\beta\).

Let \((x^A_\gamma, y^A_\gamma)\) and \((x^B_\gamma, y^B_\gamma)\) denote the centroids of \(A_\gamma\) and \(B_\gamma\), respectively. Then, the total observed utility \(\mathcal{U}_2(\gamma)\) can be expressed as:
\[
\mathcal{U}_2(\gamma)=  \text{Area}(A_\gamma) \cdot (\gamma x^A_\gamma + (1-\gamma)y^A_\gamma)  
+ \beta \cdot \text{Area}(B_\gamma) \cdot (\gamma x^B_\gamma + (1-\gamma)y^B_\gamma).
\]
  It is useful to observe the following result: 

\noindent
\begin{lemma}
\label{lem:area}
Let \(T\) be the triangle with vertices \((0,0)\), \(\bigl(0,\tfrac{s}{1-\gamma}\bigr)\), and \(\bigl(\tfrac{s}{\gamma},0\bigr)\), where \(s>0\) and \(\gamma\in(0,1)\).
Then
\[
\iint_{T} \bigl(\gamma x + (1-\gamma)y\bigr)\,dx\,dy
 \;=\;\frac{s^{3}}{3\,\gamma\,(1-\gamma)}.
\]
\end{lemma}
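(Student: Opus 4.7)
\textbf{Proof plan for \Cref{lem:area}.} The plan is to reduce the integral to a computation over a standard right triangle by a linear change of variables that both straightens the hypotenuse and simplifies the integrand. Concretely, I would set $X = \gamma x$ and $Y = (1-\gamma) y$, so that the Jacobian is $\frac{1}{\gamma(1-\gamma)}$ and the integrand $\gamma x + (1-\gamma) y$ becomes simply $X + Y$. Under this map, the vertices $(0,0)$, $(0, s/(1-\gamma))$, $(s/\gamma, 0)$ of $T$ go to $(0,0)$, $(0,s)$, $(s,0)$, so the image $T'$ is the standard right triangle $\{(X,Y) : X, Y \ge 0,\, X + Y \le s\}$.

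After the substitution, the problem reduces to showing
\begin{equation*}
\iint_{T'} (X+Y)\, dX\, dY \;=\; \frac{s^{3}}{3},
\end{equation*}
which I would verify in either of two ways. The cleanest is symmetry: since $T'$ is symmetric under swapping $X$ and $Y$, $\iint_{T'} X\, dA = \iint_{T'} Y\, dA$, and each equals the centroid $X$-coordinate times the area, namely $(s/3)\cdot(s^2/2) = s^3/6$; their sum is $s^3/3$. Alternatively, one can integrate directly as $\int_0^s \int_0^{s-X} (X+Y)\, dY\, dX = \int_0^s \bigl[ X(s-X) + \tfrac{(s-X)^{2}}{2} \bigr] dX$ and get the same value by a short calculation.

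Combining the two steps yields
\begin{equation*}
\iint_{T} \bigl(\gamma x + (1-\gamma) y\bigr)\, dx\, dy
\;=\; \frac{1}{\gamma(1-\gamma)} \iint_{T'} (X+Y)\, dX\, dY
\;=\; \frac{s^{3}}{3\,\gamma\,(1-\gamma)},
\end{equation*}
as claimed. There is no real obstacle here: the only thing to be slightly careful about is the orientation of the change of variables (to confirm the factor $\tfrac{1}{\gamma(1-\gamma)} > 0$ for $\gamma \in (0,1)$) and the correct identification of the image triangle $T'$; both are routine.
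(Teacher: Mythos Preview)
Your proof is correct. The paper takes a slightly different route: rather than performing a change of variables, it exploits directly the fact that an affine integrand integrates over a triangle to its value at the centroid times the area. So it simply computes the centroid of $T$ as $\bigl(\tfrac{s}{3\gamma},\tfrac{s}{3(1-\gamma)}\bigr)$, evaluates $\gamma x+(1-\gamma)y$ there to get $\tfrac{2s}{3}$, and multiplies by $\mathrm{Area}(T)=\tfrac{s^{2}}{2\gamma(1-\gamma)}$. Your substitution $X=\gamma x$, $Y=(1-\gamma)y$ is a nice normalization that makes the same computation cleaner (the image triangle is the standard simplex $\{X+Y\le s\}$ and the integrand is just $X+Y$); after that, your ``symmetry'' option is essentially the same centroid argument applied to $T'$. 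Either path is equally short and elementary.
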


\begin{proof}
Because \(\gamma x + (1-\gamma)y\) is an \emph{affine} (degree‑one) function, its integral over a triangle equals its value at the triangle’s centroid multiplied by the triangle’s area.

\smallskip
\noindent\emph{1.  Centroid.}
The centroid \(C_T=(\bar{x},\bar{y})\) of a triangle with vertices
\((0,0)\),
\(\bigl(0,\tfrac{s}{1-\gamma}\bigr)\),
\(\bigl(\tfrac{s}{\gamma},0\bigr)\)
is the average of the vertices:
\[
\bar{x}\;=\;\frac{1}{3}\frac{s}{\gamma},
\qquad
\bar{y}\;=\;\frac{1}{3}\frac{s}{1-\gamma}.
\]
Evaluating the integrand at the centroid gives
\[
f(C_T)=\gamma\bar{x}+(1-\gamma)\bar{y}
      =\gamma\Bigl(\frac{s}{3\gamma}\Bigr)
       +(1-\gamma)\Bigl(\frac{s}{3(1-\gamma)}\Bigr)
      =\frac{s}{3}+\frac{s}{3}
      =\frac{2s}{3}.
\]

\smallskip
\noindent\emph{2.  Area.}
The base of the triangle is \(\tfrac{s}{\gamma}\) and its height is \(\tfrac{s}{1-\gamma}\), so
\[
\mathrm{Area}(T)=\frac12\cdot\frac{s}{\gamma}\cdot\frac{s}{1-\gamma}
               =\frac{s^{2}}{2\,\gamma\,(1-\gamma)}.
\]

\smallskip
\noindent\emph{3.  Integral.}
Multiplying the centroid value by the area,
\[
\iint_{T} \bigl(\gamma x + (1-\gamma)y\bigr)\,dx\,dy
 = f(C_T)\,\mathrm{Area}(T)
 = \frac{2s}{3}\cdot\frac{s^{2}}{2\,\gamma\,(1-\gamma)}
 =\frac{s^{3}}{3\,\gamma\,(1-\gamma)}.\qedhere
\]
\end{proof}

\noindent
We now use~\Cref{lem:area} to evaluate $\mathcal{U}_2$ for various cases. 

\smallskip
\noindent
{\bf Case I (\boldmath{$s_2^\star \leq \min(s_1^\star \gamma, 1-\gamma)$}):}
In this case, the line intersects $y$-axis at $\frac{s_2^\star}{1-\gamma} \leq 1$ and the $x$-axis at $\frac{s_2^\star}{\gamma} \leq s_1^\star$. The integral of the utility over the entire rectangle $[0,s_1^\star] \times [0,1]$ can be given as follows: area of the rectangle is ${s_1^\star}$ and its centroid is at $(\frac{s_1^\star}{2}, \frac{1}{2})$. Therefore, the integral of $\gamma x + (1-\gamma) y$ over the entire rectangle is 
$$s_1^\star \cdot \left( \frac{\gamma s_1^\star}{2} + \frac{1-\gamma}{2}\right).$$ Subtracting out the integral of the utility over the triangle formed by $(0,0), \left(0, \frac{s_2^\star}{1-\gamma}\right), \left(\frac{s_2^\star}{\gamma},0\right)$, we see that the observed utility (for Institution~2) of the selected candidates from $G_2$ is 
$$\frac{s_1^\star(\gamma s_1^\star + 1-\gamma)}{2} - \frac{(s_2^\star)^3}{3 \gamma (1-\gamma)}.$$

\smallskip
\noindent
{\bf Case II (\boldmath{$s_2^\star \geq \max(s_1^\star \gamma, 1-\gamma)$):}} In this case, the line intersects the $y$-axis at $\frac{s_2^\star}{1-\gamma} > 1$ and the $x$-axis at $\frac{s_2^\star}{\gamma} > s_1^\star$. Thus, its intersection point with the line $y=1$ is at $\frac{s_2^\star-(1-\gamma)}{\gamma}$ and its intersection point with the line $x=s_1^\star$ is at $\frac{s_2^\star-\gamma s_1^\star}{1-\gamma}$. 
The centroid of the triangle is

$$ \frac{1}{3} \left( \frac{s_2^\star - (1-\gamma) + 2\gamma s_1^\star}{\gamma} ,  \frac{2- 2 \gamma + s_2^\star-\gamma s_1^\star}{1-\gamma}\right).$$
Evaluated on the line $\gamma x + (1-\gamma) y$, we get the value $\frac{1}{3} (1-\gamma + 2s_2^\star+\gamma s_1^\star)$.
Thus, the utility of institute 2 from $G_1$ is:
$$\frac{1}{3} (1-\gamma+ 2s_2^\star + \gamma s_1^\star) \cdot  \frac{(1-\gamma-s_2^\star+\gamma s_1^\star)^2}{2\gamma(1-\gamma)}.$$

\smallskip
\noindent
{\bf Case III (\boldmath{$s_1^\star \gamma < s_2^\star < 1-\gamma$):}}
In this case, the line intersects $y$-axis at $\frac{s_2^\star}{1-\gamma} \leq 1$ and the line $x=s_1^\star$ at $\frac{s_2^\star-\gamma s_1^\star}{1-\gamma}$.
We break the area into two parts, the upper rectangle whose centroid is $\frac{1}{2} \left(s_1^\star, \frac{1-\gamma +s_2^\star}{1-\gamma}\right).$
The area of this rectangle is $s_1^\star \cdot \left(1-\frac{s_2^\star}{1-\gamma}\right).$
The centroid of the lower triangle is $\frac{1}{3} \left( 2s_1^\star, \frac{3s_2^\star-\gamma s_1}{1-\gamma} \right)$. The area of this triangle is $\frac{s_1^\star}{2} \cdot \frac{\gamma s_1^\star}{1-\gamma}.$
Thus, the total utility is 
$$ s_1^\star \cdot \left(1-\frac{s_2^\star}{1-\gamma}\right) \left( \frac{\gamma s_1^\star}{2} +  \frac{1-\gamma + s_2^\star}{2}\right) + \frac{s_1^\star}{2} \cdot \frac{\gamma s_1^\star}{1-\gamma} \cdot \left( \frac{2\gamma s_1^\star}{3} + \frac{3s_2^\star - \gamma s_1^\star}{3} \right).$$

\smallskip
\noindent
{\bf Case IV (\boldmath{$1-\gamma < s_2^\star < s_1^\star \gamma$):}} In this case, the line intersects the $y$-axis at $\frac{s_2^\star}{1-\gamma} > 1$ and the $x$-axis at $\frac{s_2^\star}{\gamma} \leq s_1^\star$. 
We break the area into two parts, the right rectangle whose centroid is $\frac{1}{2} \left(s_1^\star + \frac{s_2^\star}{\gamma}, 1 \right).$
The area of this rectangle is $s_1^\star - \frac{s_2^\star}{\gamma}.$
The centroid of the left triangle is $\frac{1}{3} \left( \frac{3s_2^\star-(1-\gamma)}{\gamma}, 2 \right)$. 
The area of this triangle is $\frac{1}{2} \cdot \frac{1-\gamma}{\gamma}$. 
Thus, the total utility is 
$$\left( s_1^\star - \frac{s_2^\star}{\gamma}\right) \cdot \left( \frac{\gamma s_1^\star + s_2^\star}{2} + \frac{1-\gamma}{2}\right) + \frac{(1-\gamma)}{2\gamma} \cdot \left( \frac{3s_2^\star - (1-\gamma)}{3} + \frac{2(1-\gamma)}{3} \right).$$
\noindent
The above quantities can be evaluated for $G_2$ in an analogous manner (by replacing $s_1^\star$ and $s_2^\star$ by $s_{1,\beta}^\star$ and $s_{2,\beta}^\star$ respectively, and multiplying the resulting expression by $\beta$). We can now write down the total (observed) utility of Institution~2.

\begin{restatable}[\bf Characterizing \boldmath{$\mathcal{U}_2(\gamma)$}]{theorem}{utility}
\label{prop:u2equations}
    Assume that \(\beta \geq 1-c\) and \(c < 1/2\). Then, \( \mathcal{U}_2(\gamma) \) is:
    \begin{itemize}
        \item For $\gamma \in [0,\gamma_1]$: $s_1^\star \cdot \left(1-\frac{s_2^\star}{1-\gamma}\right) \left( \frac{\gamma s_1^\star}{2} +  \frac{1-\gamma + s_2^\star}{2}\right) + \frac{s_1^\star}{2} \cdot \frac{\gamma s_1^\star}{1-\gamma} \cdot \left( \frac{2\gamma s_1^\star}{3} + \frac{3s_2^\star - \gamma s_1^\star}{3} \right)  \\ + s_1^\star \cdot \left(1-\frac{s_2^\star}{\beta(1-\gamma)}\right) \left( \frac{\gamma s_1^\star}{2\beta} +  \frac{1-\gamma + s_2^\star/\beta}{2}\right) + \frac{s_1^\star}{2\beta} \cdot \frac{\gamma s_1^\star}{1-\gamma} \cdot \left( \frac{2\gamma s_1^\star}{3\beta} + \frac{3s_2^\star - \gamma s_1^\star}{3\beta} \right)$.
         \item For $\gamma \in [\gamma_1,\gamma_2]$:
           $s_1^\star \cdot \left(1-\frac{s_2^\star}{1-\gamma}\right) \left( \frac{\gamma s_1^\star}{2} +  \frac{1-\gamma + s_2^\star}{2}\right) + \frac{s_1^\star}{2} \cdot \frac{\gamma s_1^\star}{1-\gamma} \cdot \left( \frac{2\gamma s_1^\star}{3} + \frac{3s_2^\star - \gamma s_1^\star}{3} \right)  +  \frac{\beta}{3} (1-\gamma+ 2s_2^\star/\beta + \gamma s_1^\star/\beta) \cdot  \frac{(1-\gamma-s_2^\star/\beta+\gamma s_1^\star/\beta)^2}{2\gamma(1-\gamma)}$.
          \item For $\gamma \in  [\gamma_2,\gamma_3]$:  $ \frac{1}{3} (1-\gamma+ 2s_2^\star + \gamma s_1^\star) \cdot  \frac{(1-\gamma-s_2^\star+\gamma s_1^\star)^2}{2\gamma(1-\gamma)}  +  \frac{\beta}{3} (1-\gamma+ 2s_2^\star/\beta + \gamma s_1^\star/\beta) \cdot  \frac{(1-\gamma-s_2^\star/\beta+\gamma s_1^\star/\beta)^2}{2\gamma(1-\gamma)}$.
           \item For $\gamma \in [\gamma_3,1]$: $ \left( s_1^\star - \frac{s_2^\star}{\gamma}\right) \cdot \left( \frac{\gamma s_1^\star + s_2^\star}{2} + \frac{1-\gamma}{2}\right) + \frac{(1-\gamma)}{2\gamma} \cdot \left( \frac{3s_2^\star - (1-\gamma)}{3} + \frac{2(1-\gamma)}{3} \right) \\ + \left( s_1^\star - \frac{s_2^\star}{\gamma}\right) \cdot \left( \frac{\gamma s_1^\star/\beta + s_2^\star/\beta}{2} + \frac{1-\gamma}{2}\right) + \frac{\beta(1-\gamma)}{2\gamma} \cdot \left( \frac{3s_2^\star/\beta - (1-\gamma)}{3} + \frac{2(1-\gamma)}{3} \right)$.
    \end{itemize}
\end{restatable}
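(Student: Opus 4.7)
The plan is to reduce the proof to a systematic geometric bookkeeping exercise, leveraging the regime structure from \Cref{thm:gammathreshold}. The starting point is the decomposition
\[
\mathcal{U}_2(\gamma)\;=\;\iint_{A_\gamma}\!\!\bigl(\gamma x+(1-\gamma)y\bigr)\,dx\,dy \;+\; \beta\!\iint_{B_\gamma}\!\!\bigl(\gamma x+(1-\gamma)y\bigr)\,dx\,dy,
\]
where $A_\gamma\subseteq[0,s_1^\star]\times[0,1]$ is the region above $L(\gamma,s_2^\star)$, and $B_\gamma\subseteq[0,s_1^\star/\beta]\times[0,1]$ is the region above $L_\beta(\gamma,s_2^\star)$. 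A change of variables shows that the $G_2$ contribution equals the $G_1$ contribution formula evaluated at the rescaled pair $(s_1^\star/\beta,\,s_2^\star/\beta)$ and then multiplied by $\beta$. Thus it suffices to compute the $G_1$ integral in each of the four geometric cases I--IV and assemble the results.

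The core computation reuses the centroid/area idea underlying \Cref{lem:area}: for any affine integrand on a polygon, the integral equals the value of the integrand at the centroid times the polygon's area. I will therefore decompose $A_\gamma$ into rectangles and triangles whose centroids and areas are immediate from the vertex coordinates in \Cref{fig:4cases}. For Case~III, $A_\gamma$ splits into the upper rectangle $[0,s_1^\star]\times[s_2^\star/(1-\gamma),1]$ and the lower triangle with vertices $(0,s_2^\star/(1-\gamma))$, $(s_1^\star,s_2^\star/(1-\gamma))$, $(s_1^\star,(s_2^\star-\gamma s_1^\star)/(1-\gamma))$; for Case~II, $A_\gamma$ is a single triangle with vertices $(s_1^\star,1)$, $((s_2^\star-(1-\gamma))/\gamma,1)$, $(s_1^\star,(s_2^\star-\gamma s_1^\star)/(1-\gamma))$; for Case~IV, $A_\gamma$ splits into a right rectangle $[s_2^\star/\gamma,s_1^\star]\times[0,1]$ and a left triangle with vertices $((s_2^\star-(1-\gamma))/\gamma,1)$, $(s_2^\star/\gamma,0)$, $(s_2^\star/\gamma,1)$. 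Plugging each centroid into $\gamma x+(1-\gamma)y$ and multiplying by the area yields the summands appearing in the statement (Case~I will not be needed, as shown below).

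With these building blocks in hand, I invoke \Cref{thm:gammathreshold} to identify the applicable case pair for $(A_\gamma,B_\gamma)$ in each subinterval. Specifically: on $[0,\gamma_1]$ both groups are in Case (III,\,III'); on $[\gamma_1,\gamma_2]$ the pair is (III,\,II'); on $[\gamma_2,\gamma_3]$ it is (II,\,II'); and on $[\gamma_3,1]$ it is (IV,\,IV'). For each subinterval I then write the $G_1$ term using the Case~$X$ centroid/area formula and the $G_2$ term using the Case~$X'$ formula with $(s_1^\star,s_2^\star)$ replaced by $(s_1^\star/\beta,s_2^\star/\beta)$ and multiplied by $\beta$. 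Careful simplification—noting cancellations such as $\beta\cdot(s_1^\star/\beta)=s_1^\star$ in the outer rectangle factors and $\beta\cdot(s_1^\star/\beta)^2=(s_1^\star)^2/\beta$ in the triangle areas—produces exactly the four expressions listed in the theorem.

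The main obstacle is purely bookkeeping: ensuring that in each regime the correct decomposition is used for each group (they can be in different cases on $[\gamma_1,\gamma_2]$) and that the $\beta$-rescaling is applied consistently to every occurrence of $s_1^\star$ and $s_2^\star$ inside the $G_2$ term while respecting the $\beta$ prefactor from the observed utility. No nontrivial analytic input beyond the centroid/area lemma and the regime characterization is required; once the algebra is tracked carefully, the piecewise formulas follow by direct substitution.
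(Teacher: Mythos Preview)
Your proposal is correct and follows essentially the same approach as the paper: compute the $G_1$ integral in each geometric case via the centroid/area method on rectangle-plus-triangle decompositions of $A_\gamma$ (exactly the decompositions you describe for Cases~II, III, IV), obtain the $G_2$ term by rescaling $(s_1^\star,s_2^\star)\to(s_1^\star/\beta,s_2^\star/\beta)$ and multiplying by $\beta$, and then invoke \Cref{thm:gammathreshold} to assign the case pairs (III,\,III'), (III,\,II'), (II,\,II'), (IV,\,IV') to the four $\gamma$-subintervals. The paper carries out the case-by-case centroid computations in the text preceding the theorem and then the formal proof is just the one-line identification of case pairs per regime, so your outline matches it in both structure and content.
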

\begin{proof}
    When $\gamma \in (0, \gamma_1)$, we are in cases III and III'. When $\gamma \in [\gamma_1, \gamma_2)$, we are in cases III and II'. When $\gamma \in [\gamma_2, \gamma_3)$, we are in cases II and II'. Finally, for  $\gamma \in [\gamma_3, 1)$, we are in cases IV and IV'.
\end{proof}
\noindent
Plugging in the values of \(s_1^\star\) and \(s_2^\star\) into the utility expressions, one can carry out an analysis analogous to that for \(s_2^\star\) and the representation ratio. We omit the details and instead present a plot in \Cref{fig:utility2-gamma}, which shows how \(\mathcal{U}_2\) varies with \(\gamma\). We observe that the behavior of \(\mathcal{U}_2\) mirrors that of \(s_2^\star\), which is expected since increasing the selection threshold typically results in a higher utility.

\noindent
Specifically, \(\mathcal{U}_2\) is larger for smaller values of \(\gamma\). An intuitive explanation is that the area \(\text{Area}(A_\gamma)\)—which corresponds to the fraction of \(G_2\) candidates selected by Institution~2—is a decreasing function of \(\gamma\). Likewise, the threshold \(s_2^\star\) increases as \(\gamma\) decreases. This stands in contrast to the behavior of the representation ratio \(\mathcal{R}\), which increases with \(\gamma\).
Thus, increasing \(\gamma\) induces two opposing effects: it improves representational fairness (as measured by \(\mathcal{R}\)) but decreases the total observed utility of Institution~2 (as measured by \(\mathcal{U}_2\)).

\subsection{Monotonicity of utilities with respect to \boldmath{$\beta$}}

We now show that the observed utility of each institution is a non-decreasing function of the bias parameter $\beta$, for fixed $\gamma$; see also \Cref{fig:utility-beta}.

\begin{proposition}[\bf Monotonicity of \boldmath{$\mathcal{U}_1$}]
\label{prop:monotone_U1}
For any fixed \(\gamma\), the observed utility of Institution~1 is a non-decreasing function of the bias parameter \(\beta\).
\end{proposition}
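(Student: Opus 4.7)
The plan is to differentiate the closed form from \Cref{lem:utilitymetric1} with respect to $\beta$, and show that the derivative equals $\tfrac{1-(s_{1,\beta}^\star)^2}{2}$, which is manifestly non-negative. The key simplification comes from exploiting the capacity constraint for Institution~1, namely $s_1^\star + s_{1,\beta}^\star = 2-c$, which ties the two thresholds together and cancels several competing terms.

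First I would split into two regimes according to whether $\beta < 1-c$ or $\beta \geq 1-c$. When $\beta < 1-c$, no candidate in $G_2$ is eligible at Institution~1, so $s_{1,\beta}^\star = 1$ and $s_1^\star = 1-c$ (the value that fills the capacity entirely from $G_1$); in this range both terms of $\mathcal{U}_1$ are constants of $\beta$, making the monotonicity claim trivial (it is simply a horizontal branch). When $\beta \geq 1-c$, we have $s_{1,\beta}^\star = s_1^\star/\beta$ and the closed form $s_1^\star(\beta) = (2-c)\beta/(1+\beta)$ from Equation~\eqref{eq:s1}.

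Next I would compute the derivative directly in this second regime. Writing $a:=s_1^\star$ and $b:=s_{1,\beta}^\star=a/\beta$, we have $a+b=2-c$ and $a'(\beta)=(2-c)/(1+\beta)^2$, so $a'(a+b)=(2-c)^2/(1+\beta)^2=b^2$. Differentiating $\mathcal{U}_1=\tfrac{1-a^2}{2}+\beta\cdot\tfrac{1-b^2}{2}$ and using $b'=(a'\beta-a)/\beta^2$, the cross term satisfies $\beta b\, b' = b a' - b^2$. Collecting:
\[
\frac{d\mathcal{U}_1}{d\beta}
= -a\,a' \;+\; \frac{1-b^2}{2} \;-\; \beta b\, b'
= -a'(a+b) \;+\; \frac{1+b^2}{2}
= \frac{1-b^2}{2}\;\ge\;0,
\]
with equality exactly at $b=1$, i.e.\ at the boundary $\beta=1-c$. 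This also shows the two regimes glue continuously, since the derivative tends to $0$ as $\beta\downarrow 1-c$.

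The only mildly subtle step is identifying the cancellation $a'(a+b)=b^2$; this is what converts a cluttered derivative into a single non-negative term, and it crucially relies on the capacity-constraint form $a+b=2-c$ combined with the explicit value of $a'$ from~\eqref{eq:s1}. Everything else is routine. I do not expect any genuine obstacle: the monotonicity of $s_1^\star$ in $\beta$ (\Cref{prop:monotones1beta}) is not needed directly, since the explicit formula for $s_1^\star$ from~\eqref{eq:s1} already gives a simple derivative, and the constraint $b\in[0,1]$ guarantees non-negativity of $(1-b^2)/2$.
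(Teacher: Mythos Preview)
Your proof is correct and follows essentially the same approach as the paper's: both differentiate the closed form from \Cref{lem:utilitymetric1} and exploit the capacity constraint $s_1^\star + s_{1,\beta}^\star = 2-c$. Your computation is in fact a bit cleaner---you obtain the exact derivative $\tfrac{1-(s_{1,\beta}^\star)^2}{2}$ via the identity $a'(a+b)=b^2$, whereas the paper uses $a'+b'=0$ together with the monotonicity of $s_1^\star$ (\Cref{prop:monotones1beta}) to bound the cross terms, arriving at the same non-negativity conclusion.
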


\begin{proof}
For simplicity, we assume \(c_1 = c_2 = c\); the argument generalizes to arbitrary capacities. From \Cref{lem:utilitymetric1}, the utility of Institution~1 is given by
\[
\mathcal{U}_1 = \frac{1 - (s_1^\star)^2}{2} + \beta \cdot \frac{1 - (s_{1,\beta}^\star)^2}{2}.
\]
Differentiating with respect to \(\beta\), we obtain:
\[
\frac{d \mathcal{U}_1}{d\beta} = \frac{1 - (s_{1,\beta}^\star)^2}{2} - s_1^\star \cdot \frac{d s_1^\star}{d\beta} - \beta s_{1,\beta}^\star \cdot \frac{d s_{1,\beta}^\star}{d\beta}.
\]
It suffices to show that this derivative is non-negative. From the threshold constraint (see Equation~\eqref{eq:twoinst1_1}),
\[
(1 - s_1^\star) + (1 - s_{1,\beta}^\star) = c,
\]
and hence,
\[
\frac{d s_1^\star}{d\beta} + \frac{d s_{1,\beta}^\star}{d\beta} = 0.
\]
Using this, the sum of the last two terms becomes
\[
-s_1^\star \cdot \frac{d s_1^\star}{d\beta} - s_{1,\beta}^\star \cdot \frac{d s_{1,\beta}^\star}{d\beta}
= (s_{1,\beta}^\star - s_1^\star) \cdot \frac{d s_1^\star}{d\beta}.
\]
This expression is non-negative because \(s_{1,\beta}^\star \geq s_1^\star\) and \(\frac{d s_1^\star}{d\beta} \geq 0\) by \Cref{prop:monotones1beta}. Therefore, \(\mathcal{U}_1\) is non-decreasing in \(\beta\).
\end{proof}

\begin{proposition}[\bf Monotonicity of \boldmath{$\mathcal{U}_2$}]
\label{prop:monotoneU2}
For any fixed \(\gamma\), the observed utility of Institution~2 is a non-decreasing function of the bias parameter \(\beta\).
\end{proposition}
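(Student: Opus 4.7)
The plan is to use a layer-cake representation of $\mathcal{U}_2$ that reduces monotonicity in $\beta$ to a pointwise inequality on a single scalar function of $\beta$. Let $P(t,\beta)$ denote the mass of candidates rejected by Institution~1 whose observed Institution~2 utility is at least $t$. Every candidate selected by Institution~2 has observed utility $\ge s_2^\star$, and conversely every member of the rejected pool above $s_2^\star$ is selected by Institution~2, so the mass of selected candidates with observed utility $\ge t$ equals $\min(c,P(t,\beta))$. Layer-cake then gives
\[
\mathcal{U}_2 \;=\; \int_0^\infty \min\!\bigl(c,\,P(t,\beta)\bigr)\,dt.
\]
Since $\min(c,\cdot)$ preserves monotonicity in its argument, the whole problem reduces to showing that $P(t,\beta)$ is non-decreasing in $\beta$ for each fixed $t\ge 0$.

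Next, I would write $P=P_1+P_2$ with $P_1(t,\beta)=\int_0^{s_1^\star}\!g(x,t)\,dx$ and $P_2(t,\beta)=\int_0^{s_{1,\beta}^\star}\!g(x,t/\beta)\,dx$, where $g(s,r):=\Pr[\gamma s+(1-\gamma)V\ge r]$ for $V\sim U[0,1]$. Equation~\eqref{eq:twoinst1_1} in the symmetric regime (combined with $s_1^\star\le\beta$, so that $s_{1,\beta}^\star=s_1^\star/\beta$) yields the identity $s_1^\star+s_{1,\beta}^\star=2-c$, which on differentiation gives $\alpha:=(s_1^\star)'(\beta)=-(s_{1,\beta}^\star)'(\beta)\ge 0$ by \Cref{prop:monotones1beta}. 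Differentiating $P$ under the integral sign and grouping the boundary terms, I obtain
\[
\partial_\beta P(t,\beta)\;=\;\alpha\,\bigl[g(s_1^\star,t)-g(s_{1,\beta}^\star,t/\beta)\bigr]\;+\;\int_0^{s_{1,\beta}^\star}\!g_r(x,t/\beta)\cdot\bigl(-t/\beta^2\bigr)\,dx.
\]
The interior integral is non-negative term-by-term, because $g$ is non-increasing in its second argument ($g_r\le 0$) and $-t/\beta^2\le 0$, so the product is non-negative.

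The main obstacle is the boundary inequality $g(s_1^\star,t)\ge g(s_{1,\beta}^\star,t/\beta)$. Both sides simplify to $\mathrm{Clip}_{[0,1]}\!\bigl(1-K/d\bigr)$ for the common numerator $K:=(t-\gamma s_1^\star)/(1-\gamma)$, with $d=1$ in the first and $d=\beta$ in the second (using $\gamma\,s_{1,\beta}^\star=\gamma s_1^\star/\beta$). Since $\beta\le 1$ forces $K/\beta\ge K$ whenever $K\ge 0$, a short case split over $K\le 0$, $K\in(0,\beta]$, $K\in(\beta,1]$, and $K>1$ verifies the clipped inequality in every regime, the geometric content being that at the Institution-1 margin, the $G_2$ candidate faces an effective second-attribute threshold at least as strict as the $G_1$ candidate. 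Combining this with the non-negative interior term gives $\partial_\beta P(t,\beta)\ge 0$, hence monotonicity of $\min(c,P(t,\beta))$ in $\beta$ and therefore of $\mathcal{U}_2$. An alternative case-by-case verification directly from the piecewise formulas in \Cref{prop:u2equations} would require tracking four regime-dependent expressions and their $\beta$-derivatives; the layer-cake reduction above packs all of the substantive content into the single clipping inequality enforced by $\beta\le 1$.
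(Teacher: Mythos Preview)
Your proof is correct, and the route you take is genuinely different from—and in one respect more rigorous than—the paper's. Both arguments start from a layer-cake representation, but they diverge at the crucial step.

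The paper applies layer-cake to each group's contribution $\mathbb{E}[U_j]$ separately, differentiates in $\beta$, and asserts that the two surviving terms $s_2^\star\cdot\tfrac{d}{d\beta}\Pr[U_j\ge s_2^\star]$ and $\int_{s_2^\star}^{1}\tfrac{d}{d\beta}\Pr[U_j\ge u]\,du$ are each non-negative. That per-group assertion is not actually true for $j=1$: as $\beta$ rises, both $s_1^\star$ and $s_2^\star$ rise, and the $G_1$-share of Institution~2's intake shrinks (for instance, at $\gamma=0$, $c=0.2$ one computes $\Pr[U_1\ge s_2^\star]\approx 0.176$ at $\beta=0.8$ versus $0.100$ at $\beta=1$, and $\mathbb{E}[U_1]$ itself falls from about $0.156$ to $0.094$). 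So while the aggregate $\mathcal{U}_2$ is monotone, the paper's argument does not establish it group-by-group.

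Your $\min(c,P(t,\beta))$ device sidesteps this completely. By aggregating across groups \emph{before} taking the layer-cake integral, you let the capacity constraint absorb $s_2^\star$, leaving $\beta$-dependence only through $s_1^\star$, $s_{1,\beta}^\star$, and the explicit $t/\beta$ inside $P_2$. Monotonicity of $P(t,\beta)$ then reduces to the boundary inequality $g(s_1^\star,t)\ge g(s_{1,\beta}^\star,t/\beta)$, and your observation that both sides are $\mathrm{Clip}_{[0,1]}(1-K/d)$ with a common $K=(t-\gamma s_1^\star)/(1-\gamma)$ and $d\in\{1,\beta\}$ is exactly the right way to see it. One minor remark: the identity $s_{1,\beta}^\star=s_1^\star/\beta$ that you use requires $\beta\ge 1-c$; for $\beta<1-c$ both $s_1^\star=1-c$ and $s_{1,\beta}^\star=1$ are constant in $\beta$, so $\alpha=0$, the boundary term vanishes, and only the non-negative interior term remains.
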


\begin{proof}
Let \(U_j\) denote the utility contribution to Institution~2 from a candidate in group \(G_j\). Since the expectation of a non-negative random variable \(Z\) can be written as \(\mathbb{E}[Z] = \int_0^1 \Pr[Z \geq u]\,du\), we have:
\begin{align*}
\mathbb{E}[U_j] 
&= \int_0^{s_2^\star} \Pr[U_j \geq s_2^\star]\,du 
+ \int_{s_2^\star}^1 \Pr[U_j \geq u]\,du \\
&= s_2^\star \cdot \Pr[U_j \geq s_2^\star] 
+ \int_{s_2^\star}^1 \Pr[U_j \geq u]\,du,
\end{align*}
since \(\Pr[U_j \geq u] = \Pr[U_j \geq s_2^\star]\) for all \(0 < u \leq s_2^\star\).

Differentiating with respect to \(\beta\), and applying the Leibniz rule, we obtain:
\[
\frac{d}{d\beta} \mathbb{E}[U_j]
= \frac{d s_2^\star}{d\beta} \cdot \Pr[U_j \geq s_2^\star]
+ s_2^\star \cdot \frac{d}{d\beta} \Pr[U_j \geq s_2^\star]
+ \int_{s_2^\star}^1 \frac{d}{d\beta} \Pr[U_j \geq u]\,du
- \frac{d s_2^\star}{d\beta} \cdot \Pr[U_j \geq s_2^\star].
\]
The first and last terms cancel, leaving:
\[
\frac{d}{d\beta} \mathbb{E}[U_j]
= s_2^\star \cdot \frac{d}{d\beta} \Pr[U_j \geq s_2^\star]
+ \int_{s_2^\star}^1 \frac{d}{d\beta} \Pr[U_j \geq u]\,du.
\]
Since an increase in \(\beta\) increases the effective score of group \(G_2\) candidates and thus the probability that they are selected, both terms on the right-hand side are non-negative. Hence, the expected utility \(\mathbb{E}[U_j]\) — and therefore \(\mathcal{U}_2\) — is a non-decreasing function of \(\beta\).
\end{proof}

\begin{figure}[h]
\centering
\includegraphics[width=0.45\textwidth]{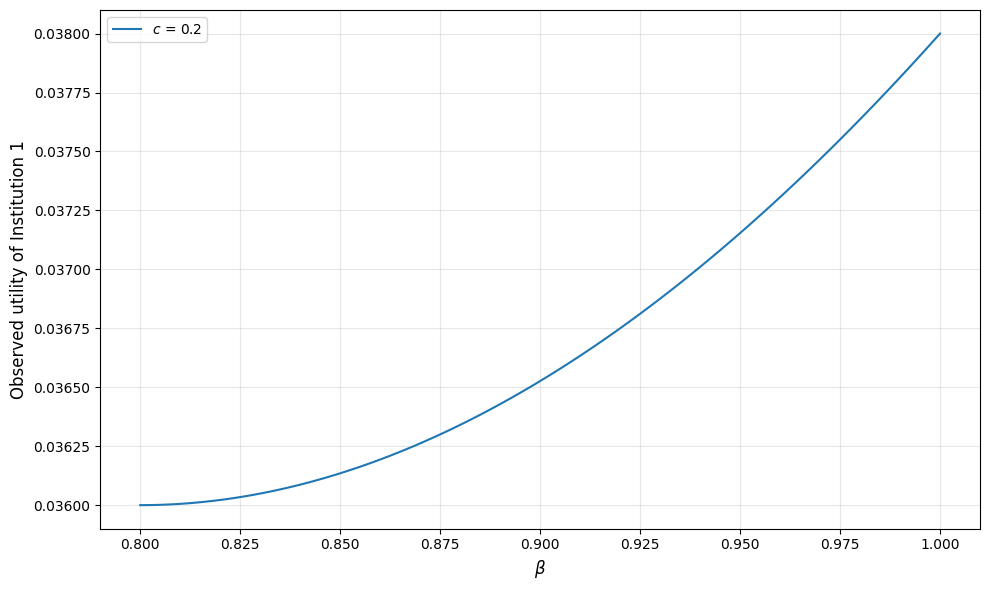}
\hfill
\includegraphics[width=0.45\textwidth]{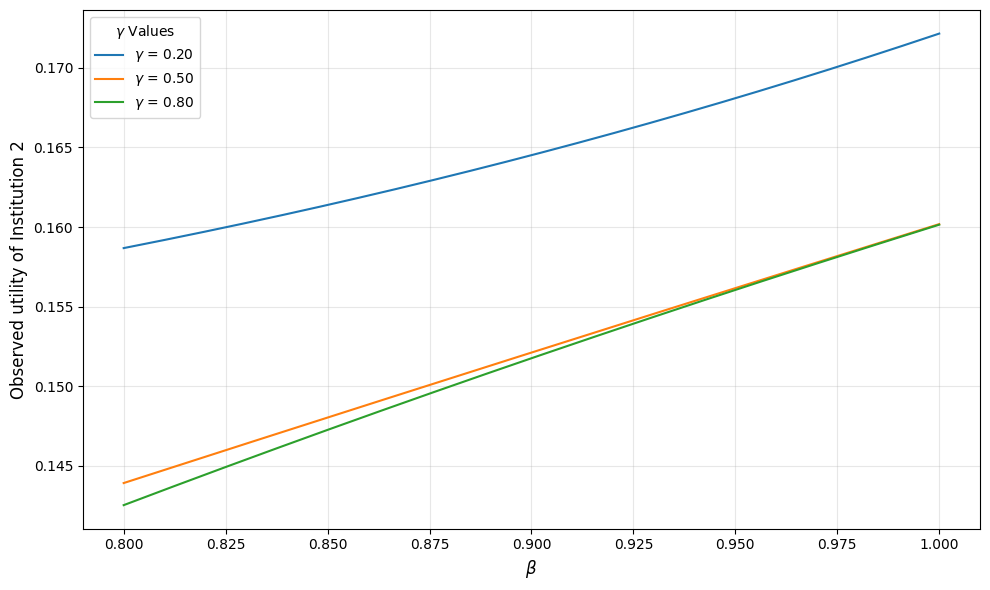}
\caption{
(Left) Utility of Institution 1 as a function of $\beta$ for fixed $\gamma = 0.5$ and $c = 0.2$.
(Right) Utility of Institution 2 as a function of $\beta$ for fixed $c = 0.2$ and varying $\gamma \in \{0.2, 0.5, 0.8\}$.}
\label{fig:utility-beta}
\end{figure}

\subsection{Variation of \boldmath{$\mathcal{U}_2$} with respect to  \boldmath{$\gamma$}}

We investigate how the observed utilities of institutions vary with respect to the correlation parameter $\gamma$. As shown in Figure \ref{fig:utility2-gamma}, the utility of Institution 2 exhibits a non-monotonic relationship with $\gamma$, displaying a U-shaped pattern across all tested values of $\beta$. This arises due to competing effects: higher $\gamma$ increases alignment between institutional rankings, reducing the chance of cross-institution gains, while also increasing selectivity. 

\begin{figure}[h]
\centering
\includegraphics[width=0.4\textwidth]{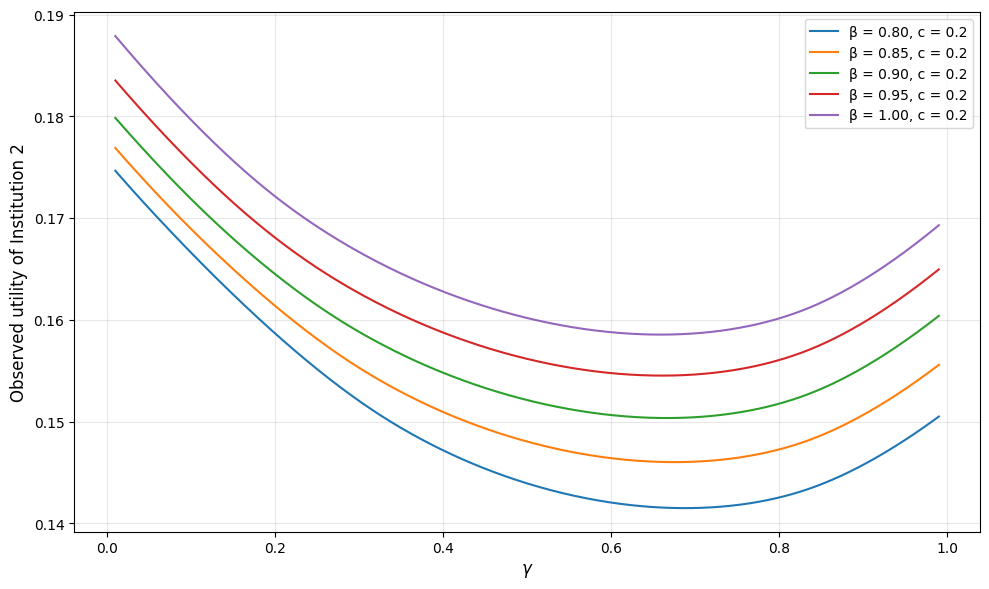}
\caption{Utility of Institution 2 as a function of $\gamma$ for various $\beta$ at fixed $c=0.2$, showing non-monotonic behavior.}
\label{fig:utility2-gamma}
\end{figure}

\section{Extension to distinct bias parameters for two institutions}
\label{sec:bias-asym}
Our main analysis assumes a common bias parameter $\beta$ applied uniformly across both 
institutions. In practice, however, evaluation standards may differ—one institution may adopt 
more rigorous fairness safeguards or use a less biased scoring process than the other. To capture 
this asymmetry, we extend the model to allow distinct bias parameters $(\beta_{12}, \beta_{22})$   with respect to 
Institutions~1 and~2 for the candidates in group $G_2$ respectively. Analogous to the $\beta \geq 1-c$ assumption, we assume that both the parameters $\beta_{12}, \beta_{22}$ are at least $1-c$. For the sake of concreteness, we also assume $\beta_{22} \geq \beta_{12}.$ In Section~\ref{sec:compute-distinct-bias}, we show that the thresholds and representation ratio can be computed numerically. In Section~\ref{sec:additive-plot}, we demonstrate numerically that the qualitative properties of these metrics remain unchanged, and we further verify this analytically in an extreme parameter regime.

\subsection{Computing thresholds and representation ratio}
\label{sec:compute-distinct-bias}
The expression for $s_1^\star$ remains same as in~\eqref{eq:s1} with  $\beta$ replaced by $\beta_{12}$:
\begin{equation}\label{eq:s1twobias}  
s_1^\star = \frac{2 - c}{1 + 1/\beta}.
\end{equation}
We now outline the steps for computing $s_2^\star$ and ${\mathcal R}(\gamma)$. 
Given values $s_1, s_2, \gamma$, let $P(s_1, s_2, \gamma)$ denote $\Pr[\gamma v_{i1} + (1-\gamma) v_{i2} \geq s_2 \wedge u_{i1} < s_1].$ This can be evaluated using the four cases mentioned in~\Cref{sec:mainresults}. 

Since $\beta_{12} \geq 1-c$, it follows that $s_1^\star \leq \beta_{12}$. However, unlike \Cref{fact:s2leqs1}, it may not happen that $s_2^\star \leq \beta_{22}$. 
We first check if $s_2^\star \leq \beta_{22}$. For this, we evaluate $P(s_1^\star, \beta_{22}, \gamma).$ If $P(s_1^\star, \beta_{22}, \gamma) \geq c,$ then we know that $s_2^\star$ has to be at least $\beta_{22}$ -- otherwise the fraction of candidates assigned to institution~2 would exceed its capacity. Similarly, if  $P(s_1^\star, \beta_{22}, \gamma) < c,$ then $s_2^\star \leq \beta_{22}$. 

\paragraph{Case \boldmath{$s_2^\star >  \beta_{22}$}:}
  In this case, all the candidates selected by institution~2 are from $G_1$.
  Therefore, $s_2^\star$ is obtained by solving the equation: $$P(s_1^\star, s_2^\star, \gamma) = c.$$
  Further, ${\mathcal R}(\gamma) = \frac{1-s_1^\star/\beta_{12}}{c + (1-s_1^\star)}$. 

\paragraph{Case \boldmath{$s_2^\star \leq \beta_{22}$}:}
To evaluate $s_2^\star$, we need to solve for the following equation: 
$$P(s_1^\star, s_2^\star, \gamma) + P(s_1^\star/\beta_{12}, s_2^\star/\beta_{22}, \gamma) = c.$$
Note that there will be four cases for each of the terms on the l.h.s. above. More precisely, while evaluating $P(s_1^\star, s_2^\star, \gamma)$, we will have to {\em guess} one of the following cases: 
\begin{itemize}
    \item[(i)] $s_2 \leq \min(s_1 \gamma, 1-\gamma)$
    \item[(ii)] $s_2 \geq \max(s_1 \gamma, 1-\gamma)$
    \item[(iii)] $s_1 \gamma < s_2 < 1-\gamma$
    \item[(iv)] $1-\gamma < s_2 < s_1 \gamma$. 
\end{itemize}
Similarly, while evaluating $P(s_1^\star/\beta_{12}, s_2^\star/\beta_{22}, \gamma)$, we have to guess one of the following four cases. Let $s^\star_{1,\beta}$ denote $s_1^\star/\beta_{12}$ and $s^\star_{2,\beta}$ denote $s_2^\star/\beta_{22}$
\begin{itemize}
    \item[(i)'] $s_{2,\beta} \leq \min(s_{1,\beta} \gamma, 1-\gamma)$
    \item[(ii)'] $s_{2,\beta} \geq \max(s_{1,\beta} \gamma, 1-\gamma)$
    \item[(iii)'] $s_{1,\beta} \gamma < s_{2,\beta} < 1-\gamma$
    \item[(iv)'] $1-\gamma < s_{2,\beta} < s_{1,\beta} \gamma$. 
\end{itemize}
For each combination of 16 possibilities, we solve for $s_2^\star$ and then take the solution that is consistent with the corresponding guess. After solving for $s_2^\star$, the representation ratio is given by: 
$$\frac{1-s_{1,\beta}^\star + P(s_{1,\beta}^\star, s_{2,\beta}^\star, \gamma)}{1-s_{1}^\star + P(s_{1}^\star, s_{2}^\star, \gamma)}.$$

\subsection{Numerical evaluation}
\label{sec:additive-plot}
Numerical experiments in this setting indicate that, across a wide range of asymmetric configurations, $s_2^\star$ is unimodal in $\gamma$, while ${\mathcal R}(\gamma)$ increases consistently.  
For example, the plots in~\Cref{fig:diff-beta} show this behavior for different values of $\beta_{12}$ and $\beta_{22}$.

To illustrate this phenomenon analytically, we show that even in an extreme asymmetric setting, 
the representation ratio ${\mathcal R}(\gamma)$ is strictly higher when the institutions are fully aligned 
($\gamma = 1$) than when they evaluate candidates independently ($\gamma = 0$).

\begin{proposition}
    Consider a setting where $1 > \beta_{12} > 1-c$ and $\beta_{22} = 1$. Then, 
    ${\mathcal R}(\gamma) = 1$ when $\gamma = 1$, and 
    ${\mathcal R}(\gamma) = \frac{2c + \beta_{12} - 1}{2c\beta_{12} - \beta_{12} + 1} < 1$ when $\gamma = 0$.
\end{proposition}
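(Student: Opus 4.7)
The plan is to evaluate both endpoints $\gamma=1$ and $\gamma=0$ directly using the equilibrium equations from~\Cref{sec:compute-distinct-bias}, exploiting two simplifications: (i) $s_1^\star=(2-c)\beta_{12}/(1+\beta_{12})$ is given in closed form by~\eqref{eq:s1twobias} and depends on neither $\gamma$ nor $\beta_{22}$; (ii) since $\beta_{22}=1$, institution~2 applies a single cutoff $s_2^\star$ that acts identically on $G_1$ and $G_2$. I expect $s_2^\star\le\beta_{22}=1$ at both endpoints (so we fall into the ``second case'' of Section~\ref{sec:compute-distinct-bias}), which I will verify a posteriori.

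For $\gamma=1$, both institutions rank candidates by $v_{i1}$, and the four-case geometry degenerates (the selection line is vertical). A candidate in $G_1$ is chosen by institution~2 iff $v_{i1}\in[s_2^\star,s_1^\star)$ and a candidate in $G_2$ iff $v_{i1}\in[s_2^\star,s_1^\star/\beta_{12})$. The capacity equation thus reads $(s_1^\star-s_2^\star)+(s_1^\star/\beta_{12}-s_2^\star)=c$, and combining with the identity $s_1^\star(1+1/\beta_{12})=2-c$ gives $s_2^\star=1-c$ immediately. Each group's total selection mass then telescopes to $1-s_2^\star=c$, yielding $\mathcal R(1)=1$.

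For $\gamma=0$, institution~2 ranks by $v_{i2}$, which is independent of $v_{i1}$, so the four-case geometry again degenerates (horizontal line) and the joint probabilities factorize as $\Pr[v_{i1}<s_1,\,v_{i2}\ge s_2]=s_1(1-s_2)$. The institution~2 capacity equation becomes the linear equation $(s_1^\star+s_1^\star/\beta_{12})(1-s_2^\star)=c$, which using $s_1^\star(1+1/\beta_{12})=2-c$ yields $1-s_2^\star=c/(2-c)$. Substituting into the selection masses $1-s_1^\star s_2^\star$ for $G_1$ and $1-(s_1^\star/\beta_{12})s_2^\star$ for $G_2$, and using the explicit value of $s_1^\star$, a short simplification produces $\mathcal R(0)=(2c+\beta_{12}-1)/(2c\beta_{12}-\beta_{12}+1)$. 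Positivity of the numerator uses $\beta_{12}>1-c$, and the strict inequality $\mathcal R(0)<1$ reduces to the algebraic identity
\[
(2c\beta_{12}-\beta_{12}+1)-(2c+\beta_{12}-1)=2(1-c)(1-\beta_{12})>0,
\]
which is immediate from $\beta_{12}<1$ and $c<1$.

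There is no substantive obstacle: once the endpoints are reached, the geometry trivializes, and the only verification required is that at each endpoint the deduced $s_2^\star$ is indeed at most $\beta_{22}=1$ (so we solve the equation associated with the second case rather than the first), which holds since $s_2^\star\in\{1-c,\,1-c/(2-c)\}\subset(0,1)$.
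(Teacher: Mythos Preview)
Your proof is correct and follows essentially the same direct-evaluation strategy as the paper. The one minor tactical difference is at $\gamma=1$: the paper first establishes $s_2^\star\le s_1^\star$ by a short contradiction argument and then invokes the group-blindness of Institution~2's cutoff (since $\beta_{22}=1$), whereas you compute $s_2^\star=1-c$ explicitly from the capacity equation and telescope the selection masses---both routes reach $\mathcal R(1)=1$ immediately, and your computation implicitly recovers $s_2^\star\le s_1^\star$ from the hypothesis $\beta_{12}>1-c$.
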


\begin{proof}
When $\gamma = 1$, the two institutions are fully aligned in their evaluations. 
A candidate $i$ is selected if either $\hat{u}_{i1} \ge s_1^\star$ or $\hat{u}_{i2} \ge s_2^\star$. 
We first claim that $s_2^\star \le s_1^\star$. Suppose not. The fraction of candidates from $G_1$ 
admitted to Institution~1 is $1 - s_1^\star$, which must be at most $c$ by the capacity constraint. 
If $s_2^\star > s_1^\star$, then no candidate from $G_1$ would be assigned to Institution~1. 
Moreover, the fraction of candidates from $G_2$ admitted to Institution~2 is less than 
$1 - s_1^\star \le c$, violating the capacity condition at Institution~2. 
Hence, $s_2^\star \le s_1^\star$. Consequently, any candidate with 
$\hat{u}_{i2} \ge s_2^\star$ is selected. Since $\beta_{22} = 1$, an equal fraction of candidates 
from both groups satisfy $\hat{u}_{i2} \ge s_2^\star$, implying that 
${\mathcal R}(\gamma) = 1$.

We now consider the case $\gamma = 0$. 
In this regime, the institutions evaluate candidates independently. 
Institution~1 admits candidates based on the threshold 
$s_1^\star = \frac{2 - c}{1 + 1/\beta_{12}}$. 
The fraction of candidates from $G_1$ not admitted to Institution~1 is $s_1^\star$, 
while that for $G_2$ is $s_1^\star / \beta_{12}$. 
Since $\beta_{22} = 1$ and the attribute $v_{i2}$ is i.i.d.\ across groups, 
the fraction of candidates from $G_1$ admitted to Institution~2 is 
$\frac{c s_1^\star}{s_1^\star + s_1^\star/\beta_{12}}$, 
and the corresponding fraction for $G_2$ is 
$\frac{c s_1^\star/\beta_{12}}{s_1^\star + s_1^\star/\beta_{12}}$. 
Hence, the overall representation ratio is given by
\[
\frac{
\frac{c s_1^\star/\beta_{12}}{s_1^\star + s_1^\star/\beta_{12}} + (1 - s_1^\star/\beta_{12})
}{
\frac{c s_1^\star}{s_1^\star + s_1^\star/\beta_{12}} + (1 - s_1^\star)
}.
\]
Substituting the expression for $s_1^\star$ from above and simplifying 
yields the desired result.
\end{proof}

\begin{figure}[h!]
    \centering
    \includegraphics[width=0.95\textwidth]{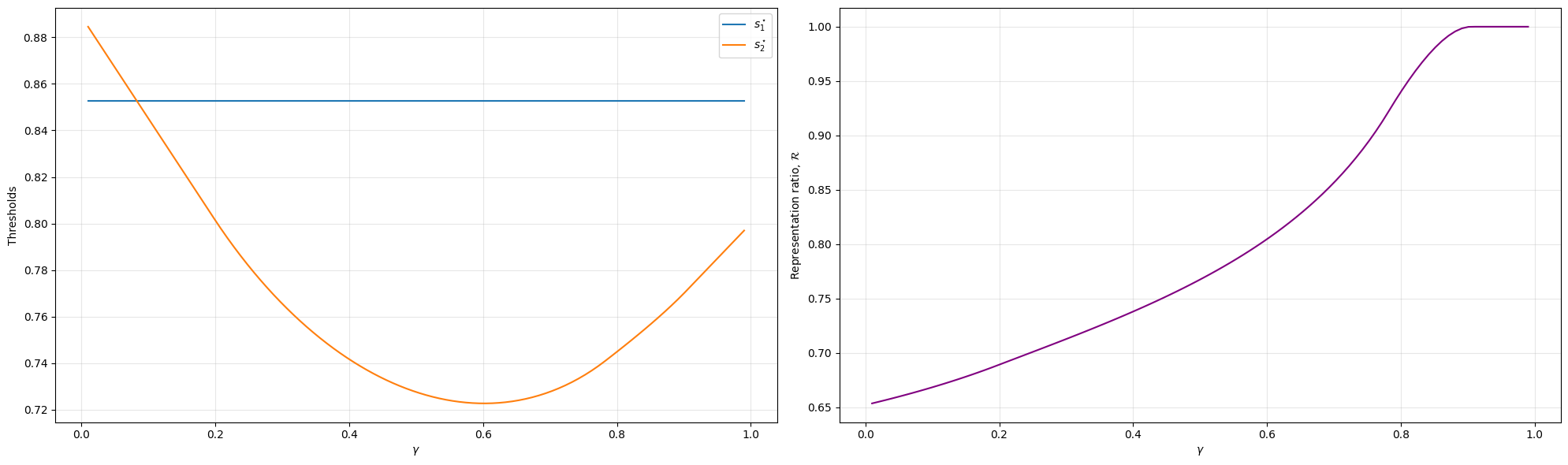}
    \caption{Variation of thresholds $s_1^\star$ and $s_2^\star$ and  representation ratio $\mathcal{R}$ with $\gamma$ when $c = 0.2, \beta_{12} = 0.9, \beta_{22} = 1.$}
    \label{fig:diff-beta}
\end{figure}

\section{Thresholds and representation ratio computation for the setting of additive bias}\label{sec:additive_bias}

We now consider the additive-bias model in which the observed utility of a candidate 
$i\in G_2$ for institution $j$ is $\hat u_{ij}=u_{ij}-\beta$, where $u_{ij}$ is the true utility 
and $\beta\in[0,1]$ is an additive bias parameter (observe that, unlike the multiplicative bias parameter setting, smaller values of $\beta$ imply less bias). Under the natural assumption $\beta \le c$ — the analogue 
of the multiplicative regime $\beta \ge 1-c$ that ensures Institution~1 continues to admit some 
candidates from $G_2$ — the admission threshold $s_1^\star$ at Institution~1  satisfies the equation: 
$$1-s_1^\star + (1-s_1^\star-\beta) = c.$$
Thus, we get 
\begin{equation}
    \label{eq:s1-additive}
    s_1^\star \;=\; 1 - \frac{c + \beta}{2}.
\end{equation}
The geometric structure of the selection regions remains unchanged up to a translation by~$\beta$, and the $\gamma$-threshold framework developed in Section~\ref{sec:mainresults} continues to apply. Thus, the expression for $\Pr[\hu_{i1} < s_1, \hu_{i2} \geq s_2]$ in the four cases $I, \ldots, IV$ corresponding to group $G_1$ remain unchanged. Similarly, we have the corresponding cases $I', \ldots. IV'$ for $G_2$, where we redefine $s_{1,\beta}^\star = \min(1, s_1^\star + \beta)$ and $s_{2,\beta}^\star = \min(1, s_2^\star + \beta)$.

Similarly, the proof of~\Cref{fact:s2leqs1} showing $s_2^\star \leq s_1^\star$  extends directly.  We now establish the existence of the $\gamma$ thresholds. As in~\Cref{thm:gammathreshold}, we show that there is a constant $c_0< 1$ such that if the available capacity $c \leq c_0$, the relative ordering of the $\gamma$ thresholds is fixed.  

\begin{theorem}
\label{thm:gammathreshadditive}
Assume $c < c_0 =  0.36,$ and  $\beta \le c$. There exist thresholds 
$\;0 \le \gamma_1 \le \gamma_2 \le \gamma_3 \le \gamma_4 \le 1\;$ such that:
\begin{enumerate}[label=(\arabic*)]
  \item $s_2^\star + \beta \le 1-\gamma \iff \gamma \le \gamma_1$,
  \item $s_2^\star \le 1-\gamma \iff \gamma \le \gamma_2$,
  \item $s_2^\star \ge \gamma\, s_1^\star \iff \gamma \le \gamma_3$,
  \item $s_2^\star + \beta \ge \gamma\,(s_1^\star + \beta) \iff \gamma \le \gamma_4$.
\end{enumerate}
\end{theorem}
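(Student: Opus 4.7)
The strategy is to mirror the argument used for \Cref{thm:gammathreshold} in the multiplicative regime, replacing the scaled quantities $s_2^\star/\gamma$, $s_2^\star/(1-\gamma)$, $s_{2,\beta}^\star/\gamma$, $s_{2,\beta}^\star/(1-\gamma)$ by their additive analogues. Specifically, define
\[
R_1(\gamma) := \tfrac{s_2^\star(\gamma)+\beta}{1-\gamma},\quad
R_2(\gamma) := \tfrac{s_2^\star(\gamma)}{1-\gamma},\quad
R_3(\gamma) := \tfrac{s_2^\star(\gamma)}{\gamma} - s_1^\star,\quad
R_4(\gamma) := \tfrac{s_2^\star(\gamma)+\beta}{\gamma} - (s_1^\star+\beta).
\]
The four inequalities in the statement are equivalent to $R_1(\gamma)\le 1$, $R_2(\gamma)\le 1$, $R_3(\gamma)\ge 0$, $R_4(\gamma)\ge 0$. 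The aim is to prove that $R_1,R_2$ are strictly increasing and $R_3,R_4$ are strictly decreasing continuous functions on $(0,1)$, so that each inequality flips at a unique $\gamma_i$; then to establish the ordering $\gamma_1\le \gamma_2 \le \gamma_3 \le \gamma_4$.

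\textbf{Monotonicity step.} For the monotonicity of $R_2$ (and analogously $R_1$), I would repeat the coupling argument from the proof of \Cref{thm:gammathreshold}. Letting $X,Y$ be independent uniforms on $[0,1]$, the equilibrium condition can be written as
\[
\Pr\bigl[X + (\tfrac{1}{\gamma}-1)Y \ge \tfrac{s_2^\star}{\gamma},\ X<s_1^\star\bigr]
+\Pr\bigl[X+(\tfrac{1}{\gamma}-1)Y \ge \tfrac{s_2^\star+\beta}{\gamma},\ X<s_1^\star+\beta\bigr]=c,
\]
(using the additive redefinition of $s_{j,\beta}^\star$). Exactly as before, the coefficient $(1/\gamma-1)$ is strictly decreasing in $\gamma$, so an increase in $\gamma$ coupled with a hypothetical non-decrease of $s_2^\star/\gamma$ would contradict the equilibrium equation. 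Continuity of the uniform distribution then forces strict monotonicity; the dual statement gives monotonicity of $R_1,R_2$. The boundary behaviour is easy: $R_2(0)=0$ and $R_2(\gamma)\to\infty$ as $\gamma\to 1^-$, with the analogous endpoints for $R_1,R_3,R_4$, yielding unique crossing points $\gamma_1,\ldots,\gamma_4$.

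\textbf{Ordering step.} The inequalities $\gamma_1\le\gamma_2$ and $\gamma_3\le\gamma_4$ are immediate from $s_2^\star+\beta\ge s_2^\star$ and $s_1^\star+\beta\ge s_1^\star$, combined with the monotonicity directions of $R_1$ vs $R_2$ and of $R_3$ vs $R_4$. The nontrivial inclusion is $\gamma_2\le \gamma_3$. Arguing by contradiction, suppose $\gamma_3<\gamma_2$; then on the interval $(\gamma_3,\gamma_2)$ both groups fall into the geometric Case~II (as in \Cref{sec:mainresults}), so the equilibrium equation reduces, after substituting $s_2^\star=\gamma s_1^\star$ at $\gamma=\gamma_3$, to a closed-form expression in $s_1^\star$, $\beta$, and $\gamma_3$. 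Using $s_1^\star=1-(c+\beta)/2$ from \eqref{eq:s1-additive} and $\beta\le c$, this expression can be bounded below by a quantity of the form $s_1^\star(1-O(c))$, which exceeds $c$ whenever $c< c_0$ for an explicit constant $c_0$. Tuning the constants yields $c_0=0.36$; this is where the sharpness of the regime assumption enters.

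\textbf{Main obstacle.} The routine portion is the monotonicity of $R_1,\ldots,R_4$, which follows almost verbatim from the coupling proof of \Cref{thm:gammathreshold}. The delicate step is pinning down the threshold $c_0$: unlike the multiplicative case, the additive shift $\beta$ inserts several cross-terms into the Case~II capacity equation, and one must verify carefully that the contradiction $c\ge s_1^\star(1-O(c))>c$ persists for all admissible $\beta\in[0,c]$ simultaneously. I expect the analysis to reduce to checking a one-variable polynomial inequality in $c$ whose largest root below $1$ gives $c_0\approx 0.36$; this is the only place where the additive and multiplicative arguments diverge quantitatively. Once $\gamma_2\le\gamma_3$ is established, the full chain $\gamma_1\le\gamma_2\le\gamma_3\le\gamma_4$ follows, completing the proof.
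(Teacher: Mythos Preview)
Your monotonicity step and the easy orderings $\gamma_1\le\gamma_2$, $\gamma_3\le\gamma_4$ are fine and match the paper. The gap is in the ``$\gamma_2\le\gamma_3$'' step, where the case identification is wrong and the argument as written cannot produce the constant $c_0=0.36$.

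Concretely: you claim that if $\gamma_3<\gamma_2$ then on $(\gamma_3,\gamma_2)$ ``both groups fall into Case~II''. This is false. For $\gamma\in(\gamma_3,\gamma_2)$ you have $s_2^\star<\gamma s_1^\star$ (because $\gamma>\gamma_3$) and $s_2^\star<1-\gamma$ (because $\gamma<\gamma_2$), so $G_1$ is in Case~I, not Case~II. Moreover, to pin down the case for $G_2$ you need to know where your evaluation point sits relative to $\gamma_1$ and $\gamma_4$; your proposal never addresses this, so you cannot write down a definite capacity equation to contradict. The paper handles this by evaluating at $\gamma=\gamma_2$ (substituting $s_2^\star=1-\gamma_2$, not $s_2^\star=\gamma_3 s_1^\star$) and splitting into two sub-cases: (i) $\gamma_2\in[\gamma_4,1]$, where one is in Case~IV for both groups and the contradiction reduces to $(1-c)^2\le c$; and (ii) $\gamma_2\in(\gamma_3,\gamma_4)$, where one is in Case~IV for $G_1$ and Case~II$'$ for $G_2$, and after using $\gamma_2\ge\tfrac12$ the contradiction reduces to $c^2-\tfrac{7c}{4}+\tfrac12\le 0$. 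It is this second quadratic whose smaller root $\approx 0.36$ gives $c_0$. Your sketch (``$s_1^\star(1-O(c))$ exceeds $c$'') does not correspond to either computation, and without the two-way split you cannot isolate a single polynomial inequality in $c$.
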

\begin{proof}
    We first show the existence of the thresholds $\gamma_1, \ldots, \gamma_4$.  Arguing as in the proof of~\Cref{thm:gammathreshold}, we can show that $\frac{s_2^\star(\gamma)}{\gamma}$ is a decreasing function of $\gamma$ and $\frac{s_2^\star(\gamma)}{1-\gamma}$ is an increasing function of $\gamma$: the argument follows verbatim with $s^\star_{1,\beta}$ and $s^\star_{2,\beta}$ denoting $\min(1, s_1^\star + \beta)$ and $\min(1, s_2^\star+\beta)$ respectively. Since $\beta \leq c$, $s_1^\star \leq 1-\beta$. Therefore, $s^\star_{1,\beta} = s_1^\star + \beta$. Since $s_2^\star \leq s_1^\star$, it follows that $s_{2, \beta}^\star = s_2^\star + \beta$ as well. 

    Since $\frac{s_2^\star(\gamma)}{1-\gamma}$ is an increasing function of $\gamma$, $\frac{s_2^\star(\gamma) + \beta}{1-\gamma}$ is also an increasing function of $\gamma$. When $\gamma=0$, this ratio is $s_2^\star(\gamma) + \beta \leq s_1^\star + \beta \leq 1$; and when $\gamma$ approaches $1$, this ratio approaches infinity. Thus, there is a unique value $\gamma_1 \in [0,1]$ such  that $\frac{s_2^\star(\gamma_1) + \beta}{1-\gamma_1} = 1.$ Further, $\frac{s_2^\star(\gamma) + \beta}{1-\gamma} < 1$  iff $\gamma < \gamma_1$. The existence of the thresholds $\gamma_2, \ldots, \gamma_4$ can be shown similarly. 

    At $\gamma = \gamma_1$, $\frac{s_2^\star(\gamma_1)}{1-\gamma_1} = 1 - \frac{\beta}{1-\gamma_1} < 1.$ Therefore, the definition of $\gamma_2$ implies that $\gamma_1 \leq \gamma_2.$ Similarly, $\gamma_3 \leq \gamma_4$. It remains to show that $\gamma_2 \leq \gamma_3$. The proof proceeds in two steps: (i) we first show that $\gamma_2 \notin [\gamma_4, 1]$, and (ii) then we show that $\gamma_2 \notin (\gamma_3, \gamma_4).$ This will prove that $\gamma_2 \leq \gamma_3$, and hence, $\gamma_1 \leq \gamma_2 \leq \gamma_3 \leq \gamma_4$. 

    \paragraph{Proving \boldmath{$\gamma_2 \notin [\gamma_4,1]$}:}
    Assume for the sake of contradiction that $\gamma_2 \in [\gamma_4, 1]$. Then, at $\gamma = \gamma_2$, $s_2^\star(\gamma) = 1-\gamma$ and $s_2^\star(\gamma) \leq \gamma  s_1^\star(\gamma)$ (because $\gamma_2 \geq \gamma_4 \geq \gamma_3$). Thus we are in case IV for $G_1$. Similarly, we are in case IV' for $G_2$. Therefore, constraint~\eqref{eq:twoinst2} implies that
    $$ s_1^\star - \frac{s_2^\star}{\gamma} + \frac{1-\gamma}{2\gamma} + s_1^\star+\beta - \frac{s_2^\star+\beta}{\gamma} + \frac{1-\gamma}{2\gamma} = c.$$
    Substituting $s_2^\star = 1-\gamma$ above, we get
    $$ 2s_1^\star-(1+
    \beta) \left( \frac{1-\gamma}{\gamma}\right) = c.$$
    When $\gamma = \gamma_2$, $1-\gamma = s_2^\star(\gamma) \leq \gamma s_1^\star$. Thus, $\frac{1-\gamma}{\gamma} \leq s_1^\star$. Using this fact and~\eqref{eq:s1-additive}, we get: 
    $$\left(1-\frac{c+\beta}{2} \right)(1-\beta) \leq c.$$
    The l.h.s. above is a decreasing function of $\beta$. Since $\beta \leq c$, we get
    $$(1-c)^2 \leq c.$$
    But this contradicts the fact that $c < c_0 = 0.36$. Therefore, $\gamma_2 \notin [\gamma_4,1]$. 

    \paragraph{Proving \boldmath{$\gamma_2 \notin (\gamma_3,\gamma_4)$}:}
    Assume for the sake of contradiction that $\gamma_2 \in (\gamma_3, \gamma_4)$. Then we are in case IV and II' when $\gamma = \gamma_2$. Thus, constraint~\eqref{eq:twoinst2} implies
    $$s_1^\star - \frac{s_2^\star}{\gamma} + \frac{1-\gamma}{2\gamma} + \frac{(1-\gamma -s_2^\star - \beta + (s_1^\star+\beta)\gamma))^2}{2\gamma(1-\gamma)} = c.
    $$
    Substituting $s_2^\star = 1-\gamma$ above, we get: 
    $$s_1^\star - \frac{1-\gamma}{2\gamma} + \frac{ 
 ((s_1^\star+\beta)\gamma-\beta)^2}{2\gamma(1-\gamma)} = c.$$
    At $\gamma = \gamma_2$, $\frac{1-\gamma}{\gamma} \leq s_1^\star$. Therefore, 
    $$\frac{s_1^\star}{2} + \frac{ ((s_1^\star+\beta)\gamma)-\beta)^2}{2\gamma(1-\gamma)} \leq c.$$
    Now observe that $(s_1^\star+\beta)\gamma - \beta \geq (s_2^\star(\gamma) + \beta) \gamma - \beta = (1-\gamma+\beta)\gamma - \beta = (1-\beta)(1-\gamma) > 0$. Keeping $c$ fixed, as we raise $\beta$, $s_1^\star$ decreases. Thus, the l.h.s. above is a decreasing function of $\beta$. Since $\beta \leq c$, and $s_1^\star = 1-c$ when $\beta = c$, we get:
    $$\frac{1-c}{2} + \frac{(\gamma-c)^2}{2\gamma(1-\gamma)} \leq c.$$
    Simplifying the above, we get: 
    $$ 3c\gamma^2 - \gamma(5c+1) + c^2 \leq 0.$$
    When $\gamma = \gamma_2$, $1-\gamma \leq s_1^\star \gamma \leq \gamma$. This implies that $\gamma \geq 1/2$. It is easy to verify that the l.h.s. above is an increasing function of $\gamma$ when $\gamma \in [1/2,1]$. Indeed, the derivative w.r.t. $\gamma$  of the l.h.s. above is $6c \gamma - 5c +1 \geq 3c - 5c + 1 = 1-2c > 0$ because $c < 1/2$.  Therefore, substituting $\gamma = 1/2$ above, we get 
    $$c^2  -  \frac{7c}{4} + \frac{1}{2} \leq 0,$$
    which is not possible if $c < 0.36$. Thus, we see that $\gamma_2 \notin (\gamma_3, \gamma_4)$. This shows that $\gamma_2 \leq \gamma_3$, and completes the proof of the theorem. 
\end{proof}
\noindent
Using these regime-characterizations together with the probability computations from Section~\ref{sec:mainresults}
for $\Pr[\gamma v_{i1}+(1-\gamma)v_{i2}\ge s_2 \;\wedge\; v_{i1}<s_1]$, we obtain a piecewise set of 
equations that determine $s_2^\star$ under additive bias:
\begin{restatable}[\bf Equations for \boldmath{\(s_2^\star(\gamma)\)}]{theorem}{stwoequationhighbetaadditive}
\label{prop:s2equations_new}
Assume \(\beta \leq c\) and \(c < c_0 =   0.36 \). Then \(s_2^\star(\gamma)\) satisfies:
\begin{itemize}
    \item \([0, \gamma_1]\):  
    \(s_1^\star - \tfrac{2s_1^\star s_2^\star - \gamma (s_1^\star)^2}{2(1 - \gamma)} + s_1^\star + \beta - \tfrac{2(s_1^\star+\beta) (s_2^\star+\beta) - \gamma (s_1^\star+\beta)^2}{2(1 - \gamma)} = c\)
    \item \([\gamma_1, \gamma_2]\):  
    \(s_1^\star - \tfrac{2s_1^\star s_2^\star - \gamma (s_1^\star)^2}{2(1 - \gamma)} + \tfrac{(1 - \gamma - (s_2^\star+\beta) + \gamma (s_1^\star+\beta))^2}{2\gamma(1 - \gamma)} = c\)
    \item \([\gamma_2, \gamma_3]\):  
    \(\tfrac{(1 - \gamma - s_2^\star + \gamma s_1^\star)^2}{2\gamma(1 - \gamma)} + \tfrac{(1 - \gamma - (s_2^\star+\beta) + \gamma (s_1^\star+\beta))^2}{2\gamma(1 - \gamma)} = c\)
    \item \([\gamma_3, \gamma_4]\):  
   \(s_1^\star - \tfrac{s_2^\star}{\gamma} + \tfrac{1-\gamma}{2\gamma} + \tfrac{(1 - \gamma - (s_2^\star+\beta) + \gamma (s_1^\star+\beta))^2}{2\gamma(1 - \gamma)} = c\)
    \item \([\gamma_4, 1]\):  \(2s_1^\star + \beta - \tfrac{2s_2^\star+\beta}{\gamma}  + \tfrac{1 - \gamma}{\gamma} = c\)
\end{itemize}
\end{restatable}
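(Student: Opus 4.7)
The equations in Theorem~\ref{prop:s2equations_new} all arise by instantiating the capacity constraint~\eqref{eq:twoinst2_1} at Institution~2 once we know, in each $\gamma$-subinterval, which geometric case (I--IV) governs the probability $\Pr[\hat u_{i1}<s_1^\star,\,\hat u_{i2}\ge s_2^\star]$ for group~$G_1$, and which analogous case (I\,'--IV\,') governs the corresponding probability for group~$G_2$. So the proof is essentially bookkeeping driven by Theorem~\ref{thm:gammathreshadditive}, with one key translation: under additive bias, the boundary for $G_2$ is obtained from the $G_1$ boundary by the substitutions $s_1^\star\mapsto s_1^\star+\beta$ and $s_2^\star\mapsto s_2^\star+\beta$ (rather than the multiplicative rescalings used in Section~\ref{sec:mainresults}). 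This follows because a $G_2$ candidate is selected by Institution~2 iff $\gamma v_{i'1}+(1-\gamma)v_{i'2}-\beta\ge s_2^\star$ and $v_{i'1}-\beta<s_1^\star$, i.e., iff $\gamma v_{i'1}+(1-\gamma)v_{i'2}\ge s_2^\star+\beta$ and $v_{i'1}<s_1^\star+\beta$.

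\textbf{Regime identification.} I would open by invoking Theorem~\ref{thm:gammathreshadditive} and $s_2^\star\le s_1^\star$ (the additive analogue of \Cref{fact:s2leqs1}, whose proof transfers verbatim) to pin down, interval by interval, which pair of cases applies. Concretely: on $[0,\gamma_1]$, both $s_2^\star\le 1-\gamma$ and $s_2^\star+\beta\le 1-\gamma$, while $\gamma\le\gamma_3\le\gamma_4$ gives $s_2^\star\ge\gamma s_1^\star$ and $s_2^\star+\beta\ge\gamma(s_1^\star+\beta)$; this places both groups in Case~III / III\,'. On $[\gamma_1,\gamma_2]$, the $G_2$ line has crossed above $y=1$ but the $G_1$ line has not, and the ordering $\gamma\le\gamma_3$ keeps us in Case~III for $G_1$ and Case~II\,' for $G_2$. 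On $[\gamma_2,\gamma_3]$, both lines are in the II/II\,' configuration. On $[\gamma_3,\gamma_4]$, we pass $\gamma_3$, so $G_1$ transitions from II to IV while $G_2$ remains in II\,'. On $[\gamma_4,1]$, both groups are in Case~IV/IV\,'.

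\textbf{Substitution and simplification.} In each of the five subintervals I would substitute the relevant closed-form expressions for $\Pr[\hat u_{i1}<s_1^\star,\,\hat u_{i2}\ge s_2^\star]$ from the Case I--IV formulas of Section~\ref{sec:mainresults}, apply the additive shift for the $G_2$ term, and sum the two contributions, setting the total to $c$. For example, on $[0,\gamma_1]$ the Case~III formula gives $s_1^\star-\tfrac{2s_1^\star s_2^\star-\gamma(s_1^\star)^2}{2(1-\gamma)}$ for $G_1$, and the shifted Case~III\,' formula gives $(s_1^\star+\beta)-\tfrac{2(s_1^\star+\beta)(s_2^\star+\beta)-\gamma(s_1^\star+\beta)^2}{2(1-\gamma)}$ for $G_2$; adding and equating to $c$ yields the first stated equation. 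The other four cases are analogous.

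\textbf{Main obstacle.} The routine manipulations are painless once the regime table is correct; the genuine work is in Theorem~\ref{thm:gammathreshadditive} itself (already established), which guarantees that the ordering $\gamma_1\le\gamma_2\le\gamma_3\le\gamma_4$ holds and that the four transitions exhaust the interval $[0,1]$. The mildly new feature compared with Theorem~\ref{prop:s2equations} is the appearance of the intermediate interval $[\gamma_3,\gamma_4]$ with a mixed Case~(IV,\,II\,') regime, which has no counterpart in the multiplicative setting because there the two boundary lines always crossed $y=1$ and $x=s_1^\star$ simultaneously. I would therefore devote a short paragraph to verifying this mixed regime explicitly, confirming that the Case~IV area expression for $G_1$ and the Case~II\,' area expression for $G_2$ are both valid throughout $[\gamma_3,\gamma_4]$ by checking the defining inequalities on $s_2^\star$, $s_2^\star+\beta$, $\gamma s_1^\star$, and $\gamma(s_1^\star+\beta)$ against the thresholds of Theorem~\ref{thm:gammathreshadditive}.
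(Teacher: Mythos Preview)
Your proposal is correct and follows essentially the same approach as the paper: invoke Theorem~\ref{thm:gammathreshadditive} to determine which case pair (III,III'), (III,II'), (II,II'), (IV,II'), (IV,IV') applies on each subinterval, apply the additive shift $s_1^\star\mapsto s_1^\star+\beta$, $s_2^\star\mapsto s_2^\star+\beta$ for the $G_2$ probability, and plug into the capacity constraint. The paper's proof is in fact terser than yours---it works out only the $[0,\gamma_1]$ case explicitly and dismisses the rest with ``other cases can be handled similarly''---so your proposal, including the explicit discussion of the new mixed (IV,\,II') regime on $[\gamma_3,\gamma_4]$, is if anything more thorough.
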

\begin{proof}
    The proof follows along the same lines as that of~\Cref{prop:s2equations}. For example, when $\gamma \leq \gamma_1$,~\Cref{thm:gammathreshadditive} shows that we are in cases III and III'. Therefore, for a candidate $i \in G_1$, 
    $$\Pr[\hu_{i1} < s_1^\star, \hu_{i2} \geq s_2^\star] =  \frac{s_1^\star}{2} \left(2- \frac{2s_2^\star-\gamma s_1^\star}{1-\gamma} \right).$$
    Similarly, for a candidate $i \in G_2$, 
     $$\Pr[\hu_{i1} < s_1^\star, \hu_{i2} \geq s_2^\star] =  \frac{s_1^\star+\beta}{2} \left(2- \frac{2(s_2^\star+\beta)-\gamma (s_1^\star+\beta)}{1-\gamma} \right).$$
    The equation for $s_2^\star(\gamma)$ now follows from~\eqref{eq:twoinst2}. Other cases can be handled similarly. 
\end{proof}
\noindent
We can use the equations in~\Cref{prop:s2equations_new} to solve for the threshold \(s_2^\star\). 
Once \(s_1^\star\) and \(s_2^\star\) are determined, expression~\eqref{eq:Rformula} yields the representation ratio~\(\mathcal{R}\).  We omit the details.

These results show that the geometric and analytical structure underlying the multiplicative-bias model 
extends naturally to the additive setting, providing a unified framework for quantifying 
how evaluator bias influences equilibrium thresholds and representation outcomes.

\section{The setting where \boldmath{$\beta < 1 - c$}}
\label{sec:beta_low}

In Section~\ref{sec:mainresults}, we analyzed the setting where the bias parameter satisfies $\beta \geq 1 - c$. This condition ensures that Institution~1 admits at least some candidates from the disadvantaged group $G_2$, and the analysis led to a clean three-phase structure characterized by the thresholds $\gamma_1, \gamma_2, \gamma_3$ with a fixed ordering.

In this section, we depart from that setting and study the regime where $\beta < 1 - c$. This regime is qualitatively distinct and exhibits complex structural behavior in the equilibrium thresholds and selection dynamics. To analyze this, we again use a finite-to-infinite reduction framework analogous to \Cref{sec:continuum}, deriving mean-field equations for thresholds and examining their qualitative behavior as a function of the correlation parameter $\gamma$.

\begin{table}[h]
\centering
\renewcommand{\arraystretch}{1.3}
\scriptsize 
\begin{tabular}{|p{0.30\textwidth}|p{0.30\textwidth}|p{0.30\textwidth}|}
\hline
\textbf{Structural property} & \textbf{Result for $\beta < 1 - c$} & \textbf{Result for $\beta \geq 1 - c$} \\
\hline
Closed-form expression for $s_1^\star$ & $s_1^\star = 1 - c$  (Eq.~\eqref{eq:s1lowbeta}) & $s_1^\star = \frac{2 - c}{1 + 1/\beta}$ (Eq.~\eqref{eq:s1}) \\
\hline
Monotonicity of $s_1^\star$ w.r.t.\ $\beta$ & Independent of $\beta$ (Eq.~\eqref{eq:s1lowbeta}) & Increasing in $\beta$ (follows from Eq.~\eqref{eq:s1}) \\
\hline
Regimes of $\gamma$ & $\gamma_1 \leq \gamma_2 \leq \gamma_3 \leq \gamma_4$ or $\gamma_1 \leq \gamma_3 \leq \gamma_2 \leq \gamma_4$ (\Cref{thm:gammathresholdlowbeta}, \Cref{prop:betac}) & $\gamma_1 \leq \gamma_2 \leq \gamma_3$ (\Cref{thm:gammathreshold}) \\
\hline
Piecewise expression for $s_2^\star$ & Characterized via $\gamma_1, \ldots, \gamma_4$ (\Cref{prop:s2equationslow}) & Characterized via $\gamma_1, \gamma_2, \gamma_3$ (\Cref{prop:s2equations}) \\
\hline
Variation of $s_2^\star$ w.r.t.\ $\gamma$ & Continuous, possibly with multiple local extrema (\Cref{thm:s2lowbeta}) & Continuous, with at most one minimum (\Cref{thm:s2highbeta}) \\
\hline
Monotonicity of $s_2^\star$ w.r.t.\ $\beta$ & Increasing in $\beta$ (\Cref{prop:monotones1beta}) & Increasing in $\beta$ (\Cref{prop:monotones1beta}) \\
\hline
Monotonicity of $\mathcal{R}$ w.r.t.\ $\beta$ & Increasing in $\beta$ for fixed $\gamma$ (\Cref{prop:monotonesrepbeta}) & Increasing in $\beta$ for fixed $\gamma$ (\Cref{prop:monotonesrepbeta}) \\
\hline
Monotonicity of $\mathcal{R}$ w.r.t.\ $\gamma$ & Non-monotonic in $\gamma$ for fixed $\beta$ (\Cref{fact:midbetacaseone}) & Non-decreasing in $\gamma$ for fixed $\beta$ (\Cref{cor:monotone-gamma}) \\
\hline
\end{tabular}
\caption{Comparison of structural properties of thresholds and representation metrics between the regimes $\beta < 1 - c$ and $\beta \geq 1 - c$.}
\label{tab:summary-structural-compare}
\end{table}

\noindent
Our main findings in the $\beta < 1 - c$ regime are summarized above and contrasted with the $\beta \geq 1 - c$ case in \Cref{tab:summary-structural-compare}.

\begin{itemize}
\item The number of $\gamma$-thresholds increases from three to four: $\gamma_1, \gamma_2, \gamma_3, \gamma_4$. Moreover, unlike the $\beta \geq  1 - c$ case, the ordering of these thresholds is no longer fixed, but instead depends sensitively on the value of $\beta$.
\item We establish the existence of a critical value $\beta_c$ such that for $\beta < \beta_c$, the thresholds follow the order $\gamma_1 < \gamma_3 < \gamma_2 < \gamma_4$, whereas for $\beta > \beta_c$, the order is $\gamma_1 < \gamma_2 < \gamma_3 < \gamma_4$.
\item The threshold $s_2^\star(\gamma)$ may no longer be unimodal. Unlike the earlier regime where $s_2^\star(\gamma)$ had a single minimum, we show that it may have multiple local minima and maxima. This non-monotonicity also manifests in the representation ratio $\mathcal{R}(\beta, \gamma)$, which need not be monotonic in $\gamma$.
\end{itemize}
This section is organized as follows:
\begin{itemize}
\item In \Cref{sec:appthresh}, we generalize the notion of $\gamma$-thresholds and formally state the structural result as {\Cref{thm:gammathresholdlowbeta}} and {\Cref{prop:gammaorderlowbeta}}. We also define and characterize the critical value $\beta_c$ in {\Cref{prop:betac}} and provide lower bounds on its value in {\Cref{prop:betaclower}}.
\item In \Cref{sec:apps2variation}, we characterize the piecewise structure of $s_2^\star(\gamma)$ using {\Cref{prop:s2equationslow}}, and show how it varies across the threshold intervals. We also prove in {\Cref{thm:s2lowbeta}} that $s_2^\star(\gamma)$ exhibits multiple local extrema.
 Finally, we demonstrate in {\Cref{fact:midbetacaseone}} that the representation ratio $\mathcal{R}(\gamma)$ can decrease with increasing $\gamma$, in contrast to the monotonic increase observed in the $\beta \geq 1 - c$ regime.
\end{itemize}

\subsection{Expression for \boldmath{$s_1^\star$}}
When $\beta < 1-c$, then $s_1^\star \geq \beta$. Indeed, otherwise the fraction of candidates from $G_1$ selected by institution~1 would be at least $1-s_1^\star \geq 1-\beta \geq c$, which is a contradiction. Thus, $s_1^\star \geq \beta$, and hence, $s_{1,\beta}^\star = 1$. This implies that no candidate from $G_2$ gets selected. Thus, we have the following equation for institution~1: $1-s_1^\star = c$, which implies
\begin{align}
    \label{eq:s1lowbeta}
    s_1^\star = 1-c. 
\end{align}
\subsection{Existence of \boldmath{$\gamma$}-thresholds}
\label{sec:appthresh}

We first show the existence of $\gamma$ thresholds.

\begin{restatable}[\bf {Existence of four \boldmath{$\gamma$}-thresholds}]{theorem}{gammathreshold2}  \label{thm:gammathresholdlowbeta} 
    Assume $c < 1/2$ and $\beta > \frac{1-2c}{1-c}. $ Then, there exist unique values \(\gamma_1, \gamma_2, \gamma_3, \gamma_4 \in [0,1]\) such that for any \(\gamma \in [0,1]\), the following hold:
    \begin{enumerate}
        \item[(i)] \(s_{2,\beta}^\star (\gamma) \leq (1-\gamma)\) if and only if \(\gamma \leq \gamma_1\).
        \item[(ii)] \(s_{2}^\star (\gamma) \leq (1-\gamma)\) if and only if \(\gamma \leq \gamma_2\).
        \item[(iii)] \(s_{2}^\star (\gamma) \geq \gamma s_1^\star\) if and only if \(\gamma \leq \gamma_3\).
        \item[(iv)] \(s_{2,\beta}^\star (\gamma) \geq \gamma s_{1,\beta}^\star\) if and only if \(\gamma \leq \gamma_4\).
    \end{enumerate}
    In case $1-2c \leq \beta < \frac{1-2c}{1-c}$, the thresholds $\gamma_2, \ldots, \gamma_4$ exist with the above mentioned properties. When $\beta < 1-2c$, the thresholds $\gamma_2, \gamma_3$ with the above-mentioned properties exist. 
\end{restatable}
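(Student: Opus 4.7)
The plan is to adapt the argument of \Cref{thm:gammathreshold} to the low-$\beta$ regime, where \eqref{eq:s1lowbeta} gives $s_1^\star = 1-c$ and $s_{1,\beta}^\star = 1$ (no candidate from $G_2$ reaches Institution~1). The backbone is the same pair of monotonicity facts: for any fixed $\beta$, both $\frac{s_2^\star(\gamma)}{\gamma}$ and $\frac{s_{2,\beta}^\star(\gamma)}{\gamma}$ are strictly decreasing in $\gamma$, while $\frac{s_2^\star(\gamma)}{1-\gamma}$ and $\frac{s_{2,\beta}^\star(\gamma)}{1-\gamma}$ are strictly increasing. The coupling argument used to prove these in \Cref{thm:gammathreshold} does not invoke $\beta \ge 1-c$ anywhere, so it transfers verbatim; continuity of $s_2^\star(\gamma)$ in $\gamma$ likewise follows from continuity of the uniform distribution. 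Once these are in place, uniqueness of each $\gamma_j$ is immediate, and existence reduces to checking a single crossing of a continuous monotone function via the intermediate-value theorem.

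The boundary computations at $\gamma = 0$ and $\gamma = 1$ supply the data needed for that crossing test. At $\gamma = 0$, the attributes $v_{i1}$ and $v_{i2}$ are independent, so \eqref{eq:twoinst2_1} reduces to $(1-c)(1-s_2^\star) + (1 - s_{2,\beta}^\star) = c$; a short case split on whether $s_{2,\beta}^\star$ saturates at $1$ yields
\[
s_2^\star(0) = \begin{cases} \dfrac{2(1-c)\beta}{(1-c)\beta + 1}, & \beta \ge \tfrac{1-2c}{1-c},\\[4pt] \dfrac{1-2c}{1-c}, & \beta < \tfrac{1-2c}{1-c}.\end{cases}
\]
At $\gamma = 1$ we have $\hu_{i2} = v_{i1}$, so \eqref{eq:twoinst2_1} becomes $(s_1^\star - s_2^\star) + (1 - s_{2,\beta}^\star) = c$, giving $s_2^\star(1) = \frac{2(1-c)\beta}{1+\beta}$ when $\beta \ge 1-2c$ and $s_2^\star(1) = 1-2c$ otherwise. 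The two breakpoints $(1-2c)/(1-c)$ and $1-2c$ appearing in the hypothesis are precisely the values of $\beta$ at which $s_{2,\beta}^\star(0) = 1$ and $s_{2,\beta}^\star(1) = 1$, respectively.

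With these endpoint values, each threshold becomes the unique crossing of a monotone function: $\gamma_1$ and $\gamma_2$ come from setting the increasing ratios $\frac{s_{2,\beta}^\star(\gamma)}{1-\gamma}$ and $\frac{s_2^\star(\gamma)}{1-\gamma}$ equal to $1$; $\gamma_3$ and $\gamma_4$ come from setting the decreasing ratios $\frac{s_2^\star(\gamma)}{\gamma}$ and $\frac{s_{2,\beta}^\star(\gamma)}{\gamma}$ equal to $s_1^\star = 1-c$ and $s_{1,\beta}^\star = 1$. A direct check on the formulas above shows $s_2^\star(0) < 1$ and $s_2^\star(1) < 1-c$ for every $\beta < 1-c$, so $\gamma_2$ and $\gamma_3$ always exist. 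By contrast, $\gamma_1$ requires $s_{2,\beta}^\star(0) < 1$, i.e., $\beta > (1-2c)/(1-c)$, and $\gamma_4$ requires $s_{2,\beta}^\star(1) < 1$, i.e., $\beta > 1-2c$. This produces exactly the three regimes listed in the statement: all four thresholds under the first hypothesis, only $\gamma_2,\gamma_3,\gamma_4$ in the middle regime, and only $\gamma_2,\gamma_3$ in the lowest.

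The main obstacle will be the bookkeeping around the saturation $s_{2,\beta}^\star = \min(1, s_2^\star/\beta)$. When $\beta$ is small, $s_{2,\beta}^\star$ can remain pinned at $1$ over a subinterval of $\gamma$, so the monotonicity arguments for ratios involving $s_{2,\beta}^\star$ must be run on the unsaturated region only, and strictness (hence uniqueness) must be verified there. Once these saturation zones are correctly identified from the endpoint formulas above and one checks that each relevant ratio remains strictly monotone on the interval containing its crossing, the intermediate-value argument is routine and the characterizations (i)--(iv) follow.
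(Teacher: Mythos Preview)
Your proposal is correct and follows essentially the same route as the paper: invoke the coupling argument from \Cref{thm:gammathreshold} to get strict monotonicity of the four ratios, then apply the intermediate-value theorem after checking endpoint values at $\gamma=0$ and $\gamma=1$. Your write-up is in fact more explicit than the paper's own proof, which simply asserts that the monotonicity arguments ``follow from the same arguments'' and checks the boundary inequalities without deriving the closed-form endpoint values you compute; your identification of the saturation issue for $s_{2,\beta}^\star$ is a legitimate technical wrinkle that the paper glosses over, but as you note it is handled by observing that $1/\gamma$ and $1/(1-\gamma)$ remain strictly monotone on any saturated subinterval.
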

\begin{proof}
    Monotonicity of $\frac{s_{2,\beta}^\star}{1-\gamma},\frac{s_{2}^\star}{\gamma}, \frac{s_{2,\beta}^\star}{\gamma}$ and $ \frac{s_{2,\beta}^\star}{\gamma} $ follows from the same arguments as in the proof of~\Cref{thm:gammathreshold}. However, we need to now show that these ratios achieve the desired values. 

    When $\gamma=0$, $\frac{s_{2,\beta}^\star(0)}{1-\gamma} = s_{2,\beta}^\star(0) = \frac{s_2^\star(0)}{\beta} \leq 1$ if $\gamma \geq \frac{1-2c}{1-c}$. As $\gamma$ approaches $1$, $\frac{s_{2,\beta}^\star(0)}{1-\gamma}$ approaches infinity. Therefore, intermediate value theorem and monotonicity of $\frac{s_{2,\beta}^\star}{1-\gamma}$ shows that this ratio equals $1$ for a unique $\gamma$, which we denote $\gamma_1$. In case $\beta < \frac{1-2c}{1-c}$, $s_{2,\beta}^\star(0) =1$, and hence the ratio $\frac{s_{2,\beta}^\star}{1-\gamma}$ stays above $1$ for all $\gamma \in [0,1]$. 

    Since $s_2^\star(0) \leq 1$, the ratio $\frac{s_{2}^\star}{1-\gamma}$  is at most $1$ at $\gamma=0$. Therefore, $\gamma_2$ exists for the entire range of $\beta$. Similarly, the ratio $\frac{s_{2}^\star}{\gamma}$ approaches infinity as $\gamma$ approaches $0$, and is equal to $s_2^\star(1) \leq s_1^\star$ (using~\Cref{fact:s2leqs1}) when $\gamma=1$. Thus, the threshold $\gamma_3$ also exists for the entire range of $\beta$. 

    When $\beta \geq 1-c$, the condition for $\gamma_3$ is the same as that for $\gamma_4$. Hence, assume $\beta < 1-c$. In this case, $s_{1,\beta}^\star = 1$. If $\beta \geq 1-2c$, $s_2^\star(1) \leq \beta$. Thus, the condition for $\gamma_4$ becomes: $s_2^\star(\gamma) \geq \beta \gamma$. Since $\frac{s_2^\star}{\gamma} \leq \beta$ at $\gamma=1$ and approaches infinity at $\gamma=0$, existence of $\gamma_4$ follows. Observe that when $\beta < 1-2c$, the only value of $\gamma$ satisfying~(iv) is $\gamma=1$, and hence, $\gamma_4$ does not reveal any useful information. 
\end{proof}
\noindent
For sake of completeness, we shall set $\gamma_1$ to $0$ when $\beta < \frac{1-2c}{1-c}$ and $\gamma_4$ to $1$ when $\beta < 1-2c$. We now show that the only possible orderings of these thresholds are (i) $\gamma_1, \gamma_2, \gamma_3, \gamma_4$, or (ii) $\gamma_1, \gamma_3, \gamma_2, \gamma_4$. Monotonicity properties of $\frac{s_2^\star(\gamma)}{\gamma}$ and $\frac{s_2^\star(\gamma)}{1-\gamma}$ show that $\gamma_1 \leq \gamma_2$ and $\gamma_3 \leq \gamma_4$. We now show that $\gamma_1$ is the smallest and $\gamma_4$ is the largest  among $\gamma_1, \ldots, \gamma_4$. 

\begin{proposition}[\bf Extremal ordering of thresholds]
    \label{prop:gammaorderlowbeta}
    Assume $c < 1/2$ and $\beta < 1-c$. Then $\gamma_1 \leq \gamma_j$ for all $j \in \{1,\ldots,4\}$ and $\gamma_4 \geq \gamma_j$ for all $j \in \{1,\ldots,4\}$. 
\end{proposition}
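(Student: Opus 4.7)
The plan is to combine the ratio-monotonicity facts already established in the proof of \Cref{thm:gammathresholdlowbeta} with the capacity equation~\eqref{eq:twoinst2_1}, using a proof by contradiction for each of the two remaining cross-bounds. From monotonicity and the regime facts $s_{2,\beta}^\star \ge s_2^\star$ and $s_{1,\beta}^\star = 1 \ge s_1^\star = 1-c$, the inequalities $\gamma_1 \le \gamma_2$ and $\gamma_3 \le \gamma_4$ are immediate from the definitions in~\Cref{thm:gammathresholdlowbeta}. Hence it suffices to show $\gamma_1 \le \gamma_3$ and $\gamma_2 \le \gamma_4$, since together with the two immediate inequalities these force $\gamma_1$ to be the minimum and $\gamma_4$ the maximum of the four thresholds.

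For $\gamma_1 \le \gamma_3$, I would assume toward contradiction that $\gamma_3 < \gamma_1$. At $\gamma=\gamma_3$ the definition gives $s_2^\star(\gamma_3) = \gamma_3 s_1^\star = \gamma_3(1-c)$, while $\gamma_3 < \gamma_1$ yields $s_{2,\beta}^\star(\gamma_3) < 1-\gamma_3 < 1$ so that $s_{2,\beta}^\star(\gamma_3) = s_2^\star(\gamma_3)/\beta$. These two identities combine to $\tfrac{\gamma_3}{1-\gamma_3} < \tfrac{\beta}{1-c}$. The ordering $\gamma_3 \le \gamma_4$ then places $G_2$ in case~III' at $\gamma_3$, while the boundary condition $s_2^\star = \gamma s_1^\star$ puts $G_1$ at the case~I/III boundary (or II/IV boundary), both of which share the formula $s_1^\star - \gamma (s_1^\star)^2/(2(1-\gamma))$. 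Plugging these into~\eqref{eq:twoinst2_1} yields the closed form $\tfrac{\gamma_3}{1-\gamma_3} = \tfrac{4(1-c)\beta}{2(1-c) - c(2-c)\beta}$; combining with the strict inequality above gives $c(2-c)\beta < 2(1-c)(2c-1)$, whose right-hand side is negative for $c < 1/2$ while the left is positive, a contradiction.

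For $\gamma_2 \le \gamma_4$, I would run an analogous contradiction argument, assuming $\gamma_4 < \gamma_2$. Then at $\gamma=\gamma_2$ we have $s_2^\star(\gamma_2) = 1-\gamma_2$ and $s_{2,\beta}^\star(\gamma_2) < \gamma_2 s_{1,\beta}^\star = \gamma_2$, which combine to $1-\gamma_2 < \beta\gamma_2$, i.e.\ $\gamma_2 > 1/(1+\beta)$. The geometry for $G_1$ splits according to the sign of $\gamma_2 s_1^\star - (1-\gamma_2)$, equivalently $\gamma_2 \lessgtr 1/(2-c)$, placing $G_1$ in case~II or case~IV at the boundary $s_2 = 1-\gamma$; the saturation scenario $s_{2,\beta}^\star(\gamma_2)=1$ is ruled out because it would require $\gamma_2 < 1-\beta$, incompatible with $\gamma_2 > 1/(1+\beta)$ under $\beta < 1-c$, $c < 1/2$, so $G_2$ must sit in case~IV' with $s_{2,\beta}^\star = (1-\gamma_2)/\beta$. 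In the case~IV subcase for $G_1$ the capacity equation collapses to $\gamma_2 = 1/(1+2\beta(1-c))$, and imposing $\gamma_2 > 1/(1+\beta)$ forces $c > 1/2$; in the case~II subcase, substituting $u := (1-c)\gamma_2/(1-\gamma_2)$ produces the quadratic $\beta u^2 + 2\beta u + (\beta-2) = 0$ with positive root $u = \sqrt{2/\beta}-1$, and requiring $u > (1-c)/\beta$ gives $\sqrt{2\beta} > 1-c+\beta$; squaring and applying $(1-c)^2 + \beta^2 \ge 2(1-c)\beta$ again forces $c > 1/2$, a contradiction.

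The main obstacle is the bookkeeping in the second part: one must enumerate which geometric case applies for each group at $\gamma_2$, rule out threshold saturation, and verify that both algebraic subcases independently force $c > 1/2$. The first part, by contrast, is a short single computation once the correct pair of cases is identified. A subsidiary technical point in the $\gamma_1 \le \gamma_3$ argument is confirming that the denominator $2(1-c) - c(2-c)\beta$ is positive (which it must be since $\gamma_3/(1-\gamma_3) > 0$), ensuring that the derived contradiction inequality is not vacuously satisfied.
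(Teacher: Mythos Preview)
Your proposal is correct and follows essentially the same contradiction strategy as the paper: for each of the two nontrivial cross-bounds, you locate the relevant threshold on the boundary of its defining relation, identify the geometric cases for each group, substitute into the capacity equation~\eqref{eq:twoinst2_1}, and derive an inequality contradicting $c<1/2$. The computation for $\gamma_1\le\gamma_3$ is identical to the paper's (your closed form $\gamma_3/(1-\gamma_3)=4(1-c)\beta/[2(1-c)-c(2-c)\beta]$ is just a rearrangement of the paper's expression for $\gamma_3$).

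The one substantive difference is organizational. In your second part you treat $\gamma_2\le\gamma_4$ independently and split the geometry for $G_1$ into a ``case~IV'' and a ``case~II'' subcase according to $\gamma_2\gtrless 1/(2-c)$. The paper instead uses the already-established $\gamma_1\le\gamma_3$ (together with $\gamma_3\le\gamma_4$) to force the full hypothetical ordering $\gamma_1\le\gamma_3\le\gamma_4<\gamma_2$, which automatically places $G_1$ in case~I (equivalently your case~IV boundary) throughout $[\gamma_4,\gamma_2]$, avoiding any split. In fact your case~II subcase is vacuous: you have already derived $\gamma_2>1/(1+\beta)$, and since $\beta<1-c$ gives $1+\beta<2-c$, this forces $\gamma_2>1/(2-c)$, so the case~II branch never occurs. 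Your quadratic analysis of that branch is correct but unnecessary; noting the inequality $1/(1+\beta)>1/(2-c)$ upfront would collapse your argument to the paper's single computation $\gamma_2=1/(1+2\beta(1-c))$.
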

\begin{proof}
    Assume for the sake of contradiction that $\gamma_1$ is not the smallest among these thresholds. Then, the only other choice is $\gamma_3$ is the smallest, i.e., $\gamma_3 < \gamma_1$. It follows that $\beta > 1-2c$, otherwise $\gamma_1 = 0$.   For any $\gamma \in [0, \gamma_3]$, 
    $$\gamma s_1^\star \leq s_2^\star(\gamma) \leq \beta(1-\gamma),$$
    because $\gamma_3 < \gamma_1$ (and hence, $s_{2,\beta}^\star= \frac{s_2^\star}{\beta}$). 
    Substitution $\gamma = \gamma_3$ above, we get
    \begin{align}
        \label{eq:rel1}
        \gamma_3 \leq \frac{\beta}{\beta+s_1^\star}.
    \end{align}
    Now for $\gamma \in [0, \gamma_3]$, we are in case $III$ and $III'$. Therefore, $s_2^\star$ satisfies: 
    \begin{align}
    \label{eq:smallbetathresh1}
    s_1^\star - \frac{s_1^\star(2s_2^\star-\gamma s_1^\star)}{2(1-\gamma)} + 1 - \frac{2s_2^\star/\beta - \gamma}{2(1-\gamma)} = c.
    \end{align}
    Substituting $\gamma = \gamma_3$ and $s_2^\star = s_1^\star \gamma_3$ in the equation above, we get:
    $$s_1^\star - \frac{(s_1^\star)^2 \gamma_3}{2(1-\gamma_3)} + 1 - \frac{2\gamma_3 s_1^\star/\beta-\gamma_3}{2(1-\gamma_3)} = c.$$
  Solving for $\gamma_3$ and using the fact that $c = 1-s_1^\star$, we get 
  $$\gamma_3 = \frac{4s_1^\star}{4s_1^\star +(s_1^\star)^2 + 2s_1^\star/\beta - 1} \leq \frac{\beta}{s_1^\star+\beta},$$
  where the second inequality follows from~\eqref{eq:rel1}. Simplifying the above, we get:
  $$4(s_1^\star)^2 - 2s_1^\star \leq \beta((s_1^\star)^2-1).$$
  Now, the r.h.s. above is negative, but the l.h.s. is positive because $s_1^\star = 1-c \geq 1/2.$ This leads to a contradiction, and hence, it must be the case that $\gamma_1 \leq \gamma_3$. 

   Now we proceed to show that $\gamma_4$ is the largest among $\gamma_1, \ldots, \gamma_4$. Suppose not. Then, $\gamma_2$ must be the largest value (since $\gamma_3 \leq \gamma_4$). Thus, the ordering of these thresholds would be 
   $\gamma_1, \gamma_3, \gamma_4, \gamma_2$. 
   In the interval $\gamma \in [\gamma_4, \gamma_2]$, we are in the case $I, IV'$ (and $s_2^\star \leq \gamma \beta \leq \beta$). Thus, $s_2^\star$ satisfies:
  $$s_1^\star - \frac{(s_2^\star)^2}{2\gamma(1-\gamma)} + 1 - \frac{s_2^\star}{\beta \gamma} + \frac{1-\gamma}{2\gamma} = c.$$
  When $\gamma = \gamma_2$, $s_2^\star = 1-\gamma_2$. Substituting this fact above and solving for $\gamma_2$, we get 
  $$\gamma_2 = \frac{1}{1+2s_1^\star \beta}.$$
  Since $\gamma_4 < \gamma_2$, we know that $1-\gamma_2 < \beta \gamma_2,$ i.e., $\gamma_2 > \frac{1}{1+\beta}.$ Using this fact above, we see that $s_1^\star < 1/2$, which is a contradiction because $s_1^\star = 1-c \geq 1/2$. Thus, we see that $\gamma_2 \leq \gamma_4$. This completes the proof of the desired result
\end{proof}
\noindent
It remains to decide between the orderings $\gamma_1, \gamma_2, \gamma_3, \gamma_4$ and $\gamma_1, \gamma_3, \gamma_2, \gamma_4$. It turns out that for a fixed $c$, both of these orderings can emerge as we vary $\beta$ from $1-2c$ to $1-c$, but there is a critical value of $\beta$, denoted $\beta_c$, such that for all $\beta < \beta_c$, the latter ordering occurs; and we get the former ordering when $\beta > \beta_c$. 

\begin{proposition}[\bf Existence of a critical \boldmath{$\beta_c$} where the threshold order changes]
\label{prop:betac}
Assume $c < 1/2$. Then there exists a value $\beta_c \in [0,1-c]$ such that for $1-2c \leq \beta < \beta_c$, $\gamma_1(\beta) \leq \gamma_3(\beta) \leq \gamma_2(\beta) \leq \gamma_4(\beta)$; and for all $\beta_c \leq \beta \leq 1-c$, $\gamma_1(\beta) \leq \gamma_2(\beta) \leq \gamma_3(\beta) \leq \gamma_4(\beta)$.  
\end{proposition}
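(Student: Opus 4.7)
The plan is to reduce the ordering question to the single scalar quantity $D(\beta) := \gamma_3(\beta) - \gamma_2(\beta)$. By \Cref{prop:gammaorderlowbeta}, $\gamma_1$ is the smallest and $\gamma_4$ the largest among the four thresholds on $[1-2c,1-c]$, so the relative position of $\gamma_2$ and $\gamma_3$ --- equivalently, the sign of $D$ --- is all that distinguishes the two orderings appearing in the statement. I would therefore show that $D$ is continuous and non-decreasing on $[1-2c,1-c]$, and then define $\beta_c$ as the threshold at which the sign of $D$ switches.

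For continuity, both $\gamma_2(\beta)$ and $\gamma_3(\beta)$ are implicitly defined by $s_2^\star(\gamma_2,\beta) = 1-\gamma_2$ and $s_2^\star(\gamma_3,\beta) = \gamma_3(1-c)$ respectively, using $s_1^\star = 1-c$ from~\eqref{eq:s1lowbeta}. The joint map $(\gamma,\beta)\mapsto s_2^\star(\gamma,\beta)$ is continuous: each of the piecewise formulas in \Cref{prop:s2equationslow} is smooth in its arguments, and the pieces must agree at the regime boundaries $\gamma_1,\gamma_2,\gamma_3,\gamma_4$ because $s_2^\star$ is the unique solution to the global capacity constraint~\eqref{eq:twoinst2_1}. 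A standard implicit-function argument then yields continuity of $\gamma_2(\beta)$ and $\gamma_3(\beta)$, and hence of $D$.

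For monotonicity I combine two facts already in hand. First, from \Cref{prop:monotones1beta}, $s_2^\star(\gamma,\beta)$ is non-decreasing in $\beta$ for each fixed $\gamma$. Second, as established in the proof of \Cref{thm:gammathresholdlowbeta}, $\gamma \mapsto s_2^\star(\gamma,\beta)/(1-\gamma)$ is strictly increasing and $\gamma \mapsto s_2^\star(\gamma,\beta)/\gamma$ is strictly decreasing in $\gamma$. Raising $\beta$ shifts the first ratio upward pointwise, so the equation $s_2^\star/(1-\gamma)=1$ defining $\gamma_2$ is met at a smaller argument, giving $\gamma_2(\beta)$ non-increasing in $\beta$; an analogous argument applied to $s_2^\star/\gamma = 1-c$ shows $\gamma_3(\beta)$ is non-decreasing in $\beta$. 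Hence $D$ is non-decreasing on $[1-2c,1-c]$. At the right endpoint $\beta = 1-c$, continuity with the $\beta \ge 1-c$ regime of \Cref{thm:gammathreshold} gives $\gamma_2(1-c)\le\gamma_3(1-c)$, i.e.\ $D(1-c)\ge 0$.

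Set $\beta_c := \inf\{\beta\in[1-2c,1-c] : D(\beta)\ge 0\}$, with the convention that $\beta_c = 1-2c$ if this set equals all of $[1-2c,1-c]$. By the continuity and monotonicity of $D$, this $\beta_c$ lies in $[1-2c,1-c]\subseteq[0,1-c]$ and satisfies $D(\beta)\le 0$ for $\beta<\beta_c$ and $D(\beta)\ge 0$ for $\beta\ge\beta_c$; combined with \Cref{prop:gammaorderlowbeta}, this yields the two orderings claimed in the proposition. The main obstacle is rigorously establishing joint continuity and $\beta$-monotonicity of $s_2^\star(\gamma,\beta)$ uniformly across the piecewise regimes, since the regime in which $\gamma_2(\beta)$ or $\gamma_3(\beta)$ itself lies can change as $\beta$ varies; this requires checking that the defining equations in \Cref{prop:s2equationslow} match at regime transitions and that the monotonicity in $\beta$ survives these transitions. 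Once that is in place, the sign/IVT argument producing $\beta_c$ is routine.
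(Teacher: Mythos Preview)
Your proposal is correct and follows essentially the same route as the paper: both arguments reduce to showing $\gamma_2(\beta)$ is non-increasing and $\gamma_3(\beta)$ is non-decreasing on $[1-2c,1-c]$, by combining the $\beta$-monotonicity of $s_2^\star$ (\Cref{prop:monotones1beta}) with the $\gamma$-monotonicity of the ratios $s_2^\star/(1-\gamma)$ and $s_2^\star/\gamma$, then invoking the ordering at $\beta=1-c$ from \Cref{thm:gammathreshold}. Your worry about the piecewise regimes is unnecessary, since the monotonicity facts you cite are proved directly from the global capacity equation~\eqref{eq:twoinst2_1} without reference to the piecewise formulas.
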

\noindent
Observe that if $\beta_c = 0$ above, the ordering remains $\gamma_1, \gamma_2, \gamma_3, \gamma_4$ for the entire interval $[0,1-c]$ (and hence for the interval $[0,1]$). 
\begin{proof}
    The proposition follows from the monotonicity of $\gamma_2(\beta)$ and $\gamma_3(\beta)$ as functions of $\beta$. We shall show that $\gamma_2(\beta)$ is a non-increasing function and $\gamma_3(\beta)$ is a non-decreasing function of $\beta$. The result now follows from the observation and the fact that when $\beta = 1-c$, $\gamma_1 \leq \gamma_2 \leq \gamma_3 \leq \gamma_4$ (\Cref{thm:gammathreshold}). 

    We shall treat $s_2^\star$ as a function of both $\beta$ and $\gamma$. Now observe that the ratio $\frac{s_2^\star(\beta,\gamma)}{1-\gamma}$ is monotonically increasing if we fix $\beta$ and increase $\gamma$ and monotonically non-decreasing if we fix $\gamma$ and increase $\beta$. Now, suppose there are values $\beta < \beta'$ such that 
    $\gamma_2(\beta) <  \gamma_2(\beta').$ By definition of $\gamma_2$, 
    $$\frac{s_2^\star(\beta,\gamma_2(\beta))}{1-\gamma_2(\beta)} = \frac{s_2^\star(\beta',\gamma_2(\beta'))}{1-\gamma_2(\beta')}.$$
    Since $\gamma_2(\beta) <  \gamma_2(\beta')$, monotonicity of $\frac{s_2^\star(\beta,\gamma)}{1-\gamma}$ in $\gamma$ implies that
    $$\frac{s_2^\star(\beta,\gamma_2(\beta))}{1-\gamma_2(\beta)} < \frac{s_2^\star(\beta,\gamma_2(\beta'))}{1-\gamma_2(\beta')}.$$
    Since $\beta < \beta'$, the monotonicity of $\frac{s_2^\star(\beta,\gamma)}{1-\gamma}$ in $\beta$ yields: 
    $$\frac{s_2^\star(\beta,\gamma_2(\beta))}{1-\gamma_2(\beta)} < \frac{s_2^\star(\beta',\gamma_2(\beta'))}{1-\gamma_2(\beta')},$$ which is a contradiction. 

    We can prove monotonicity of $\gamma_3(\beta)$ in a similar manner (note that $s_1^\star = 1-c$, and hence, is independent of $\beta$ when $\beta \leq 1-c$). 
\end{proof}
\noindent
We now show that for $\beta_c$ to be strictly larger than $0$, $c$ must be at least $1/3$. In other words, if $c < 1/3$, then the ordering will be $\gamma_1, \gamma_2, \gamma_3, \gamma_4$ for all $\beta \in [0,1-c]$. 
\begin{proposition}[\bf Lower bound on \boldmath{$\beta_c$}]
    \label{prop:betaclower}
    Assume $c < 1/2$ and $\beta_c > 0$. Then $c \geq 1/3$. 
\end{proposition}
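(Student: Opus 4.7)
The plan is to evaluate the capacity equation~\eqref{eq:twoinst2_1} precisely at the critical value $\beta=\beta_c$ and the corresponding $\gamma$ where the regime transition occurs, and to show that the resulting algebraic identity forces $3c-1\geq 0$.

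First, I would set up the critical configuration. In the proof of~\Cref{prop:betac}, it was shown that $\gamma_2(\beta)$ is non-increasing in $\beta$ while $\gamma_3(\beta)$ is non-decreasing. Hence, if $\beta_c>0$, continuity of both thresholds yields $\gamma_2(\beta_c)=\gamma_3(\beta_c)=:\gamma^\star$. By the defining relations (ii) and (iii) of~\Cref{thm:gammathresholdlowbeta}, at $\gamma=\gamma^\star$ we have
\[
s_2^\star(\gamma^\star)\;=\;1-\gamma^\star\;=\;\gamma^\star\,s_1^\star.
\]
Since $\beta_c<1-c$ forces $s_1^\star=1-c$ by Equation~\eqref{eq:s1lowbeta}, solving $1-\gamma^\star=\gamma^\star(1-c)$ gives
\[
\gamma^\star=\tfrac{1}{2-c},\qquad s_2^\star(\gamma^\star)=\tfrac{1-c}{2-c}.
\]

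Next, I would evaluate the capacity equation~\eqref{eq:twoinst2_1} at $(\beta,\gamma)=(\beta_c,\gamma^\star)$. Geometrically, the line $\gamma^\star x+(1-\gamma^\star)y=s_2^\star$ passes through $(0,1)$ and $(s_1^\star,0)$, so for $G_1$ the admissible region inside $[0,s_1^\star]\times[0,1]$ is a right triangle of area $s_1^\star/2=(1-c)/2$. For $G_2$, since $s_{1,\beta}^\star=1$ (because $\beta<1-c$), one verifies that case~II$'$ applies; using the corresponding formula from \Cref{sec:mainresults} with $s_{2,\beta}^\star=s_2^\star/\beta_c$ and simplifying gives
\[
\Pr\!\left[\hu_{i'1}<s_{1,\beta}^\star,\;\hu_{i'2}\geq s_{2,\beta}^\star\right]
\;=\;\frac{(1-s_2^\star/\beta_c)^2}{2\gamma^\star(1-\gamma^\star)}
\;=\;\frac{(\beta_c(2-c)-(1-c))^2}{2\beta_c^2(1-c)}.
\]
Substituting both pieces into~\eqref{eq:twoinst2_1} and rearranging yields the key identity
\[
\frac{(\beta_c(2-c)-(1-c))^2}{\beta_c^2(1-c)}\;=\;3c-1.
\]

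Finally, since the left-hand side is a ratio of non-negative quantities, the equation can hold only if $3c-1\geq 0$, giving $c\geq 1/3$. The main subtlety is justifying that case~II$'$ is the correct regime for $G_2$ at this boundary point; this is where one must carefully check that $s_2^\star/\beta_c\geq \max(\gamma^\star,1-\gamma^\star)$ using $\beta_c\leq 1-c$ and $\gamma^\star=1/(2-c)$, which is a direct verification. The remaining computations are routine algebraic manipulations of the expressions already derived in \Cref{sec:mainresults}.
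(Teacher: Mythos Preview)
Your proof is correct and follows essentially the same route as the paper: both arguments evaluate the capacity equation at $\beta=\beta_c$, $\gamma=\gamma_2(\beta_c)=\gamma_3(\beta_c)$, use $s_2^\star=1-\gamma^\star=\gamma^\star s_1^\star$ to compute the $G_1$ contribution as $s_1^\star/2=(1-c)/2$, and conclude from non-negativity of the $G_2$ contribution that $c\ge (1-c)/2$.

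One small gap worth noting: your derivation writes $s_{2,\beta}^\star=s_2^\star/\beta_c$, which presupposes $s_2^\star(\gamma^\star)\le\beta_c$; your ``direct verification'' checks $s_2^\star/\beta_c\ge\max(\gamma^\star,1-\gamma^\star)$ but not this upper bound. The paper sidesteps this by simply dropping the (non-negative) $G_2$ term and working with the inequality $c\ge s_1^\star-\tfrac{2s_1^\star s_2^\star-\gamma(s_1^\star)^2}{2(1-\gamma)}$, which holds whether or not $s_2^\star\le\beta_c$. Your argument is easily repaired the same way: if $s_2^\star>\beta_c$ then the $G_2$ term is exactly zero, the capacity equation becomes $(1-c)/2=c$, and $c=1/3$ follows directly. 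Either way the conclusion stands; the paper's inequality form is just slightly more economical.
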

\begin{proof}
    Suppose $c < 1/3$ and $\beta_c > 0$. Then consider the setting where  $\beta = \beta_c$. Here,  $\gamma_2(\beta_c)=\gamma_3(\beta_c)$.  We consider  the equation satisfied by $s_2^\star$ in the interval $[\gamma_1, \gamma_3]$. If $s_2^\star(\gamma) \leq \beta$, it satisfies:
    $$ s_1^\star -  \frac{2s_1^\star s_2^\star -\gamma (s_1^\star)^2}{2(1-\gamma)} + \frac{(1-s_2^\star/\beta)^2}{2\gamma(1-\gamma)} =c.$$
    Otherwise, it satisfies:
    $$s_1^\star -  \frac{2s_1^\star s_2^\star -\gamma (s_1^\star)^2}{2(1-\gamma)} =c.$$
    In any case, we have 
    $$ c \geq s_1^\star -  \frac{2s_1^\star s_2^\star -\gamma (s_1^\star)^2}{2(1-\gamma)}.$$
    Now, $\gamma=\gamma_2$, which is also equal to $\gamma_3$, we have $s_2^\star=1-\gamma$ and $s_2^\star = \gamma s_1^\star.$ Using both these conditions in the inequality above, we get:
    $$ c \geq s_1^\star - s_1^\star + \frac{(1-\gamma)s_1^\star}{2(1-\gamma)}=\frac{s_1^\star}{2}= \frac{1-c}{2}.$$
    This shows that $c \geq 1/3$. 
\end{proof}
\noindent
We make a few observations: 
\begin{itemize}
    \item Unlike the $\beta > 1-c$ case, we can have four distinct thresholds. 
    As we decrease $\beta$, some of these thresholds may not exist. In fact when $\beta$ goes below $\frac{1-2c}{1-c}$, threshold $\gamma_1$ does not appear. Intuitively, this means that for such small values of $\beta$, a candidate from $G_2$ cannot get admission in institute~2 based on $v_{i2}$ score alone, irrespective of the correlation parameter. Similarly $\gamma_4$ does not appear when $\beta$ goes below $1-2c$. 
    \item The relative order of the thresholds may depend on the value of $\beta$. This is again different from the $\beta \geq 1-c$ setting, where there was no such dependence on $\beta$. 
\end{itemize}
\subsection{Variation of \boldmath{$s_2^\star$} with \boldmath{$\gamma$}}
\label{sec:apps2variation}
In this section, we consider the variation of $s_2^\star$ with $\gamma$. In~\Cref{thm:s2highbeta}, we had shown that $s_2^\star$ can vary in a non-monotone manner. We consider whether such a non-monotone behavior exists in the $\beta < 1-c$ regime. 

Since the relative order of the $\gamma$-thresholds depend on whether $\beta < \beta_c$, our results (and their analysis) 
split into two corresponding cases. The following proposition follows analogously as~\Cref{prop:s2equations}.

\begin{theorem}[\bf Characterization of $s_2^\star$ for low $\beta$]
\label{prop:s2equationslow}
    Assume that \( \beta \leq 1-c \), $\beta < \beta_c$ and \(c < 1/2\). Then, assuming $s_2^\star(\gamma) \leq \beta$, it satisfies the following equations as \(\gamma\) varies from \(0\) to \(1\):
       \begin{itemize}
         \item[(i)] $[0,\gamma_1]$: 
         \begin{align} 
\label{eq:s2equationinterval1mid}
                s_1^\star -  \frac{2s_1^\star s_2^\star -\gamma (s_1^\star)^2}{2(1-\gamma)} +  1 -  \frac{2  s_2^\star/\beta -\gamma}{2(1-\gamma)} = c.
         \end{align}
         \item[(ii)] $[\gamma_1, \gamma_3]$: 
         \begin{align}
\label{eq:s2equationinterval2mid}  
 s_1^\star -  \frac{2s_1^\star s_2^\star -\gamma (s_1^\star)^2}{2(1-\gamma)} + \frac{(1-s_2^\star/\beta)^2}{2\gamma(1-\gamma)} =c.            
         \end{align}
         \item[(iii)] $[\gamma_3,\gamma_2]$:
         \begin{align}
\label{eq:s2equationinterval3mid}  
           s_1^\star - \frac{(s_2^\star)^2}{2\gamma(1-\gamma)} + \frac{(1-s_2^\star/\beta)^2}{2\gamma(1-\gamma)}  =c.
         \end{align}
         \item[(iv)] $[\gamma_2,\gamma_4]$: 
         \begin{align}
\label{eq:s2equationinterval4mid}  
           s_1^\star - \frac{s_2^\star}{\gamma} + \frac{1-\gamma}{2\gamma} + \frac{(1-s_2^\star/\beta)^2}{2\gamma(1-\gamma)}  = c.
            \end{align}
            \item[(v)] $[\gamma_4,1]$: 
             \begin{align}
\label{eq:s2equationinterval5mid}  
             s_1^\star - \frac{s_2^\star}{\gamma} + \frac{1-\gamma}{\gamma} + 1 - \frac{s_2^\star}{\beta \gamma}=c
         \end{align}
     \end{itemize}
     When \(\beta \leq 1-c \) and \(\beta \geq \beta_c, c < 1/2\), the equations satisfied in the intervals $[0,\gamma_1]$ and $[\gamma_4,1]$ remain as above. For the remaining cases, we have:
      \begin{itemize}
         \item[(i)] $[\gamma_1,\gamma_2]$: 
         \begin{align} 
\label{eq:s2equationinterval2mid'}
            s_1^\star -  \frac{2s_1^\star s_2^\star -\gamma (s_1^\star)^2}{2(1-\gamma)} + \frac{(1-s_2^\star/\beta)^2}{2\gamma(1-\gamma)}  = c.
         \end{align}
         \item[(ii)] $[\gamma_2, \gamma_3]$: 
         \begin{align}
\label{eq:s2equationinterval3mid'} 
        \frac{(1-\gamma - s_2^\star + \gamma s_1^\star)^2}{2\gamma (1-\gamma)} +  \frac{(1- s_2^\star/\beta)^2}{2\gamma (1-\gamma)}=c. \end{align}
         \item[(iii)] $[\gamma_3,\gamma_4]$:
         \begin{align}
\label{eq:s2equationinterval4mid'}  
             s_1^\star - \frac{s_2^\star}{\gamma} + \frac{1-\gamma}{2\gamma} + \frac{(1-s_2^\star/\beta)^2}{2\gamma(1-\gamma)}  =c.     
         \end{align}
     \end{itemize}
\end{theorem}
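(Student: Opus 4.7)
The plan is to mirror the proof of \Cref{prop:s2equations}: in each sub-interval of $[0,1]$ determined by the $\gamma$-thresholds, identify the geometric case (I--IV for $G_1$ and I$'$--IV$'$ for $G_2$) in which the decision line $L(\gamma, s_2)$ (resp.\ $L_\beta(\gamma, s_2)$) sits, look up the corresponding closed-form area formula from Step~2 of \Cref{sec:mainresults}, and substitute into the capacity constraint~\eqref{eq:twoinst2}. The structural work, namely pinning down the case ordering, has already been done: \Cref{thm:gammathresholdlowbeta} characterizes when each boundary condition $s_2^\star \lessgtr 1-\gamma$, $s_{2,\beta}^\star \lessgtr 1-\gamma$, $s_2^\star \lessgtr \gamma s_1^\star$, $s_{2,\beta}^\star \lessgtr \gamma s_{1,\beta}^\star$ holds, and \Cref{prop:gammaorderlowbeta} together with \Cref{prop:betac} fixes the ordering of $\gamma_1,\ldots,\gamma_4$.

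First I will record two simplifications specific to the regime $\beta \le 1-c$. Since $s_1^\star = 1-c < \beta$ is false here, in fact $s_1^\star \ge \beta$, so $s_{1,\beta}^\star = 1$ and the $G_2$ rectangle is the full unit square; the conditions defining cases I$'$--IV$'$ therefore collapse to $s_{2,\beta}^\star \lessgtr \gamma$ and $s_{2,\beta}^\star \lessgtr 1-\gamma$, which by \Cref{thm:gammathresholdlowbeta} correspond precisely to the thresholds $\gamma_4$ and $\gamma_1$. The standing assumption $s_2^\star \le \beta$ further guarantees $s_{2,\beta}^\star = s_2^\star/\beta$ rather than $1$, so the probability expressions from Step~2 are valid without clipping. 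With these two reductions, the group-$G_2$ case is III$'$ for $\gamma \le \gamma_1$, II$'$ for $\gamma_1 \le \gamma \le \gamma_4$, and IV$'$ for $\gamma \ge \gamma_4$, independent of whether $\beta \gtrless \beta_c$.

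Next I will handle $G_1$ by the analogous dichotomy, using the definitions of $\gamma_2$ and $\gamma_3$: case III when $\gamma \le \min(\gamma_2,\gamma_3)$, case I when $\gamma_3 \le \gamma \le \gamma_2$ (the $\beta<\beta_c$ regime), case II when $\gamma_2 \le \gamma \le \gamma_3$ (the $\beta \ge \beta_c$ regime), and case IV when $\gamma \ge \max(\gamma_2,\gamma_3)$. Intersecting the $G_1$ regime boundaries with the $G_2$ boundaries $\{\gamma_1,\gamma_4\}$, and using the orderings supplied by \Cref{prop:gammaorderlowbeta,prop:betac}, yields exactly the five sub-intervals listed in the theorem for each of the two orderings. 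In each sub-interval I will add the two relevant area formulas and set the sum equal to $c$; for example, in $[0,\gamma_1]$ the pair $(\mathrm{III},\mathrm{III}')$ gives $s_1^\star - \tfrac{2s_1^\star s_2^\star - \gamma (s_1^\star)^2}{2(1-\gamma)} + 1 - \tfrac{2s_2^\star/\beta - \gamma}{2(1-\gamma)} = c$, matching~\eqref{eq:s2equationinterval1mid}, and in $[\gamma_4,1]$ the pair $(\mathrm{IV},\mathrm{IV}')$ gives~\eqref{eq:s2equationinterval5mid}.

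Honestly, once the ordering of the $\gamma_i$ and the reduction $s_{1,\beta}^\star = 1$ are in hand, there is no substantive obstacle: the argument is a mechanical case check. The one place where care is needed is the middle sub-intervals, where the identity of the $G_1$ case flips between I and II depending on whether $\gamma_2 < \gamma_3$ or $\gamma_3 < \gamma_2$; this is precisely what the critical value $\beta_c$ controls, and it is what forces the theorem to split into two parallel statements rather than a single uniform one. The same substitution procedure then produces the three intermediate equations for $\beta \ge \beta_c$, completing the proof.
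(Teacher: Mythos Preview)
Your proposal is correct and follows exactly the paper's approach: the paper's own proof of \Cref{prop:s2equationslow} is a single sentence stating that it ``follows analogously as~\Cref{prop:s2equations}'', i.e., by reading off the case pair (I--IV for $G_1$, I$'$--IV$'$ for $G_2$) in each sub-interval from the $\gamma$-threshold definitions and substituting the area formulas into~\eqref{eq:twoinst2}. Your write-up is in fact more explicit than the paper's, correctly isolating the two key simplifications in this regime (that $s_{1,\beta}^\star=1$ collapses the $G_2$ rectangle to the unit square, and that $s_2^\star\le\beta$ gives $s_{2,\beta}^\star=s_2^\star/\beta$), and correctly identifying that the $G_1$ case in the middle interval is~I when $\beta<\beta_c$ (yielding the $(s_2^\star)^2$ term in~\eqref{eq:s2equationinterval3mid}) versus~II when $\beta\ge\beta_c$ (yielding~\eqref{eq:s2equationinterval3mid'}).
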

\noindent
The above equations only consider the case when $s_2^\star(\gamma) \leq \beta$. In case $s_2^\star(\gamma) > \beta$, we need to remove the terms corresponding to $G_2$ in the corresponding equation above. The following result shows that as we vary $\gamma$, $s_2^\star(\gamma)$ can have several local minima or maxima. This is in contrast to the $\beta > 1-c$ setting, where there was only one local minimum.

\begin{theorem}[\bf Threshold in low \boldmath{$\beta$} regime]
    \label{thm:s2lowbeta}
     Let $c \in [0,1/2)$. Assume that the bias parameter $\beta$ lies in the range $[0, 1-c]$. 
When $\beta \leq \beta_c$, $s_2^\star$ varies as follows as we vary $\gamma$ from $0$ to $1$:
    \begin{itemize}
        \item[(i)] $[0, \gamma_1]$: $s_2^\star$ is a linearly decreasing function of $\gamma$. 
        \item[(ii)] $[\gamma_1,\gamma_3]$: $s_2^\star$ is a unimodal function of $\gamma$: it has no local maxima and has at most one local minimum. 
        \item[(iii)] $[\gamma_3,\gamma_2]:$ $s_2^\star$ is a unimodal function of $\gamma$: it has no local minima and at most one local maximum. 
        \item[(iv)] $[\gamma_2,\gamma_4]$: $s_2^\star$ is a unimodal function; it has no local maximum and at most one local minimum. 
        \item[(v)] $[\gamma_4,1]$: $s_2^\star$ is a linearly increasing function of $\gamma$.
    \end{itemize}
\noindent
      When $\beta > \beta_c$, $s_2^\star$ behaves as in the above case for the intervals $[0, \gamma_1]$, $[\gamma_4,1]$. For the remaining three intervals, $s_2^\star$ varies as follows: 
       \begin{itemize}
        \item[(i)] $[\gamma_1,\gamma_2]$: $s_2^\star$ is a unimodal function of $\gamma$: it has no local maxima and has at most one local minimum.
        \item[(ii)] $[\gamma_2,\gamma_3]$: $s_2^\star$ is a convex function, and hence, has at most one local minimum and no local maximum. 
        \item[(iii)] $[\gamma_3,\gamma_4]$: $s_2^\star$ is a unimodal function. If $c > 1/4$, it has no local maximum and at most one local minimum. If $c \leq 1/4$, it has no local minimum and at most one local maximum.
        \end{itemize}
\end{theorem}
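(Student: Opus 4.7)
The plan is to work interval by interval, using the piecewise equations for $s_2^\star$ supplied by \Cref{prop:s2equationslow}. In each interval the equation is either linear or quadratic in $s_2^\star$, so the analysis reduces to (a) solving explicitly and checking the sign of the leading coefficient in $\gamma$, or (b) differentiating the defining equation in $\gamma$ once or twice and applying \Cref{fact:diff} (which we already proved in the high-$\beta$ regime) to obtain unimodality. The endpoint data will come from the definitions of $\gamma_1,\gamma_2,\gamma_3,\gamma_4$, e.g. $s_2^\star(\gamma_1)=\beta(1-\gamma_1)$, $s_2^\star(\gamma_2)=1-\gamma_2$, $s_2^\star(\gamma_3)=\gamma_3 s_1^\star$, and $s_2^\star(\gamma_4)=\beta\gamma_4$, together with the known $s_1^\star=1-c$.

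I would first dispose of the two boundary intervals $[0,\gamma_1]$ and $[\gamma_4,1]$, where \eqref{eq:s2equationinterval1mid} and \eqref{eq:s2equationinterval5mid} are \emph{linear} in $s_2^\star$. Solving \eqref{eq:s2equationinterval1mid} explicitly expresses $s_2^\star$ as an affine function of $\gamma$; the coefficient of $\gamma$ reduces to a quantity involving $s_1^\star=1-c$, $\beta$, and $c$ whose sign I would verify to be negative using $c<1/2$ and $\beta\le 1-c$. An identical computation on \eqref{eq:s2equationinterval5mid} yields $s_2^\star$ affine with positive slope. These give the linear decrease on $[0,\gamma_1]$ and linear increase on $[\gamma_4,1]$ in both subcases.

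For the intermediate intervals I would follow the template of the proof of \Cref{thm:s2highbeta}. For each quadratic equation in \Cref{prop:s2equationslow} I would differentiate once and twice in $\gamma$ and bring the result into the normal form
\[
A(\gamma)\,(s_2^\star)''(\gamma) + B\,((s_2^\star)'(\gamma))^2 + C\,(s_2^\star)'(\gamma) + D = 0,
\]
required by \Cref{fact:diff}. The key computations are: for \eqref{eq:s2equationinterval2mid}, \eqref{eq:s2equationinterval2mid'}, \eqref{eq:s2equationinterval4mid}, and \eqref{eq:s2equationinterval4mid'}, the coefficient $A(\gamma)$ stays bounded below by a positive constant (using $s_2^\star\le s_1^\star\le\beta+(s_1^\star-\beta)$ and the geometric bounds from the corresponding case of \Cref{fig:4cases}), while $B,C,D$ are constants in $\gamma$; \Cref{fact:diff} then forces unimodality. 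For \eqref{eq:s2equationinterval3mid'} (the $[\gamma_2,\gamma_3]$ case when $\beta>\beta_c$) I would instead note that both summands are of the form $(\mathrm{linear\ in\ }s_2^\star)^2/(2\gamma(1-\gamma))$ with positive leading coefficient, and replicate the direct convexity argument used in the $(\gamma_2,\gamma_3)$ analysis of the high-$\beta$ proof. The direction (local min vs.\ local max) in each case is then pinned down by evaluating $(s_2^\star)'$ at one endpoint using the explicit endpoint values listed above; e.g.\ for $[\gamma_1,\gamma_3]$, substituting $s_2^\star(\gamma_1)=\beta(1-\gamma_1)$ into the once-differentiated \eqref{eq:s2equationinterval2mid} shows $(s_2^\star)'(\gamma_1)<0$, ruling out a local maximum and giving the ``at most one minimum'' conclusion.

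The main obstacle is the interval $[\gamma_3,\gamma_4]$ in the $\beta>\beta_c$ regime, where the conclusion flips between ``at most one minimum'' and ``at most one maximum'' according as $c>1/4$ or $c\le 1/4$. Here \eqref{eq:s2equationinterval4mid'} again fits the template of \Cref{fact:diff}, so unimodality is automatic; the delicate part is determining the \emph{sign} of $(s_2^\star)'$ at the endpoint $\gamma_3$ (where $s_2^\star=\gamma_3 s_1^\star$). After differentiating \eqref{eq:s2equationinterval4mid'} and substituting this boundary value, the sign of $(s_2^\star)'(\gamma_3)$ reduces to the sign of an expression linear in $c$ with transition exactly at $c=1/4$; this will be the quantitative heart of the argument. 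A symmetric (but easier) check at $\gamma=\gamma_4$ using $s_2^\star(\gamma_4)=\beta\gamma_4$ confirms that the single extremum persists in the interior. Finally, I would verify continuity of $s_2^\star$ across the breakpoints $\gamma_1,\gamma_2,\gamma_3,\gamma_4$ by substituting the common boundary values into adjacent equations—this is immediate from \Cref{prop:s2equationslow} since adjacent cases agree at the defining boundary condition.
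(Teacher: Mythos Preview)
Your overall architecture matches the paper's: linear boundary intervals handled explicitly, quadratic middle intervals via the second-derivative/\Cref{fact:diff} machinery, convexity for \eqref{eq:s2equationinterval3mid'}. Two points need fixing, though.

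First, you never address the possibility that $s_2^\star(\gamma)>\beta$ inside an interval. In the low-$\beta$ regime this can actually happen, and when it does the $G_2$ term drops out and the equations in \Cref{prop:s2equationslow} are replaced by their truncated versions (the proposition itself flags this). The paper's proof spends real effort on this: in each of $[\gamma_1,\gamma_3]$, $[\gamma_3,\gamma_2]$, $[\gamma_2,\gamma_4]$ (and the analogous intervals for $\beta>\beta_c$) it separately analyzes the truncated equation, shows it is linear with a definite slope sign, identifies the sub-interval where $s_2^\star\ge\beta$, and then stitches that piece together with the non-truncated analysis. Without this, your argument is incomplete on exactly the set where the distinction between the two $\beta$-regimes bites.

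Second, your diagnosis of the $c=1/4$ transition in $[\gamma_3,\gamma_4]$ (for $\beta>\beta_c$) is off. You propose to read it off from the sign of $(s_2^\star)'(\gamma_3)$, but that derivative involves $\gamma_3$, which in this regime is \emph{not} the clean $1/(1+c)$ and instead depends on both $\beta$ and $c$ through a quadratic relation; it does not reduce to a linear-in-$c$ expression with root at $1/4$. The paper obtains the $1/4$ threshold differently: after differentiating \eqref{eq:s2equationinterval4mid'} twice one gets
\[
A(\gamma)(s_2^\star)'' - \tfrac{2}{\beta}((s_2^\star)')^2 - \bigl(3+\tfrac{2}{\beta^2}\bigr)(s_2^\star)' + \underbrace{4(s_1^\star-c)-2}_{=\,2-8c}=0,
\]
and it is the sign of the constant $D=2-8c$ that decides whether $(s_2^\star)''$ is positive or negative near a critical point, hence whether the unique extremum is a minimum ($c>1/4$) or a maximum ($c\le 1/4$). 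So the right plan is: invoke \Cref{fact:diff} for unimodality, then read the min/max type from the sign of $D$, not from an endpoint slope.

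A smaller omission: you do not list \eqref{eq:s2equationinterval3mid} (the $[\gamma_3,\gamma_2]$ case for $\beta\le\beta_c$) in either your \Cref{fact:diff} group or your convexity group. The paper handles it by a direct first-derivative computation showing $(s_2^\star)'$ has the sign of $(s_1^\star-c)(1-2\gamma)$, giving an increase then decrease with the peak at $\gamma=1/2$; you should slot that argument in explicitly.
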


\begin{proof}
    First, consider the case when $\beta < \beta_c$. 
    We consider various cases:

\smallskip
    \noindent
    {\bf Case \boldmath{$\gamma \in [0,\gamma_1]$}:} Since $\gamma \leq \gamma_1$, $s_2^\star \leq \beta$. Further,~\eqref{eq:s2equationinterval1mid} shows that $s_2^\star$ is a linear function of $\gamma$ with the slope given by $\frac{1}{2s_1^\star + 2/\beta}$ times $-4s_1^\star + (s_1^\star)^2 + 1.$ The latter quantity is negative if $s_1^\star \geq 1/2$ (which is the case when $c < 1/2$). 

\smallskip
    \noindent
    {\bf Case \boldmath{$\gamma \in [\gamma_1,\gamma_3]$}:} We know that $s_2^\star(\gamma_1) \leq \beta$. If $s_2^\star(\gamma) \geq \beta$, then $s_2^\star(\gamma)$ satisfies the following truncated version of~\eqref{eq:s2equationinterval2mid}:
    $$s_1^\star -  \frac{2s_1^\star s_2^\star -\gamma (s_1^\star)^2}{2(1-\gamma)}=c.$$
    This becomes a linear function in $s_2^\star$ with slope proportional to $(s_1^\star)^2-4(s_1^\star)+2$, which is positive iff $c > \sqrt{2}-1$. In case $c > \sqrt{2}-1$, and if $s_2^\star$ exceeds $\beta$, it will remain above $\beta$ (and increase linearly) for the rest of the interval. However if $c < \sqrt{2}-1$, $s_2^\star$ remains below $\beta$ during this interval. Thus, there is a value $\gamma_3' \leq \gamma_3$, such that $s_2^\star$ remains at most $\beta$ during $[\gamma_1, \gamma_3']$, and remains above $\beta$ during $(\gamma_3', \gamma_3]$. Note that $\gamma_3'$ may equal $\gamma_3$, in which case the latter interval is empty. 
    We now study the behavior of $s_2^\star$ in $[\gamma_1, \gamma_3']$. We know that it satisfies~\eqref{eq:s2equationinterval2mid}.

 Using $c=1-s_1^\star$, we rewrite~\eqref{eq:s2equationinterval2mid} as
    $$2(2s_1^\star-1)\gamma(1-\gamma) - 2s_1^\star s_2^\star \gamma + \gamma^2 (s_1^\star)^2+(1-s_2^\star/\beta)^2 = 0. $$
    Differentiating with respect to $\gamma$, we get 
    $$2(1-2\gamma)(2s_1^\star-1) - 2s_1^\star s_2^\star + 2\gamma (s_1^\star)^2 - 2(s_1^\star \gamma + 1/\beta(1-s_2^\star/\beta))(s_2^\star)' = 0.$$
    Differentiating again, we get 
    $$2(s_2^\star)''(s_1^\star \gamma + 1/\beta(1-s_2^\star/\beta) - \frac{2}{\beta^2}((s_2^\star)')^2+ 4s_1^\star (s_2^\star)'+8s_1^\star-4(s_1^\star)^2-4 =0.$$
    Note that $8s_1^\star-4(s_1^\star)^2-4 < 0$.
    It follows that when $|(s_2^\star)'|$ is close to $0$, $(s_2^\star)''$ becomes positive. Thus, once the slope $(s_2^\star)'$ becomes positive, it does not become $0$ again.  Therefore, there can only be a local minimum in $[\gamma_1, \gamma_3]$. 

\smallskip
\noindent
 {\bf Case \boldmath{$\gamma \in [\gamma_3,\gamma_2]$}:} We know that if $s_2^\star \leq \beta$, it satisfies~\eqref{eq:s2equationinterval3mid}. In case, $s_2^\star > \beta$, it would satisfy:
    $$s_1^\star - \frac{(s_2^\star)^2}{2\gamma(1-\gamma)}=c.$$
    If $\gamma < 1/2$, $s_2^\star$ will be an increasing function. At $\gamma=\gamma_3$, we know that $s_1^\star \gamma_3 = 1-\gamma_3$, which implies that $\gamma_3 < 1/2$. Thus, if $s_2^\star(\gamma_3) \geq \beta$, there would be an initial sub-interval $[\gamma_3, \gamma_2']$ of $[\gamma_3,\gamma_2]$ where $s_2^\star$ will be above $\beta$ and would subsequently stay below $\beta$. 

    Now, we consider those values of $\gamma$ where $s_2^\star \leq \beta$. Multiplying both sides of~\eqref{eq:s2equationinterval3mid} by $2\gamma(1-\gamma)$ and differentiating, we see that $(s_2^\star)'$ is equal to a positive quantity times $(s_1^\star-c)(1-2\gamma)$. Since $c < 1/2$, $s_1^\star = 1-c > c$. Thus, $(s_2)^\star$ is an increasing function till $\gamma=1/2$ and then it becomes a decreasing function. 
    Combining the above observations, we see that $s_2^\star$ will be initially increasing, and then will become decreasing -- note that $\gamma_2 > 1/2$ because $1-\gamma_2 = s_2^\star(\gamma_2) \leq s_1^\star \gamma_2 < \gamma_2$. 

\smallskip
     \noindent
    {\bf Case \boldmath{$\gamma \in [\gamma_2,\gamma_4]$}:} If $s_2^\star \leq \beta$, then $s_2^\star$ satisfies~\eqref{eq:s2equationinterval4mid}. 
    Otherwise, it would satisfy:
    \begin{align}
    \label{eq:highbetainterval3}
    s_1^\star - \frac{s_2^\star}{\gamma} + \frac{1-\gamma}{2\gamma}=c,
    \end{align}
     In this case, we multiply both sides of~\eqref{eq:highbetainterval3} by $2\gamma$ and differentiate w.r.t. $\beta$ to get: 
     $$2(s_2^\star)' =  2(s_1^\star-c) - 1 = 1-4c < 0,$$
     because $c > 1/4$ (\Cref{prop:betaclower}). Thus, $s_2^\star$ would be a decreasing function till it becomes at most $\beta$. Hence we get: there is a value $\gamma_2' \in  [\gamma_2, \gamma_4]$ such that $s_2^\star(\gamma) \geq \beta$ for all $\gamma \in (\gamma_2, \gamma_2']$ and is a decreasing function of $\gamma$ during this interval. Note that $\gamma_2'$ could be same as $\gamma_2$ in which case $s_2^\star(\gamma_2) \leq \beta$. 

     Now we consider the remaining interval $[\gamma_2',\gamma_4]$. 
     When $s_2^\star \leq \beta$, it satisfies~\eqref{eq:s2equationinterval4mid}. Multiplying both sides by $2\gamma(1-\gamma)$ and differentiating (and using $c=1-s_1^\star$), we get:
     $$2(s_1^\star-c)(1-2\gamma)+2s_2^\star -2(1-\gamma) -\frac{2}{\beta}(1-s_2^\star/\beta) (s_2^\star)'-(1-\gamma)(s_2^\star)'=0.$$
     Differentiating the above equation again, we get:
     $$A(\gamma) (s_2^\star)'' -\frac{2}{\beta} ((s_2^\star)')^2 -\left( 3+\frac{2}{\beta^2} \right) (s_2^\star)'  +4(s_1^\star-c)-2=0,$$
     where $A(\gamma) = (1-\gamma)+\frac{2}{\beta}(1-s_2^\star/\beta) > 1-\gamma_4$. Further, $4(s_1^\star-c) - 2 = 4(1-2c)-2 = 2-8c<0$ because $c \geq 1/3$ (\Cref{prop:betaclower}). Thus, when $(s_2^\star)'$ becomes small enough, $(s_2^\star)''$ becomes positive. Therefore, $s_2^\star$ does not have a local maximum in $[\gamma_2', \gamma_4]$.  Further, $s_2^\star(\gamma_4) = \beta \gamma_4 < \beta$. Therefore, $s_2^\star \leq \beta$ throughout the interval $[\gamma_2',\gamma_4]$. Combining the observations about $[\gamma_2,\gamma_2')$ and $[\gamma_2',\gamma_4]$, we see that $s_2^\star$ initially decreases and does not have a local maximum in $[\gamma_2,\gamma_4]$. 

\smallskip
     \noindent
     {\bf Case \boldmath{$\gamma \in [\gamma_4,\gamma_1]$}:} Since $s_2^\star \leq \beta \gamma < \beta$ throughout this interval, it satisfies~\eqref{eq:s2equationinterval5mid} for all $\gamma \in [\gamma_4,1]$. It follows from~\eqref{eq:s2equationinterval5mid} that 
     $$s_2^\star = \frac{2s_1^\star \beta \gamma + (1-\gamma) \beta}{\beta + 1}.$$
    Since $s_1^\star > 1/2$, $s_2^\star$ is an increasing function of $\gamma$. 

    This completes the discussion on various cases when $\beta < \beta_c$. It is also worth noting from the discussion for intervals $[\gamma_1,\gamma_3]$ and $[\gamma_3,\gamma_2]$ the set of $\gamma$ values where $s_2^\star(\gamma) \geq \beta$ forms an interval.  

    Now we consider the case when $\beta > \beta_c$. Clearly the cases $[0,\gamma_1]$ and $[\gamma_4,1]$ follow as above. We now consider the remaining settings: 

\smallskip
    \noindent
    {\bf Case \boldmath{$\gamma \in [\gamma_1,\gamma_2]$}:} In this interval, $s_2^\star$ satisfies~\eqref{eq:s2equationinterval2mid'} which is identical to~\eqref{eq:s2equationinterval2mid}. Thus, the arguments used in the case $\gamma \in [\gamma_1, \gamma_3]$ when $\beta < \beta_c$ apply in the same manner here. 
    
  \smallskip  
    \noindent
    {\bf Case \boldmath{$\gamma \in [\gamma_2,\gamma_3]$}:} In this case, $s_2^\star$ satisfies~\eqref{eq:s2equationinterval3mid'} if assuming $s_2^\star \leq \beta$. In case, $s_2^\star > \beta$, it satisfies: 
    \begin{align}
        \label{eq:highintervalmid'}
    \frac{(1-\gamma - s_2^\star + \gamma s_1^\star)^2}{2\gamma (1-\gamma)}   = c.
    \end{align}
    Multiplying both sides by $2\gamma(1-\gamma)$ and differentiating twice, we see that $(s_2^\star)'' > 0$. 
    When $s_2^\star(\gamma) \leq \beta$, it satisfies~\eqref{eq:s2equationinterval3mid'}. Again, multiplying both sides of this equation by $2\gamma(1-\gamma)$ and differentiating twice, we see that $(s_2^\star)'' > 0$.
    Thus $(s_2^\star)'' > 0$ at all points in $[\gamma_2,\gamma_3]$ (one can check that if $s_2^\star(\gamma) = \beta$, then the slope $(s_2^\star(\gamma))'$ given by the two equation~\eqref{eq:s2equationinterval3mid'} and ~\eqref{eq:highbetainterval3}
    are identical). Hence, $s_2^\star$ is convex in this interval. 

    \smallskip
    \noindent
    {\bf Case \boldmath{$\gamma \in [\gamma_3,\gamma_4]$}:} In this case, $s_2^\star$ satisfies~\eqref{eq:s2equationinterval4mid'} which is identical to~\eqref{eq:s2equationinterval4mid}. Thus, the arguments used in the case for the case $\beta \in [\gamma_2,\gamma_4]$ when $\beta < \beta_c$ apply in a similar manner as long as $c > 1/4$. 

    We complete the argument when $c < 1/4$. In case $s_2^\star(\gamma_3) > \beta$, it would satisfy~\eqref{eq:highbetainterval3}. Multiplying both sides by $2\gamma$ and differentiating, we see that $s_2^\star$ will be a non-decreasing linear function (and hence, remain above $\beta$ throughout this interval). 

    Now consider the case when $s_2^\star(\gamma_3) \leq \beta$. As in the  case $\beta \in [\gamma_2,\gamma_4]$ for $\beta < \beta_c$, we get the equation: 
    $$A(\gamma) (s_2^\star)'' -\frac{2}{\beta} ((s_2^\star)')^2 -\left( 3+\frac{2}{\beta^2} \right) (s_2^\star)'  +4(s_1^\star-c)-2=0,$$
     where $A(\gamma) = (1-\gamma)+\frac{2}{\beta}(1-s_2^\star/\beta) > 1-\gamma_4$. Now, if $c > 1/4$, $4(s_1^\star - c)-2 > 0$, and rest of the arguments in the  $\beta < \beta_c$ case hold. 
     
      We have shown that $s_2^\star(\gamma_2) \leq \beta$. 
     If $c < 1/4$, we see that    $4(s_1^\star - c)-2 < 0.$
    This shows that if $(s^\star)'$ is close enough to 0, then $(s_2^\star)'' < 0$. Thus, it does not achieve a local minimum, though it may be a local maximum. 
    This completes the discussion on all the cases. 
\end{proof}
\noindent
The result above shows that when $\beta < 1-c$, $s_2^\star(\gamma)$ can behave in a highly non-monotone manner as we vary $\gamma$. Similar observations hold for the representation ratio: 

\smallskip
\noindent
{\bf Variation of representation ratio with \boldmath{$\gamma$}.} In~\Cref{thm:Rgamma}, we showed that when $\beta \geq 1-c$, $\mathcal{R}(\gamma)$ is a monotonically non-decreasing function of $\gamma$. However, when $\beta < 1-c$, it is possible that $\mathcal{R}(\gamma)$ is a decreasing function of $\gamma$ in a sub-interval of $[0,1]$ and is an increasing function of $\gamma$ in another sub-interval. Thus, $\mathcal{R}(\gamma)$ may not be a monotone function of $\gamma$. We now  show that unlike the $\beta > 1-c$ case, $\mathcal{R}(\gamma)$ can decrease with $\gamma$. We shall first need a useful result:
\begin{fact}[\bf Monotonicity under an implicit constraint]
    \label{fact:fg}
    Let $f,g : \mathbb{R}^2 \rightarrow  \mathbb{R}$ be functions of two variables, and $z: [0,1] \rightarrow  \mathbb{R}$ be a real valued function defined on the interval $[0,1]$. Suppose the following relation holds for all $\gamma \in [a,b]$ for some $0 \leq a \leq b \leq 1$:
    $$f(\gamma, z(\gamma)) + g(\gamma,z(\gamma))=c,$$
    where $c$ is a constant. Let $f_\gamma,f_z$ denote the partial derivatives of $f$ with respect to the two coordinates (and define $g_\gamma, g_z$ similarly). 
    Suppose $f_z, g_z > 0$  and
\begin{align}
    \label{eq:fginequality}
    \frac{f_\gamma}{f_z} \leq \frac{g_\gamma}{g_z}
\end{align}
for all $\gamma \in [0,1]$.
Then $f(\gamma,z(\gamma))$ is a non-increasing function of $\gamma \in [a,b]$. Similarly, if $f_z, g_z < 0$ and
\begin{align}
    \label{eq:fginequalityrev}
    \frac{f_\gamma}{f_z} \geq \frac{g_\gamma}{g_z}
\end{align}
for all $\gamma \in [a,b]$. 
Then $f(\gamma,z(\gamma))$ is a non-increasing function of $\gamma \in [a,b]$.
\end{fact}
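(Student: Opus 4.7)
The plan is to reduce the monotonicity claim to a sign computation for $\frac{d}{d\gamma} f(\gamma, z(\gamma))$, obtained by implicit differentiation of the constraint. The key observation is that once we eliminate $z'(\gamma)$ using the identity $f + g \equiv c$, the derivative of $f$ along the constraint curve has a simple factored form, and the hypothesis of the fact is essentially equivalent to the sign condition on that factored form. Throughout I will assume $f, g$ are (continuously) differentiable, as is implicit in the statement since partial derivatives are invoked; by the implicit function theorem, together with $f_z + g_z \neq 0$, this also makes $z(\gamma)$ differentiable on $[a,b]$.

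First, I would differentiate the identity $f(\gamma, z(\gamma)) + g(\gamma, z(\gamma)) = c$ with respect to $\gamma$ via the chain rule, obtaining
\[
(f_\gamma + g_\gamma) + (f_z + g_z)\,z'(\gamma) = 0,
\]
so that $z'(\gamma) = -\tfrac{f_\gamma + g_\gamma}{f_z + g_z}$ (the denominator is nonzero under either sign assumption). Next, applying the chain rule to $\Phi(\gamma) := f(\gamma, z(\gamma))$ and substituting the expression for $z'(\gamma)$, I obtain
\[
\Phi'(\gamma) = f_\gamma + f_z\, z'(\gamma)
= \frac{f_\gamma(f_z + g_z) - f_z(f_\gamma + g_\gamma)}{f_z + g_z}
= \frac{f_\gamma g_z - f_z g_\gamma}{f_z + g_z}.
\]

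It remains to check the sign of $\Phi'(\gamma)$ in each case. In Case~1, both $f_z, g_z > 0$, so $f_z + g_z > 0$ and $f_z g_z > 0$; multiplying the assumption $f_\gamma/f_z \le g_\gamma/g_z$ by $f_z g_z > 0$ gives $f_\gamma g_z \le f_z g_\gamma$, i.e.\ the numerator of $\Phi'$ is $\le 0$, whence $\Phi'(\gamma) \le 0$. In Case~2, $f_z, g_z < 0$, so $f_z + g_z < 0$ while $f_z g_z > 0$; the reversed assumption $f_\gamma/f_z \ge g_\gamma/g_z$ multiplied by $f_z g_z > 0$ yields $f_\gamma g_z \ge f_z g_\gamma$, so the numerator of $\Phi'$ is $\ge 0$ but the denominator is now negative, again giving $\Phi'(\gamma) \le 0$. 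In both cases $\Phi$ is non-increasing on $[a,b]$, as claimed.

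There is no real obstacle here; the only subtle point is tracking the sign flip in Case~2, where the reversal of the ratio inequality is exactly compensated by the reversal of the sign of $f_z + g_z$, so the conclusion is identical. If desired, the same computation also yields the symmetric statement for $g(\gamma, z(\gamma))$ by swapping the roles of $f$ and $g$, since $\Phi'(\gamma)$ and $\frac{d}{d\gamma} g(\gamma, z(\gamma))$ sum to zero.
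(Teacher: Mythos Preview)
Your proof is correct and follows essentially the same approach as the paper: implicit differentiation of the constraint to solve for $z'(\gamma)$, then a sign check on $\frac{d}{d\gamma}f(\gamma,z(\gamma))$. The only cosmetic difference is that the paper bounds $z'$ via the mediant inequality $\frac{f_\gamma+g_\gamma}{f_z+g_z}\ge \frac{f_\gamma}{f_z}$ before substituting, whereas you substitute first and work directly with the factored form $\Phi'(\gamma)=\frac{f_\gamma g_z - f_z g_\gamma}{f_z+g_z}$; your route is slightly cleaner and handles the second case ($f_z,g_z<0$) explicitly rather than by analogy.
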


\begin{proof}
First assume $f_z, g_z > 0$. 
    Differentiating 
    $$f(\gamma, z(\gamma)) + g(\gamma,z(\gamma))=c$$ with respect to $\gamma$ (and using $z'$ to denote $\frac{dz}{d\gamma}$), we get 
    $$f_\gamma + f_z z' + g_\gamma + g_z z' = 0.$$
    Therefore, $$z' = -\frac{f_\gamma+g_\gamma}{f_z + g_z}.$$
    Since $f_z, g_z > 0$,~\eqref{eq:fginequality} implies that
    $$\frac{f_\gamma+g_\gamma}{f_z + g_z} \geq \frac{f_\gamma}{f_z}.$$
    Therefore, $z' \leq -\frac{f_\gamma}{f_z}.$
    Now, 
    $$\frac{d f(\gamma, z(\gamma))}{d \gamma} = f_z z' + f_\gamma \leq -f_\gamma + f_\gamma \leq 0.$$
    This proves the desired result. The argument when $f_z, g_z < 0$ is similar. 
\end{proof}
\noindent
We now show that the representation ratio may decrease with increasing $\gamma$:

\begin{proposition}[\bf Representation ratio decreases in mid-bias range] \label{fact:midbetacaseone}
    Assume $c < 1/2$ and
    $\frac{1-2c}{1-c} < \beta <  1-c$. Then representation ratio decreases monotonically  with  $\gamma$ for $\gamma \in [0, \gamma_1]$. 
\end{proposition}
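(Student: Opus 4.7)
The plan is to leverage the simplicity of the regime $\beta < 1-c$: since $\beta < 1-c = s_1^\star$, we have $\Pr[\hat u_{i'1}\ge s_1^\star] = 0$, so the entire selection of $G_2$ flows through Institution~2, while $G_1$ contributes a constant mass $c$ via Institution~1. Writing $\mathcal{R}(\gamma) = N(\gamma)/D(\gamma)$ with $N(\gamma)$ the $G_2$ admission mass at Institution~2 and $D(\gamma) = c + A(\gamma)$, where $A(\gamma)$ is the $G_1$ admission mass at Institution~2, the Institution~2 capacity constraint $A(\gamma) + N(\gamma) = c$ gives $D(\gamma) + N(\gamma) = 2c$. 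Hence $\mathcal{R}(\gamma) = N(\gamma)/(2c - N(\gamma))$ is a strictly increasing function of $N(\gamma)$, and it suffices to show that $N(\gamma)$ is strictly decreasing on $[0, \gamma_1]$.

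Next, I would pin down the geometry on $[0, \gamma_1]$. By \Cref{thm:gammathresholdlowbeta} and \Cref{prop:gammaorderlowbeta}, on this interval $\gamma \le \gamma_1 \le \gamma_3 \le \gamma_4$, which places $G_1$ in Case~III (so $A(\gamma) = s_1^\star - \tfrac{s_1^\star(2s_2^\star - \gamma s_1^\star)}{2(1-\gamma)}$) and forces the line $\gamma x + (1-\gamma) y = s_2^\star/\beta$ to cut both vertical sides of $[0,1]^2$ in their interior (using $s_2^\star \le \beta(1-\gamma)$ and $s_2^\star \ge \beta\gamma$), yielding the trapezoidal expression
\[
N(\gamma) = 1 - \frac{2 s_2^\star/\beta - \gamma}{2(1-\gamma)}.
\]
Together these reproduce Equation~\eqref{eq:s2equationinterval1mid}, which implicitly defines $s_2^\star(\gamma)$. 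I would then apply \Cref{fact:fg} with $f(\gamma, z) := 1 - \tfrac{2z/\beta - \gamma}{2(1-\gamma)}$ and $g(\gamma, z) := s_1^\star - \tfrac{s_1^\star(2z - \gamma s_1^\star)}{2(1-\gamma)}$, so that $f + g = c$. A short calculation gives $f_z = -\tfrac{1}{\beta(1-\gamma)} < 0$, $g_z = -\tfrac{s_1^\star}{1-\gamma} < 0$, and
\[
\frac{f_\gamma}{f_z} = \frac{2z - \beta}{2(1-\gamma)}, \qquad \frac{g_\gamma}{g_z} = \frac{2z - s_1^\star}{2(1-\gamma)}.
\]
The hypothesis $\tfrac{f_\gamma}{f_z} \ge \tfrac{g_\gamma}{g_z}$ required by the second part of \Cref{fact:fg} thus reduces to $s_1^\star \ge \beta$, which is strict in our regime since $\beta < 1-c = s_1^\star$. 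Hence $N(\gamma) = f(\gamma, s_2^\star(\gamma))$ is strictly decreasing, and so is $\mathcal{R}(\gamma)$.

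The main difficulty here is bookkeeping rather than insight: one must verify that throughout $[0,\gamma_1]$ the $G_2$ selection region is indeed trapezoidal (both intercepts lie inside the unit square) and that $G_1$ remains in Case~III, since otherwise the explicit expressions for $A$ and $N$ change form and the partial derivatives of $f$ and $g$ must be recomputed. As an independent cross-check, \Cref{thm:s2lowbeta}(i) states that $s_2^\star(\gamma)$ is linear in $\gamma$ on this interval, so one can alternatively solve \eqref{eq:s2equationinterval1mid} explicitly for $s_2^\star(\gamma)$, substitute into $N(\gamma)$, and differentiate directly; the sign of the resulting slope is again governed by $s_1^\star - \beta > 0$, yielding the same conclusion.
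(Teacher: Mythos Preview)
Your proposal is correct and follows essentially the same approach as the paper: both arguments identify the $G_1$- and $G_2$-terms in equation~\eqref{eq:s2equationinterval1mid}, apply \Cref{fact:fg}, and reduce the monotonicity claim to the inequality $s_1^\star = 1-c > \beta$. The only cosmetic differences are that the paper uses the substitution $z = -s_2^\star/(1-\gamma)$ (so that $f_z,g_z>0$) and takes $f$ to be the $G_1$-term (showing it increases), whereas you keep $z = s_2^\star$ (so that $f_z,g_z<0$) and take $f$ to be the $G_2$-term $N(\gamma)$ (showing it decreases); since $A+N=c$, these are equivalent.
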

\noindent
Note that the condition $\frac{1-2c}{1-c} < \beta$ is needed to ensure that $\gamma_1 > 0$ (\Cref{prop:gammaorderlowbeta}). 
\begin{proof}
    For $\gamma \in [0, \gamma_1]$, $s_2^\star \leq \beta$ and hence, $s_2^\star$ satisfies~\Cref{eq:s2equationinterval1mid}. 
      We show that the representation ratio is decreasing using the notation in~\eqref{eq:fginequality}. 
     Here, we define $z = -\frac{s_2^\star}{(1-\gamma)} $ (the negative sign is needed because we want $f_z, g_z$ to be positive). Then, the terms in the l.h.s. of~\eqref{eq:s2equationinterval1mid} corresponding to $G_1$ and $G_2$ can be expressed as:
     $$f(\gamma, z) = s_1^\star (1+z) + \frac{\gamma (s_1^\star)^2}{2(1-\gamma)}, \quad g(\gamma,z) = 1+ z/\beta + \frac{\gamma}{2(1-\gamma)}.$$
     A routine calculation shows that 
     $$\frac{f_\gamma}{f_z} = \frac{s_1^\star}{2(1-\gamma)^2}, \quad \frac{g_\gamma}{g_z}  = \frac{\beta}{2(1-\gamma)^2}.$$
     Since $s_1^\star \geq \beta$, the reverse of ~\eqref{eq:fginequality} holds. This shows that $f(\gamma,z)$, which represents the fraction of selected candidates from $G_1$, is an increasing function of $\gamma$. Thus $\mathcal{R}(\gamma)$ is a decreasing function of $\gamma$. 
\end{proof}

\section{The general preferences case}\label{sec:general_p}

In this section, we extend our analysis to the general preference setting, where candidates prefer Institution~1 with probability \(p \in (0,1)\). Unlike the case when Institution~1 is always preferred, the equilibrium thresholds \(s_1^\star\) and \(s_2^\star\) now satisfy nonlinear equations that do not admit closed-form solutions. Nevertheless, this generality reveals rich structural behavior. 

\begin{itemize}
    \item In \Cref{sec:equations_p}, we derive the equilibrium conditions that implicitly define \(s_1^\star\) and \(s_2^\star\) for arbitrary \(p\). We further prove the existence and uniqueness of these thresholds.
    \item In \Cref{sec:computing_p}, we show how to compute these thresholds using closed-form probability expressions developed earlier, enabling efficient numerical evaluation.

    \item In \Cref{sec:metrics_p}, we provide formulas for computing key outcome metrics in terms of the thresholds: the representation ratio and each institution’s observed utility.

\item In \Cref{sec:monotonicity_p}, we prove that the thresholds vary monotonically with \(p\), holding other parameters fixed. 
 However, this variation is not necessarily strictly monotonic, as illustrated in~\Cref{fig:p_1}. In particular, the threshold \(s_1^\star\) remains constant for values of \(p\) up to a critical point \(p^\star\), beyond which it increases strictly with \(p\).

This behavior arises because, for small values of \(p\), the threshold \(s_1^\star(p)\) is sufficiently low while \(s_2^\star(p)\) is relatively high, resulting in the probability \(\Pr[u_{i1} < s_1^\star(p) \wedge u_{i2} \geq s_2^\star(p)]\) being zero. Consequently, all candidates with \(u_{i1} \geq s_1^\star(p)\) are selected by one of the two institutions, regardless of the value of \(p\). Since this selection condition does not depend on \(p\), the threshold \(s_1^\star(p)\) remains unchanged for \(p \leq p^\star\). We formalize this argument—along with a similar statement for \(s_2^\star(p)\)—in~\Cref{sec:genpstrict}.

    \item In \Cref{sec:empirical-pgamma}, we numerically examine how thresholds, fairness metrics, and utilities evolve as \(p\) and the correlation parameter \(\gamma\) vary. A key contribution of this section lies in the comprehensive set of plots (Figures~\ref{fig:p_1}–\ref{fig:beta_sweep}) that reveal monotonicity, phase transitions, and other emergent behaviors, offering insights that go beyond what closed-form analysis can capture.

\end{itemize}

\subsection{Equations for the thresholds and existence/uniqueness of solution}\label{sec:equations_p}

\subsubsection{Equations for the thresholds}

We now derive the threshold equations for the two-group model, where each candidate independently prefers Institution~1 over Institution~2 with probability \(p \in [0,1]\), and belongs to either group \(G_1\) or \(G_2\). 
 Candidates in group \(G_2\) face a multiplicative bias \(\beta \in (0,1]\) in their evaluations.
Let \(i \in G_1\), \(i' \in G_2\), and define the bias-adjusted thresholds:
\[
s_{1,\beta} = \min(1, s_1/\beta), \quad s_{2,\beta} = \min(1, s_2/\beta).
\]
We assume uniform group sizes (\(\nu_1 = \nu_2 = 1\)) and symmetric capacities (\(c_1 = c_2 = c\)). 
Our approach follows a finite-to-infinite reduction analogous to that in Section~\ref{sec:continuum}, transitioning from random threshold behavior in the finite setting to deterministic thresholds in the continuum limit.
The capacity constraints  reduce to the following pair of equations:

\begin{align}
\label{eq:thresholds-eq1}
    & p \left( (1 - s_1) + (1 - s_{1,\beta}) \right) 
    + (1 - p) \left( \Pr[u_{i1} \geq s_1 \wedge u_{i2} < s_2] 
    + \Pr[u_{i'1} \geq s_{1,\beta} \wedge u_{i'2} < s_{2,\beta}] \right) = c, \\
\label{eq:thresholds-eq2}
    & (1 - p) \left( \Pr[u_{i2} \geq s_2] + \Pr[u_{i'2} \geq s_{2,\beta}] \right)
    + p \left( \Pr[u_{i1} < s_1 \wedge u_{i2} \geq s_2] 
    + \Pr[u_{i'1} < s_{1,\beta} \wedge u_{i'2} \geq s_{2,\beta}] \right) = c.
\end{align}

\subsubsection{Existence and uniqueness of \boldmath{\((s_1^\star,s_2^\star)\)} for general \boldmath{\(p\)}}
\label{sec:genp-exist-unique}

The proof of existence and uniqueness also follows the $p=1$ setting closely (see the proof of \Cref{prop:uniqueness}). 
Define the residuals
\[
\begin{aligned}
F_1(s_1,s_2)
&=p\bigl[(1-s_1)+(1-s_{1,\beta})\bigr]
 +(1-p)\Bigl[\Phi_{10}(s_1,s_2)+\Phi_{10,\beta}(s_1,s_2)\Bigr]
 -c,\\
F_2(s_1,s_2)
&=(1-p)\Bigl[\Psi_0(s_2)+\Psi_{0,\beta}(s_2)\Bigr]
 +p\Bigl[\Phi_{01}(s_1,s_2)+\Phi_{01,\beta}(s_1,s_2)\Bigr]
 -c,
\end{aligned}
\]
where, for example,
\(\Phi_{10}(s_1,s_2)=\Pr[u_{i1}\ge s_1,\;u_{i2}<s_2]\),
\(\Psi_0(s_2)=\Pr[u_{i2}\ge s_2]\), and the “\(\beta\)’’–variants are defined analogously.

\begin{proposition}[\bf Existence and uniqueness]
\label{prop:genp-unique}
The system
\[
F_1(s_1,s_2)=0,
\quad
F_2(s_1,s_2)=0,
\qquad
(s_1,s_2)\in(0,1)^2
\]
admits a unique solution \((s_1^\star,s_2^\star)\).  Moreover, the map
\(
p\mapsto(s_1^\star,s_2^\star)
\)
is continuous on \((0,1)\).
\end{proposition}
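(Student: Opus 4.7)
The plan is to reduce the two-dimensional system $F_1 = F_2 = 0$ to a strictly monotone one-dimensional problem by summing the two capacity equations. Write $A_\ell := F_\ell + c$ for the total mass assigned to Institution~$\ell$. The key observation is that $A_1 + A_2$ records the mass of candidates who qualify at \emph{some} institution, and therefore does not depend on $p$:
\[
A_1(s_1, s_2) + A_2(s_1, s_2) \;=\; \bigl(1 - \Phi_{00}(s_1, s_2)\bigr) + \bigl(1 - \Phi_{00,\beta}(s_1, s_2)\bigr),
\]
where $\Phi_{00}(s_1,s_2) := \Pr[u_{i1}<s_1,\, u_{i2}<s_2]$ and $\Phi_{00,\beta}$ is the bias-scaled analogue. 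This sum is continuous, non-increasing in each argument, equals $2$ at $(0,0)$ and $0$ at $(1,1)$, so the level set $\Gamma := \{(s_1,s_2) : A_1+A_2 = 2c\}$ is a nonempty compact set that can be parameterized (after handling possible flat segments) as a continuous non-increasing curve $s_2 = \psi(s_1)$ on a closed subinterval of $[0,1]$.

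Next, I would show that $A_1$ is strictly monotone along $\Gamma$. Differentiating the $A_1$-expression, the unconditional admissions term $p(1-s_1) + p(1-s_{1,\beta})$ contributes $\partial_{s_1}A_1 \le -p$, while the complementary term $(1-p)\Pr[u_{i1}\ge s_1,\, u_{i2}<s_2]$ is non-decreasing in $s_2$, giving $\partial_{s_2}A_1 \ge 0$. Combined with $\psi' \le 0$, this yields
\[
\tfrac{d}{ds_1}\,A_1\bigl(s_1,\psi(s_1)\bigr) \;=\; \partial_{s_1}A_1 + \psi'(s_1)\,\partial_{s_2}A_1 \;\le\; -p \;<\; 0
\]
along $\Gamma$. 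At the endpoint of $\Gamma$ with $s_2=1$, Institution~2 admits no one, so $A_1 = 2c > c$; at the endpoint with $s_1 = 1$, $A_1 = 0 < c$. The intermediate value theorem and strict monotonicity then produce a unique point on $\Gamma$ at which $A_1 = c$, which automatically satisfies $A_2 = c$ and therefore both capacity equations, proving existence and uniqueness of $(s_1^\star,s_2^\star)$. Continuity of $p \mapsto (s_1^\star,s_2^\star)$ then follows from a standard compactness argument: $F_1, F_2$ are jointly continuous in $(s_1,s_2,p)$, so any subsequential limit of $(s_1^\star(p_n),s_2^\star(p_n))$ for $p_n \to p$ lies in $[0,1]^2$ and is a root at $p$, which by uniqueness equals $(s_1^\star(p),s_2^\star(p))$.

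The main obstacle is the bookkeeping needed to promote the weak monotonicities used above to strict ones along $\Gamma$, given the geometry of the joint law of $(u_{i1},u_{i2})$. For $\gamma<1$ this density vanishes outside the parallelogram $\{\gamma u_1 \le u_2 \le \gamma u_1 + (1-\gamma)\}$, so several partials can vanish on sub-regions of $(0,1)^2$ and $\psi$ may be locally constant; care is also needed to argue that $\Gamma$ is a single-valued curve rather than a fat level set. The uniform contribution $-p$ in $\partial_{s_1}A_1$ rescues the argument throughout the interior $p\in(0,1)$; the boundary cases $p=1$ and $p=0$ are handled separately (the former by invoking \Cref{prop:uniqueness}, the latter by a symmetric argument interchanging the roles of the two institutions).
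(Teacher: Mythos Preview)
Your approach is correct but genuinely different from the paper's. The paper works directly with the two residuals: from the four sign conditions $\partial_{s_1}F_1<0$, $\partial_{s_2}F_1>0$, $\partial_{s_1}F_2>0$, $\partial_{s_2}F_2<0$ it extracts two implicit curves $s_1=\Lambda(s_2)$ (from $F_1=0$) and $s_2=\Gamma(s_1)$ (from $F_2=0$), argues that one is increasing and the other decreasing so they meet exactly once, and then invokes the implicit function theorem on the nonvanishing Jacobian for continuity in $p$. You instead pass to the sum $A_1+A_2$, whose $p$-independence is the identity the paper later records as \eqref{eq:genp3} for the monotonicity-in-$p$ analysis but does not exploit here. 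Reducing to a single level curve and showing $A_1$ is strictly monotone along it is a cleaner one-dimensional picture, and your compactness argument for continuity avoids checking the Jacobian sign altogether.

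The obstacle you flag is real but dissolves once you observe that the two flatness conditions are mutually exclusive for $\gamma<1$: a \emph{horizontal} flat of the level set requires $s_2\le\gamma s_1$ (so that $\partial_{s_1}\Phi_{00}=0$), while a \emph{vertical} flat requires $s_2\ge\gamma s_1+(1-\gamma)$ (so that $\partial_{s_2}\Phi_{00}=0$); these cannot hold simultaneously, so the level set is a genuine curve. Moreover, on any vertical flat the condition $s_2\ge\gamma s_1+(1-\gamma)>\gamma s_1$ forces $\partial_{s_2}\Phi_{10}>0$, hence $\partial_{s_2}A_1>0$, so $A_1$ remains strictly monotone there as well. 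With this lemma in hand your endpoint values $A_1=2c$ at $s_2=1$ and $A_1=0$ at $s_1=1$ finish the job exactly as you describe.
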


\begin{proof}
\textbf{1. Monotonicity of each equation.}
By the uniform noise formulas, one checks that
\[
\frac{\partial F_1}{\partial s_1}<0,
\quad
\frac{\partial F_1}{\partial s_2}>0,
\qquad
\frac{\partial F_2}{\partial s_1}>0,
\quad
\frac{\partial F_2}{\partial s_2}<0,
\]
throughout \((0,1)^2\).  In particular, each residual is strictly monotone in each variable.

\medskip
\noindent
\textbf{2. Existence via one-dimensional roots.}
-  For each fixed \(s_2\in[0,1]\), the function \(s_1\mapsto F_1(s_1,s_2)\) is continuous, strictly decreasing, and one verifies
\[
F_1(0,s_2)>0,
\quad
F_1(1,s_2)<0.
\]
Hence there is a unique root \(s_1=\Lambda(s_2)\in(0,1)\) with \(F_1\bigl(\Lambda(s_2),s_2\bigr)=0\).  By strict monotonicity in \(s_1\), \(\Lambda\) is continuous and strictly \emph{increasing} in \(s_2\).

Likewise, for each fixed \(s_1\in[0,1]\), the function \(s_2\mapsto F_2(s_1,s_2)\) is continuous, strictly \emph{decreasing}, and satisfies
\[
F_2\bigl(s_1,0\bigr)>0,
\quad
F_2\bigl(s_1,1\bigr)<0,
\]
so there is a unique root \(s_2=\Gamma(s_1)\in(0,1)\).  Again by strict monotonicity, \(\Gamma\) is continuous and strictly \emph{decreasing} in \(s_1\).

\medskip
\noindent
\textbf{3. Uniqueness by graph intersection.}
The solution \((s_1^\star,s_2^\star)\) must satisfy both
\[
s_1=\Lambda(s_2),
\qquad
s_2=\Gamma(s_1).
\]
Equivalently, the continuous curve \(s_2\mapsto\Lambda(s_2)\) (in the \(s_1\)-direction) and the graph of the inverse \(\Gamma^{-1}\) meet.  Since \(\Lambda\) is strictly increasing and \(\Gamma\) strictly decreasing, there is exactly one intersection in \((0,1)^2\).  This proves uniqueness.

\medskip
\noindent
\textbf{4. Continuity in \boldmath{\(p\)}.}
One checks that the Jacobian
\(\det D_{(s_1,s_2)}(F_1,F_2)\) is strictly negative on \((0,1)^2\) (it is
\(\partial_{s_1}F_1\cdot\partial_{s_2}F_2-\partial_{s_2}F_1\cdot\partial_{s_1}F_2\,<0\)),
so by the Implicit Function Theorem, the unique solution
\((s_1^\star,s_2^\star)\) depends \(C^1\)-smoothly on \(p\).  In particular, it is continuous.
\end{proof}

\subsection{Computing the thresholds}\label{sec:computing_p}

In this section, we describe our approach for computing the thresholds \(s_1^\star\) and \(s_2^\star\). Specifically, we show how the probability expressions in equations~\eqref{eq:thresholds-eq1} and~\eqref{eq:thresholds-eq2} can be evaluated using the analytical results developed in Section~\ref{sec:mainresults}.  Unlike the special case when \(p = 1\), it is now possible that \(s_2^\star > s_1^\star\), which necessitates careful treatment of events such as \(\{u_{i2} \geq s_2 \wedge u_{i1} < s_1\}\). In particular, when
\[
s_2 \geq s_1 \gamma + (1 - \gamma),
\]
this event has zero probability due to the structure of the support, i.e., \(\Pr[u_{i2} \geq s_2 \wedge u_{i1} < s_1] = 0\). A similar observation applies when evaluating the corresponding bias-adjusted probabilities for candidates in group \(G_2\).

Our procedure consists of the following steps:

\begin{itemize}
    \item \textbf{Rewriting probability expressions.}  
    We express the probability terms in~\eqref{eq:thresholds-eq1} and~\eqref{eq:thresholds-eq2} using quantities of the form  
    \[
    \Pr[u_{i1} < s_1 \wedge u_{i2} \geq s_2] - \Pr[u_{i2} \geq s_2],
    \]
    and $\Pr[u_{i2} \geq s_2]$
    for candidates \(i \in G_1\), and analogously for \(i' \in G_2\). These forms simplify the evaluation using the results from Section~\ref{sec:mainresults}.

    \item \textbf{Case identification.}  
    The expression \(\Pr[u_{i1} < s_{1,\beta} \wedge u_{i2} \geq s_{2,\beta}]\) depends on the relative positioning of \(s_2\) with respect to \(s_1 \gamma\) and \(1 - \gamma\). Section~\ref{sec:mainresults} identifies four distinct cases under the assumption that \(s_2^\star \leq s_1^\star\), but this assumption may not hold in the general setting. For instance, if \(s_2^\star \geq s_1 \gamma + (1 - \gamma)\), then the probability is zero. Geometrically, this corresponds to the line \(L(\gamma, s_2)\) not intersecting the rectangle \([0, s_1] \times [0,1]\).

    Similar logic applies to \(\Pr[u_{i2} \geq s_2]\). Overall, we identify seven mutually exclusive cases for computing \(\Pr[u_{i1} < s_1 \wedge u_{i2} \geq s_2] - \Pr[u_{i2} \geq s_2]\) and $\Pr[u_{i2} \geq s_2]$. For candidates in \(G_1\), only seven cases are feasible. For candidates in \(G_2\), we must consider twenty-eight total configurations, accounting for the seven cases above along with independent binary assumptions on whether \(s_1 \leq \beta\) and whether \(s_2 \leq \beta\).

    \item \textbf{Solving and verifying.}  
    For each guess of the applicable case(s), we substitute the corresponding closed-form expressions into equations~\eqref{eq:thresholds-eq1} and~\eqref{eq:thresholds-eq2}, resulting in a system of two equations in the unknowns \(s_1\) and \(s_2\). We solve this system and verify whether the solution satisfies all assumptions made in the guess. Since the solution to the threshold equations is unique, exactly one of the constant number of guesses will yield a consistent and valid threshold pair \((s_1^\star, s_2^\star)\).
\end{itemize}

\noindent
We now elaborate on each of the steps above: 

\noindent
{\bf Rewriting probability expressions.}
We first consider~\eqref{eq:thresholds-eq1}.  We apply the identity:
\[
\Pr[u_{i1} \geq s_1 \wedge u_{i2} < s_2] 
= (1 - s_1) - \Pr[u_{i2} \geq s_2] + \Pr[u_{i1} < s_1 \wedge u_{i2} \geq s_2],
\]
and similarly for \(i'\). This yields:
\begin{align}
\label{eq:thresholds-eq1-expanded}
    (1 - s_1) + (1 - s_{1,\beta}) 
    + (1 - p) \big( &\Pr[u_{i1} < s_1 \wedge u_{i2} \geq s_2] - \Pr[u_{i2} \geq s_2] \\
    &+ \Pr[u_{i'1} < s_{1,\beta} \wedge u_{i'2} \geq s_{2,\beta}] - \Pr[u_{i'2} \geq s_{2,\beta}] \big) = c. \notag
\end{align}
Note that $\Pr[u_{i1} \geq s_2]$ can be expressed as $\Pr[u_{i1} \geq s_2 \wedge u_{i1} \leq 1]$. 

The second equation \eqref{eq:thresholds-eq2} remains unchanged -- the probability terms are already expressed in terms of the desired events. 

\noindent
{\bf Case identification.}
Both the equations~\eqref{eq:thresholds-eq1-expanded} and~\eqref{eq:thresholds-eq2} involve $\Pr[u_{i1} < s_1 \wedge u_{i2} \geq s_2] - \Pr[u_{i2} \geq s_2]$ and $\Pr[u_{i1} \geq s_2]$ for a candidate $i \in G_1$ and similarly for a candidate $i' \in G_2$. We now show the possibilities for a candidate $i \in G_1$: 
\begin{itemize}
    \item[(i)] $s_2 \leq \max(1-\gamma,\gamma s_1)$: we are in Case~I while computing both $\Pr[u_{i1} < s_1 \wedge u_{i2} \geq s_2]$ and $\Pr[ u_{i2} \geq s_2]$. Thus, 
    $$\Pr[u_{i1} < s_1 \wedge u_{i2} \geq s_2] - \Pr[u_{i2} \geq s_2] = s_1-1,$$
    and 
    $$\Pr[u_{i2} \geq s_2] = 1-\frac{s_2^2}{2\gamma(1-\gamma)}.$$
    \item[(ii)] $1-\gamma < s_2 \leq \max(1-\gamma,\gamma s_1)$: we are in Case~IV for the both the probability events. Hence, 
     $$\Pr[u_{i1} < s_1 \wedge u_{i2} \geq s_2] - \Pr[u_{i2} \geq s_2] = s_1-1,$$
    and 
    $$\Pr[u_{i2} \geq s_2] = 1-\frac{s_2}{\gamma} + \frac{1-\gamma}{2\gamma}.$$
    \item[(iii)] $\frac{s_2}{\gamma} \in [s_1, 1], \frac{s_2 - \gamma s_1}{1-\gamma} \in [0,1]:$ Observe that $\gamma s_1 \leq s_2 \leq 1-\gamma $ and $s_2 \leq \min(\gamma,1-\gamma)$. Thus, we are in Case~III for $\Pr[u_{i1} < s_1 \wedge u_{i2} \geq s_2]$, but in Case~I for $\Pr[u_{i2} \geq s_2]$. Therefore, 
    $$\Pr[u_{i1} < s_1 \wedge u_{i2} \geq s_2] - \Pr[u_{i2} \geq s_2] = \frac{s_2^2}{2\gamma(1-\gamma)} - \frac{s_1(2s_2-\gamma s_1)}{2(1-\gamma)},$$
    and 
     $$\Pr[u_{i2} \geq s_2] = 1-\frac{s_2^2}{2\gamma(1-\gamma)}.$$
    \item[(iv)] $\frac{s_2}{\gamma} \in [s_1, 1], \frac{s_2 - \gamma s_1}{1-\gamma} > 1:$ Observe that $s_2 \geq \gamma s_1 + (1-\gamma)$ and $1-\gamma \leq s_2 \leq \gamma$. Thus, $\Pr[u_{i1} < s_1 \wedge u_{i2} \geq s_2]=0$, but we are in Case~IV for $\Pr[u_{i2} \geq s_2]$. Therefore, 
    $$\Pr[u_{i1} < s_1 \wedge u_{i2} \geq s_2] - \Pr[u_{i2} \geq s_2] = -1+\frac{s_2}{\gamma}-\frac{1-\gamma}{2\gamma},$$
    and
    $$\Pr[u_{i2} \geq s_2] = 1-\frac{s_2}{\gamma} + \frac{1-\gamma}{2\gamma}.$$
    \item[(v)]  $\frac{s_2}{\gamma} > 1, \frac{s_2 - \gamma s_1}{1-\gamma} > 1 :$ Observe that $s_2 \geq \gamma s_1 + (1-\gamma)$ and $s_2 \geq \max(\gamma,1-\gamma)$. Thus, $\Pr[u_{i1} < s_1 \wedge u_{i2} \geq s_2]=0$, but we are in Case~II for $\Pr[u_{i2} \geq s_2]$.
     Therefore, 
    $$\Pr[u_{i1} < s_1 \wedge u_{i2} \geq s_2] - \Pr[u_{i2} \geq s_2] = -\frac{(1-s_2)^2}{2\gamma(1-\gamma)},$$
    and 
    $$\Pr[u_{i2} \geq s_2] = \frac{(1-s_2)^2}{2\gamma(1-\gamma)}.$$
    \item[(vi)] $\frac{s_2}{\gamma} > 1, \frac{s_2 - \gamma s_1}{1-\gamma} \in [0,1], s_2 \leq 1-\gamma:$  In this Case, $s_1 \gamma \leq s_2 \leq 1-\gamma$ and $\gamma \leq s_2 \leq 1-\gamma$. Thus, we are in Case III for both probability events. Hence, 
     $$\Pr[u_{i1} < s_1 \wedge u_{i2} \geq s_2] - \Pr[u_{i2} \geq s_2] = s_1 - \frac{s_1(2s_2-\gamma s_1)}{2(1-\gamma)} - 1 + \frac{2s_2-\gamma}{1-\gamma},$$
     and 
    $$\Pr[u_{i2} \geq s_2] = 1-\frac{2s_2-\gamma}{2(1-\gamma)}$$
     \item[(vii)] $\frac{s_2}{\gamma} > 1, \frac{s_2 - \gamma s_1}{1-\gamma} \in [0,1], s_2 \geq 1-\gamma:$ Here $s_2 \geq \max(\gamma, 1-\gamma)$ and $s_2 \geq \max(\gamma, s_1 \gamma)$. Thus, we are in Case~II for both the probability expressions. Thus, 
     $$\Pr[u_{i1} < s_1 \wedge u_{i2} \geq s_2] - \Pr[u_{i2} \geq s_2] = \frac{1-\gamma -s_2 + \gamma s_1)^2}{2\gamma(1-\gamma)} - \frac{(1 -s_2 )^2}{2\gamma(1-\gamma)}, $$ and
    $$\Pr[u_{i2} \geq s_2] = \frac{(1-s_2)^2}{2\gamma(1-\gamma)}.$$
\end{itemize}

\noindent
Similarly, we can express 
$\Pr[u_{i'1} < s_{1,\beta} \wedge u_{i'2} \geq s_2] - \Pr[u_{i'2} \geq s_{2, \beta}]$. 

\smallskip
\noindent
{\bf Formulation of system of equations for \boldmath{$s_1^\star$} and \boldmath{$s_2^\star$}.}
We now illustrate an example of the system of equations that arises from specific guesses made in the previous step. Suppose, we consider Case~(v) for $G_1$ and Case~(vi) for $G_2$. Further, we also assume that $s_1^\star \leq \beta$ but $s_2^\star \geq \beta$. Under these assumptions,  we have
 $$\Pr[u_{i1} < s_1 \wedge u_{i2} \geq s_2] - \Pr[u_{i2} \geq s_2] = -\frac{(1-s_2)^2}{2\gamma(1-\gamma)}$$
 for a candidate $i \in G_1$ and 
 $$\Pr[u_{i'1} < s_{1,\beta} \wedge u_{i'2} \geq s_{2,\beta}] - \Pr[u_{i2} \geq s_{2,\beta}] = \frac{s_1}{\beta} - \frac{s_1(2-\gamma s_1/\beta)}{2\beta(1-\gamma)} - 1 + \frac{2-\gamma}{1-\gamma}.$$
 Thus, Equation~\eqref{eq:thresholds-eq1-expanded} can be written as: 
 \begin{align*}
    &  (1-s_1)  - p(s_1/\beta-1) - (1-p) \frac{(1-s_2)^2}{2\gamma(1-\gamma)}  - (1-p)\frac{s_1(2-\gamma s_1/\beta)}{2\beta(1-\gamma)} + (1-p) \frac{2-\gamma}{1-\gamma} = c. 
 \end{align*}
The second equation obtained from~\eqref{eq:thresholds-eq2} is: 
\begin{align*}
   \frac{(1-s_2)^2}{2\gamma(1-\gamma)} +  \frac{2-\gamma}{2(1-\gamma)} + p \left( -\frac{(1-s_2)^2}{2\gamma(1-\gamma)} + \frac{s_1}{\beta} - \frac{s_1(2-\gamma s_1/\beta)}{2\beta(1-\gamma)} - 1 + \frac{2-\gamma}{1-\gamma}\right) = c.
\end{align*}
Together, these yield a system of two equations in the unknowns \(s_1\) and \(s_2\), whose solution gives a candidate threshold pair \((s_1^\star, s_2^\star)\). We then verify whether this solution satisfies the following consistency conditions:
\begin{enumerate}
    \item \(s_1^\star \leq \beta\) and \(s_2^\star > \beta\),
    \item Case~(v) applies to \((s_1^\star, s_2^\star)\), i.e., \(s_2^\star > \gamma\) and \(\frac{s_2^\star - \gamma s_1^\star}{1 - \gamma} > 1\),
    \item Case~(vi) applies to \((s_1^\star/\beta, 1)\), i.e., \(\frac{1 - \gamma s_1}{1 - \gamma} \in [0,1]\).
\end{enumerate}
If all conditions are met, the thresholds \((s_1^\star, s_2^\star)\) are valid. Otherwise, we proceed to test one of the remaining possibilities.

Unlike the \(p = 1\) case, where the equations admit a closed-form solution for \(s_1^\star\) and a piecewise formula for \(s_2^\star\), the general \(p \in (0,1)\) case yields a coupled nonlinear system that typically does not admit a closed-form solution. Even in the symmetric case \(p = 1/2\), the equations remain analytically intractable due to the interaction between the thresholds and regime-specific probability expressions.
Numerical solutions for various parameter settings are provided in 
\Cref{sec:empirical-pgamma}.

\subsection{Expressions for metrics}\label{sec:metrics_p}
In this section, we give expressions for computing the representation ratio and the observed utility of each of the two institutions.

\subsubsection{Representation ratio}

The representation ratio is defined as the ratio \( A / B \), where the numerator \( A \) corresponds to the fraction of candidates from the disadvantaged group \( G_2 \) who are selected, and the denominator \( B \) is the corresponding quantity for the advantaged group \( G_1 \).

The numerator can be written as:
\begin{align*}
A &= (1 - s_{1,\beta}) 
+ (1 - p)\left( \Pr[u_{i'1} < s_{1,\beta} \wedge u_{i'2} \geq s_{2,\beta}] - \Pr[u_{i'2} \geq s_{2,\beta}] \right) \\
&\quad + (1 - p)\Pr[u_{i'2} \geq s_{2,\beta}] 
+ p\, \Pr[u_{i'1} < s_{1,\beta} \wedge u_{i'2} \geq s_{2,\beta}] \\
&= (1 - s_{1,\beta}) + \Pr[u_{i'1} < s_{1,\beta} \wedge u_{i'2} \geq s_{2,\beta}],
\end{align*}
where the final simplification follows by cancellation of terms.

Similarly, the denominator is:
\[
B = (1 - s_1) + \Pr[u_{i1} < s_1 \wedge u_{i2} \geq s_2].
\]
Hence, the representation ratio is given by:
\[
\mathcal{R}(p, \beta, \gamma) = 
\frac{(1 - s_{1,\beta}) + \Pr[u_{i'1} < s_{1,\beta} \wedge u_{i'2} \geq s_{2,\beta}]}{(1 - s_1) + \Pr[u_{i1} < s_1 \wedge u_{i2} \geq s_2]}.
\]

\paragraph{Special case: \boldmath{\(\gamma = 1\)}.}
We now analyze a simplified setting where \(\gamma = 1\) and \(p \geq 1/2\) (the case \(p < 1/2\) is symmetric). Since more candidates prefer Institution~1, it follows that \(s_1^\star(p) \geq s_2^\star(p)\). Under full correlation (\(\gamma = 1\)), the observed utilities are deterministic functions of the true values, so:

\begin{itemize}
    \item A candidate is selected if their utility to either institution exceeds the corresponding threshold.
    \item In this case, both groups face the same selection criterion: being above \(s_2^\star(p)\).
\end{itemize}
\noindent
Thus, the capacity constraint becomes:
\[
(1 - s_2^\star(p)) + \left(1 - \frac{s_2^\star(p)}{\beta} \right) = 2c,
\]
which yields:
\[
s_2^\star(p) = \frac{2(1 - c)}{1 + 1/\beta}.
\]
Note that \(s_2^\star(p)\) is independent of \(p\).

Now consider the assignment to Institution~1. A candidate from \(G_1\) is assigned to Institution~1 if:
\begin{enumerate}
    \item They prefer Institution~1 (which happens with probability \(p\)), and
    \item Their utility exceeds \(s_1^\star(p)\). Otherwise, they go to Institution~2 (if their utility exceeds \(s_2^\star(p)\)).
\end{enumerate}
The capacity constraint for Institution~1 becomes:
\[
p(1 - s_1^\star(p)) + p\left(1 - s_1^\star(p)\beta \right) = c,
\]
which simplifies to:
\[
s_1^\star(p) = \frac{2 - c/p}{1 + 1/\beta}.
\]
Since selection depends on whether the candidate's value exceeds \(s_2^\star(p)\), the representation ratio becomes:
\[
\mathcal{R} = \frac{1 - s_2^\star(p)/\beta}{1 - s_2^\star(p)}.
\]
This reflects the differential in admission likelihood due to the biased utility scaling between the two groups.

\subsubsection{Observed utility of institutions}

In the general preference setting, the observed utility of Institution~1 (assuming $\nu_1 = \nu_2 = 1$) is given by:
\begin{align}
\notag
\mathcal{U}_1 & = p \, \mathbb{E}_{G_{11}}[\hu_{i1} \cdot \mathbbm{1}[\hu_{i1} \geq s_1^\star]] 
+ (1-p) \, \mathbb{E}_{G_{12}}[\hu_{i1} \cdot \mathbbm{1}[\hu_{i1} \geq s_1^\star \wedge \hu_{i2} < s_2^\star]] \\
& \quad + p \, \mathbb{E}_{G_{21}}[\hu_{i'1} \cdot \mathbbm{1}[\hu_{i'1} \geq s_{1,\beta}^\star]] 
+ (1-p) \, \mathbb{E}_{G_{22}}[\hu_{i'1} \cdot \mathbbm{1}[\hu_{i'1} \geq s_{1,\beta}^\star \wedge \hu_{i'2} < s_{2,\beta}^\star]],
\label{eq:utility_1p}
\end{align}
where $G_{i\ell}$ denotes the subset of candidates in group $G_i$ who prefer Institution $\ell$.

To simplify the utility calculation, we define the following integral:
\[
I_1(\gamma, s_1, s_2) := \int_0^1 \int_0^1 x \cdot \mathbbm{1}[\gamma x + (1-\gamma)y < s_2 \wedge x \geq s_1] \, dx \, dy.
\]

\begin{proposition}[\bf Observed utility of Institution 1 in the general preference setting]
\label{lem:utility1-genp}
The utility of institution~1, $\mathcal{U}_1$ is given by:
$$p \left( \frac{1-s_1^2}{2} + \frac{\beta(1-s_{1,\beta}^2)}{2}\right) + (1-p) \left(I_1(\gamma, s_1, s_2) + \beta I_1(\gamma, s_{1,\beta}, s_{2,\beta}) \right).$$
\end{proposition}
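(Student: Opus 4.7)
The plan is to evaluate the four expectation terms in equation~\eqref{eq:utility_1p} directly and recognize that they collapse into the two closed-form integrals and two instances of $I_1$ stated in the proposition. The two terms weighted by $p$ correspond to candidates who prefer Institution~1 as their first choice; these reduce to one-dimensional integrals of a uniform random variable restricted to a half-line. The two terms weighted by $(1-p)$ correspond to candidates who prefer Institution~2 but end up matched to Institution~1 because they fail to clear its threshold $s_2^\star$; these are naturally two-dimensional integrals that match the definition of $I_1$.

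First I would handle the $p$-weighted contributions. For a candidate $i \in G_{11}$, since $\hat{u}_{i1} = v_{i1}$ is uniform on $[0,1]$, a direct integration gives $\mathbb{E}_{G_{11}}[\hat{u}_{i1}\,\mathbbm{1}[\hat{u}_{i1} \geq s_1^\star]] = (1 - (s_1^\star)^2)/2$. For $i' \in G_{21}$, the observed utility is $\hat{u}_{i'1} = \beta v_{i'1}$, and the admission event $\{\hat{u}_{i'1} \geq s_1^\star\}$ becomes $\{v_{i'1} \geq s_{1,\beta}^\star\}$; pulling the factor of $\beta$ outside the integral yields $\beta(1 - (s_{1,\beta}^\star)^2)/2$. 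Summing these gives the first bracketed expression in the proposition.

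Next, for the $(1-p)$-weighted contributions, I would rewrite the expectations as integrals over $(v_{i1}, v_{i2}) \in [0,1]^2$ using the uniform joint density. For $i \in G_{12}$, the event $\{\hat{u}_{i1} \geq s_1^\star \wedge \hat{u}_{i2} < s_2^\star\}$ translates to $\{v_{i1} \geq s_1^\star \wedge \gamma v_{i1} + (1-\gamma) v_{i2} < s_2^\star\}$, so the contribution is exactly $I_1(\gamma, s_1^\star, s_2^\star)$ by definition. For $i' \in G_{22}$, a change of variables scaling by $\beta$ in both attributes converts the admission event into $\{v_{i'1} \geq s_{1,\beta}^\star \wedge \gamma v_{i'1} + (1-\gamma) v_{i'2} < s_{2,\beta}^\star\}$, producing $\beta\, I_1(\gamma, s_{1,\beta}^\star, s_{2,\beta}^\star)$ after pulling the scaling of the integrand outside.

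Substituting these four evaluations back into~\eqref{eq:utility_1p} and grouping terms by their $p$ and $(1-p)$ weights yields the stated formula. There is no real obstacle here: the argument is essentially a bookkeeping exercise that relies only on uniformity of the attributes and the linearity of the map $v \mapsto \beta v$ used to handle bias. The one point that requires minor care is ensuring that the thresholds $s_{1,\beta}^\star, s_{2,\beta}^\star$ in the $G_2$ terms are the truncated quantities $\min(1, s_j^\star/\beta)$ defined in Section~\ref{sec:continuum}, so that the integration domains remain contained in $[0,1]^2$; this is automatic from the definitions, and the identity $I_1(\gamma, 1, s_2) = 0$ handles any degenerate cases where $s_{1,\beta}^\star = 1$.
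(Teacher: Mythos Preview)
Your proposal is correct and follows essentially the same approach as the paper's proof: evaluate each of the four expectation terms in~\eqref{eq:utility_1p} directly, recognizing the $p$-weighted terms as one-dimensional integrals of a uniform variable and the $(1-p)$-weighted terms as instances of $I_1$. If anything, your write-up is more careful than the paper's, explicitly handling the $\beta$-scaling for the $G_2$ terms and the truncation of $s_{j,\beta}^\star$.
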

\begin{proof}
    For a candidate $i \in G_{11}$,
     $$\E [\hu_{i1} \ind{[\hu_{i1} \geq  s_1]}] = \int_{s_1}^1 s ds = \frac{(1-s_1^2)}{2}. $$
    For a candidate $i \in G_{12}$, 
    For a candidate $i \in G_{12}$, the definition of the integral $I_1$ shows that 
    $\E[\hu_{i1} \ind{[\hu_{i1} \geq s_1^\star \wedge \hu_{i2} < s_2^\star]}]$ is equal to $I_1(\gamma, s_1, s_2)$. Arguing similarly for the two sub-groups of $G_2$ and using~\eqref{eq:utility_1p} implies the desired result. 
\end{proof}
\noindent
It remains to evaluate $I_1(\gamma,s_1,s_2)$. 
This integral can be evaluated based on the following cases: 
\begin{itemize}
    \item {\bf Case I (\boldmath{$s_2 < \gamma s_1$)}:} In this case, $I_1(\gamma, s_1, s_2) = 0.$ For a given $y$, $x$ varies from $s_1$ to $\frac{s_2-(1-\gamma) y}{\gamma} \leq \frac{s_2}{\gamma} < s_1$. Thus, the range of $x$ is empty for any $y \in [0,1]$. 
    \item {\bf Case II (\boldmath{$\frac{s_2}{\gamma} \in [s_1, 1], \frac{s_2 - \gamma s_1}{1-\gamma} \in [0,1]$}):} Note that $x$ varies from $s_1$ to $1$. For a given $x$, $y$ varies from $0$ to $\min(1,\frac{s_2-\gamma x}{1-\gamma})$. Since $x \geq s_1$, $\frac{s_2-\gamma x}{1-\gamma} \leq \frac{s_2-\gamma s_1}{1-\gamma} \leq 1. $ However, $\frac{s_2-\gamma x}{1-\gamma} \geq 0$ only if $x \leq s_2/\gamma \in [s_1,1].$ Thus, 
    the integral is 
    $$\int_{s_1}^{s_2/\gamma} \frac{s_2-\gamma x}{1-\gamma} x dx = \frac{1}{1 - \gamma} \left(
\frac{s_2^3}{6\gamma^2}
- \frac{s_2 s_1^2}{2}
+ \frac{\gamma s_1^3}{3}
\right). $$
    \item {\bf Case III (\boldmath{$\frac{s_2}{\gamma} \in [s_1, 1], \frac{s_2 - \gamma s_1}{1-\gamma} > 1$}):} The argument here is similar to case~II above, $\frac{s_2 - \gamma x}{1-\gamma}$ may not remain at most $1$ for all $x \in [s_1,1]$. In fact, when $x \leq \frac{s_2-(1-\gamma)}{\gamma}$, $\min(1,\frac{s_2 - \gamma x}{1-\gamma})=1 $, and then $y$ varies from $0$ to $1$. Thus, 
the integral is 
    \begin{align*} & \int_{s_1}^{\frac{s_2-(1-\gamma)}{\gamma}} x dx + \int_{\frac{s_2-(1-\gamma)}{\gamma}}^{s_2/\gamma} \frac{s_2-\gamma x}{1-\gamma}x dx = \frac{1}{2} \left( \left( \frac{s_2 - (1 - \gamma)}{\gamma} \right)^2 - s_1^2 \right) \\
    & \quad + \frac{1}{1-\gamma} \left[ \frac{s_2^3}{6 \gamma^2} {{-}} \left( \frac{s_2-(1-\gamma)}{\gamma} \right)^2 \left( \frac{s_2}{6} + \frac{1-\gamma}{3} \right)\right].
    \end{align*}
     \item {\bf Case IV (\boldmath{$\frac{s_2}{\gamma} > 1, \frac{s_2 - \gamma s_1}{1-\gamma} \in [0,1]$)}:} The argument is again similar to case~II, except for the fact that $s_2 - \gamma x$ remains non-negative for all $x \in [s_1,1]$. Therefore, the  desired integral is $$\int_{s_1}^{1} \frac{s_2-\gamma x}{1-\gamma} x dx = \frac{1}{1 - \gamma} \left[
\frac{s_2}{2} (1 - s_1^2)
-
\frac{\gamma}{3} (1 - s_1^3)
\right]. $$
     \item {\bf Case V 
     (\boldmath{$\frac{s_2}{\gamma} > 1, \frac{s_2 - \gamma s_1}{1-\gamma} >1$}):} As $x$ varies from $s_1$ to $1$, $\frac{s_2-\gamma s_1}{1-\gamma}$ remains non-negative because $s_2 \geq \gamma$, but this expression exceeds $1$ when $x$ exceeds $\frac{s_2-(1-\gamma)}{\gamma}$. Thus, the desired integral is  
     \begin{align*}
     & \int_{s_1}^{\frac{s_2-(1-\gamma)}{\gamma}} x dx + \int_{\frac{s_2-(1-\gamma)}{\gamma}}^{1} \frac{s_2-\gamma x}{1-\gamma}x dx = \frac{1}{2} \left( \left( \frac{s_2 - (1 - \gamma)}{\gamma} \right)^2 - s_1^2 \right) \\
     & + \frac{1}{1-\gamma} \left[\left( \frac{s_2}{2} - \frac{\gamma}{3} \right)
- \left( \frac{s_2 - (1 - \gamma)}{\gamma} \right)^2 \left( \frac{s_2}{6} + \frac{1-\gamma}{3} \right) \right]
     \end{align*}
\end{itemize}
We now give the expression for the expected utility of Institution~2. Define the integral: 

$$I_2(\gamma, s_1, s_2) :=\int_{0}^{1} \int_{0}^1 (\gamma x  + (1-\gamma)y) \ind{[\gamma x + (1-\gamma)y \geq s_2 \wedge x < s_1]} dx dy. $$
Arguing as in the proof of~\Cref{lem:utility1-genp}, we get 
\begin{proposition}[\bf Observed utility of Institution 2 in the general preference setting]
    \label{lem:utility2-genp}
The utility of Institution~2, $\mathcal{U}_2$ is given by:
$$(1-p) \left( I_2(\gamma, 1, s_2) + \beta I_2(\gamma, 1, s_2) \right) + p \left( I_2(\gamma, s_1, s_2) + \beta I_2(\gamma, s_{1,\beta}, s_{2,\beta} )\right).$$
\end{proposition}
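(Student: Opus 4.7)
The plan is to mirror the proof of \Cref{lem:utility1-genp} by a four-fold decomposition of the total observed utility of Institution~2. Every candidate who ends up at Institution~2 belongs to exactly one of four classes indexed by (group, top preference) $\in \{G_1,G_2\} \times \{1,2\}$. For a candidate whose top preference is Institution~2 (occurring with probability $1-p$), assignment to Institution~2 requires only that their observed Institution~2 score clears $s_2^\star$ (no constraint on the Institution~1 score). For a candidate whose top preference is Institution~1 (probability $p$), assignment to Institution~2 requires that they fall below $s_1^\star$ at Institution~1 and above $s_2^\star$ at Institution~2. Summing the expected contributed score across these four classes, weighted by $p$ and $1-p$, will produce the claimed formula.

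Concretely, I would first recall that $\hat u_{i2} = \gamma v_{i1} + (1-\gamma) v_{i2}$ for $i \in G_1$, and $\hat u_{i'2} = \beta(\gamma v_{i'1} + (1-\gamma) v_{i'2})$ for $i' \in G_2$. For $G_1$, the contribution from candidates preferring Institution~2 is the expectation of $\hat u_{i2}$ on the event $\{\hat u_{i2} \geq s_2^\star\}$, which equals $I_2(\gamma, 1, s_2^\star)$: choosing the first argument equal to $1$ renders the constraint $x < s_1$ vacuous, recovering the unconstrained expectation. The contribution from $G_1$ candidates preferring Institution~1 but not admitted there is $I_2(\gamma, s_1^\star, s_2^\star)$ by the very definition of $I_2$.

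For $G_2$, the multiplicative $\beta$ on the observed score factors out of the utility integral, while the threshold condition $\beta(\gamma v_{i'1} + (1-\gamma) v_{i'2}) \geq s_2^\star$ translates to $\gamma v_{i'1} + (1-\gamma) v_{i'2} \geq s_{2,\beta}^\star$; the Institution~1 threshold transforms analogously via $s_{1,\beta}^\star$. Thus the two $G_2$ contributions read $\beta I_2(\gamma, 1, s_{2,\beta}^\star)$ and $\beta I_2(\gamma, s_{1,\beta}^\star, s_{2,\beta}^\star)$. Multiplying the four pieces by the corresponding preference weights and adding yields
\[
\mathcal{U}_2 = (1-p)\bigl(I_2(\gamma,1,s_2^\star) + \beta I_2(\gamma,1,s_{2,\beta}^\star)\bigr) + p\bigl(I_2(\gamma,s_1^\star,s_2^\star) + \beta I_2(\gamma,s_{1,\beta}^\star,s_{2,\beta}^\star)\bigr),
\]
matching the stated expression (noting that the second slot of the $\beta$-scaled $(1-p)$ term naturally takes $s_{2,\beta}^\star$ in the derivation).

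There is no genuine analytic obstacle: the work is purely bookkeeping. The single point requiring care is the $\beta$-scaling, which enters \emph{twice} — once as a multiplicative prefactor on the score integral and once via the threshold replacements $s_\ell^\star \mapsto s_{\ell,\beta}^\star$; both substitutions must be made consistently. Beyond that, one must verify that the four assignment events are mutually exclusive and jointly cover every candidate matched to Institution~2, which is immediate since preference is drawn independently and the stability rules of \Cref{sec:model} deterministically sort candidates into their assignments given the thresholds.
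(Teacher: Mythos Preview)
Your proposal is correct and follows exactly the approach the paper intends: the paper's own proof is the single line ``Arguing as in the proof of~\Cref{lem:utility1-genp}, we get,'' and your four-fold decomposition by (group, top preference) with the $\beta$-prefactor and threshold substitutions $s_\ell^\star \mapsto s_{\ell,\beta}^\star$ for $G_2$ is precisely that analogous argument spelled out. You have also correctly noticed that the stated formula carries a typo in the $(1-p)$ block---the second summand should read $\beta I_2(\gamma,1,s_{2,\beta})$ rather than $\beta I_2(\gamma,1,s_2)$---and your derivation lands on the corrected expression.
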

\noindent
Finally, observe that the integral $I_2(\gamma, s_1, s_2)$ is exactly the integral computed in the four cases stated in~\Cref{app:observedutility2}. 
We also need to take care of a fifth possibility when $s_2^\star$ exceeds $s_1^\star \gamma + (1-\gamma)$ -- in this case, $I_2(\gamma, s_1, s_2)=0$. Thus, we have: 
\begin{itemize}
\item {\bf Case I (\boldmath{$s_2^\star \leq \min(s_1^\star \gamma, 1-\gamma)$}
):}
In this case, the integral is 
$$\frac{s_1^\star(\gamma s_1^\star + 1-\gamma)}{2} - \frac{(s_2^\star)^3}{3 \gamma (1-\gamma)}.$$
\item {\bf Case II (\boldmath{$1-\gamma + s_1^\star \gamma \geq s_2^\star \geq \max(s_1^\star \gamma, 1-\gamma)$}):} The integral is
$$\frac{1}{3} (1-\gamma+ 2s_2^\star + \gamma s_1^\star) \cdot  \frac{(1-\gamma-s_2^\star+\gamma s_1^\star)^2}{2\gamma(1-\gamma)}.$$
\item {\bf Case III (\boldmath{$ s_1^\star \gamma < s_2^\star < 1-\gamma$}):} The integral is 
$$ s_1^\star \cdot \left(1-\frac{s_2^\star}{1-\gamma}\right) \left( \frac{\gamma s_1^\star}{2} +  \frac{1-\gamma + s_2^\star}{2}\right) + \frac{s_1^\star}{2} \cdot \frac{\gamma s_1^\star}{1-\gamma} \cdot \left( \frac{2\gamma s_1^\star}{3} + \frac{3s_2^\star - \gamma s_1^\star}{3} \right).$$
\item {\bf Case IV (\boldmath{$ 1-\gamma < s_2^\star < s_1^\star \gamma$}):} The integral is 
$$\left( s_1^\star - \frac{s_2^\star}{\gamma}\right) \cdot \left( \frac{\gamma s_1^\star + s_2^\star}{2} + \frac{1-\gamma}{2}\right) + \frac{(1-\gamma)}{2\gamma} \cdot \left( \frac{3s_2^\star - (1-\gamma)}{3} + \frac{2(1-\gamma)}{3} \right).$$
\item {\bf Case V  (\boldmath{$ s_2^\star > s_1^\star \gamma + (1-\gamma)$}):} In this case, the integral is $0$.  
\end{itemize}

\subsection{Monotonicity properties of \boldmath{$s_1^\star$} and \boldmath{$s_2^\star$} with respect to \boldmath{$p$}}\label{sec:monotonicity_p}

In this section, we show that for any fixed values of the parameters $c, \beta$, and $\gamma$, the thresholds  $s_1^\star$ and $s_2^\star$ are monotone functions of the parameter $p$. More formally, we show: 
\begin{theorem}[\bf Monotonicity of thresholds w.r.t. \boldmath{$p$}]
\label{thm:monotone-threshp}
    For any fixed values of $c, \beta, \gamma$, the thresholds $s_1^\star(p)$ and $s_2^\star(p)$ are non-decreasing and non-increasing functions of $p$ respectively.  
\end{theorem}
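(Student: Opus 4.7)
The plan is to avoid any implicit-function-theorem calculus and instead obtain both monotonicity claims from a short contradiction argument built on a single algebraic identity. First I would rewrite \eqref{eq:thresholds-eq1}--\eqref{eq:thresholds-eq2} in the cleaner form
\[
\alpha(s_1^\star) \;=\; c+(1-p)\,\mu^\star, \qquad \Lambda(s_2^\star) \;=\; c+p\,\mu^\star,
\]
where
$\alpha(s_1):=\Pr[u_{i1}\ge s_1]+\Pr[u_{i'1}\ge s_{1,\beta}]$,
$\Lambda(s_2):=\Pr[u_{i2}\ge s_2]+\Pr[u_{i'2}\ge s_{2,\beta}]$,
$\mu(s_1,s_2):=\Pr[u_{i1}\ge s_1 \wedge u_{i2}\ge s_2]+\Pr[u_{i'1}\ge s_{1,\beta}\wedge u_{i'2}\ge s_{2,\beta}]$,
and $\mu^\star:=\mu(s_1^\star,s_2^\star)$. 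These follow by expanding the ``$u_{i1}\ge s_1\wedge u_{i2}<s_2$'' and ``$u_{i2}\ge s_2$'' probabilities via $\Pr[u_{i1}\ge s_1]-\mu$ and $\tilde\pi+\mu$ respectively, which makes the $p$ dependence factor cleanly.

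Next, summing the two rewritten equations eliminates $p$ and yields the $p$-free identity $T(s_1^\star,s_2^\star) := \alpha(s_1^\star)+\Lambda(s_2^\star)-\mu^\star = 2c$. By inclusion--exclusion, $T(s_1,s_2)$ is the total union probability of being admitted at \emph{some} institution, and it is strictly decreasing in each coordinate on $(0,1)^2$---the same monotonicity already used in \Cref{prop:genp-unique}. Hence the equilibrium trajectory $(s_1^\star(p),s_2^\star(p))$ always lies on a strictly monotone curve, so any rise in $s_1^\star$ along this curve must be matched by a fall in $s_2^\star$.

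With the identity in hand, the contradiction is short. Suppose $p_1<p_2$ but $s_1^\star(p_1)>s_1^\star(p_2)$; by the $T$-identity and strict monotonicity of $T$, $s_2^\star(p_1)<s_2^\star(p_2)$. Writing $\mu_j:=\mu(s_1^\star(p_j),s_2^\star(p_j))$, strict monotonicity of $\alpha$ gives $(1-p_1)\mu_1 < (1-p_2)\mu_2$, which together with $1-p_1>1-p_2$ forces $\mu_1<\mu_2$. Strict monotonicity of $\Lambda$ gives $p_1\mu_1 > p_2\mu_2$, which together with $p_1<p_2$ forces $\mu_1>\mu_2$ (the boundary case $p_1=0$ being handled separately: then $0>p_2\mu_2$, which is impossible once $\mu_2>0$ is confirmed). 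The pair $\mu_1<\mu_2<\mu_1$ is a contradiction, so $s_1^\star(\cdot)$ is non-decreasing; the non-increasing property of $s_2^\star(\cdot)$ then follows immediately from the $T$-identity.

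The hard part will be discharging the strict monotonicity of $\alpha$, $\Lambda$, and $T$ uniformly over the region of $(s_1,s_2)$ that can arise in equilibrium. When $s_1^\star\ge\beta$ the $G_2$ summand of $\alpha$ collapses to zero (since $s_{1,\beta}=1$) and strict decrease must be inherited from the $G_1$ summand alone, with a symmetric issue for $\Lambda$ when $s_2^\star\ge\beta$. The corner cases $\mu_1=0$ or $\mu_2=0$ also need individual attention: if $\mu_2=0$, both rewritten equations at $p_2$ collapse to $\alpha(s_1^\star(p_2))=\Lambda(s_2^\star(p_2))=c$, after which the $T$-identity directly conflicts with the assumed ordering of thresholds. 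Finally, a consistency check for the phase-transition regime of \Cref{sec:genpstrict}, where $\Pr[u_{i1}<s_1^\star\wedge u_{i2}\ge s_2^\star]=0$, is automatic: the argument yields only non-strict monotonicity of $s_1^\star$, which is precisely the flat-then-strict behavior predicted there.
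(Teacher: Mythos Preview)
Your approach is essentially the paper's: both derive the $p$-free identity $\alpha(s_1^\star)+\Lambda(s_2^\star)-\mu^\star=2c$ (the paper's \eqref{eq:genp3}) and then run a contradiction via monotonicity. Your rewriting $\alpha=c+(1-p)\mu$, $\Lambda=c+p\mu$ is cleaner than the paper's $f,g,h$ notation, and the two contradictory bounds on $\mu$ that you extract are exactly the content of the paper's Fact~\ref{fact:monfp} applied in its Case~(ii).

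There is one genuine gap. You assert that $T$ is strictly decreasing in \emph{each} coordinate on $(0,1)^2$, and you need this to pass from $s_1^\star(p_1)>s_1^\star(p_2)$ to $s_2^\star(p_1)<s_2^\star(p_2)$. This is false whenever $\gamma>0$: if $s_2\le\gamma s_1$ then $\{u_{i1}\ge s_1\}\subseteq\{u_{i2}\ge s_2\}$, so the $G_1$ summand of $T$ equals $\Pr[u_{i2}\ge s_2]$ and is flat in $s_1$ (the same happens for $G_2$ under the same inequality). Dually, $T$ is flat in $s_2$ once $s_2\ge\gamma s_1+(1-\gamma)$. These are precisely the flat equilibrium regimes of \Cref{lem:strictmons1} and \Cref{lem:strictmons2}, so they do occur. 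Your appeal to \Cref{prop:genp-unique} does not help: that result gives signs for $\partial F_1/\partial s_j$ and $\partial F_2/\partial s_j$, but $T=F_1+F_2+2c$, and the two partials in each coordinate have opposite signs.

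The fix is exactly the paper's case split. From $T=2c$ one only gets that $T$ is \emph{jointly} strictly monotone (Fact~\ref{fact:mongp}: strict decrease when both coordinates strictly increase), which rules out $s_2^\star(p_1)>s_2^\star(p_2)$ and leaves $s_2^\star(p_1)\le s_2^\star(p_2)$. Your $\alpha/\Lambda$ argument then goes through verbatim with the non-strict inequality $p_1\mu_1\ge p_2\mu_2$ from $\Lambda$: combined with the strict $(1-p_1)\mu_1<(1-p_2)\mu_2$ from $\alpha$, one still obtains $\mu_1<\mu_2$ and $\mu_1>\mu_2$ (the boundary cases $\mu_1=0$, $\mu_2=0$, $p_1=0$ all collapse immediately). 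The same remark applies to your last sentence: once $s_1^\star$ is known to be non-decreasing, the $T$-identity alone does \emph{not} force $s_2^\star$ to be non-increasing in the flat-in-$s_2$ regime; you need to rerun the symmetric $\alpha/\Lambda$ argument rather than calling it ``immediate.''
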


\begin{proof}
    The equations for solving $s_1^\star(p)$ and $s_2^\star(p)$ can be written as follows: 
\begin{align}
     & p  (\Pr[u_{i1} \geq s_1] + \Pr[u_{i'1} \geq s_{1,\beta}]) + (1-p) (\Pr[u_{i1} \geq s_1 \wedge u_{i2} < s_2] \notag \\
     & \quad +\Pr[u_{i'1} \geq s_{1,\beta} \wedge u_{i'2} < s_{2,\beta}] ) = c_1, \label{eq:genp1} \\
    &  p  (\Pr[u_{i1} < s_1 \wedge u_{i2} \geq s_2] + \Pr[u_{i'1} < s_{1,\beta} \wedge u_{i'2} \geq s_{2,\beta}])  \notag \\
    & \quad + (1-p) (\Pr[u_{i2} \geq s_2] + \Pr[u_{i'2} \geq s_{2,\beta}])   = c_2. \label{eq:genp2}
\end{align}
Since the set of selected candidates from $G_1$ is exactly those for which $u_{i1} \geq s_1$ or $u_{i2} \geq s_2$ and similarly for $G_2$, we get
\begin{align}
      \Pr[u_{i1} \geq s_1 \vee u_{i2} \geq s_2] +  \Pr[u_{i'1} \geq s_{1,\beta} \vee u_{i'2} \geq s_{2,\beta}]  & = c_1 + c_2 \label{eq:genp3} 
\end{align}
Note that the l.h.s. of the above equation does not depend on $p$ explicitly. A formal proof of the above statement can be given by adding~\eqref{eq:genp1} and~\eqref{eq:genp2}. Indeed, the sum of the terms involving candidate $i$ in the two equations is: 
\begin{align*}
    & p \Pr[u_{i1} \geq s_1] + (1-p) \Pr[u_{i1} \geq s_1 \wedge u_{i2} < s_2] + p \Pr[u_{i1} < s_1 \wedge u_{i2} \geq s_2] + (1-p) \Pr[u_{i2} \geq s_2] \\
    & = 
    \Pr[u_{i1} \geq s_1] - (1-p) \Pr[u_{i1} \geq s_1 \wedge u_{i2} \geq s_2] + \Pr[u_{i2} \geq s_2] - p \Pr[u_{i1} \geq s_1 \wedge u_{i2} \geq s_2] \\
    & = Pr[u_{i1} \geq s_1] + \Pr[u_{i2} \geq s_2] - \Pr[u_{i1} \geq s_1 \wedge u_{i2} \geq s_2] = \Pr[ u_{i1} \geq s_1 \vee u_{i2} \geq s_2 ].
\end{align*}
\noindent
Arguing similarly for $i'$, we get~\eqref{eq:genp3}.
Given values $p,s_1, s_2$, let $f(p,s_1,s_2)$ and $g(s_1, s_2)$ denote the l.h.s. of~\eqref{eq:genp1} and~\eqref{eq:genp3} respectively, i.e., $f(p,s_1,s_2) :=$
$$ p  (\Pr[u_{i1} \geq s_1] + \Pr[u_{i'1} \geq s_{1,\beta}])+ (1-p) (\Pr[u_{i1} \geq s_1 \wedge u_{i2} < s_2] + \Pr[u_{i'1} \geq s_{1,\beta} \wedge u_{i'2} < s_{2,\beta}]),$$
and
$$g(s_1,s_2) :=  \Pr[u_{i1} \geq s_1 \vee u_{i2} \geq s_2] + \Pr[u_{i'1} \geq s_{1,\beta} \vee u_{i'2} \geq s_{2,\beta}].$$
We now show monotonicity properties of these functions.
\begin{fact}
    \label{fact:mongp}
    Let $s_1, s_1', s_2, s_2' \in [0,1]$ such that $s_1 \leq s_1'$ and $s_2 \leq s_2'$. Then, $g(s_1, s_2) \leq g(s_1',s_2')$. Further, $g(s_1, s_2) < g(s_1',s_2')$ if $s_1 < s_1'$ and $s_2 < s_2'$. 
\end{fact}
\begin{proof}
     The first statement follows from the fact that  $\Pr[u_{i1} \geq s_1 \vee u_{i2} \geq s_2 ]$ and $\Pr[u_{i'1} \geq s_{1,\beta} \vee u_{i'2} \geq s_{2,\beta} ]$ are non-increasing functions of $s_1$ and $s_2$. 

     For the second statement, assume $s_1 < s_1'$ and $s_2 < s_2'$. Observe that 
     \begin{align*}
         & \Pr[u_{i1} \geq s_1 \vee u_{i2} \geq s_2 ] - \Pr[u_{i1} \geq s_1' \vee u_{i2} \geq s_2 ']  \\
         & \quad = \Pr[s_1 \leq u_{i1} \leq s_1' \wedge u_{i2} \leq s_2'] + \Pr[u_{i1} \leq s_1' \wedge s_2 \leq u_{i2} \leq s_2'].
     \end{align*}
     Now, if $\Pr[u_{i1} \leq s_1' \wedge s_2 \leq u_{i2} \leq s_2'] = 0,$ then it must be the case that $u_{i2} \leq s_2$ whenever $u_{i1} \leq s_1'$, i.e., $s_2 \geq s_1' \gamma + (1-\gamma)$. But then, $u_{i2} \leq s_2'$ whenever $u_{i1} \leq s_1'$. Thus, 
     $$\Pr[s_1 \leq u_{i1} \leq s_1' \wedge u_{i2} \leq s_2'] = \Pr[s_1 \leq u_{i1} \leq s_1'] > 0.$$
     This shows that 
      $g(s_1, s_2) < g(s_1',s_2')$. 
\end{proof}
\noindent
Now we show the monotonicity properties of the function $f$.
\begin{fact}
    \label{fact:monfp}
    Let $s_1, s_1', s_2, s_2', p, p' \in [0,1)$ such that $s_1 \geq s_1', s_2 \leq s_2', p \leq p'.$
    Then, $f(p,s_1,s_2) \leq f(p',s_1',s_2').$ Further, 
    \begin{itemize}
        \item[(i)] $f(p,s_1,s_2) < f(p',s_1',s_2')$ if $p < p'$. 
        \item[(ii)] $f(p,s_1,s_2) < f(p',s_1',s_2')$ if $s_1 > s_1'$.
    \end{itemize}
\end{fact}

\begin{proof}
    We first show monotonicity with respect to $p$. Note that
    \begin{align*}
        & p \Pr[u_{i1} \geq s_1] + (1-p) \left( \Pr[u_{i1} \geq s_1] - \Pr[u_{i1} \geq s_1, u_{i2} \geq s_2] \right) \\
        & = \Pr[u_{i1} \geq s_1] - (1-p) \Pr[u_{i1} \geq s_1, u_{i2} \geq s_2 ] 
    \end{align*} 
    Similarly, 
    \begin{align*}
        & p \Pr[u_{i'1} \geq s_{1,\beta}] + (1-p) \left( \Pr[u_{i'1} \geq s_{1,\beta}] - \Pr[u_{i'1} \geq s_{1,\beta}, u_{i'2} \geq s_{2,\beta}] \right) \\
        & = \Pr[u_{i'1} \geq s_{1,\beta}] - (1-p) \Pr[u_{i'1} \geq s_{1,\beta}, u_{i'2} \geq s_{2,\beta} ] 
    \end{align*}
     Thus, 
     \begin{align}
     \label{eq:monstep1}
     f(p',s_1,s_2) - f(p,s_1,s_2) \geq  (p'-p) \Pr[u_{i1} \geq s_1 \wedge u_{i2} \geq s_2] > 0.
     \end{align}
\noindent
    Now we show monotonicity with respect to $s_1$: 
    since $\Pr[u_{i1} \geq s_1]$, $\Pr[u_{i1} \geq s_1, u_{i2} < s_2],$  $\Pr[u_{i'1} \geq s_{1,\beta}]$, $\Pr[u_{i'1} \geq s_{1,\beta}, u_{i'2} < s_{2,\beta}],$ are non-increasing functions of $s_1$ (and the first one is strictly increasing), we see (using the definition of $f$) that 
    \begin{align}
        \label{eq:monstep2}
        f(p',s_1',s_2) \geq f(p',s_1,s_2),
    \end{align}
    with the inequality being strict if $s_1' < s_1$.  The monotonicity with respect to $s_2$ follows similarly:  $\Pr[u_{i1} \geq s_1, u_{i2} < s_2],$  $\Pr[u_{i'1} \geq s_{1,\beta}, u_{i'2} < s_{2,\beta}]$ are non-decreasing functions of $s_2$, and hence,  
    \begin{align}
        \label{eq:monstep3}
        f(p',s_1',s_2') \geq f(p',s_1',s_2).
    \end{align}
    The desired statements now follow by combining~\eqref{eq:monstep1},~\eqref{eq:monstep2} and~\eqref{eq:monstep3}.
\end{proof}
\noindent
We are now ready to show the monotonicity property of $s_1^\star$. Let $0 \leq p < p' \leq 1,$ and assume for the sake of contradiction  that $s_1^\star(p) < s_1^\star(p').$ Two cases arise: 
\begin{itemize}
    \item[(i)] $s_2^\star(p) < s_2^\star(p')$: It follows by~\Cref{fact:mongp} that $g(s_1^\star(p),s_2^\star(p)) < g(s_1^\star(p'), s_2^\star(p')),$ which is a contradiction because both of these quantities must be equal to $c_1 + c_2$ (using~\eqref{eq:genp3}). 
    \item[(ii)] $s_2^\star(p) \geq s_2^\star(p')$: Using~\Cref{fact:monfp} (and part~(ii) of result), 
    $$f(p,s_1^\star(p),s_2^\star(p)) < f(p',s_1^\star(p'), s_2^\star(p')), $$
    which is again a contradiction because both of the above expressions should be equal to $c_1$ (using~\eqref{eq:genp1}). 
\end{itemize}
Thus, we see that neither of the two cases above can happen and hence, $s_1^\star(p) \geq s_1^\star(p')$. The monotonicity property of $s_2^\star(p)$ can be shown similarly by analyzing the properties of the function $h(p,s_1,s_2)$ that denotes the l.h.s. of~\eqref{eq:genp2}. 
\end{proof}
\noindent
We note a useful corollary of this result:
\begin{corollary}
    \label{cor:monp}
    Given parameters $c,\beta, \gamma$ and $p, p' \in [0,1]$, it cannot happen that $s_1^\star(p) = s_1^\star(p')$ and $s_2^\star(p) = s_2^\star(p').$
\end{corollary}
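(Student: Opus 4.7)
\textbf{Proof proposal for Corollary~\ref{cor:monp}.}
The plan is to argue by contradiction, assuming without loss of generality that $p < p'$ but $s_1^\star(p) = s_1^\star(p') =: s_1$ and $s_2^\star(p) = s_2^\star(p') =: s_2$. The key observation is that equation~\eqref{eq:genp1}, which both threshold pairs must satisfy, has a left-hand side $f(p, s_1, s_2)$ that depends explicitly on $p$, whereas equation~\eqref{eq:genp3} does not. So if the same pair $(s_1, s_2)$ solves the system at two different values of $p$, the only constraint whose $p$-dependence can create a contradiction is equation~\eqref{eq:genp1}.

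Concretely, I would invoke the identity derived inside the proof of \Cref{fact:monfp},
\[
  f(p', s_1, s_2) - f(p, s_1, s_2) \;=\; (p' - p)\Bigl(\Pr[u_{i1} \ge s_1 \wedge u_{i2} \ge s_2] + \Pr[u_{i'1} \ge s_{1,\beta} \wedge u_{i'2} \ge s_{2,\beta}]\Bigr),
\]
and observe that since $f(p, s_1, s_2) = f(p', s_1, s_2) = c_1$ under the contradictory hypothesis, the right-hand side must vanish. Hence both joint probabilities above must be zero, since $p' - p > 0$.

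The next step, and the main thing requiring care, is to rule out this degenerate case. First, by the uniqueness and interior nature of the equilibrium established in \Cref{prop:genp-unique}, we have $(s_1^\star, s_2^\star) \in (0,1)^2$, so $s_1, s_2 < 1$. Then for the $G_1$ probability, choosing $v_{i1}$ in a small neighborhood of $1$ forces $u_{i1} \ge s_1$, and within this neighborhood the condition $u_{i2} = \gamma v_{i1} + (1-\gamma) v_{i2} \ge s_2$ reduces to $v_{i2} \ge (s_2 - \gamma v_{i1})/(1-\gamma)$, which defines a set of positive Lebesgue measure because $s_2 < 1$. Therefore $\Pr[u_{i1} \ge s_1 \wedge u_{i2} \ge s_2] > 0$, contradicting the vanishing of the bracketed expression above. (The $G_2$ probability is also positive by the same argument applied to $s_{1,\beta}, s_{2,\beta} \le 1$, but one positive term already suffices.)

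The only real obstacle is the boundary case in which $s_1 = 1$ or $s_2 = 1$, which would make the relevant joint probability degenerate. I would handle this by noting that $s_1 = 1$ would mean Institution~1 selects nobody from $G_1$ via its primary route, which combined with demand-exceeding-supply and $c_1 > 0$ is incompatible with~\eqref{eq:genp1}; a symmetric argument handles $s_2 = 1$. This confirms the interior condition $(s_1, s_2) \in (0,1)^2$ needed above, completing the contradiction and hence the proof.
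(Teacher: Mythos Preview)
Your proof is correct and follows essentially the same approach as the paper: both argue by contradiction via the strict monotonicity of $f$ in $p$ (the paper simply cites \Cref{fact:monfp}(i), whereas you unpack the identity from inside its proof and supply the positivity argument for $\Pr[u_{i1}\ge s_1 \wedge u_{i2}\ge s_2]$ that the paper leaves implicit). Your additional handling of the boundary case $s_1=1$ or $s_2=1$ is redundant once you have invoked \Cref{prop:genp-unique}, but it does no harm.
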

\begin{proof}
    Assume w.l.o.g. that $p < p'$. 
    Assume for the sake of contradiction that 
    $s_1^\star(p) = s_1^\star(p')$ and $s_2^\star(p) = s_2^\star(p').$
    Then,~\Cref{fact:monfp} shows that 
    $$f(p', s_1^\star(p'), s_2^\star(p')) > f(p, s_1^\star(p'), s_2^\star(p')) = f(p, s_1^\star(p), s_2^\star(p)),$$
    which is a contradiction because both $f(p, s_1^\star(p), s_2^\star(p))$ and $f(p', s_1^\star(p'), s_2^\star(p'))$ are equal to $c_1$. 
\end{proof}

\subsubsection{Strict monotonicity properties of the thresholds with respect to \boldmath{$p$}}
\label{sec:genpstrict} 

\Cref{thm:monotone-threshp} showed that $s_1^\star(p)$ and $s_2^\star(p)$ are non-decreasing  and non-increasing functions of $p$ respectively. In this section, we show strict monotonicity properties of these thresholds. 

\begin{proposition}
[\bf Strict monotonicity of \boldmath{$s_1^\star$} w.r.t. \boldmath{$p$}]
    \label{lem:strictmons1}
    For any fixed values of the parameters $c,\beta, \gamma$, there exists a value $p^\star \in [0,1]$ such that $s_1^\star(p)$ is strictly increasing for $p \in [p^\star,1]$ and does not vary with $p$ for $p \in [0, p^\star)$. Assuming $\beta \geq 1-(c_1+c_2),$ $s_1^\star(p) = \frac{2-c_1-c_2}{1+1/\beta}$ for all $p \in [0,p^\star)$ and $s_2^\star(p)$ is strictly decreasing in $[0,p^\star)$. Further, $p^\star > 0$ iff $s_2^\star(0) > \gamma s_1^\star(0) + (1-\gamma)$.     
\end{proposition}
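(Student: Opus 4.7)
The central idea is to identify $p^\star$ as the transition point of an \emph{empty-intersection} regime for Institution~2's selection region. A straightforward geometric computation (maximizing $u_{i2}=\gamma v_{i1}+(1-\gamma)v_{i2}$ over $v_{i1}<s_1,\;v_{i2}\le 1$) yields $\Pr[u_{i1}<s_1\wedge u_{i2}\ge s_2]=0$ iff $s_2\ge\gamma s_1+(1-\gamma)$; since $\beta\le 1$, the analogous $G_2$ probability also vanishes under the same inequality. I define
\[
p^\star:=\sup\bigl\{\,p\in[0,1]\,:\,s_2^\star(p)\ge\gamma s_1^\star(p)+(1-\gamma)\,\bigr\},
\]
with $p^\star:=0$ when the set is empty; continuity of $p\mapsto(s_1^\star(p),s_2^\star(p))$ is supplied by \Cref{prop:genp-unique}.

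On $[0,p^\star)$ both joint probabilities vanish, so $\Pr[u_{i1}\ge s_1^\star\wedge u_{i2}<s_2^\star]=(1-s_1^\star)-\Pr[u_{i2}\ge s_2^\star]$ and analogously for $G_2$. Summing \eqref{eq:thresholds-eq1}--\eqref{eq:thresholds-eq2} then cancels the conditional probabilities and yields $(1-s_1^\star)+(1-s_{1,\beta}^\star)=c_1+c_2$, which is independent of $p$. Under $\beta\ge 1-(c_1+c_2)$ one checks $s_1^\star\le\beta$, so $s_{1,\beta}^\star=s_1^\star/\beta$ and $s_1^\star(p)=\tfrac{2-c_1-c_2}{1+1/\beta}$. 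Substituting back into \eqref{eq:thresholds-eq2} gives $\Pr[u_{i2}\ge s_2^\star(p)]+\Pr[u_{i'2}\ge s_{2,\beta}^\star(p)]=c_2/(1-p)$, whose right-hand side is strictly increasing in $p$ while the left is strictly decreasing in $s_2^\star$; so $s_2^\star(p)$ is strictly decreasing on $[0,p^\star)$.

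On $[p^\star,1]$, for $p>p^\star$ the empty-intersection inequality fails strictly, i.e., $s_2^\star(p)<\gamma s_1^\star(p)+(1-\gamma)$. A direct differentiation of the area representation of $\Pr[u_{i1}<s_1\wedge u_{i2}<s_2]$ shows that, at fixed $s_1$, the function $g(s_1,s_2):=\Pr[u_{i1}\ge s_1\vee u_{i2}\ge s_2]+\Pr[u_{i'1}\ge s_{1,\beta}\vee u_{i'2}\ge s_{2,\beta}]$ is constant in $s_2$ on $\{s_2\ge\gamma s_1+(1-\gamma)\}$ and \emph{strictly} decreasing in $s_2$ on $\{s_2<\gamma s_1+(1-\gamma)\}$. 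Suppose, for contradiction, $s_1^\star(p)=s_1^\star(p')=:s_1$ for some $p^\star\le p<p'\le 1$. By \Cref{cor:monp} and \Cref{thm:monotone-threshp}, $s_2^\star(p)>s_2^\star(p')$; and since $s_2^\star(p')<\gamma s_1+(1-\gamma)$, the monotonicity structure of $g(s_1,\cdot)$ (strict decrease in the non-EIC range, flatness above the kink $\gamma s_1+(1-\gamma)$) forces $g(s_1,s_2^\star(p))<g(s_1,s_2^\star(p'))$, contradicting $g=c_1+c_2$ at both pairs via \eqref{eq:genp3}. Hence $s_1^\star$ is strictly increasing on $[p^\star,1]$.

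For the iff: if $s_2^\star(0)>\gamma s_1^\star(0)+(1-\gamma)$, continuity of $p\mapsto(s_1^\star(p),s_2^\star(p))$ preserves the strict inequality on a neighborhood $[0,\varepsilon)$, giving $p^\star\ge\varepsilon>0$. Conversely, if $s_2^\star(0)\le\gamma s_1^\star(0)+(1-\gamma)$, the monotonicity $s_2^\star(p)\le s_2^\star(0)$, $s_1^\star(p)\ge s_1^\star(0)$ yields $s_2^\star(p)\le\gamma s_1^\star(p)+(1-\gamma)$ for every $p>0$, and \Cref{cor:monp} rules out simultaneous equality $(s_1^\star(p),s_2^\star(p))=(s_1^\star(0),s_2^\star(0))$, making the inequality strict at any $p>0$ and forcing $p^\star=0$. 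The main technical obstacle is the strict-monotonicity claim for $g$ outside EIC; it reduces to showing that the slice $\{(v_1,v_2)\in[0,s_1]\times[0,1]:\gamma v_1+(1-\gamma)v_2=s_2\}$ has positive one-dimensional length whenever $s_2<\gamma s_1+(1-\gamma)$, a routine case analysis on the two intercept locations.
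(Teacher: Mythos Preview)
Your proof is correct and follows essentially the same approach as the paper. Both arguments hinge on the same ``empty-intersection'' characterization $\Pr[u_{i1}<s_1\wedge u_{i2}\ge s_2]=0\iff s_2\ge\gamma s_1+(1-\gamma)$, define $p^\star$ as the right endpoint of the regime where this holds, use the summed equation \eqref{eq:genp3} to pin down $s_1^\star$ on $[0,p^\star)$, and obtain strict increase on $[p^\star,1]$ by assuming $s_1^\star(p)=s_1^\star(p')$ and deriving a contradiction from \eqref{eq:genp3} via the positivity of $\Pr[u_{i1}<s_1\wedge s_2^\star(p')\le u_{i2}<s_2^\star(p)]$ (your ``strict monotonicity of $g$ below the kink'' is exactly the paper's \Cref{fact:nonzeroprobevent}). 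The only cosmetic difference is that for the strict decrease of $s_2^\star$ on $[0,p^\star)$ you extract the explicit relation $\Pr[u_{i2}\ge s_2^\star]+\Pr[u_{i'2}\ge s_{2,\beta}^\star]=c_2/(1-p)$ from \eqref{eq:thresholds-eq2}, whereas the paper simply invokes \Cref{cor:monp} once $s_1^\star$ is known to be constant; both are immediate.
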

\noindent
We first prove an auxiliary result.
\begin{fact}
    \label{fact:zeroprobevent}
    For any fixed values of the parameters  $ \gamma, s_1, s_2 \in (0,1)$, $\Pr[u_{i1} \leq s_1 \wedge u_{i2} \geq s_2] = 0$ iff $s_2 \geq \gamma s_1 + (1-\gamma).$
\end{fact}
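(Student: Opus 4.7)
Recall from the model that $u_{i1}=v_{i1}$ and $u_{i2}=\gamma v_{i1}+(1-\gamma)v_{i2}$, where $(v_{i1},v_{i2})$ is uniform on the unit square. So the event in question is
\[
E(s_1,s_2,\gamma)=\bigl\{(v_1,v_2)\in[0,1]^2:\;v_1\le s_1,\;\;\gamma v_1+(1-\gamma)v_2\ge s_2\bigr\}.
\]
Since the joint density is $1$ on $[0,1]^2$, the probability of $E$ equals its Lebesgue area, so the statement reduces to proving that $\mathrm{Area}(E)=0$ if and only if $s_2\ge \gamma s_1+(1-\gamma)$. The plan is to verify each direction by a direct geometric argument using the fact that the boundary $\gamma v_1+(1-\gamma)v_2=s_2$ is an affine line.

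\textbf{Sufficiency.} Suppose $s_2\ge \gamma s_1+(1-\gamma)$. For any $(v_1,v_2)\in[0,1]^2$ with $v_1\le s_1$, since $\gamma,1-\gamma>0$ and $v_2\le 1$, I would bound
\[
\gamma v_1+(1-\gamma)v_2 \;\le\; \gamma s_1+(1-\gamma)\cdot 1 \;\le\; s_2,
\]
with equality forcing $v_1=s_1$ and $v_2=1$. Hence the constraint $\gamma v_1+(1-\gamma)v_2\ge s_2$ can only be met on the (measure-zero) line segment where equality holds; so $E$ has area $0$.

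\textbf{Necessity.} Suppose $s_2<\gamma s_1+(1-\gamma)$. I would exhibit an open subset of $E$ of positive area. Consider the point $(v_1,v_2)=(s_1,1)$; it satisfies $v_1\le s_1$ and
\[
\gamma v_1+(1-\gamma)v_2 \;=\;\gamma s_1+(1-\gamma) \;>\; s_2,
\]
with the second inequality strict by assumption. Because $s_1\in(0,1)$ and the map $(v_1,v_2)\mapsto \gamma v_1+(1-\gamma)v_2$ is continuous, there exists $\delta>0$ such that the open rectangle $(s_1-\delta,s_1)\times(1-\delta,1)$ lies inside $[0,1]^2$ and every point in it still satisfies both $v_1<s_1$ and $\gamma v_1+(1-\gamma)v_2>s_2$. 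This open rectangle has positive area and lies inside $E$, so $\Pr[E]>0$.

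Combining the two directions gives the claimed equivalence. There is no substantive obstacle—the only care needed is to notice that in the sufficiency direction the single extremal point $(s_1,1)$ with $s_2=\gamma s_1+(1-\gamma)$ does not contribute positive probability, which is immediate because it lies on a single affine line of measure zero in $[0,1]^2$.
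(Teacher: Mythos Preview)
Your proof is correct and follows essentially the same approach as the paper: both directions hinge on the observation that, subject to $v_1\le s_1$ and $v_2\le 1$, the maximum of $\gamma v_1+(1-\gamma)v_2$ is exactly $\gamma s_1+(1-\gamma)$, achieved only at the single point $(s_1,1)$. Your continuity/open-rectangle argument for the necessity direction is a slightly cleaner packaging of the paper's explicit $\varepsilon$-box construction, but the underlying idea is identical.
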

\begin{proof}
    Suppose $\Pr[u_{i1} \leq s_1 \wedge u_{i2} \geq s_2]=0$. Assume for the sake of contradiction that $s_2 < \gamma s_1 + (1-\gamma$. For the sake of concreteness, let $\varepsilon := s_1 + (1-\gamma) -s_2.$ Recall that $u_{i1}=v_{i1}$ and $u_{i2} = \gamma v_{i1} + (1-\gamma) v_{i2}$ where  $v_{i1}$ and $v_{i2}$ are independent and distributed uniformly in $[0,1]$. Now if $v_{i1} \in [\max(0,s_1 - \varepsilon), s_1]$ and 
    $v_{i2} \in [1-\varepsilon,1]$, then 
    $$u_{i2} \geq \gamma (s_1-\varepsilon) + (1-\gamma)(1-\varepsilon)= \gamma s_1 + (1-\gamma) - \varepsilon = s_2.$$
    Since $s_1 > 0$, we see that $\Pr[u_{i1} \leq s_1 \wedge u_{i2} \geq s_2]=0$, a contradiction. 

    For the converse, suppose $s_2 \geq s_1 \gamma + (1-\gamma)$. If $u_{i1} \leq s_1$, then $u_{i2} = \gamma v_{i1} + (1-\gamma) v_{i2}$ can exceed $s_2$ only when $v_{i2} \geq 1$, which is a zero probability event. 
\end{proof}
\noindent
We now complete the proof of~\Cref{lem:strictmons1}. Define
$$I := \{p \in [0,1]: s_2^\star(p) \geq \gamma s_1^\star(p) + 1-\gamma\}.$$
\Cref{thm:monotone-threshp} shows that if $I$ is non-empty, then it is a closed interval of $[0,1]$. Assuming $I$ is non-empty, let it be of the form $[0, p^\star]$ (in case $I$ is empty, let $p^\star = 0$). We now show that $s_1^\star(p)$ does not vary with $p$ when $p < p^\star$. 

\begin{fact}
\label{fact:consts1star}
Assuming $p^\star > 0$, $s_1^\star(p)$ does not vary with $p$ for $p \in [0,p^\star]$. If $\beta \geq 1-(c_1+c_2)$, $s_1^\star(p) = \frac{2(1-c)}{1+1/\beta}$ for all $p \in [0,p^\star]$.   Further, $s_2^\star(p)$ is a strictly decreasing function of $p \in [0,p^\star]$.   
\end{fact}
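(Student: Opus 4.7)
My plan is to exploit Fact~\ref{fact:zeroprobevent} in order to reduce the coupled system \eqref{eq:thresholds-eq1}--\eqref{eq:thresholds-eq2} on $[0,p^\star]$ to a single equation for $s_1^\star$ that does not involve $p$, and then read off strict monotonicity of $s_2^\star$ from what remains. Concretely, the definition of $p^\star$ together with Fact~\ref{fact:zeroprobevent} gives $\Pr[u_{i1}<s_1^\star\wedge u_{i2}\ge s_2^\star]=0$ for $i\in G_1$ and all $p\in[0,p^\star]$. The analog for $i'\in G_2$ requires $s_{2,\beta}^\star\ge\gamma s_{1,\beta}^\star+(1-\gamma)$: when $s_1^\star,s_2^\star\le\beta$, this reduces to $s_2^\star\ge\gamma s_1^\star+\beta(1-\gamma)$, which follows from the $G_1$ inequality since $\beta\le 1$; and when $s_2^\star>\beta$, we have $s_{2,\beta}^\star=1$, so the event probability vanishes trivially. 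Hence the $G_2$ term is also identically zero throughout $[0,p^\star]$.

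Using the identity $\Pr[u_{i1}\ge s_1\wedge u_{i2}<s_2]=(1-s_1)-\Pr[u_{i2}\ge s_2]+\Pr[u_{i1}<s_1\wedge u_{i2}\ge s_2]$ and the vanishing of the last term (and its $G_2$ analog), equations \eqref{eq:thresholds-eq1} and \eqref{eq:thresholds-eq2} become respectively
\[
(1-s_1^\star)+(1-s_{1,\beta}^\star)-(1-p)\bigl[\Pr[u_{i2}\ge s_2^\star]+\Pr[u_{i'2}\ge s_{2,\beta}^\star]\bigr]=c_1,
\]
\[
(1-p)\bigl[\Pr[u_{i2}\ge s_2^\star]+\Pr[u_{i'2}\ge s_{2,\beta}^\star]\bigr]=c_2.
\]
Adding these eliminates the bracketed term and yields $(1-s_1^\star)+(1-s_{1,\beta}^\star)=c_1+c_2$, which involves only $s_1^\star$ (via $s_{1,\beta}^\star=\min(1,s_1^\star/\beta)$) and no longer $p$. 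Strict monotonicity of the left-hand side in $s_1^\star$ then forces $s_1^\star(p)$ to be constant on $[0,p^\star]$. Under $\beta\ge 1-(c_1+c_2)$, a direct check shows that the solution of this equation satisfies $s_1^\star\le\beta$, so $s_{1,\beta}^\star=s_1^\star/\beta$; solving $(1-s_1^\star)(1+1/\beta)=c_1+c_2$ gives the claimed formula $s_1^\star=\tfrac{2-(c_1+c_2)}{1+1/\beta}=\tfrac{2(1-c)}{1+1/\beta}$ when $c_1=c_2=c$.

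For strict monotonicity of $s_2^\star$ on $[0,p^\star]$, I will rearrange the reduced second equation as $\Pr[u_{i2}\ge s_2^\star(p)]+\Pr[u_{i'2}\ge s_{2,\beta}^\star(p)]=c_2/(1-p)$, whose right-hand side is strictly increasing in $p$; since the survival probabilities on the left are strictly decreasing in $s_2^\star$, $s_2^\star(p)$ must strictly decrease. Alternatively, constancy of $s_1^\star(p)$ combined with Corollary~\ref{cor:monp} forbids $s_2^\star$ from being constant on any subinterval, and together with the non-increasing monotonicity from Theorem~\ref{thm:monotone-threshp} this forces strict decrease. The main delicate point will be the verification of the $G_2$ zero-probability condition uniformly across the regimes $s_2^\star\lessgtr\beta$ and $s_1^\star\lessgtr\beta$, since the $\min$ in the definition of $s_{j,\beta}^\star$ introduces a piecewise structure; once this is handled, the rest of the argument is essentially a two-line algebraic manipulation.
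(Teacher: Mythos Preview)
Your approach is correct and essentially identical to the paper's: both use Fact~\ref{fact:zeroprobevent} to kill the cross terms $\Pr[u_{i1}<s_1^\star\wedge u_{i2}\ge s_2^\star]$ (and the $G_2$ analogue), then reduce to the $p$-free equation $(1-s_1^\star)+(1-s_{1,\beta}^\star)=c_1+c_2$---the paper reaches this via the pre-derived sum~\eqref{eq:genp3}, while you add the two reduced equations directly, which amounts to the same thing. Two minor points: your $G_2$ case split omits $s_1^\star>\beta,\,s_2^\star\le\beta$, but this case is vacuous since $s_2^\star\ge\gamma s_1^\star+(1-\gamma)\ge\gamma\beta+(1-\gamma)\ge\beta$ whenever $s_1^\star>\beta$; and the intermediate line ``$(1-s_1^\star)(1+1/\beta)=c_1+c_2$'' should read $2-s_1^\star(1+1/\beta)=c_1+c_2$, though your final formula is correct. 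Your direct argument for strict monotonicity of $s_2^\star$ via $c_2/(1-p)$ is a clean alternative to the paper's appeal to Corollary~\ref{cor:monp}.
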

\begin{proof}
    Suppose $p < p^\star$. Then~\Cref{fact:zeroprobevent} shows that $\Pr[u_{i1} \leq s_1 \wedge u_{i2} \geq s_2] = 0$. We show that 
    $$s_{2,\beta}^\star(p) \geq \gamma s_{1,\beta}^\star(p) + 1-\gamma$$
    as well. Indeed, if $s_{2,\beta}^\star(p) 
 = 1$, this follows trivially. Hence, assume 
 $s_{2,\beta}^\star(p) 
 = \frac{s_2^\star(p)}{\beta}.$ Then 
 $$s_{2,\beta}^\star(p) \geq \frac{\gamma s_1^\star(p) + 1-\gamma}{\beta} \geq s_{1,\beta}^\star(p)  + 1-\gamma.$$
 It follows by~\Cref{fact:zeroprobevent} that $\Pr[u_{i'1} \leq s_{1,\beta} \wedge u_{i'2} \geq s_{2, \beta}] = 0$ also. Thus,~\eqref{eq:genp3} becomes:
 $$\Pr[u_{i1} \geq s_1^\star(p)] + \Pr[u_{i'1} \geq s_{1,\beta}^\star(p)] = c_1 + c_2.$$
 In other words, 
 $$(1-s_1^\star(p)) + (1-s_{1,\beta}^\star(p)) = c_1 + c_2.$$
 First assume that $s_1^\star(p) \geq \beta$. Then the above equation has the solution: $$s_1^\star(p) = 1-(c_1+c_2).$$
 Thus $s_1^\star(p)$ does not vary with $p$. 
 If  $\beta \geq 1-(c_1 + c_2),$  this case cannot occur. Thus, the above equation becomes:
  $$(1-s_1^\star(p)) + (1-s_{1}^\star(p)/\beta) = c_1 + c_2,$$
  which has the solution: 
  $$s_1^\star(p) = \frac{2-(c_1+c_2)}{1+1/\beta}.$$
  Thus, $s_1^\star(p)$ does not vary with $p$ in $[0,p^\star]$. The fact that $s_2^\star(p)$ strictly decreases in this interval follows from~\Cref{cor:monp}.

\end{proof}
\noindent
Thus we have shown the statements in~\Cref{lem:strictmons1} when $p \in [0, p^\star]$.
We now show that $s_1^\star(p)$ is a strictly increasing function of $p$ for $p > p^\star$. We first need the following extension of~\Cref{fact:zeroprobevent}:
\begin{fact}
\label{fact:nonzeroprobevent}
    Consider $s_1 \in [0,1]$ and $s_2, s_2' \in [0,1]$ such that $s_2 < s_2' \leq s_1 \gamma + (1-\gamma)$. Then, $\Pr[u_{i1} < s_1 \wedge s_2 \leq u_{i2} < s_2'] > 0$. 
\end{fact}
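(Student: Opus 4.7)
The plan is to reduce the claim to showing that a certain planar region has positive Lebesgue area. Since $v_{i1}$ and $v_{i2}$ are independent uniforms on $[0,1]$ and $u_{i1}=v_{i1}$, $u_{i2}=\gamma v_{i1}+(1-\gamma)v_{i2}$, the probability
$\Pr[u_{i1}<s_1 \wedge s_2\le u_{i2}<s_2']$ equals the Lebesgue measure of
\[
R := \bigl\{(x,y)\in[0,1]^2 : x<s_1,\ s_2\le \gamma x+(1-\gamma)y<s_2'\bigr\}.
\]
I will assume throughout that $s_1>0$ and $\gamma\in(0,1)$; the extremes $\gamma\in\{0,1\}$ reduce to one-dimensional interval computations (where the area is either $s_1(s_2'-s_2)$ or $s_2'-s_2$), and the degenerate case $s_1=0$ makes the event vacuous and is irrelevant for the intended application.

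The key construction is to exhibit a non-empty open subset of $R$ clustered near the corner $(s_1,1)$, which is precisely the corner where the upper bound $s_2'\le \gamma s_1+(1-\gamma)$ is saturated. Parameterize by $(x,y)=(s_1-\epsilon_1,\,1-\epsilon_2)$ with $(\epsilon_1,\epsilon_2)\in(0,s_1)\times(0,1)$, so that
\[
\gamma x+(1-\gamma)y \;=\; \bigl[\gamma s_1+(1-\gamma)\bigr] \;-\; \bigl[\gamma\epsilon_1+(1-\gamma)\epsilon_2\bigr].
\]
Then $(x,y)\in R$ is equivalent to the linear constraint
\[
\gamma s_1+(1-\gamma)-s_2' \;<\; \gamma\epsilon_1+(1-\gamma)\epsilon_2 \;\le\; \gamma s_1+(1-\gamma)-s_2,
\]
whose lower endpoint is non-negative by the hypothesis $s_2'\le\gamma s_1+(1-\gamma)$ and whose length equals $s_2'-s_2>0$.

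To conclude, I would note that the continuous linear functional $(\epsilon_1,\epsilon_2)\mapsto\gamma\epsilon_1+(1-\gamma)\epsilon_2$ maps the open square $(0,s_1)\times(0,1)$ surjectively onto the open interval $\bigl(0,\gamma s_1+(1-\gamma)\bigr)$, which contains the target sub-interval above. Its preimage of that sub-interval is therefore an open, non-empty subset of $(0,s_1)\times(0,1)$, and the image of this preimage under the parameterization is a non-empty open subset of $R$ having positive Lebesgue measure. The only delicate point is the edge case $s_2'=\gamma s_1+(1-\gamma)$, where the lower endpoint of the target sub-interval drops to exactly zero; this is handled because the functional still attains every value in $(0,s_2'-s_2)$ on the open square, keeping the preimage non-empty and open. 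Once $R$ has positive area, $\Pr[u_{i1}<s_1\wedge s_2\le u_{i2}<s_2']>0$ follows immediately from the uniform joint density of $(v_{i1},v_{i2})$.
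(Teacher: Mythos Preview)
Your proof is correct. Both you and the paper reduce to showing that the feasible region in $[0,1]^2$ has positive Lebesgue area, but the executions differ. The paper splits into two cases ($s_2' \le 1-\gamma$ versus $s_2' > 1-\gamma$) and in each case explicitly constructs a small rectangle of $(v_{i1},v_{i2})$ values satisfying all constraints; in the first case the rectangle sits near the $y$-axis, and in the second it sits near the point $(\alpha,1)$ where $s_2'=\alpha\gamma+(1-\gamma)$. Your approach is more unified: by parameterizing near the corner $(s_1,1)$---precisely where the hypothesis $s_2'\le\gamma s_1+(1-\gamma)$ is tight---you reduce everything to the single observation that a nonconstant linear functional on an open rectangle is an open map onto an interval, which avoids the case split entirely. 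The paper's explicit rectangles would be handier if one wanted a quantitative lower bound on the probability; your argument is cleaner for the bare positivity claim. One minor quibble: the preimage of the \emph{half-open} interval $(a,b]$ under a continuous map need not be open, but this is harmless since the preimage of the open interval $(a,b)$ already has positive measure and suffices.
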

\begin{proof}
    First, consider the case when $s_2' \leq 1-\gamma$. Let $\varepsilon > 0$ be a small enough positive constant (whose value will be fixed later). Suppose $v_{i1} \in [0, \frac{\varepsilon}{\gamma}]$ and $v_{i2} \in [\frac{s_2'-2\varepsilon}{1-\gamma}, \frac{s_2'-\varepsilon}{1-\gamma}).$ Here $\varepsilon$ is such that $\frac{\varepsilon}{\gamma} < s_1, \frac{s_2'-2\varepsilon}{1-\gamma} \geq s_2.$ It is easy to verify that $u_{i2} = \gamma v_{i1} + (1-\gamma) v_{i2} \in [s_2,s_2')$. Thus, $\Pr[u_{i1} < s_1 \wedge s_2 \leq u_{i2} < s_2'] > 0$. 

    Now consider the case when $s_2' > 1-\gamma$. Express $s_2'$ as $\alpha \gamma + (1-\gamma)$ for some $\alpha \leq s_1$. A similar argument as above applies with $v_{i1} \in [\alpha-\varepsilon, \alpha]$ and $v_{i2} \in [1-\varepsilon, 1)$. 
\end{proof}

\begin{fact}
\label{fact:s1strincreasingp}
    The threshold $s_1^\star(p)$ is a strictly increasing function of $p$ for $p \in [p^\star,1]$. 
\end{fact}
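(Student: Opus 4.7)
The plan is to proceed by contradiction. Suppose for some $p^\star \leq p < p' \leq 1$ that $s_1^\star(p) = s_1^\star(p')$. By the non-strict monotonicity in Theorem~\ref{thm:monotone-threshp}, $s_2^\star(p) \geq s_2^\star(p')$, and by Corollary~\ref{cor:monp} we cannot have the full pair $(s_1^\star(p), s_2^\star(p))$ coincide with $(s_1^\star(p'), s_2^\star(p'))$, so in fact $s_2^\star(p) > s_2^\star(p')$. This is the only place the uniqueness-in-$p$ input from Corollary~\ref{cor:monp} is needed.

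Next, I would invoke the $p$-independent identity~\eqref{eq:genp3}, i.e.\ $g(s_1^\star(p), s_2^\star(p)) = g(s_1^\star(p'), s_2^\star(p')) = c_1 + c_2$. Under the assumed equality of $s_1^\star$, this collapses to $g(s_1^\star(p), s_2^\star(p)) = g(s_1^\star(p), s_2^\star(p'))$, so lowering only the second argument of $g$ would leave it unchanged. The plan is to derive a contradiction by showing the opposite: that lowering $s_2$ from $s_2^\star(p)$ to the strictly smaller $s_2^\star(p')$ must \emph{strictly} increase $g$. Since $g$ is built from terms of the form $\Pr[u_{i1} \geq s_1 \vee u_{i2} \geq s_2]$, the strict increase is bounded below by $\Pr[u_{i1} < s_1^\star(p) \wedge s_2^\star(p') \leq u_{i2} < s_2^\star(p)]$, so the goal reduces to showing this single probability is positive.

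For this I would apply Fact~\ref{fact:nonzeroprobevent}, which gives positivity of $\Pr[u_{i1} < s_1 \wedge s_2 \leq u_{i2} < s_2']$ whenever $s_2 < s_2' \leq \gamma s_1 + (1 - \gamma)$. When $p > p^\star$, the definition of $p^\star$ yields $s_2^\star(p) < \gamma s_1^\star(p) + (1 - \gamma)$ directly, so the choice $(s_2, s_2') = (s_2^\star(p'), s_2^\star(p))$ falls squarely within the fact's hypotheses and delivers the required positivity.

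The main obstacle is the boundary case $p = p^\star$, where it is possible that $s_2^\star(p^\star) \geq \gamma s_1^\star(p^\star) + (1 - \gamma)$, so the clean choice above is not admissible. I would handle this case by applying Fact~\ref{fact:nonzeroprobevent} instead to the narrower slab $[s_2^\star(p'), \gamma s_1^\star(p^\star) + (1-\gamma))$. This slab is non-empty: by $p' > p^\star$ we have $s_2^\star(p') < \gamma s_1^\star(p') + (1 - \gamma)$, and by the assumed equality $s_1^\star(p') = s_1^\star(p^\star)$ the right endpoint equals $\gamma s_1^\star(p^\star) + (1 - \gamma)$. Moreover, it lies inside $[s_2^\star(p'), s_2^\star(p^\star))$ precisely because $s_2^\star(p^\star) \geq \gamma s_1^\star(p^\star) + (1 - \gamma)$ in the boundary regime. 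Hence the strict-increase step goes through uniformly on all of $[p^\star, 1]$, contradicting $g(s_1^\star(p), s_2^\star(p)) = g(s_1^\star(p), s_2^\star(p'))$ and completing the proof.
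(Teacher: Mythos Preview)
Your proof is correct and follows essentially the same approach as the paper: contradiction via the $p$-independent identity~\eqref{eq:genp3}, using Corollary~\ref{cor:monp} to get $s_2^\star(p) > s_2^\star(p')$, and then Fact~\ref{fact:nonzeroprobevent} to force a strictly positive slab probability. Your treatment of the boundary case $p = p^\star$ is in fact more careful than the paper's own write-up, which applies Fact~\ref{fact:nonzeroprobevent} directly with $s_2' = s_2^\star(p)$ without explicitly checking that $s_2^\star(p) \le \gamma s_1^\star + (1-\gamma)$ when $p = p^\star$; your narrower-slab argument cleanly closes that gap.
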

\begin{proof}
    Consider values $p,p' \in [p^\star,1]$ and assume $p < p'$. Assume for the sake of contradiction that $s_1^\star(p) = s_1^\star(p')$. Let $s_1^\star$ denote this common value. Constraint~\eqref{eq:genp3} can be expressed as:
    \begin{align*}
      \Pr[u_{i1} \geq s_1] + \Pr[u_{i1} < s_1 \wedge u_{i2} \geq s_2] +  \Pr[u_{i'1} \geq s_{1,\beta} ] + \Pr[u_{i'1} < s_{1,\beta} \wedge u_{i'2} \geq s_{2,\beta}]  & = c_1 + c_2. 
\end{align*}
Substituting $(s_1^\star, s_2^\star(p))$ and $(s_1^\star, s_2^\star(p'))$, and equating the two equations, we get (let $s_{1,\beta}^\star$ denote $\min(1,s_1^\star/\beta)$): 
\begin{align*}
& \Pr[u_{i1} < s_1^\star \wedge u_{i2} \geq s_2^\star(p)] + \Pr[u_{i'1} < s_{1,\beta}^\star \wedge u_{i'2} \geq s_{2,\beta}^\star(p)] \\
& \quad = \Pr[u_{i1} < s_1^\star \wedge u_{i2} \geq s_2^\star(p')] + \Pr[u_{i'1} < s_{1,\beta}^\star \wedge u_{i'2} \geq s_{2,\beta}^\star(p')].
\end{align*}
We know from~\Cref{thm:monotone-threshp} and~\Cref{cor:monp} that $s_2^\star(p) > s_2^\star(p')$. Therefore, the above can be written as:
$$\Pr[u_{i1} < s_1^\star \wedge s_2^\star(p') \leq u_{i2} < s_2^\star(p)] + \Pr[u_{i'1} < s_{1,\beta}^\star \wedge s_{2,\beta}^\star(p') \leq u_{i'2} < s_{2,\beta}^\star(p)] = 0. $$
This implies that $\Pr[u_{i1} < s_1^\star \wedge s_2^\star(p') \leq u_{i2} < s_2^\star(p)]=0,$ but this is a contradiction because of~\Cref{fact:nonzeroprobevent}. Thus, $s_1^\star(p) < s_1^\star(p')$. 
\end{proof}
\noindent
We now show the last statement in~\Cref{lem:strictmons1}:
\begin{fact}
    \label{fact:zeroprobevent1}
    The quantity $p^\star$ is strictly positive iff $s_2^\star(0) > \gamma s_1^\star(0) + (1-\gamma).$
\end{fact}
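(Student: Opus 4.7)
The two directions use complementary structural facts already established in the section. For the forward implication, the plan is to leverage the strict monotonicity statements of \Cref{fact:consts1star}; for the reverse implication, the plan is to use the continuity of the threshold map in~$p$ from \Cref{prop:genp-unique}.

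Forward direction (\(p^\star>0 \Rightarrow s_2^\star(0) > \gamma s_1^\star(0) + (1-\gamma)\)): The plan is to exploit the fact that the interval $I = \{p : s_2^\star(p) \geq \gamma s_1^\star(p) + 1-\gamma\}$ has the form $[0,p^\star]$, so that $p^\star \in I$. I would then invoke \Cref{fact:consts1star} to conclude that $s_1^\star$ is constant on $[0,p^\star]$ while $s_2^\star$ is strictly decreasing on $[0,p^\star]$. Therefore, provided $p^\star>0$,
\[
s_2^\star(0) \;>\; s_2^\star(p^\star) \;\geq\; \gamma s_1^\star(p^\star) + (1-\gamma) \;=\; \gamma s_1^\star(0) + (1-\gamma),
\]
which gives the desired strict inequality.

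Reverse direction (\(s_2^\star(0) > \gamma s_1^\star(0) + (1-\gamma) \Rightarrow p^\star>0\)): The plan is to appeal to continuity of $p \mapsto (s_1^\star(p), s_2^\star(p))$, which was established in \Cref{prop:genp-unique}. Define the continuous function $h(p) := s_2^\star(p) - \gamma s_1^\star(p) - (1-\gamma)$. The hypothesis says $h(0)>0$, so by continuity there exists $\delta>0$ such that $h(p)>0$ for all $p \in [0,\delta]$. Every such $p$ lies in $I$, so $p^\star \geq \delta > 0$.

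\textbf{Anticipated obstacle.} The only subtlety is justifying the strict (rather than weak) inequality in the forward direction. A purely monotone argument via \Cref{thm:monotone-threshp} alone would at best yield a non-strict conclusion, since $s_2^\star$ is only known to be non-increasing and $s_1^\star$ non-decreasing globally. The key is that \Cref{fact:consts1star} upgrades this to strict decrease of $s_2^\star$ on $[0,p^\star]$ (via \Cref{cor:monp}, together with $s_1^\star$ being constant there), which is exactly what drives the strict gap at $p=0$. Outside of this observation the argument is routine.
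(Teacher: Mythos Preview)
Your proposal is correct and follows essentially the same approach as the paper. The only minor organizational difference is in the forward direction: the paper argues by contradiction directly from \Cref{thm:monotone-threshp} and \Cref{cor:monp} (assuming equality at $p=0$ and deriving a violation of membership in $I$ at some interior $p$), whereas you invoke \Cref{fact:consts1star}—which already packages those same ingredients—to get the chain $s_2^\star(0) > s_2^\star(p^\star) \ge \gamma s_1^\star(p^\star) + (1-\gamma) = \gamma s_1^\star(0) + (1-\gamma)$; the reverse directions are identical.
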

\begin{proof}
    Suppose $p^\star > 0$. Then $0 \in I$ and hence, $s_2^\star(0) \geq \gamma s_1^\star(0) + (1-\gamma).$ Assume for the sake of contradiction that $s_2^\star(0) = \gamma s_1^\star(0) + (1-\gamma).$ Let $p \in (0, p^\star)$. We know from~\Cref{thm:monotone-threshp} that $s_2^\star(p) \leq s_2^\star(0)$ and $s_1^\star(p) \geq s_1^\star(0)$, and at least one of these inequalities must be strict (using~\Cref{cor:monp}). But then $s_2^\star(p) < \gamma s_1^\star(p) + (1-\gamma)$, a contradiction. 
    Therefore, $s_2^\star(0) > \gamma s_1^\star(0) + (1-\gamma).$

    For the converse, assume $s_2^\star(0) > \gamma s_1^\star(0) + (1-\gamma).$ Then continuity of $s_1^\star(p)$ and $s_1^\star(p)$ shows that $p^\star > 0$. 
\end{proof}
\noindent
This completes the proof of~\Cref{lem:strictmons1}. 

\paragraph{Calculating \boldmath{$p^\star$}.} Assume that $s_2^\star(0) > \gamma s_1^\star(0) + (1-\gamma). $
Then monotonicity of $s_1^\star(p)$ and $s_2^\star(p)$ (using~\Cref{thm:monotone-threshp}) shows that $p^\star$ satisfies:
$$s_2^\star(p^\star) = \gamma s_1^\star(p^\star) + (1-\gamma).$$
Now,~\Cref{lem:strictmons1} shows that 
\begin{align}
    \label{eq:vals2p}
    s_2^\star(p^\star) = \frac{\gamma(2-c_1-c_2)}{1+1/\beta} + (1-\gamma).
\end{align}
Since $\Pr[u_{i2} \geq s_2^\star(p^\star) \wedge u_{i1} < s_1^\star(p^\star)] = 0$ and similarly for $i'$, we can express~\Cref{eq:genp2} as:
$$(1-p^\star)(\Pr[u_{i2} \geq s_2^\star(p^\star)] + \Pr[u_{i'2} \geq s_{2,\beta}^\star(p^\star)]) = c_2.$$
Equation~\eqref{eq:vals2p} shows that $s_2^\star(p^\star) > 1-\gamma$ and hence, $s_{2,\beta}^\star(p^\star) > 1-\gamma$. Now, three cases arise:
\begin{itemize}
    \item Here, $s_2^\star(p^\star) \geq \gamma$.
    Then $s_{2,\beta}^\star(p^\star) \geq \gamma$ as well. Then, $p^\star$ is given by 
    $$1-p^\star = \frac{2c_2\gamma(1-\gamma)}{(1-s_2^\star(p^\star))^2 + (1-s_{2,\beta}^\star(p^\star))^2}.$$
    \item Here $s_2^\star(p^\star) < \gamma$ but $s_{2,\beta}^\star(p^\star) \geq \gamma$. 
    Then, $p^\star$ is given by 
    $$1-p^\star = \frac{c_2}{\frac{(1-s_2^\star(p^\star))^2}{2\gamma(1-\gamma)} + 1-\frac{s_{2,\beta}^\star(p^\star)}{\gamma}+\frac{1-\gamma}{2\gamma}}.$$
    \item Here $s_{2,\beta}^\star < \gamma$ and hence $s_{2}^\star(p^\star) < \gamma$ as well. Then, $p^\star$ is given by 
    $$1-p^\star = \frac{c_2}{  2-\frac{s_{2,\beta}^\star(p^\star)+s_{2}^\star(p^\star)}{\gamma}+\frac{1-\gamma}{\gamma}}.$$
\end{itemize}
\noindent
So far, we have discussed the strict monotonicity properties of $s_1^\star(p)$. One can analogously show the following lemma regarding the strict monotonicity of $s_2^\star(p)$.  We omit the proof.
\begin{proposition}[\bf Strict monotonicity of \boldmath{$s_2^\star$} w.r.t. \boldmath{$p$}]
    \label{lem:strictmons2}
    For any fixed values of the parameters $c,\beta, \gamma$, there exists a value $q^\star \in [0,1]$ such that $s_2^\star(p)$ is strictly decreasing for $p \in [0,q^\star]$ and does not vary with $p$ for $p \in (q^\star, 1]$. Assuming $q^\star <1$, $s_2^\star(p)$ , $p \in (q^\star,1]$, is given by the following equation:
    $\Pr[u_{i2} \geq s_2] + \Pr[u_{i'2} \geq s_{2,\beta}] = c_ 1+ c_2.$  Further, $q^\star < 1$ iff $  \gamma s_1^\star(1) > s_2^\star(1) $. 
\end{proposition}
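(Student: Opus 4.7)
The plan is to mirror the proof of Proposition~\ref{lem:strictmons1} step for step, swapping the roles of equations~\eqref{eq:genp1} and~\eqref{eq:genp3}. In the earlier proof, $s_1^\star(p)$ was frozen on the regime where the event $\{u_{i1}<s_1\wedge u_{i2}\ge s_2\}$ has zero probability; here the constant regime for $s_2^\star(p)$ will correspond to the \emph{dual} event $\{u_{i1}\ge s_1\wedge u_{i2}<s_2\}$ having zero probability. The first step is the analog of Fact~\ref{fact:zeroprobevent}: since $u_{i2}=\gamma v_{i1}+(1-\gamma)v_{i2}\ge \gamma s_1$ whenever $v_{i1}\ge s_1$, a short geometric argument will give $\Pr[u_{i1}\ge s_1\wedge u_{i2}<s_2]=0$ if and only if $s_2\le \gamma s_1$.

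Armed with this, I would define $J:=\{p\in[0,1]:s_2^\star(p)\le \gamma s_1^\star(p)\}$ and set $q^\star:=\inf J$, with $q^\star:=1$ when $J=\emptyset$. Theorem~\ref{thm:monotone-threshp} implies that $p\mapsto s_2^\star(p)-\gamma s_1^\star(p)$ is non-increasing, so $J$ is upward-closed; continuity of the thresholds (Proposition~\ref{prop:genp-unique}) then yields $J=[q^\star,1]$ whenever $q^\star<1$. On $J$, the geometric fact shows $\{u_{i1}\ge s_1^\star\}\subseteq\{u_{i2}\ge s_2^\star\}$, and the analogous containment for $G_2$ must be verified by checking the two cases $s_1^\star\le\beta$ and $s_1^\star>\beta$ separately; the only potentially awkward sub-regime $s_1^\star\le\beta<s_2^\star$ is ruled out directly by $s_2^\star\le\gamma s_1^\star\le s_1^\star\le\beta$. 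Substituting these containments into equation~\eqref{eq:genp3} collapses the disjunctions to $\{u_{i2}\ge s_2\}$ and $\{u_{i'2}\ge s_{2,\beta}\}$, giving $\Pr[u_{i2}\ge s_2]+\Pr[u_{i'2}\ge s_{2,\beta}]=c_1+c_2$, which is independent of both $p$ and $s_1$. Hence $s_2^\star$ is constant on $J$.

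For the strict decrease on $[0,q^\star]$, I would adapt Fact~\ref{fact:s1strincreasingp}: assuming $s_2^\star(p)=s_2^\star(p')$ for some $p<p'\le q^\star$, Corollary~\ref{cor:monp} forces $s_1^\star(p)<s_1^\star(p')$, and subtracting the two instances of equation~\eqref{eq:genp3} yields
\[
\Pr[s_1^\star(p)\le u_{i1}<s_1^\star(p')\wedge u_{i2}<s_2^\star]+\Pr[s_{1,\beta}^\star(p)\le u_{i'1}<s_{1,\beta}^\star(p')\wedge u_{i'2}<s_{2,\beta}^\star]=0.
\]
Since $p<q^\star$ forces $s_2^\star(p)>\gamma s_1^\star(p)$, the dual of Fact~\ref{fact:nonzeroprobevent}---that $\Pr[s_1\le u_{i1}<s_1'\wedge u_{i2}<s_2]>0$ whenever $s_1<s_1'$ and $s_2>\gamma s_1$---furnishes the contradiction. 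Finally, the biconditional $q^\star<1\iff \gamma s_1^\star(1)>s_2^\star(1)$ parallels Fact~\ref{fact:zeroprobevent1}: the reverse direction uses continuity to extend $\gamma s_1^\star>s_2^\star$ to an open neighborhood of $p=1$; the forward direction rules out equality by observing that $s_2^\star$ is already constant on $(q^\star,1]$, so $\gamma s_1^\star(1)=s_2^\star(1)$ would force $s_1^\star$ to be constant there as well, contradicting Corollary~\ref{cor:monp}. I expect the main technical hurdle to be the bias-adjusted bookkeeping for $G_2$---the implication $\gamma s_1^\star\ge s_2^\star\Rightarrow \gamma s_{1,\beta}^\star\ge s_{2,\beta}^\star$ is not immediate and requires the case split above; once that is dispensed with, the remaining steps are direct analogs of the arguments in Section~\ref{sec:genpstrict}.
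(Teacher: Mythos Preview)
Your proposal is correct and is precisely the analog the paper has in mind: the paper omits the proof, saying only that it follows analogously from the argument for Proposition~\ref{lem:strictmons1}, and your sketch carries out that analogy in detail---identifying the dual zero-probability event $\{u_{i1}\ge s_1\wedge u_{i2}<s_2\}$ (vanishing iff $s_2\le\gamma s_1$), defining the upward-closed set $J$, and reproducing each of Facts~\ref{fact:zeroprobevent}, \ref{fact:consts1star}, \ref{fact:s1strincreasingp}, \ref{fact:zeroprobevent1} in dual form. The case split for the $G_2$ containment and the use of Corollary~\ref{cor:monp} to rule out equality at $p=1$ are both handled correctly.
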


\subsection{Numerical results on behavior across preference and correlation parameters}
\label{sec:empirical-pgamma}

This section numerically examines how equilibrium quantities respond to changes in the candidate preference parameter \( p \) and the evaluator correlation parameter \( \gamma \). 
Figures~\ref{fig:p_1}–\ref{fig:p_3new} present four key quantities across varying parameter settings: (i) the equilibrium thresholds \( s_1^\star \) and \( s_2^\star \), (ii) the representation ratio \( \mathcal{R} \), (iii) the normalized representation ratio \( \mathcal{N} \), and (iv) the utilities \( \mathcal{U}_1 \) and \( \mathcal{U}_2 \) achieved by the two institutions. In addition, Figure~\ref{fig:beta_sweep} explores how fairness metrics vary with evaluator correlation \( \gamma \), for a fixed preference weight \( p \) and varying levels of group bias \( \beta \).

\paragraph{Effect of preference alignment \boldmath{(\(p\)}).} 
Figures~\ref{fig:p_1}--\ref{fig:p_4} fix \(c = 0.2\) and vary \(p\) for different values of \(\beta\) and \(\gamma\).

\begin{itemize}
    \item In Figure~\ref{fig:p_1} (\(\beta = 0.8, \gamma = 0.9\)), increased preference alignment with Institution 1 leads to widening disparity: \(s_1^\star\) rises, \(s_2^\star\) falls (consistent with \Cref{thm:monotone-threshp}, 
 \Cref{lem:strictmons1},\Cref{lem:strictmons2}), and both \(\mathcal{R}\) and \(\mathcal{N}\) drop significantly monotonically. Institution~1’s utility rises with \(p\), while Institution~2’s utility declines (both monotonically).
    \item In Figure~\ref{fig:p_2} (\(\beta = 1.0, \gamma = 0.9\)), evaluations are unbiased. Representation ratios remain flat at 1, and utilities vary symmetrically with \(p\), highlighting fairness preservation under bias-free conditions.
    \item Figure~\ref{fig:p_3} (\(\beta = 1.0, \gamma = 0.0\)) shows a similar fairness pattern: no correlation and no bias yield uniform thresholds and balanced utilities, with representation unaffected by preferences.
    \item Figure~\ref{fig:p_4} (\(\beta = 0.8, \gamma = 0.8\)) exhibits gradual threshold divergence and monotonic decline in \(\mathcal{R}\) and \(\mathcal{N}\), indicating compounding effects of preference, bias, and correlation.
\end{itemize}

\paragraph{Effect of selectivity increase.}
Figure~\ref{fig:p_3new} repeats the previous setting with \(c = 0.5\), revealing sharper phase transitions: \(\mathcal{R}, \mathcal{N}\) fall rapidly beyond a critical value of \(p\), and the utility gap between institutions grows more pronounced. This demonstrates how institutional capacity amplifies the effect of bias under correlated evaluations.

\paragraph{Effect of correlation (\boldmath{\(\gamma\)}).}
Figure~\ref{fig:p_5new} (for \(\beta = 0.8, p = 0.6\)) highlights the non-monotonic impact of increasing \(\gamma\). As the correlation grows:
\begin{itemize}
    \item The selection threshold \(s_2^\star(\gamma)\) initially decreases and then increases, resulting in a convex-shaped utility curve for Institution~2, consistent with the behavior described in \Cref{thm:s2highbeta}.
    \item Both the representation ratio \(\mathcal{R}(\gamma)\) and the normalized selection share \(\mathcal{N}(\gamma)\) increase monotonically, as established in \Cref{cor:monotone-gamma}.
    In addition, \Cref{fig:beta_sweep} shows that for any fixed $\gamma$ and $p$, the representation ratio \(\mathcal{R}(\gamma)\) and the normalized selection share \(\mathcal{N}(\gamma)\) increase monotonically with $\beta$ -- this pattern closely mirrors the trend observed in~\Cref{fig:variationgamma}.
\end{itemize}

\paragraph{First-choice ratio parameter.} {When \(p < 1\), we can further define a fairness metric called the \emph{first-choice ratio}, denoted by \(\mathcal{F}\), which measures the relative likelihood that candidates from the two groups receive their first-choice institution. Formally, for given parameters \(p, c, \beta, \gamma\), let \(f_j\) denote the fraction of candidates from group \(G_j\) who are assigned their first-choice institution under the stable assignment, i.e., given the stable matching thresholds \((s_1^\star, s_2^\star)\),
\[
f_j := p \Pr[\hat{u}_{i_j1} \geq s_1^\star] + (1-p) \Pr[\hat{u}_{i_j2} \geq s_2^\star],
\]
where \(i_j\) denotes a candidate from group \(G_j\). Then,  
\(\mathcal{F} := f_2 / f_1.\) 
This metric was also studied in the centralized selection setting by \cite{Celis0VX24}.

\Cref{fig:p_firstchoice} illustrates how \(\mathcal{F}\) varies with \(p\) for different values of \(\beta\) and \(\gamma\) when \(c = 0.3.\) We find that \(\mathcal{F}\) is typically unimodal, attaining its maximum when \(p\) is near \(1/2\). At this point, the top-ranking candidates from \(G_1\) are evenly distributed across both institutions, allowing a larger fraction of \(G_2\) candidates to secure their most preferred institution. As expected, increasing \(\beta\) reduces the bias against \(G_2\) and consequently increases \(\mathcal{F}\).
 }

\begin{figure}[h!]
    \centering
        \centering
        \includegraphics[width=0.95\textwidth]{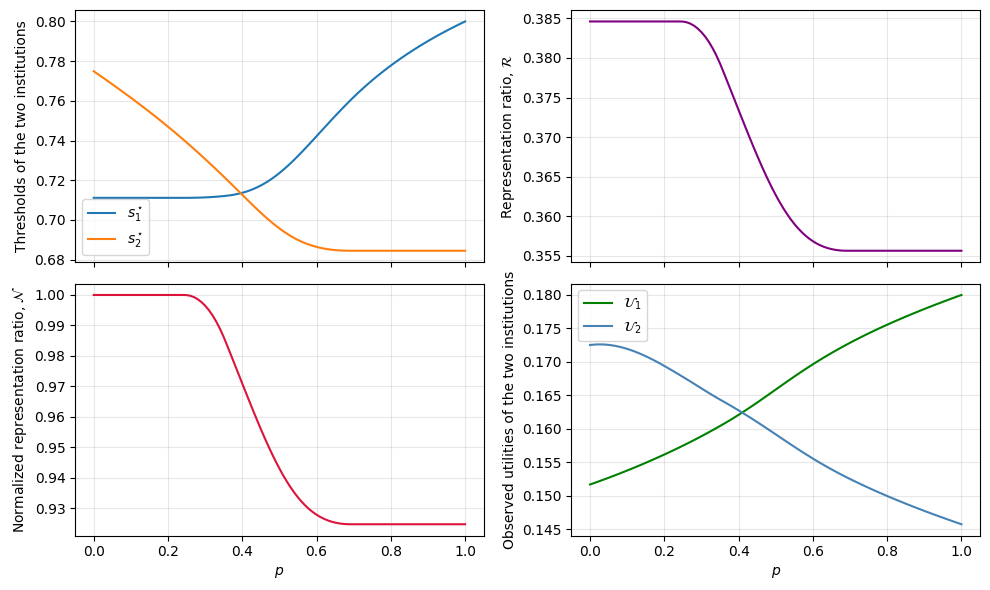}
    \caption{Variation of thresholds, representation ratio, normalized representation ratio and utilities with $p$ for a fixed value of $c=0.2$, $\beta=0.8$, and $\gamma=0.9$.}
    \label{fig:p_1}
\end{figure}

\begin{figure}[h!]
    \centering
        \centering
        \includegraphics[width=0.95\textwidth]{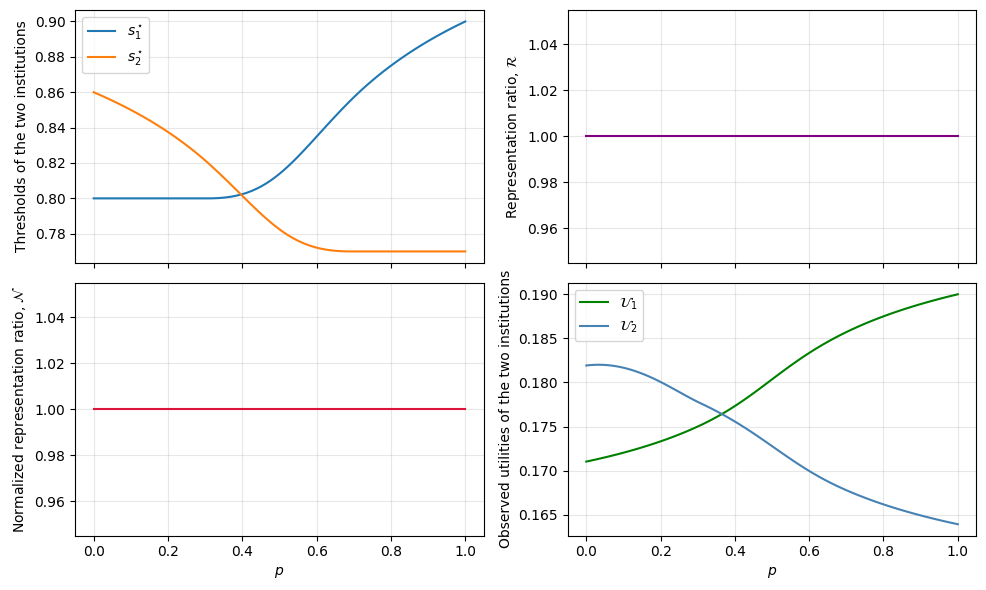}
    \caption{Variation of thresholds, representation ratio, normalized reference ratio, and utilities with $p$ for a fixed value of $c=0.2$, $\beta=1$, and $\gamma=0.9$.}
    \label{fig:p_2}
\end{figure}

\begin{figure}[h!]
    \centering
        \centering
        \includegraphics[width=0.95\textwidth]{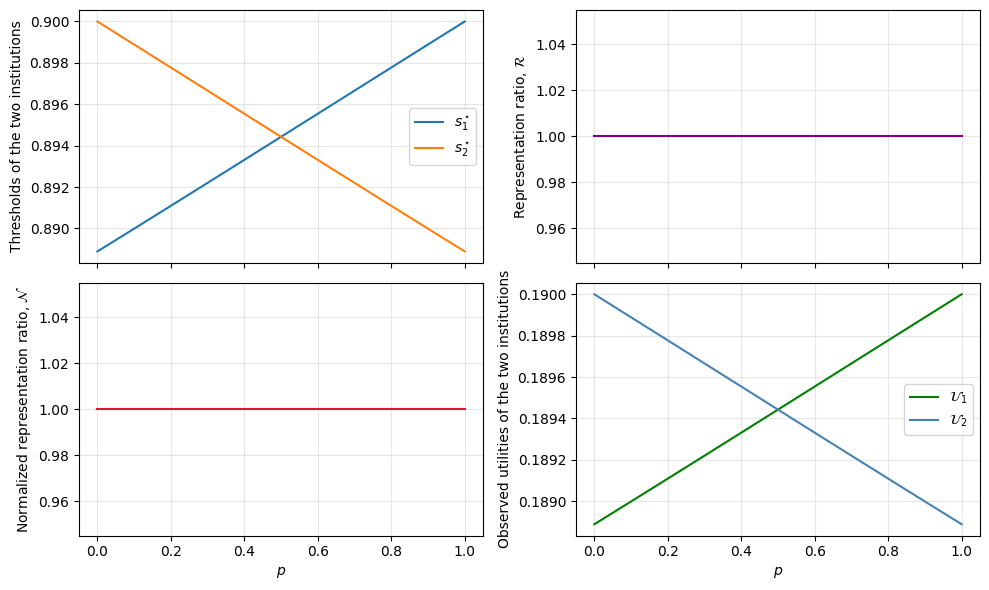}
    \caption{Variation of thresholds, representation ratio, normalized representation ratio, and utilities with $p$ for a fixed value of $c=0.2$, $\beta=1$, and $\gamma=0$.}
    \label{fig:p_3}
\end{figure}

\begin{figure}[h!]
    \centering
        \centering
        \includegraphics[width=0.95\textwidth]{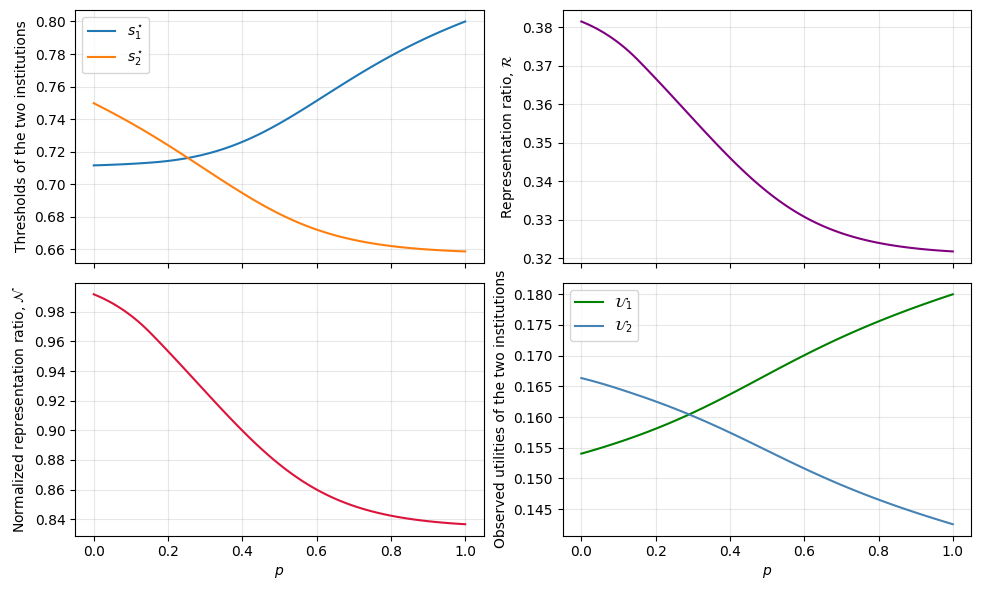}
    \caption{Variation of thresholds, representation ratio, normalized representation ratio, and utilities with $p$ for a fixed value of $c=0.2$, $\beta=0.8$, and $\gamma=0.8$.}
    \label{fig:p_4}
\end{figure}

\begin{figure}[h!]
    \centering
        \centering
        \includegraphics[width=0.95\textwidth]{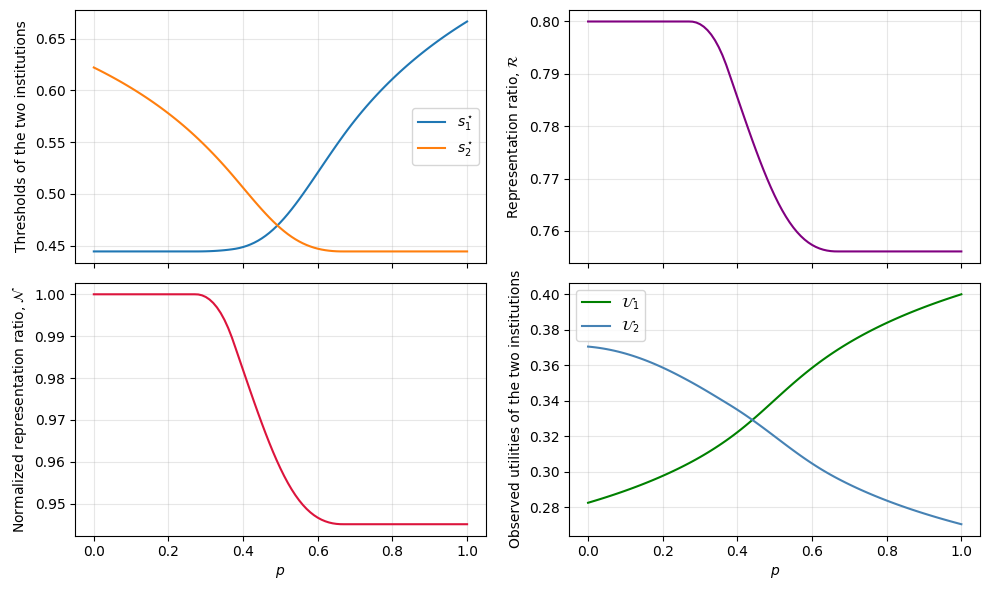}
    \caption{Variation of thresholds, representation ratio, normalized representation ratio, and utilities with $p$ for a fixed value of $c=0.5$, $\beta=0.8$, and $\gamma=0.8$.}
    \label{fig:p_3new}
\end{figure}

\begin{figure}[h!]
    \centering
        \centering
        \includegraphics[width=0.95\textwidth]{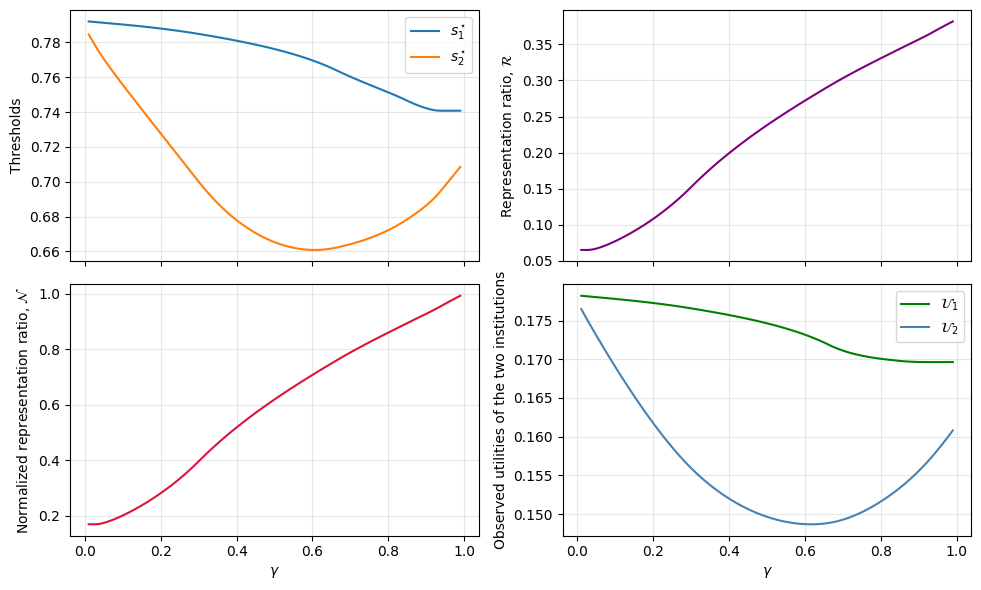}
    \caption{Variation of thresholds, representation ratio, normalized representation ratio, and utilities with $\gamma$ for a fixed value of $c=0.2$, $\beta=0.8$, and $p=0.6$.}
    \label{fig:p_5new}
\end{figure}

\begin{figure}[h!]
    \centering
    \includegraphics[width=0.95\textwidth]{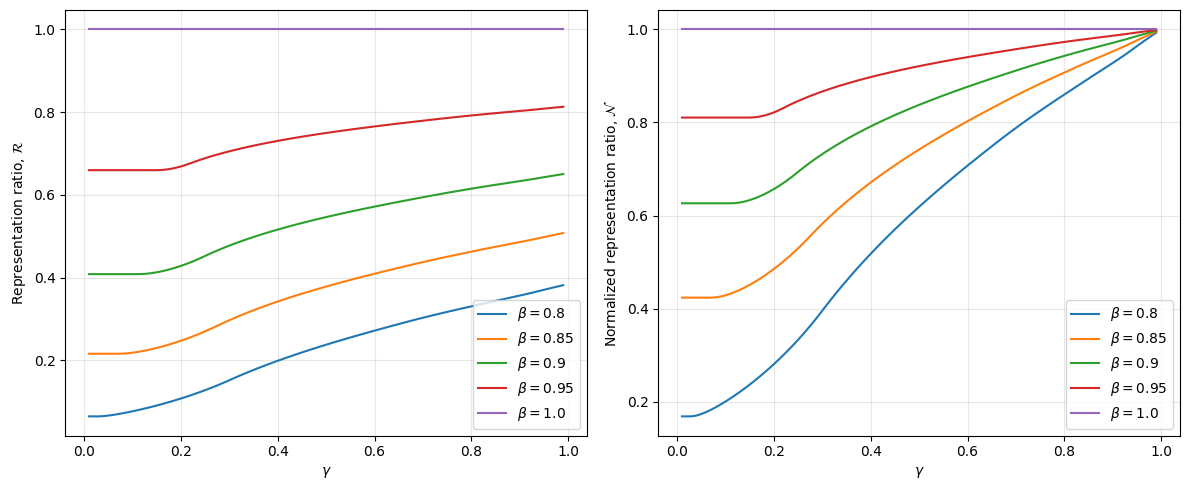}
    \caption{Variation of representation ratio $\mathcal{R}$ and normalized representation ratio $\mathcal{N}$ with $\gamma$ for different values of $\beta \in \{0.8, 0.85, 0.9, 0.95, 1.0\}$ at fixed $c=0.2$ and $p=0.6$.}
    \label{fig:beta_sweep}
\end{figure}

\begin{figure}[h!]
    \centering
        \centering
        \includegraphics[width=0.95\textwidth]{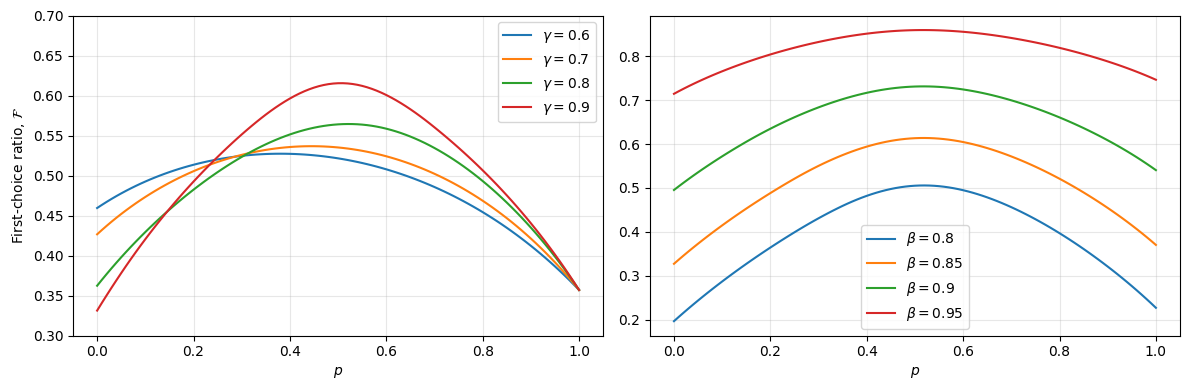}
    \caption{Variation of first choice ratio $\mathcal{F}$ with  $p$ for varying values of $\beta$ and $\gamma$. The left panel uses $c = 0.2$ and $\beta = 0.9$. The right panel uses $\gamma = 0.8$ and $c = 0.3.$ }
    \label{fig:p_firstchoice}
\end{figure}

\end{document}